\documentclass[11pt,letterpaper]{article}
\usepackage{unicode-math}
\usepackage{amsmath,amsthm,nicefrac}
\usepackage{amstext}
\usepackage{libertine}

\usepackage{subfigure}
\usepackage{hhline}
\usepackage[table,dvipsnames]{xcolor}

\usepackage{typearea}
\typearea{14}
\usepackage[font={small,it}]{caption}

\usepackage{setspace}
\usepackage[compact]{titlesec}

\usepackage[shortlabels]{enumitem}
\usepackage[breaklinks]{hyperref}
\hypersetup{colorlinks=true,%
            citebordercolor={.6 .6 .6},linkbordercolor={.6 .6 .6},%
citecolor=blue,urlcolor=black,linkcolor=blue}

\usepackage[nameinlink]{cleveref}
\Crefname{algocf}{Algorithm}{Algorithms}
\crefname{algocfline}{line}{lines}
\Crefname{invariant}{Invariant}{Invariants}
\Crefname{claim}{Claim}{Claims}
\Crefname{subclaim}{Subclaim}{Subclaims}

\makeatletter
\AddToHook{cmd/appendix/before}{\def\cref@section@alias{appendix}\def\cref@subsection@alias{appendix}}
\makeatother

\usepackage{epsfig}
\usepackage{amsthm}

\usepackage{mathrsfs}
\usepackage{xspace}
\usepackage{soul}
\usepackage{latexsym}
\usepackage{bbm}

\usepackage{mathtools}
\usepackage{microtype}

\usepackage{accents}

\usepackage{framed}

\definecolor{DarkGray}{rgb}{0.66, 0.66, 0.66}
\definecolor{DarkPowderBlue}{rgb}{0.0, 0.2, 0.6}
\definecolor{fluorescentyellow}{rgb}{0.8, 1.0, 0.0}
\definecolor{cerulean}{rgb}{0.0, 0.48, 0.65}
\definecolor{bleudefrance}{rgb}{0.19, 0.55, 0.91}

\usepackage[ruled,vlined,linesnumbered,algonl]{algorithm2e}
\SetEndCharOfAlgoLine{}
\SetKwComment{Comment}{\footnotesize$\triangleright$\ }{}

\SetCommentSty{mycommfont}

\usepackage{thmtools,thm-restate}

\usepackage{mathtools}
\usepackage{bm}

\usepackage{xspace}
\usepackage{textgreek}

\usepackage{nicefrac}
\usepackage{typearea}
\typearea{14}
\usepackage[font={small,it}]{caption}

\allowdisplaybreaks

\newtheorem{theorem}{Theorem}[section]
\newtheorem{lemma}[theorem]{Lemma}
\newtheorem{observation}[theorem]{Observation}
\newtheorem{claim}[theorem]{Claim}

\newtheorem{corollary}[theorem]{Corollary}

\theoremstyle{definition}

\theoremstyle{remark}

\newcommand{\ronmfl}{\textsf{RO}\xspace}
\newcommand{\fnmfl}{\textsf{FR}\xspace}
\newcommand{\rfnmfl}{\overline{\fnmfl}}

\newcommand{\rocip}{\textsf{RO}\xspace}
\newcommand{\fcip}{\textsf{FR}\xspace}
\newcommand{\rfcip}{\overline{\fcip}}

\DeclareMathOperator{\BACKUP}{\textsc{Backup}}

\newcommand{\rmfl}{\mathsf{OFF}\xspace}

\newcommand\unc{U\xspace}
\newcommand\backup{\mathrm{backup}\xspace}
\newcommand\crpt{\varsigma\xspace}

\newcommand\csamples{\samples'}

\newcommand{\family}{\mathcal{F}}
\newcommand{\univ}{X}
\newcommand{\cover}{\mathrm{cov}}
\newcommand{\cost}{\mathrm{cost}}
\newcommand{\samples}{S}
\newcommand{\fosc}{\textsf{FR}\xspace}
\newcommand{\rfosc}{\overline{\fosc}}
\newcommand{\omfl}{\textsf{OL}\xspace}
\newcommand{\onst}{\textsf{OL}\xspace}
\newcommand{\offst}{\textsf{OFF}\xspace}
\newcommand{\rosc}{\textsf{RO}\xspace}
\newcommand{\bk}{\textsf{bk}\xspace}
\newcommand\skewed{\mathrm{rare}}
\newcommand\clean{\mathrm{clean}\xspace}
\newcommand\Clean{\samples_\mathrm{clean}\xspace}
\newcommand{\event}{\mathcal{E}}

\newcommand\backargs[2]{\backup\left(#1 \ \middle| \ #2\right)\xspace}

\DeclareMathOperator{\OPTLP}{\mathrm{OPT}_{LP}}
\DeclareMathOperator{\OPT}{\mathrm{OPT}}
\DeclareMathOperator{\SOL}{\mathrm{SOL}}

\DeclarePairedDelimiter\ceil{\lceil}{\rceil}

\DeclareMathOperator{\ONLINEMFL}{\mathsf{OL}}

\DeclareMathOperator{\SOLST}{\mathrm{SOL}}

\newcommand\objhat{\widehat{\mathsf{Obj}}\xspace}
\newcommand\obj{\mathsf{Obj}\xspace}

\DeclarePairedDelimiter\floor{\lfloor}{\rfloor}
\DeclarePairedDelimiter\abs{\lvert}{\rvert}%
\DeclarePairedDelimiter\inp{\langle}{\rangle}
\newcommand{\E}{\ensuremath{\mathbb{E}}}

\newcommand{\pr}{\ensuremath{\mathbb{P}}}
\newcommand{\Var}{\mathrm{Var}}

\newcommand{\ind}{\mathbbm{1}}

\usepackage{graphicx}

\DeclareMathOperator*{\expectation}{\mathbb{E}}

\newcommand{\prob}{\Pr\probarg}
\DeclarePairedDelimiterX{\probarg}[1]{[}{]}{%
	\ifnum\currentgrouptype=16 \else\begingroup\fi
	\activatebar#1
	\ifnum\currentgrouptype=16 \else\endgroup\fi
}

\newcommand{\expect}{\expectation\expectarg}
\DeclarePairedDelimiterX{\expectarg}[1]{[}{]}{%
	\ifnum\currentgrouptype=16 \else\begingroup\fi
	\activatebar#1
	\ifnum\currentgrouptype=16 \else\endgroup\fi
}

\newcommand{\expectover}[1]{\expectation_{#1}\expectarg}

\DeclarePairedDelimiterX{\wklarg}[1]{(}{)}{%
	\ifnum\currentgrouptype=16 \else\begingroup\fi
	\activatebars#1
	\ifnum\currentgrouptype=16 \else\endgroup\fi
}

\newcommand{\innermid}{\nonscript\;\delimsize\vert\nonscript\;}
\newcommand{\activatebar}{%
	\begingroup\lccode`\~=`\|
	\lowercase{\endgroup\let~}\innermid 
	\mathcode`|=\string"8000
}

\newcommand{\innermids}{\nonscript\;\delimsize\vert\delimsize\vert\nonscript\;}
\newcommand{\activatebars}{%
	\begingroup\lccode`\~=`\|
	\lowercase{\endgroup\let~}\innermids 
	\mathcode`|=\string"8000
}

\newcommand{\eat}[1]{}

\newcommand{\cC}{\mathcal{C}}

\newcommand{\cA}{\mathcal{A}}
\newcommand{\cB}{\mathcal{B}}

\newcommand{\argmin}{{\rm argmin}}

\usepackage{graphicx}

\newcommand{\oas}{{\sc oas}\xspace}

\title{Online Algorithms with a Sample:\\ Tight Bounds and Adversarial Robustness}

\author{
Anish Hebbar\thanks{Duke University, Durham  NC, USA emails: \texttt{anishshripad.hebbar@duke.edu}, \, \texttt{debmalya@cs.duke.edu}.}
\and
Ravi Kumar\thanks{Google Research, Mountain View CA, USA,   email: \texttt{ravi.k53@gmail.com}.}
\and
Roie Levin\thanks{Rutgers University, Piscataway NJ, USA, email: \texttt{roie.levin@rutgers.edu}.}
\and
Joseph (Seffi) Naor\thanks{Technion -- Israel Institute of Technology, Haifa, Israel, email: \texttt{naor@cs.technion.ac.il}.}
\and
Debmalya Panigrahi$^*$
}

\date{}

\begin{document}

\maketitle

\begin{abstract}
Suppose an online algorithm is given an unbiased $p$-sample of its input as offline advice; can the algorithm exploit the sample to achieve beyond-worst-case performance? We study this \emph{online algorithms with a sample} (\oas) model. We show a tight $O\left(\log \nicefrac{1}{p} \cdot \log m  + \log n\right)$-competitive algorithm for set cover, exponentially improving upon the $O\left(\nicefrac{1}{p} \cdot \log (mn)\right)$ guarantee of Gupta et al. (SODA'24) and answering an open question therein. Our techniques extend to covering integer programs and non-metric facility location, also yielding tight bounds for these problems. Further, we give an $O(\log \nicefrac{1}{p}/ \log \log \nicefrac{1}{p})$-competitive algorithm for metric facility location, answering an open question of Argue et al. (NeurIPS'22).

We then introduce and study the robust variant of the \oas model, in which an adversary is allowed to arbitrarily modify $k$ elements of the $p$-sample. For set cover, covering integer programs, and non-metric facility location, we obtain a tight competitive ratio of $O\left(\log \nicefrac{k}{p} \cdot \log m + \log n\right)$. For metric facility location and Steiner tree, we obtain tight competitive ratios of $O\left(\log \nicefrac{k}{p} / \log \log \nicefrac{k}{p} \right)$ and $O\left(\log \nicefrac{k}{p}\right)$ respectively. To the best of our knowledge, these are the first results for robust algorithms in the {\oas} setting.
\end{abstract}

\pagenumbering{gobble}

\clearpage

\pagenumbering{arabic}

\section{Introduction}\label{sec:introduction}Recent years have witnessed growing interest in designing algorithms that remain reliable in the presence of imperfect or corrupted data. This question lies at the heart of robust statistics, where one seeks estimators that tolerate adversarial contamination, and has become increasingly important in modern  systems, where data may be noisy, manipulated, or otherwise unreliable. Similar robustness questions have also been studied in combinatorial optimization, including semi-random models, stochastic optimization, sublinear algorithms, streaming and sketching, and property testing, where algorithms must extract useful information from incomplete, noisy, or adversarially perturbed observations. Together, these developments have established robustness to imperfect data as a central algorithmic paradigm across statistics and combinatorial optimization.

In this paper, we introduce and study robustness in online algorithms that leverage sampled offline information. In the \emph{online algorithms with a sample} (\oas) model, the algorithm is provided with an unbiased $p$-sample of the input, for some parameter $p\in[0,1]$, before the online phase begins. For example, for set cover, this corresponds to the algorithm being provided $pn$ elements sampled uniformly at random before the remaining elements arrive in an arbitrary (possibly adversarial) order.
The goal is to exploit this sample to circumvent lower bounds in the classical worst-case online setting. The \oas model was introduced by Kaplan~et al.~\cite{KaplanNR20,KaplanNR22} for the secretary problem and online weighted matching. (Earlier, Kumar et al.~\cite{KumarPSSV19} studied a related semi-online model for bipartite matching, which was subsequently adopted by Lattanzi et al.~\cite{LattanziMVWZ21} for correlation clustering.) Argue et al.~\cite{ArgueFGS22} later extended the \oas model to a broad range of minimization problems, including clustering, network design, and scheduling, obtaining improved competitive ratios for problems such as Steiner tree, metric facility location, and load balancing.  Recently, Gupta et al.~\cite{GuptaKL24} studied the \oas model for online covering problems, including set cover and covering integer programs.

Despite this rapid progress, nearly all existing work assumes that the sample is generated perfectly from the input. This assumption stands in sharp contrast to the broader literature on robust algorithm design, where adversarial contamination is treated as a first-order concern. Indeed, in practice, samples may be corrupted by faulty data collection, adversarial manipulation, or mis-specified sampling procedures.  Motivated by these issues, we ask: \textsl{can online algorithms be made provably robust to adversarial sample corruption}?

Orthogonal to robustness, our understanding of the algorithmic power of an offline sample remains incomplete even for the non-robust setting. Existing upper and lower bounds exhibit substantial gaps for several central problems, leaving open the precise dependence of the competitive ratio on the sampling probability $p$. Can we close these gaps while simultaneously obtaining tight guarantees and robustness?

\subsection{Our Results}

\paragraph{Tight bounds.}
Our first set of results establishes tight competitive ratios for several fundamental problems in the \oas model.

For set cover, Gupta et al.~\cite{GuptaKL24} gave an $O\left(\nicefrac{1}{p} \cdot \log (mn)\right)$-competitive algorithm and posed the following open question: 
\textsl{``We submit as an interesting open problem the task of determining the tight dependence on $p$ \ldots. We conjecture that it should be $O(\log(mn) \log(1/p))$.''}

We resolve this question in the affirmative and give an $O\left(\log \nicefrac{1}{p} \cdot \log m + \log n\right)$-competitive algorithm for set cover, and this bound is tight for polynomial-time algorithms. We also extend this result to more general covering IPs and non-metric facility location, yielding tight bounds for these problems as well. Here $m$ is the number of sets/columns/facilities, and $n$ is the number of elements/rows/clients.

\begin{restatable}[Set Cover, CIP, and Non-Metric Facility Location]{theorem}{CoveringMain}
\label{thm:cov-main}
There is a randomized $O\left(\log \nicefrac{1}{p} \cdot \log m + \log n\right)$-competitive algorithm
for set cover in the \oas model with a $p$-sample. Furthermore, the same competitive ratio can be achieved
for online covering integer programs (CIPs) and online non-metric facility location.
\end{restatable}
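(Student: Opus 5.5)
We plan to reduce the problem to a deterministic "learn-then-cover" scheme with three ingredients: an offline solution computed on the sample, an online fractional algorithm run on the remaining elements, and a layered analysis showing the online algorithm only pays a $\log \nicefrac{1}{p}$ factor (instead of $\log n$) per set. Concretely, on the $p$-sample $\samples$ we first compute an offline $O(\log n)$-approximate set cover $\mathcal{C}_0$ via the LP. We buy it immediately; its expected cost is $O(\log n)\cdot p\cdot\OPT$ or better, since by linearity the restriction of $\OPT$ covers $\samples$ at cost at most $\OPT$. Elements arriving online that are covered by $\mathcal{C}_0$ cost nothing further. The remaining "uncovered" elements are handled by the standard online multiplicative-weights fractional set cover with randomized rounding, but initialized in a warm start. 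Each set $S$ begins at fractional value roughly $p/m$ (or a value informed by the sample LP solution) rather than $1/m$, with thresholds chosen so that doubling from the initial value to $1$ takes only $O(\log \nicefrac{1}{p} + \log m)$ phases.

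The key structural claim we would prove is a \emph{rarity lemma}: with high probability, any online element $e$ not covered by $\mathcal{C}_0$ lies in the family of "rare" sets. These are sets whose number of elements uncovered by $\mathcal{C}_0$ is $O(\nicefrac{1}{p}\log m)$. Any set with many more uncovered elements would, with probability $1-1/\mathrm{poly}(m)$, have had $\Omega(\log m)$ of them sampled. Since the offline LP covers the sample, it would then have bought a constant fraction of that set, contradicting non-coverage. A union bound over the $m$ sets handles the rest; if needed, we would refine this by bucketing sets geometrically by density. Given the lemma, we analyze the primal-dual online algorithm against $\OPT$ restricted to the uncovered elements. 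Each set of $\OPT$ is charged for at most its number of uncovered elements, which is $O(\nicefrac{1}{p}\log m)$. So the number of multiplicative increments before its variable reaches $1$ is logarithmic in that quantity, giving $O(\log\nicefrac{1}{p} + \log\log m)$ per set fractionally. Randomized rounding then contributes the extra $\log m$ factor, via $O(\log m)$ independent thresholds per set with a backup set for failures. The result is $O(\log\nicefrac{1}{p}\cdot\log m + \log n)$ in total, where the $\log n$ comes from the offline phase.

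The main obstacle is making the rarity/charging argument rigorous. The sampled and online elements are correlated through $\mathcal{C}_0$, and $\mathcal{C}_0$ is itself a function of the sample. We expect to handle this by arguing about a fixed fractional solution, namely $\OPT$'s sets together with the sample LP, rather than the rounded $\mathcal{C}_0$. We would also use a Chernoff bound over the uncovered elements of each fixed set, conditioned on an LP-based rule (buy a set when $\Omega(\log m)$ of its elements are sampled) that is monotone in the sample. If this conditioning fails, the fallback is to replace $\mathcal{C}_0$ by the explicit rule "buy every set with at least $c\log m$ sampled elements." That rule directly gives the rarity property, at an extra cost we would bound by charging each such set to $\OPT$'s coverage of its sampled elements.

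For CIPs, the same plan works using the Gupta et al.\ online CIP framework with knapsack-cover inequalities, with per-column increments bounded the same way. For non-metric facility location, we reduce to set cover over stars in the standard way, preserving $m$ as the number of facilities.
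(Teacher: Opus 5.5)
Your proof has a genuine gap: the rarity lemma is false, and the fallback does not repair it. Elements of $F$ left uncovered by $\mathcal{C}_0$ are by definition unsampled. The fact that $\mathcal{C}_0$ covers many \emph{sampled} elements of $F$ says nothing about the rest of $F$, because the sample can be covered more cheaply by other sets that miss $F$'s unsampled elements.

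For example, take a universe set of cost $1$ and a singleton of cost $1/n$ for every element. Then $\OPT=1$, but the sample's optimal cover is its $pn$ singletons, of cost $p$. For $p\ll 1/\log n$, no $O(\log n)$-approximate cover can afford the universe set, so that set keeps $(1-p)n\gg(\log m)/p$ uncovered elements.

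The fallback ``buy every set with at least $c\log m$ sampled elements'' is not competitive. On $m$ unit-cost copies of the universe it pays $m$ against $\OPT=1$. On a universe set of cost $M\gg n$ plus unit-cost singletons it pays $M$ against $\OPT=n$. One $\OPT$ set can cover the sampled elements of arbitrarily many such sets, so your charging has nothing to charge to, and any count-based rule ignores costs.

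The fractional step is also unsupported. The primal-dual analysis bounds the dual load on an $\OPT$ set $F$ by $c_F\cdot O(\log(1/x_{\mathrm{init}}))$, because its variable grows exponentially in that load from $x_{\mathrm{init}}$ to at most $1$. This does not depend on how many elements $F$ has, since a single arrival can trigger many increments. A uniform warm start at $p/m$ therefore gives $O(\log m+\log\nicefrac{1}{p})$, which is worse than starting at $1/m$. The classical randomized lower-bound instance makes this concrete: it has $\log_2 m$ nested elements, each contained in half of the surviving sets. Every set has size at most $\log_2 m$, yet the instance forces fractional cost $\Omega(\log m)\cdot\OPT$, not $O(\log\log m)\cdot\OPT$. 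Even granting both of your claims, your tally is $O(\log m\,(\log\nicefrac{1}{p}+\log\log m)+\log n)$, which has a $\log m\log\log m$ term that is not in the target bound.

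The missing idea is a cost-weighted, in-expectation substitute for rarity. The paper also assigns the two logarithms the other way around. It runs a random-order algorithm $\rosc$ (LearnOrCover, ratio $\Delta=O(\log mn)$) on the randomly permuted sample. It then proves in \Cref{lem:backup_costs} that the expected total backup cost of unsampled elements left uncovered is at most $(\Delta/p)\OPT$. The argument is exchangeability: conditioned on the first $j-1$ sample elements, an unsampled element is distributed exactly like the $j$-th sample element, whose backup cost $\rosc$ must pay. Averaging over $j\le pn$ and using monotonicity of backup costs gives the bound.

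Online, the paper uses the ordinary $O(\log m)$-competitive fractional algorithm, which supplies the $\log m$ factor. It rounds with boosting parameter only $\log\nicefrac{1}{p}$, which supplies the $\log\nicefrac{1}{p}$ factor. Each element is then missed with probability at most $p$, so the backup phase costs $p\cdot(\Delta/p)\OPT$. No union bound over sets and no per-set element counts are needed.

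The same template, combined with the CIP and NMFL rounding lemmas and problem-specific random-order algorithms, gives the other two results. Your star reduction for NMFL does not preserve $m$: it creates $m2^n$ sets.
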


Next, we consider metric facility location. Argue et al.~\cite{ArgueFGS22} claimed a nearly tight $O\left(\log \nicefrac{1}{p}\right)$-competitive algorithm for the uniform cost case. We identify a gap in the original analysis (see \Cref{sec:mfl-error}) and provide a new proof establishing a slightly improved (and tight) guarantee for the same algorithm, which also generalizes to the nonuniform cost case.  En route, we also answer an open question in their work: \textsl{``We leave it as an open problem to extend our result to give an $O\left(\frac{\log \nicefrac{1}{p}}{\log \log \nicefrac{1}{p}}\right)$ algorithm.''}
\begin{restatable}[Metric Facility Location]{theorem}{MetricFLMain}
\label{thm:mfl-main}
There is an $O\!\left(\frac{\log \nicefrac{1}{p}}{\log \log \nicefrac{1}{p}}\right)$-competitive randomized algorithm for metric facility location in the \oas model with a $p$-sample.
\end{restatable}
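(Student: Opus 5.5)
We analyze the natural algorithm of Argue et al.~\cite{ArgueFGS22}: run an offline constant-factor approximation (e.g.\ a primal--dual or greedy algorithm with approximation ratio $O(1)$) on the sample $S$ to obtain facilities $F_S$, open them up front, and then run Meyerson's online algorithm on the arriving clients, starting from the state where $F_S$ is already open. Equivalently, each online client $j$ is either connected to its nearest open facility or opens a facility with probability proportional to its connection distance divided by the opening cost. The point is that Meyerson's algorithm on an arbitrary sequence is $O(\log n/\log\log n)$-competitive. We want to show that with a good warm start, the ``effective length'' of the sequence in each optimal cluster drops from $n$ to roughly $1/p$. This gives $O(\log \nicefrac1p/\log\log\nicefrac1p)$.

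\textbf{Cost of the sample.} First bound $\E[\cost(F_S)]$. Restricting $\OPT$'s facilities and assignments to $S$ gives a feasible solution for $S$ whose expected cost is at most $f(\OPT)+p\cdot\mathrm{conn}(\OPT)\le \OPT$. So the offline approximation on $S$ costs $O(\OPT)$ in expectation.

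\textbf{Online phase, per optimal cluster.} Fix an optimal facility $i^*$ with cluster $C^*$ and opening cost $f^*$. Following Fotakis-style analysis, partition $C^*$ into rings by distance to $i^*$ relative to the average distance. For each client $j\in C^*$ arriving online, compare its distance to the nearest open facility with $d(j,i^*)$ plus the distance from $i^*$ to $F_S$. The key claim is that, ordering clients in $C^*$ by $d(\cdot,i^*)$, a client is sampled with probability $p$ independently. Hence with good probability, among roughly the $1/p$ clients closest to $i^*$ (weighted appropriately), one lies in $S$. Because $F_S$ is a constant-approximate solution for $S$, that sampled client is served by $F_S$ at cost comparable to its share. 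So there is an open facility within distance about $2d(j',i^*)+\mathrm{conn}_{F_S}(j')$ of $i^*$. Summing the resulting bound, only about $1/p$ ``unlucky'' clients per cluster can be far from $F_S$ in expectation, and the remaining clients connect within a constant factor of their distances via this nearby facility, paid for by the charging above.

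\textbf{Meyerson analysis on the residual.} For the unlucky prefix, the standard Meyerson/Fotakis potential argument gives cost $O(\log L/\log\log L)\cdot(f^*+\sum_{j\in C^*}d(j,i^*))$, where $L$ is the number of clients handled before a facility near $i^*$ opens. For this we use the refined Meyerson geometric-ring analysis in which the number of relevant distance scales is the effective length. Since $L$ is geometrically distributed with mean about $1/p$, Jensen's inequality (using concavity of $\log x/\log\log x$ for large $x$) yields the stated bound. Nonuniform costs are handled by running Meyerson with cost classes rounded to powers of two, as in Meyerson's original paper.

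\textbf{Main obstacle.} The main obstacle, and the gap noted in \Cref{sec:mfl-error}, is the charging step. The facility that $F_S$ uses for a sampled client $j'$ is only cheap \emph{in aggregate* over $S$*}, not per client, so the nearby facility for cluster $C^*$ cannot be guaranteed pointwise. I would try to handle this by charging in expectation. I would condition on $j'$ being the first sampled client in the distance order, and use an exchange argument bounding $\E[\sum_{C^*}\mathrm{conn}_{F_S}(\text{first sampled})]$ by $O(\OPT/p)\cdot p=O(\OPT)$. This uses the fact that the first sampled client of a cluster is a $p$-fraction event and that the sample's restricted optimum covers it.
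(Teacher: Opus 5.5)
There is a genuine gap, located exactly at what you call the main obstacle. The charging for the clients of $C^*$ beyond the first sampled one is never carried out, and with the algorithm as you state it, it cannot be carried out in the form you set up.

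You run the approximation on $S$ with \emph{unscaled} connection costs. The only aggregate guarantee is then $\sum_{v\in S}d(v,F_S)\le\alpha\OPT(S)\lesssim\alpha\bigl(\mathrm{fac}(\OPT)+p\cdot\mathrm{conn}(\OPT)\bigr)$. But any scheme that bills roughly $1/p$ online clients to each sampled client's offline connection cost needs this sum to be $O(p\cdot\OPT)$. A star shows the loss. Take $m\ge 3$ groups of $g$ co-located clients at points $x_i$, a facility of cost $c$ at each $x_i$, and a hub facility of cost $c$ at distance $D=c/(2pg)$ from every $x_i$. Then $\OPT=mc$ opens every $x_i$, but the hub alone is (typically) optimal for the unscaled sample instance. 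Your scheme then charges $D$ per online client, which is $\Theta(mgD)=\Theta(\OPT/p)$ in total. (Meyerson's algorithm would in fact recover here by opening $x_i$ after on the order of $pg$ arrivals, but your accounting never uses this.) The paper's first fix is to scale the sample's connection costs by $1/p$, i.e., to obtain $\widehat F$ by approximately minimizing $\sum_{f\in F}c_f+\frac1p\sum_{v\in S}d(v,F)$. Comparing against the optimal facility set $F^*$ gives two things: the offline phase costs at most $\alpha\OPT$ in expectation, and, crucially, $\E\bigl[\frac1p\sum_{v\in S}d(v,\widehat F)\bigr]\le\alpha\OPT$.

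Second, even with scaling, a charge of the form $|C^*|\cdot d(j',F_S)$ for one designated sampled client $j'$ cannot be bounded in expectation by treating the multiplicity and the offline connection cost separately, since both depend on the same random sample. This is precisely the correlation problem of \Cref{sec:mfl-error}, and an unspecified exchange argument does not resolve it. Moreover, the quantity you propose to bound, $\E[\sum_{C^*}\mathrm{conn}_{F_S}(j')]$, omits the multiplicity $|C^*|$ altogether.

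The missing idea is to abandon distance-order charging. Each arriving $v$ costs Meyerson's algorithm at most $3d(v,F_{<v})\le 3d(v,\widehat F)\le 3\bigl(d(v,i^*)+d(i^*,\widehat F)\bigr)$ in expectation, where $F_{<v}\supseteq\widehat F$ is the set of facilities open when $v$ arrives. Moreover, $d(i^*,\widehat F)$ is at most the \emph{average} of $d(i^*,u)+d(u,\widehat F)$ over all $u\in S\cap C^*$. On the event $|S\cap C^*|\ge p|C^*|/2$, the factor $|C^*|/|S\cap C^*|$ is deterministically at most $2/p$, so no correlation arises. The cluster then costs at most $3\sum_{v\in C^*}d(v,i^*)+\frac6p\sum_{u\in S\cap C^*}\bigl(d(i^*,u)+d(u,\widehat F)\bigr)$. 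Drop the indicator of this event (all terms are nonnegative) and sum over clusters; this gives $O(\alpha\OPT)$ by the scaled guarantee together with $\E\bigl[\frac1p\sum_{u\in S}d(u,F^*)\bigr]=\mathrm{conn}(\OPT)$. The complementary event has probability at most $e^{-p|C^*|/8}$ by a Chernoff bound. There one simply pays Meyerson's worst case $O(\phi(|C^*|))\cdot\OPT(C^*)$ with $\phi(x)=\log x/\log\log x$, which contributes $O(\phi(1/p))\cdot\OPT(C^*)$ since $e^{-p|C^*|/8}\phi(|C^*|)=O(\phi(1/p))$.

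This per-cluster concentration argument replaces your geometric-length and Jensen step. That step conflates the adversarial arrival order with the distance order, and it treats sampling as independent although $|S|=pn$ is fixed.
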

Leveraging a recent reduction of \cite{gao26}, as an immediate corollary we obtain competitive algorithms for the stochastic versions of the problems above in which the input is sampled according to a Markov Random Field (MRF) as opposed to a product distribution.  Let $\Delta$
be the weighted degree of the MRF, which is a natural parameter measuring the maximum correlation strength of the distribution it
represents.%
\footnote{When $\Delta = 0$, the MRF represents a product distribution, and as $\Delta \rightarrow \infty$, the MRF can represent joint distributions with arbitrary correlations.  We refer the reader to \cite{gao26} for a more complete treatment of this emergent branch of online algorithms.}
\begin{corollary}[Applying {\cite[Theorem 3.5]{gao26}} to \Cref{thm:cov-main,thm:mfl-main}]
    There exist $O(\Delta \cdot \log m + \log n)$-competitive algorithms for online set cover, CIP, and non-metric facility location, and an $O(\Delta)$-competitive algorithm for online metric facility location in the MRF arrival model.  Furthermore, these bounds are tight.
\end{corollary}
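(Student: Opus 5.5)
The corollary follows by plugging \Cref{thm:cov-main,thm:mfl-main} into the black-box reduction of \cite[Theorem 3.5]{gao26}. That reduction takes an algorithm that is $f(\nicefrac{1}{p})$-competitive in the \oas model with a $p$-sample, for a ratio $f$ that depends on $p$ only through $\log \nicefrac{1}{p}$ (up to additive terms independent of $p$). It produces an online algorithm in the MRF arrival model whose ratio is obtained by substituting $\log\nicefrac{1}{p}$ with $\Theta(\Delta)$. Roughly, it simulates a $p$-sample with $p = e^{-\Theta(\Delta)}$ by conditioning the MRF on a random subset of sites, which weakens correlations enough to reach a product-like regime. So the plan is to check, problem by problem, that our guarantees have exactly the form the reduction requires. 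I would first restate the hypotheses of \cite[Theorem 3.5]{gao26} precisely: the class of problems (online covering and facility-location-type minimization problems with subadditive or monotone cost), the requirement that the \oas algorithm work for every $p\in(0,1]$, and whether randomized algorithms and expected cost are allowed.

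For set cover, CIPs and non-metric facility location, \Cref{thm:cov-main} gives ratio $O(\log\nicefrac{1}{p}\cdot\log m + \log n)$. Setting $\log\nicefrac1p = \Theta(\Delta)$ yields $O(\Delta\log m + \log n)$. For metric facility location, \Cref{thm:mfl-main} gives $O\bigl(\log\nicefrac1p/\log\log\nicefrac1p\bigr)$, which is at most $O(\log\nicefrac1p)$ and so becomes $O(\Delta)$. If the reduction instead substitutes into $f$ exactly, we get the slightly stronger $O(\Delta/\log\Delta)$. Either way the stated $O(\Delta)$ bound holds. The edge case $\Delta=0$ should be handled separately: the MRF is then a product distribution, and the reduction should use a constant $p$, giving $O(\log m+\log n)$ and $O(1)$ respectively.

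For tightness, I would take the lower-bound constructions in the MRF model from \cite{gao26}, or transfer our \oas lower bounds through the converse direction of their reduction. An MRF with weighted degree $\Delta$ can encode a hidden ``block'' structure in which each element is revealed with probability about $e^{-\Delta}$. This mimics the \oas lower-bound instances with $p=e^{-\Theta(\Delta)}$, so the lower bounds $\Omega(\log\nicefrac1p\log m+\log n)$ become $\Omega(\Delta\log m+\log n)$, using the standard complexity assumption for the $\log n$ term. For metric facility location, the $\Omega(\Delta)$ lower bound would need to come directly from \cite{gao26}.

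The main obstacle is verifying that the reduction's interface matches our algorithms exactly: that our \oas model (an unbiased $p$-sample with elements possibly having multiplicity, then adversarial order) is the one \cite{gao26} assumes, and that the functional form of the ratios transfers as claimed. The tightness for metric facility location is also delicate, since our ratio carries the $\log\log$ factor; I would rely on the matching lower bound proved in \cite{gao26} rather than derive it from ours.
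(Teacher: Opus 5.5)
Your proposal matches the paper, which gives no argument beyond invoking \cite[Theorem 3.5]{gao26} on \Cref{thm:cov-main,thm:mfl-main}: replace $\log\nicefrac{1}{p}$ by $\Theta(\Delta)$ and take tightness from \cite{gao26}. Your interface checks (the fixed-size sample model, expected-cost guarantees for randomized algorithms, and reading $\Delta$ as $\max\{\Delta,1\}$ so the bounds are not degenerate at $\Delta=0$) are exactly what that one-line application implicitly assumes. One caution for your tightness plan: if the substitution is exact, metric facility location gets $O(\Delta/\log\Delta)$, so no $\Omega(\Delta)$ lower bound can exist, and ``tight'' there can only hold up to that $\log\Delta$ factor.
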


\paragraph{Adversarial Robustness.}
Our second contribution is to introduce and study adversarial robustness in the \oas\ framework. We consider a natural contamination model, inspired by robust statistics, in which the observed sample may differ from a true unbiased $p$-sample in up to $k$ positions; we call this
the \emph{$(p, k)$-robust \oas} model. 
We seek online algorithms whose competitive ratios degrade gracefully with the level of contamination.

One consequence of sample contamination is that an adversary can effectively ``hide'' $O(k/p)$ elements by replacing sampled elements with dummy elements. (Strictly speaking, the adversary can replace only $k$ sampled elements, but there is constant probability that at most $k$ elements from a designated set of size $O(k/p)$ are sampled, allowing the adversary to hide the entire sampled portion of that set.) This observation effectively imports worst-case online lower bounds with $n=O(k/p)$ into the $(p,k)$-robust \oas\ setting.

Our results show that these lower bounds are essentially tight for all the problems considered above, as well as for Steiner tree (for which a tight bound of $O\left(\log \nicefrac{1}{p}\right)$ was already known without sample contamination~\cite{ArgueFGS22}).

\begin{restatable}[Robust Covering Problems]{theorem}{RobustCovering}
\label{thm:robust-cov}
There is a randomized $O\left(\log \nicefrac{k}{p} \cdot \log m + \log n\right)$-competitive algorithm for set cover in the $(p,k)$-robust \oas\ model. Furthermore, the same competitive ratio can be achieved for covering integer programs and non-metric facility location.
\end{restatable}
\begin{restatable}[Robust Metric Facility Location]{theorem}{RobustMetricFacilityLocation}
There is a randomized $O\left(\frac{\log \nicefrac{k}{p}}{\log\log \nicefrac{k}{p}}\right)$-competitive algorithm for metric facility location in the $(p,k)$-robust \oas\ model.
\end{restatable}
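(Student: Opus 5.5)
We prove the robust metric facility location theorem by reducing to the non-robust algorithm of \Cref{thm:mfl-main}, run with a smaller effective sampling rate, and then absorbing the corrupted part of the sample into an online backup. Let $S'$ be the corrupted sample, obtained from a true $p$-sample $S$ by changing at most $k$ positions. We set $q=\min\{1,\, c\,p/k\}$ for a suitable constant $c$ and subsample $S'$ independently at rate $q/p$, keeping each point of $S'$ with probability $q/p$. The result is a multiset $\tilde S$. Its uncorrupted part is distributed as an unbiased $q$-sample of the input. Each of the at most $k$ corrupted points survives with probability $q/p=c/k$, so in expectation only $O(1)$ corrupted points survive, and with constant probability none do. The target ratio is $\log(k/p)/\log\log(k/p)=\Theta(\log(1/q)/\log\log(1/q))$. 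It therefore suffices to show that the algorithm from \Cref{thm:mfl-main}, run with parameter $q$ on $\tilde S$, is $O(\log(1/q)/\log\log(1/q))$-competitive even when $\tilde S$ contains a few adversarial points.

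\textbf{Key steps.} First, we decompose the offline phase. The algorithm from \Cref{thm:mfl-main} computes an (approximately) optimal solution on the sample and opens some of those facilities, possibly scaled, at the start. On the online points it runs a Meyerson-type or primal-dual online facility location routine seeded with these facilities. Its cost analysis charges (i) the offline-opened facilities to $\OPT$, via $\E[\OPT(\text{sample})]\le q\cdot\OPT$-type bounds, and (ii) the online cost to $O(\log(1/q)/\log\log(1/q))\cdot\OPT$. Second, we bound the damage from corrupted points. A surviving corrupted point $z$ can only add cost in the offline solution on $\tilde S$, because the online points are genuine. To neutralise this, we make the offline step robust. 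Instead of solving facility location on $\tilde S$ directly, we solve it on $\tilde S$ with up to $O(1)$ outliers allowed to be discarded. Alternatively, we charge any corrupted point's connection cost against the facility cost it forces, which is at most $f_{\min}$-comparable to an extra facility. Third, we account for the facilities the adversary induces. Since the expected number of surviving corrupted points is $O(1)$, any facilities they cause to be opened cost at most $O(1)$ times the cheapest facility nearest to genuine demand. We bound this by $O(\OPT)$ by conditioning on the event that $\OPT$ opens at least one facility. We can also compare against the cost of opening the genuine sample's facilities. Fourth, we combine the expectations: the uncorrupted part behaves exactly like a $q$-sample, so the \Cref{thm:mfl-main} analysis applies verbatim to it.

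\textbf{Main obstacle.} The hardest step is the second and third, namely ensuring that the offline decisions made using $\tilde S$ do not harm the online phase. Adversarial points could place facilities in useless locations, which costs only $O(1)$ facilities. More seriously, they could perturb the sample-based ``prediction'' that the online analysis relies on, for example by suppressing or shifting the opening thresholds. I would first try to show monotonicity: adding an arbitrary point to the sample can only add facilities, and extra open facilities never increase the online algorithm's subsequent cost by more than the extra opening cost. If the algorithm is not monotone in this sense, I would instead discard $\tilde S$ and use the cleaner but lossier approach. That approach runs the \Cref{thm:mfl-main} algorithm on the corrupted sample and adds a parallel pure-online Meyerson instance restricted to regions where the sample and online arrivals disagree. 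It charges that instance only on the $O(k/p)$ "hidden" points, using the $O(\log n/\log\log n)$ online bound with $n=O(k/p)$.
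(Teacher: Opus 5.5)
There is a genuine gap, and it sits in the steps you flag as the main obstacle. Your intuition of lowering the effective sampling rate to $\Theta(p/k)$ matches the paper, which uses a prefix of length about $np/k$ of a random permutation. The problem is what happens to the corrupted points that remain. After thinning at rate $c/k$, some corrupted point survives with constant probability (about $1-e^{-c}$). A single surviving point $z$ already makes the offline step unboundedly expensive relative to $\OPT$: the scaled sample objective must pay $\frac{1}{q}d(z,\widehat F)$ or open a facility near $z$, and the adversary can put $z$ far from every facility, or next to only very expensive ones.

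None of your remedies controls this:
\begin{itemize}
\item The number of surviving corrupted points has a Poisson-like tail, so any fixed $O(1)$ outlier allowance still leaves unbounded expected damage.
\item ``$O(1)$ extra facilities'' need not cost $O(\OPT)$, since one adversarially induced facility can cost more than all of $\OPT$.
\item Conditioning on $\OPT$ opening a facility only yields $\OPT\ge 1$.
\item The monotonicity you hope for fails, because an $\alpha$-approximation's output is not monotone in its input.
\end{itemize}

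The missing idea is to never let the offline phase spend more than the online phase already has. The paper runs Meyerson's algorithm on every arrival. Each time its cost doubles, it recomputes an offline solution for each guess $K_i$ in a sequence with $f(K_i)$ doubling. So $k$ need not be known, whereas your $q=cp/k$ requires it. For each $K_i$, it takes an $\alpha$-approximate solution on the \emph{longest} prefix of a random permutation of $\csamples$ (length $r\le np/(16K_i)$, demands scaled by $n/r$) whose cost fits within $\cost(\omfl_{\le t})/f(K_i)$. This makes the offline cost $O(\cost(\omfl))$ no matter what the corrupted points are. A critical-time argument then finishes. Let $\beta_K$ be the scaled cost of the clean prefix before the first corrupted sample; then $\E[\beta_K]\le 2\alpha\OPT$. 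Once $\cost(\omfl)$ reaches $f(K)\beta_K$, the affordable prefix length $L$ is at least the clean prefix length $\tau$. The online cost up to that time is $O(f(k))\cdot\alpha\OPT$ in expectation.

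Second, your claim that the uncorrupted part of $\tilde S$ is an unbiased $q$-sample is false. The adversary chooses which points of $S$ to replace after seeing $S$. It can, for instance, always delete the sampled points of a fixed set of $\Theta(k/p)$ clients, which then never appear in $\tilde S$ whatever the thinning. So the analysis of \Cref{thm:mfl-main} cannot be applied verbatim. The paper handles this one optimal cluster $C$ at a time:
\begin{itemize}
\item Clusters with $|C|<10k/p$ are charged to Meyerson's worst-case bound, which is already $O(\log(k/p)/\log\log(k/p))$.
\item For larger clusters, $|S\cap C|\ge p|C|/2$ except with probability $e^{-p|C|/8}$, and at most $k\le p|C|/10$ of these can be corrupted. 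Then \Cref{app-mfluniformity} shows that with probability $1-1/|C|$ the prefix of length $L$ holds $\Omega(L|C|/n)$ clean points of $C$. \Cref{cl:mflcharging} charges the clean points of the longer prefix $L\ge\tau$ back to the clean prefix.
\item Clusters too small relative to $n/L$ are covered by the tail bound on $\E[\phi(n/\max\{1,\tau\})]$ (\Cref{lem:mfl-tail}).
\end{itemize}

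Your fallback, a parallel Meyerson instance on ``regions where the sample and online arrivals disagree'', repairs neither issue. The algorithm cannot identify the hidden online points. It also still runs the non-robust algorithm on the corrupted sample, so it inherits the unbounded offline cost.
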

\begin{restatable}[Robust Steiner Tree]{theorem}{RobustSteinerTree}
There is a deterministic $O\left(\log \nicefrac{k}{p}\right)$-competitive algorithm for Steiner tree in the $(p,k)$-robust \oas\ model.
\end{restatable}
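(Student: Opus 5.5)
We would adapt the sample-then-greedy scheme of Argue et al.~\cite{ArgueFGS22}, changing it so that corrupted sample points cannot force large expenditures. In the uncorrupted setting their algorithm buys an approximate Steiner tree on the sample and then connects each arriving terminal greedily to the current tree. In the $(p,k)$-robust model, however, up to $k$ sample points may be adversarial and arbitrarily far from every real terminal, so buying a tree on the whole observed sample $S'$ can cost far more than $\OPT$. We therefore use the sample \emph{lazily}. Maintain a tree $T$, initially the root or the first terminal. When an online terminal $t$ arrives, let $d_t$ be the distance from $t$ to $T$. Let $A_t$ be the set of not-yet-used sample points within distance $d_t/2$ of $t$. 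We connect $t$ greedily to $T$, and we also connect every point of $A_t$ greedily, treating each as a fresh terminal, so that each sample point is paid for at most once. A sample point is used only when a real terminal is close to it, so a fake point costs at most a constant times the greedy cost of the real terminal that activated it. This keeps the algorithm deterministic; the only randomness is in which points are sampled.

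The analysis follows the scale-based bound for greedy Steiner tree. Sort online terminals by greedy cost, group them into geometric scales, and define scale $j$ to consist of terminals paying roughly $\OPT/2^j$. The classical packing argument shows that at most $O(2^j)$ terminals fall in scale $j$, so each scale costs $O(\OPT)$ in total. We split the scales at $j^* = \Theta(\log \nicefrac{k}{p})$. The scales $j \le j^*$ contribute $O(\log \nicefrac{k}{p})\cdot \OPT$ directly. For $j > j^*$ we charge against the sample. Any greedy payer at scale $j$ is the center of a ball of radius $\Theta(\OPT/2^j)$. Balls of this kind, taken together with the terminals they contain, have about $2^j \gg k/p$ terminals in total. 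Among those terminals, in expectation a $p$-fraction are genuinely sampled, and by a Chernoff-type bound more than $k$ of them are. Hence at least one survives the adversary's $k$ modifications at each scale, after the usual doubling argument over sub-balls. Once such a clean sample point near a late terminal has been activated and connected, the late terminal's greedy distance drops by a constant factor. We then sum a geometric series over the scales $j > j^*$, bounding their expected total cost by $O(\OPT)$.

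We also need to charge the connection costs of activated sample points, including the fakes. We bound them by the cost of the terminal that activated them plus the greedy cost they would incur as terminals. Every clean activated point is a real terminal, so the greedy bound on the set of real terminals applies. There are at most $k$ fakes, each activated by a real terminal within distance $d_t/2$, and each costs $O(d_t)$; this total is absorbed into the terminal's own charge.

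The main obstacle is the charging in the second paragraph: showing that terminals at high scales really do have nearby \emph{activated} clean sample points, and not merely nearby sample points. Because activation is lazy, a clean sample point might never be reached before the terminals that need it arrive. We would first try to activate sample points within distance $c\cdot d_t$ rather than only $d_t/2$, and argue by induction on the order of arrival that some earlier terminal, or $t$ itself, activates the relevant clean sample point. The adversary's $k$ deletions would be handled by the union bound over the $O(2^j)$ balls at scale $j$. If this fails, we would fall back to a union bound over scales with failure probability $2^{-\Omega(2^{j-j^*})}$, multiplied by the trivial bound of $O(2^j)\cdot\OPT/2^j$ on the scale's cost.
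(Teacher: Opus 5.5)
\textbf{The algorithm fails, not just the analysis.} The problem is how activated sample points are paid for. A single arrival $t$ activates \emph{every} unused sample point within distance $d_t/2$. These points need not be close to one another, so the charge to $t$ is $|A_t|\cdot d_t/2$, not $O(d_t)$. Your sentence ``there are at most $k$ fakes \ldots\ this total is absorbed into the terminal's own charge'' is exactly where this breaks.

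Concretely, put all terminals at (or infinitesimally near) a point $x$ with $d(r,x)=D$. Let the adversary replace $k$ sampled terminals by points $y_1,\dots,y_k$ forming a star around $x$, with $d(x,y_i)=D/2-\epsilon$ and $d(y_i,y_j)=D-2\epsilon$. The first arrival has $d_t=D$ and activates all the $y_i$. Connecting them greedily costs about $kD/2$, while $\OPT(X)\approx D$. For constant $p$ this is $\Omega(k)$, against the target $O(\log k)$.

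The paper avoids this by tying all sample-based spending to the online algorithm's own spending. $\offst$ buys an $\alpha$-approximate tree on the longest prefix of a uniformly random ordering of $\csamples$ that fits within budget $\cost(\onst_{\le t})$. It is re-run only when that cost doubles. Hence $\cost(\offst)=O(\cost(\onst))$ no matter where the fake points are placed.

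\textbf{The charging step you flag as the main obstacle is also missing, and the proposed route does not work.} Knowing that more than $k$ sampled points fall in the union of the scale-$j$ balls only guarantees that at least one clean point survives somewhere in that union. It does not give a clean point near each (or most) scale-$j$ terminals. An individual ball may contain $O(1)$ terminals, so there is no per-ball concentration. The adversary also picks its $k$ modifications after seeing $\samples$, so a union bound over balls does not control it. Similarly, invoking ``the greedy bound on real terminals'' for activated clean points only yields $O(\log n)$.

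The missing idea is the clean prefix.
\begin{enumerate}
\item Randomly order $\csamples$; the paper's algorithm does use this randomness, despite the word ``deterministic'' in the statement. Take the prefix $\widetilde Z$ before the first corrupted element. With random priorities, conditioned on the priority $\theta$ of the first corrupted point, each clean sampled point lies in $\widetilde Z$ independently with probability $\theta$, and $\E[\ln(1/\theta)]=H_k$.
\item At the first doubling time with $\cost(\onst)\ge\alpha\OPT(X)$, the set $\widetilde Z$ is affordable because $\OPT(\widetilde Z)\le\OPT(X)$. So it is bought.
\item Afterwards, cut the shortcut Euler tour of an optimal tree into arcs at $\widetilde Z\cup\{r\}$. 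Greedy on an arc with $L$ vertices costs $O(\log L)$ times the arc's length, by repeated halving with matchings.
\item The $H$-th clean sampled point along the tour comes no later than the $(H+k)$-th originally sampled point. Jensen then gives $\E[\log L_e]=O(\log(k/p))$ for every tour edge $e$.
\end{enumerate}
This arc-counting argument is what neutralizes an adaptive adversary; a scale/ball Chernoff argument does not.
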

\paragraph{Notation.}
For any ordered sequence $z_1, z_2, \ldots$ and a valid index $i$, let $z_{< i}$ denote the prefix $z_1, \ldots, z_{i-1}$.  Throughout, we assume (w.l.o.g.) $p \leq 1/2$ and $k \geq 1$. 

\subsection{Techniques and Overview}

\paragraph{Set Cover in the \oas Model.}
\Cref{sec:sc} contains our results for set cover, and we start with the no-corruptions case in \Cref{sec:psamplesc} to illustrate the core idea. Our jumping-off point is the previous state-of-the-art strategy of \cite{GuptaKL24}, which was to run an approximation algorithm on the sample (specifically, a random order online algorithm\footnote{It is not clear if a random order algorithm is strictly necessary or if a generic approximation algorithm suffices, but random order is convenient for analysis and will be especially so when we move to the with-corruptions case.}) and a na\"ive algorithm (buy the cheapest set covering any still uncovered element, a.k.a.\ the ``backup'' set) on the remainder. They obtain a competitive ratio linear in $\nicefrac{1}{p}$ by charging these backup costs to the cost of the solution for the sample, and this analysis is tight.\footnote{Consider an instance where one set covers the universe, but every element also has many singleton sets containing it. If $p=\nicefrac{1}{n}$ and the sample consists of a single element, then any algorithm will pick a singleton with high probability because it cannot distinguish the sets containing this element. The sum of backup costs of the remaining instance is thus $n$ times the cost to cover the sample.}
To improve this to, e.g., $\log \nicefrac{1}{p}$, one must fall back on a worst-case online algorithm such as \cite{buchbinder2009online} when $p$ is small, but this strategy should somehow interpolate smoothly between \cite{GuptaKL24} and \cite{buchbinder2009online} with respect to $p$.

A natural idea is to warm-start \cite{buchbinder2009online} with information from the sample, because \cite{buchbinder2009online} admits both potential and primal-dual\footnote{Dinitz et al.\ \cite{dinitz21} precisely run a primal-dual algorithm with a dual estimated from samples.} based analyses that seem amenable to a head start argument. Sadly, they are not. Consider an instance consisting only of singleton sets. For any $p \leq \nicefrac{1}{2}$, the only solution for the sample is to buy the relevant singletons, and this does not inform how to solve the remainder of the instance. 
Here the saving grace is that this all-singleton instance is trivial to solve optimally, but more generally, why should it be the case that instances for which the sample is unlikely to be informative must structurally be very ``disjoint'' and easy to approximate? We answer this question via a 3-stage algorithm that precisely shows this learnable-or-easy dichotomy: (1) run a (random order) approximation algorithm on the sample, then (2) pass uncovered elements to \cite{buchbinder2009online}, but \emph{with a budget cap of $O(\log \nicefrac{1}{p})$} (as a function of $p$) on the allowed success amplification, and only then (3) use the na\"ive algorithm to cover any still surviving elements. The heart of the argument is to show that the instance remaining after stages (1) and (2) admits a good approximation by the na\"ive buy-backup-set algorithm.

\paragraph{Set Cover in the Robust \oas Model.}
Next, in \Cref{sec:setcover}, we extend the algorithm to the robust \oas  setting. The algorithm itself changes only slightly: during the sample phase, we additionally enforce a budget constraint. The analysis, however, becomes substantially more delicate because even a single corruption can introduce arbitrary correlations among the sampled elements. For example, an adversary may replace a particular sampled element by a dummy unless the entire sample equals some predetermined set $S$. Conditioned on observing that element in the corrupted sample, the remainder of the sample is then completely determined. Our goal is therefore to extract from the corrupted sample a sufficiently unbiased subsample that still provides useful information for the first stage of the algorithm.

The key insight is to imagine terminating the sample phase immediately before the first corrupted sample appears. Intuitively, this leaves an effectively uniform sample of density $p'=\nicefrac{p}{k}$, suggesting a competitive ratio of $O(\log \nicefrac{1}{p'})=O(\log \nicefrac{k}{p})$. The difficulty is that the adversary chooses which samples to corrupt \emph{after} the sample has been realized, so the resulting prefix need not be uniformly distributed. To overcome this obstacle, we prove a technical statement that we call the \emph{near-uniformity lemma}. Consider the process of repeatedly drawing uniformly random samples without replacement from the uncorrupted elements. We show that the first $O(\nicefrac{np}{k})$ samples obtained in this way are nearly uniformly distributed over all but an expected $O(\nicefrac{k}{p})$ elements of the input. Since the uncorrupted samples that appear before the first corrupted sample are generated by precisely this process, the lemma recovers the intuition above and the analysis goes through. 

We leave the extensions to CIPs and non-metric facility location to \Cref{sec:cip} and \Cref{sec:nmfl} respectively.

\paragraph{Metric Facility Location.}
For the uncorrupted case in \Cref{sec:metricpsample}, we use the same algorithm as \cite{ArgueFGS22}, but we patch the analysis (we outline the issue in \cite{ArgueFGS22} in \Cref{sec:mfl-error}). The idea is to run a constant factor approximation on the sample and then to use Meyerson's algorithm for the online portion. The main fix is to switch from an expectation to a high probability argument: for each cluster in $\OPT$, if we condition on seeing at least half of the expected number of samples from this cluster (a high probability event), we can successfully charge to the cost of connecting this cluster online to the portion of it paid during the offline phase. The bad case in which we see too few samples is sufficiently unlikely that we can afford to suffer the worst case approximation ratio of Meyerson's algorithm for that cluster. To extend this analysis to the robust \oas model, we borrow intuition from our set cover analysis and only use the prefix of the samples before the first corrupted element. Though this subset is not truly uniform, this turns out not to affect the analysis materially. We give details of this analysis in \Cref{sec:metricfacility}.

\paragraph{Steiner Tree.}
In \Cref{sec:steinertree}, we handle the Steiner tree problem. The main idea behind the algorithm of \cite{ArgueFGS22} for uncorrupted samples is that the Eulerian tour of the metric-completion of the terminals, which is a good proxy for the optimal Steiner tree, is a cycle that gets cut into arcs of expected length $\nicefrac1p$ by the sampled terminals. The greedy algorithm, which is $O(\log n)$-competitive in the worst case, morally runs on each such arc disjointly, thus yielding a competitive ratio $O(\log \nicefrac1p)$. Once again, we show that the prefix of roughly $\nicefrac{np}{k}$ uncorrupted samples is sufficient to salvage a competitive ratio of $O(\log \nicefrac{k}{p})$ since it is uniform enough to divide the cycle into roughly $\nicefrac{np}{k}$ equal parts containing approximately $\nicefrac{k}{p}$ terminals each.

\subsection{Related Work}

\textbf{Online Algorithms with Advice.}
The works most directly related to ours are those of Argue~et al.~\cite{ArgueFGS22} and Gupta~et al.~\cite{GuptaKL24}: the former gave $p$-sample algorithms for load balancing, Steiner tree, and metric facility location, the latter gave $p$-sample algorithms for covering problems. Kaplan~et al.~\cite{KaplanNR20,KaplanNR22} studied sample-based models for the secretary problem and online weighted matching, and Correa~et al.~\cite{CorreaCFOT21} considered a closely related independent-sampling variant. Kumar~et al.~\cite{KumarPSSV19} considered a similar semi-online version of bipartite matching where part of the input is available offline and the rest arrives online, and Lattanzi~et al.~\cite{LattanziMVWZ21} later applied this model to correlation clustering. Chen~et al.~\cite{ChenIMXZ24} also studied a related sample-based model for online ranking. Our work is related to learning-augmented online algorithms, where predictions are used to improve online performance~\cite{lykouris2021competitive,mitzenmacherpredictions}: in fact one can view $p$-sample algorithms as playing both the role of the machine learning model trained on historical data and the online algorithm making use of the prediction.

\textbf{Beyond Worst-Case Analysis.}
Sample-augmented algorithms fit into the broader literature on beyond worst-case analysis; see \cite{beyondworstcase} for a comprehensive survey. Most relevant to us are random-order online algorithms for set cover and covering integer programs~\cite{KL21}, non-metric facility location~\cite{GuptaKL24}, and metric facility location~\cite{meyerson2001online,kaplan2023almost}. A nearby but distinct way to relax the worst case is to assume the online input is generated by a distribution, and this model has been studied for Steiner forest, facility location, and vertex cover~\cite{garg2008stochastic}, stochastic set cover~\cite{ggl13}, online stochastic covering and packing~\cite{devanurcovering}, and stochastic submodular cover~\cite{AgarwalAK19}, among many others. Gao~et al.~\cite{gao26} recently connected $p$-sample guarantees to guarantees for stochastic arrivals with graphical dependencies.

\textbf{Adversarial Robustness.} There has recently been an explosion of interest in adversarial robustness for statistics and machine learning. The literature is too vast to cite exhaustively here, but we mention some examples in the contexts of Gaussian mixture models \cite{bakshi20,bakshi26}, mean estimation \cite{hopkins19,kane24}, inference in graphical models \cite{hopkins24}. (See also the surveys \cite{diakonikolas2021robustness,diakonikolas2023algorithmic}.) There is also a line of work, straddling theory and practice, on building deep learning models that are robust to adversarially injected training data; see the influential paper \cite{madry18} and the subsequent tutorial \cite{kolter2018adversarial} for further references.

\section{Online Set Cover}
\label{sec:sc}

Let $(\family, \univ)$ be a set system consisting of $|\family| = m$ sets from a universe of $|\univ| = n$ elements. Each set $F \in \family$ has a nonnegative value $\cost(F)$.  For any collection of sets $\cC \subseteq \family$, let $\cover(\cC) = \cup_{F \in \cC} F$ and $\cost(\cC) = \sum_{F \in \cC} \cost(F)$.  The goal in set cover is to find $\cC \subseteq \family$ such that $\cover(\cC) = \univ$ and $\cost(\cC)$ is minimum. In the \emph{online} version, $\univ$ is initially unknown and its elements are revealed one at a time.  Each element $x \in \univ$ upon arrival, reveals the sets in $\family$ to which it belongs, and if $x$ is not already covered, the algorithm must immediately pick one such set to cover $x$. The goal is to minimize the total cost of the sets chosen by the algorithm.  Let $\OPT(\univ')$ denote the optimum set cover cost of $\univ' \subseteq \univ$; let $\OPT = \OPT(\univ)$.

In the \emph{$p$-sample model} for set cover, the algorithm samples a set $\samples$ uniformly at random from all subsets of size $s = pn$, before the online sequence begins. Then the online input consists of the remaining elements 
$\univ \setminus \samples$ in an arbitrary (and possibly adversarial) order. 

In the \emph{$(p,k)$-adversarially robust model}, the adversary is allowed to modify any $k$ elements in $\samples$ after they are sampled, but before they are considered by the algorithm; we use $\csamples$ to denote the corrupted samples. 

We use the following rounding algorithm~\cite[Section 2.2.2]{buchbinder2009design}:
\begin{lemma}[Set cover rounding]
\label{lem:sc-rounding}
    Let $\cA$ be a fractional online set cover algorithm that takes an  element sequence $x_1, x_2, \ldots$ as input and produces coordinate-wise nondecreasing fractional solutions $y_1 \leq y_2 \leq \cdots$, with competitive ratio $\alpha$. There is an online algorithm $\mathsf{Round}$ that takes $y_1 \leq y_2 \leq \cdots$ as input and a parameter $\beta \in (0,1]$, and produces (not necessarily feasible) integral solutions $\cC_1 \subseteq \cC_2 \subseteq \cdots$ with the following properties: (i) for every step $t$, 
    $\expect*{\cost(\cC_t)} \leq \alpha \log \nicefrac1\beta \cdot \OPT(\{x_1, \ldots, x_t\})$ and (ii) for every $t' \leq t$, 
    $\Pr[x_{t'} \notin \cover(\cC_t)] \leq \beta$.
\end{lemma}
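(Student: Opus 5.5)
The rounding lemma is the standard threshold-based randomized rounding for online set cover, so the plan is to reproduce that construction with a tunable number of thresholds. For each set $F\in\family$ we independently draw $L = \lceil \ln \nicefrac1\beta \rceil$ thresholds $\theta_F^{(1)},\ldots,\theta_F^{(L)}$, uniform in $[0,1]$. At each step $t$, $\mathsf{Round}$ puts $F$ in $\cC_t$ if $y_{t}(F) \ge \min_j \theta_F^{(j)}$. Since $y_1\le y_2\le\cdots$ coordinatewise, membership is monotone in $t$, so $\cC_1\subseteq\cC_2\subseteq\cdots$ and the procedure is online. (If we want a deterministic choice whenever $y_t(F)\ge 1$, we can cap $y$ at $1$; this changes nothing below.)

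\textbf{Cost, property (i).} For fixed $t$,
\[
\Pr\bigl[F\in\cC_t\bigr] = 1-(1-y_t(F))^L \le L\, y_t(F)
\]
by a union bound over the $L$ thresholds. Linearity of expectation then gives
\[
\expect*{\cost(\cC_t)} \le L\sum_F \cost(F)\, y_t(F).
\]
The fractional algorithm is $\alpha$-competitive on the prefix $x_1,\ldots,x_t$, so $\sum_F \cost(F) y_t(F)\le \alpha\,\OPT(\{x_1,\ldots,x_t\})$. Hence $\expect*{\cost(\cC_t)} \le \alpha\lceil\ln\nicefrac1\beta\rceil\cdot\OPT(\{x_1,\ldots,x_t\})$. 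This is $O(\alpha\log\nicefrac1\beta)$, which is the stated bound once constants are absorbed. When $\beta$ is close to $1$, $L=1$ still suffices: the stated bound degenerates, but property (ii) is then trivial and the constant can be handled separately.

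\textbf{Coverage, property (ii).} Fix $t'\le t$. Feasibility of $y_{t'}$ gives $\sum_{F\ni x_{t'}} y_{t'}(F)\ge 1$, and monotonicity gives $y_t\ge y_{t'}$. For each $F\ni x_{t'}$ and each $j$, the event $\theta_F^{(j)} > y_t(F)$ has probability at most $1-y_{t'}(F)$, and all these events are independent. So
\[
\Pr\bigl[x_{t'}\notin\cover(\cC_t)\bigr] \le \prod_{F\ni x_{t'}}\bigl(1-y_{t'}(F)\bigr)^L \le \exp\Bigl(-L\sum_{F\ni x_{t'}} y_{t'}(F)\Bigr)\le e^{-L}\le\beta.
\]

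The main subtlety is making sure the correlation across time steps causes no trouble. It does not, because thresholds are drawn once up front and both properties are evaluated at a single time $t$ using only monotonicity of $y$. The other care point is the range of $\beta$ near $1$ in the logarithmic factor, which we handle by using $\max(1,\lceil\ln\nicefrac1\beta\rceil)$ copies and absorbing constants.
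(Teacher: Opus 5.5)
Your construction is the standard per-set threshold rounding of Alon et al.\ and Buchbinder--Naor. The paper does not reprove this lemma; it cites that rounding. Your verification of monotonicity, of $\Pr[F\in\cC_t]\le L\,y_t(F)$, and of the miss probability $\prod_{F\ni x_{t'}}(1-y_{t'}(F))^L\le e^{-L}\le\beta$ is correct.

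What you actually prove, however, is a cost bound of $\alpha\lceil\ln\nicefrac1\beta\rceil\cdot\OPT(\{x_1,\ldots,x_t\})$, and your remark about $\beta$ near $1$ is wrong. Take $\beta=0.9$:
\begin{itemize}
\item property (ii) is not trivial there, since the empty solution violates it;
\item your bound with $L=1$ is $\alpha\cdot\OPT$, which exceeds the claimed $\alpha\ln(10/9)\cdot\OPT\approx 0.1\,\alpha\cdot\OPT$;
\item as $\beta\to1$ this ratio is unbounded, and even for $\beta\le 1/e$ the ceiling loses up to a factor of nearly $2$.
\end{itemize}
This is harmless for the paper's uses. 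There $\beta$ is always $p$ or $p/\widehat k$, hence at most $1/2$, and $\lceil\ln\nicefrac1\beta\rceil=O(\ln\nicefrac1\beta)$. The one exception is the explicit constant in \Cref{thm:oxtsc_poly}, which would change by a constant factor.

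If you want the lemma exactly as stated for all $\beta\in(0,1]$, replace the $L$ uniform thresholds by a single independent clock $E_F\sim\mathrm{Exp}(1)$ per set. Put $F\in\cC_t$ iff $\lambda\,y_t(F)\ge E_F$, with $\lambda=\ln\nicefrac1\beta$ and no rounding. Then:
\begin{itemize}
\item monotonicity still follows from $y_1\le y_2\le\cdots$;
\item $\Pr[F\in\cC_t]=1-e^{-\lambda y_t(F)}\le\lambda\,y_t(F)$, which gives exactly $\alpha\ln\nicefrac1\beta\cdot\OPT(\{x_1,\ldots,x_t\})$;
\item $\Pr[x_{t'}\notin\cover(\cC_t)]=\exp\bigl(-\lambda\sum_{F\ni x_{t'}}y_t(F)\bigr)\le e^{-\lambda}=\beta$.
\end{itemize}
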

We refer to the (not necessarily feasible) integral online set cover algorithm $\overline \cA = \mathsf{Round}(\cA, \beta)$ as \emph{rounding} algorithm $\cA$ with \emph{boosting parameter} $\log \nicefrac1\beta$.

All our algorithms assume black box access to a fractional online set cover algorithm $\fosc$, which can be rounded online using \Cref{lem:sc-rounding} (yielding $\rfosc$), and a random order set cover algorithm $\rosc$.\footnote{While we only need black box access to $\fosc$ and $\rosc$ to describe the algorithm, we achieve tight bounds by choosing these as the seminal algorithm of \cite{AABN09,buchbinder2009online} and the LearnOrCover algorithm of \cite{KL21}.} 
 
\subsection{$p$-Sample Model}
\label{sec:psamplesc}

We begin with our  algorithm for the no-corruptions case.   For simplicity, for a current solution $\cC$ and an element $x \notin \cover(\cC)$, let
$\backup(x) = \arg\min \{ \cost(F) \mid x \in F, F \in \family \}$, i.e., it is the least cost \emph{backup cover} for an element not covered by the current solution.  

The algorithm has the following phases:
\begin{enumerate}[nosep]
\item In the offline phase, run the $\rosc$ algorithm on the (randomly permuted) sample $\samples$ and let
$\cC_\samples^{\rosc}$ be the solution obtained.  Let $\cC^{\rfosc}_{0} = \cC^\bk_0 = \emptyset$.
\item In the online phase, let $x$ be the element arriving at step $t$:
\begin{enumerate}
\item if $x \in \cover(\cC_\samples^{\rosc})$, then 
$\cC^{\rfosc}_{t} = \cC^{\rfosc}_{t-1}, \cC^{\bk}_{t}= \cC^{\bk}_{t-1}$ and skip. 
\item apply the $\fosc$ algorithm and round the fractional solution online with boosting parameter $\log \nicefrac1p$ (\Cref{lem:sc-rounding}) to obtain a solution $\cC^{\rfosc}_t$. 
\item 
if $x \notin \cover(\cC_\samples^{\rosc} \cup \cC^{\rfosc}_t \cup \cC^{\bk}_{t-1})$, then 
$\cC^\bk_t = \cC^\bk_{t-1} \cup \{\backup(x)\}$,
else $\cC^\bk_{t} = \cC^\bk_{t-1}$.
\end{enumerate}
The solution at this step is given by
$\cC^\rosc_\samples \cup \cC^{\rfosc}_t \cup  \cC^\bk_t$. 
\end{enumerate}
Let $\cC^{\rfosc}$ (resp., $\cC^\bk$) denote the final value of $\cC^{\rfosc}_t$ (resp., $\cC^\bk_t$).   The final solution is then $\cC^\rosc_\samples \cup \cC^{\rfosc} \cup \cC^\bk$.

\subsubsection{Analysis}
To analyze, we need an idea from \cite{GuptaKL24}, which we reproduce here for completeness.  Intuitively, the total backup cost of elements that were not sampled cannot be too large and can be charged to the cost of $\rosc$ on the sample.  Let $\unc^{\rosc} = \{ x \in \univ \setminus \samples \mid x \notin \cover(\cC_\samples^{\rosc}) \}$ be the
elements not covered in the offline phase. 
\begin{lemma}
\label{lem:backup_costs}
It holds that $\expect*{\sum_{x \in \unc^{\rosc}} \cost(\backup(x))} \leq \nicefrac{\Delta}{p} \cdot \OPT$, where $\Delta$ is the competitive ratio of $\rosc$.
\end{lemma}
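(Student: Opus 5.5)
The plan is the standard exchangeability (``sample-then-test'') argument: compare the backup cost of a non-sampled element with the cost $\rosc$ pays when that element appears as one extra sample. Write $s=pn$. The sample $\samples$ is a uniformly random $s$-subset, and $\rosc$ processes it in uniformly random order. Equivalently, draw a uniformly random ordering $x_1,\dots,x_n$ of $\univ$ and let $\samples=\{x_1,\dots,x_s\}$, so that $\rosc$ runs on the prefix. For $i\le n$, let $\cC_i$ be the solution $\rosc$ has after processing the prefix $x_1,\dots,x_i$; running $\rosc$ one step beyond the sample is a purely analytic device. I use the standard property of random-order algorithms such as LearnOrCover~\cite{KL21} that, on a uniformly random ordering, $\expect*{\cost(\cC_i)}\le \Delta\cdot\OPT$ for every prefix, and in particular that the expected cost incurred at step $i$ is controlled.

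\textbf{Key identity.} Fix the unordered set $\{x_1,\dots,x_{s+1}\}$. Conditioned on it, $x_{s+1}$ is a uniformly random member, and $x_1,\dots,x_s$ is a uniformly random ordering of the rest. Hence the distribution of $(\samples, x_{s+1})$ is exactly the distribution of (a uniform $s$-sample, a uniform non-sampled element). This gives
\[
\expect*{\sum_{x\in \unc^{\rosc}} \cost(\backup(x))} = (n-s)\,\expect*{\ind[x_{s+1}\notin \cover(\cC_s)]\,\cost(\backup(x_{s+1}))}.
\]
The right-hand side is the expected backup cost of the $(s+1)$-st random-order arrival, provided it is uncovered on arrival. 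For the algorithms we use, I will argue this is at most the expected cost $\rosc$ incurs on that step, $\expect*{\cost(\cC_{s+1})-\cost(\cC_s)}$. In LearnOrCover, an uncovered arriving element always buys its cheapest covering set, in addition to the learning step. If $\rosc$ were a black box without this property, I would instead modify it, at constant loss, to also buy $\backup(x)$ whenever $x$ is uncovered on arrival; this keeps the random-order ratio $O(\Delta)$, since the greedy backups are exactly what such algorithms pay anyway.

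\textbf{Averaging.} Summing the same identity over the steps $j\le s$, each marginal cost term satisfies $\expect*{\text{cost at step } j}\ge \expect*{\ind[x_j \text{ uncovered}]\cost(\backup(x_j))}$. The expected backup cost of an uncovered $j$-th arrival is nonincreasing in $j$, because covered sets only grow and the element is exchangeable. Therefore
\[
s\cdot\expect*{\ind[x_{s+1}\notin\cover(\cC_s)]\cost(\backup(x_{s+1}))} \le \sum_{j\le s}\expect*{\text{cost at step } j}\le \expect*{\cost(\cC_s)}\le \Delta\cdot \OPT.
\]
Combining with the identity gives the bound $\frac{n-s}{s}\,\Delta\,\OPT\le \nicefrac{\Delta}{p}\cdot\OPT$.

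\textbf{Main obstacle.} The delicate step is the monotonicity claim. The backup cost of the element at position $j$, conditioned on it being uncovered, must not increase with $j$. I would prove it by coupling: swapping positions $j$ and $s+1$ is measure-preserving, and the covered set at time $s$ contains the covered set at time $j-1$. Strictly, this requires comparing $\cC_s$ against $\cC_{j-1}$ evaluated on the same future element, which uses that $\rosc$'s solution is monotone over time.
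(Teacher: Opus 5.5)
Your proof is correct and is essentially the paper's argument. It uses the same four ingredients: exchangeability to move a non-sampled element into position $j\le s$, monotonicity $\cover(\cC_{j-1})\subseteq\cover(\cC_s)$, averaging over $j\le s$, and the bound $\cost(\cC_s)\ge\sum_{j\le s}b_{\cC_{j-1}}(x_j)$, where $b_{\cC}(x)=\cost(\backup(x))$ if $x\notin\cover(\cC)$ and $0$ otherwise.

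Two small remarks. First, that last bound holds for every feasible $\rosc$, since whatever set covers an uncovered $x_j$ costs at least $\cost(\backup(x_j))$; so you need neither a LearnOrCover-specific property nor a constant-loss modification, and you get exactly $\nicefrac{\Delta}{p}$. Second, the quantity that is monotone in $j$ is the unconditional $\E[b_{\cC_{j-1}}(x_j)]$ used in your display, not the conditional-on-uncovered version stated in your last paragraph, which can increase with $j$.
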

\begin{proof}
Let $\pi = (\pi_1, \ldots, \pi_n)$ be a uniformly chosen random
permutation of $\univ$ and let 
$\samples$ denote the first $pn$ elements according to $\pi$.  Since the 
$s$ elements of $\samples$ are the input to the
$\rosc$ algorithm, let $\cC^{\rosc}_i$ be its output on the prefix $(\pi_1, \ldots, \pi_i)$;   
from our earlier notation, we have $\cC_\samples^\rosc
= \cC^\rosc_s$.

For simplicity, let $b_{\cC}(x) = \cost(\backup(x))$ if $x \notin \cover(\cC)$ and $0$ otherwise.  
We need to upper bound the quantity 
$\expect*{\sum_{x \in \unc^{\rosc}} \cost(\backup(x))} = \expect*{\sum_{i = s+1}^n b_{\cC^{\rosc}_{s}}(\pi_i)}$.
Note that by definition
$b_{\cC'}(x) \leq b_{\cC}(x)$ if
$\cC' \supseteq \cC$; therefore  
$b_{\cC_{s}^{\rosc}}(\pi_i)
\leq b_{\cC_{j-1}^{\rosc}}(\pi_i)$ for $j \leq s$. 
Averaging this inequality over 
$j = 1, \ldots, s$, taking expectations, and summing over $i = s+1, \ldots, n$, we obtain:
\begin{eqnarray}
\label{eq:31}
\expect*{\sum_{i = s+1}^n b_{\cC_{s}^{\rosc}} (\pi_i)}
& \leq & 
\sum_{i = s+1}^n \frac{1}{s} \cdot
\sum_{j = 1}^s \expect*{b_{\cC_{j-1}^{\rosc}}(\pi_i)}.
\end{eqnarray}
We now claim that 
$\expect*{b_{\cC_{j-1}^{\rosc}}(\pi_i)} = \expect*{b_{\cC_{j-1}^{\rosc}}(\pi_j)}$.  Indeed, conditioned on $(\pi_1, \ldots, \pi_{j-1})$ and the randomness of
$\rosc$, the output $\cC_{j-1}^\rosc$ is fixed and since $\pi_i$ and $\pi_j$ are both uniformly distributed over $\univ \setminus \{\pi_1, \ldots, \pi_{j-1}\}$, the claim follows.   Also, 
$\cost(\cC_{\samples}^{\rosc}) \geq \sum_{j = 1}^s b_{\cC_{j-1}^{\rosc}}(\pi_j)$ and $\rosc$ is $\Delta$-competitive.  Applying these observations to \eqref{eq:31}, we obtain 
\begin{align*}
\expect*{\sum_{i = s+1}^n b_{\cC_{s}^{\rosc}} (\pi_i)}
& \leq 
\frac{n-s}{s} \cdot 
\sum_{j = 1}^s \expect*{b_{\cC_{j-1}^{\rosc}}(\pi_j)}
\leq
\frac{1}{p}\cdot
\expect*{\cost(\cC_{\samples}^{\rosc})} 
\leq \frac{\Delta}{p} \cdot \expect*{\OPT(\samples)}
\leq \frac{\Delta}{p} \cdot \OPT.
\qedhere
\end{align*}
\end{proof}

\begin{theorem}
\label{thm:oxtsc_poly}
If $\rosc$ is a random order set cover algorithm with competitive ratio $\Delta$ and $\fosc$ is a fractional online set cover algorithm with competitive ratio $\alpha$, then the algorithm above has expected competitive ratio $2\Delta + \alpha \log \nicefrac{1}{p}$. 
\end{theorem}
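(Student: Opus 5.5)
The plan is to split the final solution $\cC^\rosc_\samples \cup \cC^{\rfosc} \cup \cC^\bk$ into its three parts, bound the expected cost of each separately, and add the bounds. The target is $\Delta\cdot\OPT$ for the offline part, $\alpha\log\nicefrac1p\cdot\OPT$ for the rounded online part, and $\Delta\cdot\OPT$ for the backup part, which gives $2\Delta + \alpha\log\nicefrac1p$.

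\emph{Offline part.} This is immediate. $\rosc$ runs on the randomly permuted sample $\samples$, so its $\Delta$-competitiveness gives $\expect*{\cost(\cC^\rosc_\samples)} \le \Delta\cdot\expect*{\OPT(\samples)} \le \Delta\cdot\OPT$, using monotonicity $\OPT(\samples)\le\OPT$.

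\emph{Rounded online part.} Condition on $\samples$ and on the internal randomness of $\rosc$. The sequence fed to $\fosc$ is then a fixed subsequence $x_1,x_2,\ldots$ of $\univ\setminus\samples$: the unsampled elements not covered by $\cC^\rosc_\samples$, in the adversary's order. I take the adversary to be oblivious to the rounding coins, although the order may depend on $\samples$. Apply \Cref{lem:sc-rounding} with $\beta = p$, i.e.\ boosting parameter $\log\nicefrac1p$. Property (i) at the final step gives $\expect*{\cost(\cC^{\rfosc})} \le \alpha\log\nicefrac1p\cdot\OPT(\{x_1,\ldots\}) \le \alpha\log\nicefrac1p\cdot\OPT$. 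Removing the conditioning preserves this bound.

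\emph{Backup part.} Here the previous lemma is used. Under the same conditioning, an element $x=x_t\in\unc^\rosc$ triggers a backup purchase only if $x\notin\cover(\cC^{\rfosc}_t)$. By property (ii) of \Cref{lem:sc-rounding}, applied with $t'=t$, this happens with probability at most $p$ over the rounding randomness. Elements covered by $\cC^\rosc_\samples$ never trigger a backup. Hence
\[
\expect*{\cost(\cC^\bk)\mid \samples,\rosc} \le p\sum_{x\in\unc^\rosc}\cost(\backup(x)).
\]
Taking expectations over $\samples$ and $\rosc$ and applying \Cref{lem:backup_costs} gives $\expect*{\cost(\cC^\bk)} \le p\cdot\nicefrac{\Delta}{p}\cdot\OPT = \Delta\cdot\OPT$.

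\emph{Main obstacle.} The delicate point is this last step, which multiplies the per-element failure probability $p$ against backup costs that are themselves random. It works only if the rounding coins are independent of $\samples$ and of $\rosc$'s randomness, and if the set $\unc^\rosc$ and the sequence given to $\fosc$ are fixed once we condition on $(\samples,\rosc)$. I would state this conditioning explicitly, noting that the adversarial order may depend on $\samples$ but not on the rounding coins. Then \Cref{lem:sc-rounding}, which holds for every fixed input sequence, applies pointwise. The bound $\Pr[\text{backup for }x_t]\le p$ then holds conditionally, and linearity of expectation finishes the argument.
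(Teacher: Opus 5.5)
Your proposal is correct and follows the paper's proof exactly: it splits the solution into the three phases, bounds them by $\Delta\cdot\OPT$, by $\alpha\log\nicefrac1p\cdot\OPT$ via \Cref{lem:sc-rounding} with $\beta=p$, and by $p\cdot\nicefrac{\Delta}{p}\cdot\OPT$ via \Cref{lem:backup_costs}. Your explicit conditioning on $(\samples,\rosc)$, together with the assumption that the adversary's order does not depend on the rounding coins, only spells out what the paper's one-line ``combining these'' step leaves implicit.
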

\begin{proof}
We bound the expected cost
$\expect*{\cost(\cC^\rosc_\samples) + \cost(\cC^{\rfosc}) + \cost(\cC^\bk)}$ arising from the different phases of the algorithm.  
The expected cost from phase (1) is given by $\expect*{\cost(\cC_\samples^\rosc)} 
\leq \Delta \cdot \expect*{\OPT(\samples)} \leq \Delta \cdot \OPT$.  The expected cost from phase 2(b) is determined by the $\alpha$-competitive guarantee of $\fosc$ and \Cref{lem:sc-rounding}, yielding
a bound of $\expect*{\cost(\cC^{\rfosc})} \leq \alpha \log \nicefrac 1p \cdot \OPT$. 

To understand the cost of phase 2(c),  observe that (i) by \Cref{lem:sc-rounding}, the probability that an element $x \in \unc^{\rosc}$ is handled by phase 2(c) is at most $p$ and (ii) by \Cref{lem:backup_costs},  $\expect*{\sum_{x \in \unc^{\rosc}} \cost(\backup(x))} \leq  \nicefrac{\Delta}{p} \cdot \OPT$.  Combining these, the expected cost of phase 2(c) is $\expect*{\cost(\cC^\bk)} \leq \Delta \cdot \OPT$.  

Summing these three expected costs completes the proof. 
\end{proof}
We obtain an $O(\log \nicefrac{1}{p} \cdot \log m  + \log n)$-competitive algorithm by instantiating $\rosc$ to be the $O(\log (mn))$-competitive LearnOrCover algorithm~\cite{KL21} and $\fosc$ to be the $O(\log m)$-competitive online fractional primal-dual algorithm~\cite{buchbinder2009online}. 
\subsection{$(p, k)$-Adversarially Robust Model}\label{sec:setcover}We turn to the model in which the adversary is allowed to corrupt $k$ of the $pn$ samples; let $\csamples$ be the resulting set.  To avoid being misled by the adversary, the main idea is to discard all samples after and including the first corrupted element (after randomly permuting $\csamples$).  Intuitively, we expect to retain a $1/k$ fraction of the samples, so the expected backup cost (as in  \Cref{lem:backup_costs}) should only lose an additional factor of $k$.  At this point, one might hope to use the approach in \Cref{sec:psamplesc} to complete the proof.  

Unfortunately, there are two issues.  First, the algorithm does not know which samples are corrupted.  Second, even if it is told the first corrupted sample, the conditional distribution of the clean (i.e., uncorrupted) prefix is no longer uniform, so we cannot appeal to the  guarantees of $\rosc$.  We circumvent these issues with a key technical step: we show that a large fraction of the clean prefix is sufficiently close to being uniform. 

For simplicity, we also assume  we have an estimate $\widehat{k} \geq k$ such that $\log{(k/p)} \leq \log(\widehat{k}/p) \leq 2 \log(k/p)$; we show how to
remove this assumption in \Cref{sec:app-guess-double}.

The algorithm has the following phases:
\begin{enumerate}[nosep]
    \item In the offline phase, run the $\rosc$ algorithm on the (randomly permuted) corrupted sample $\csamples$.  Let $\cC^\rosc_i$ be the solution obtained after processing first $i$ elements of $\csamples$. (Unlike the algorithm in \Cref{sec:psamplesc}, the
    output of $\rosc$ will be used in the solution only on a carefully chosen prefix of $\csamples$.)  Let $\ell_0 = 0$ and $\cC^\bk_0 = \emptyset$.

    \item In the online phase, let $x$ be the element arriving at step $t$
    \begin{enumerate}
    \item apply the $\fosc$ algorithm and round the fractional solution online with boosting parameter
    $\log(\widehat{k}/p)$ (\Cref{lem:sc-rounding}) to obtain a solution 
    $\cC^{\rfosc}_t$.  
    \item if $x \notin \cover(\cC^{\rfosc}_t \cup \cC^\bk_{t-1} \cup \cC^{\rosc}_{\ell_{t-1}})$, then 
$\cC^\bk_t = \cC^\bk_{t-1} \cup \{\backup(x)\}$,
else $\cC^\bk_t = \cC^{\bk}_{t-1}$.
    \item let $\ell_t = \max \{ i \mid i \geq \ell_{t-1} \mbox{ and } \cost(\cC^\rosc_i) \leq
    \cost(\cC^{\rfosc}_t) + \cost(\cC^\bk_t) \}$. 
    \end{enumerate}
The solution at this step is given by
$\cC^\rosc_{\ell_t} \cup \cC^{\rfosc}_t \cup  \cC^\bk_t$. 
\end{enumerate}
Let $\cC^{\rosc}, \cC^{\rfosc}, \cC^\bk$ be the final values of $\cC^\rosc_{\ell_t}, \cC^{\rfosc}_t, \cC^\bk_t$, respectively.   The final solution is $\cC^\rosc \cup \cC^{\rfosc} \cup \cC^\bk$.    By invariance (phase 2(c)), the cost of the solution on the prefix of $\csamples$ is at most the sum of the backup cost and the cost of $\rfosc$.

\subsubsection{Analysis}
Let $\crpt$ be the index of the first corrupted sample in $\csamples$, ordered by the
random permutation in phase (1); note that the algorithm does \emph{not} know
$\crpt$.
Let $\cC^{\rosc}_{\clean} = \cC^{\rosc}_{\crpt-1}$ be the solution of $\rosc$ on the clean prefix, and let
$\unc^{\rosc} = \{ x \in \univ \setminus \samples \mid x \notin \cover(\cC^\rosc_\clean) \}$ be the elements not covered by $\rosc$ on the clean prefix.  As in \Cref{sec:psamplesc}, the main difficulty is bounding the cost of the backup (i.e., phase 2(b)).  For
a prefix $P$ of $\csamples$, let
$\event(P)$ denote the event ``$\csamples_{< |P|+1} = P$ and $\crpt > |P| + 1$,'' i.e., the first $|P|$ positions are $P$ and the next
position is uncorrupted.  

We first identify a structural property of $\csamples$.  For a subset $P \subseteq \csamples$, define
\[
\skewed(P) := \left\{x \in \univ \setminus P 
\ \ \middle| \ \ \prob*{\csamples_{\abs{P}+1} = x  | \event(P)} 
< \frac{1}{2(n-\abs{P})} \right\},
\]
to be the set of elements whose chance of appearing next is too small compared with the uniform benchmark, given that the prefix $P$ seen so far is uncorrupted.  (For contrast, note that without corruptions, each of the remaining elements in $\univ \setminus P$\, would be equally likely to appear next in random order, i.e., the probability above would be $\frac{1}{n-\abs{P}}$ for every element in $\univ \setminus P$.)  A key property we show is the following:
\begin{restatable}[Near-Uniformity]{lemma}{uniformity}\label{lem:uniformitybound}
    For any $1 \leq i \leq \frac{np}{4k}$, we have $
    \expectover{\csamples_{< i}}*{\abs{\skewed(\csamples_{< i})} \ | \ \crpt > i } = O\!\left(\nicefrac{k}{p}\right)$.
\end{restatable}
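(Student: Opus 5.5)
The approach is to compare the corrupted process with the uncorrupted one via a likelihood ratio, and then turn ``rareness'' into a lower bound on a corruption probability that sums to at most $k$.

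\textbf{Step 1: the uncorrupted reference process.} I will model the randomness as a uniformly random permutation $\pi$ of $\univ$. Its first $s=pn$ entries form $\samples$, listed in the random order used in phase (1). The adversary then chooses a set $K\subseteq \samples$ with $|K|\le k$ as an arbitrary function of $\samples$, and replaces those positions. For an ordered prefix $P$ with $|P|=j$, write $\pr_0[P]$ for the probability that $\pi_{<j+1}=P$; this depends only on $j$. Define
\[
r(P):=\prob*{P\cap K=\emptyset | \pi_{<j+1}=P}.
\]
Then $\prob*{\csamples_{<j+1}=P,\ \crpt>j}=\pr_0[P]\,r(P)$. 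Since $\pr_0[P\circ x]=\pr_0[P]/(n-j)$ for every $x\notin P$, we get
\[
\prob*{\csamples_{j+1}=x | \event(P)}=\frac{r(P\circ x)}{\sum_{y\notin P} r(P\circ y)}.
\]
Hence $x\in\skewed(P)$ if and only if $r(P\circ x)<\tfrac12\bar r(P)$, where $\bar r(P)$ is the average of $r(P\circ y)$ over $y\notin P$. Moreover, $\bar r(P)=r(P)\cdot\prob*{\pi_{j+1}\notin K | \pi_{<j+1}=P,\ P\cap K=\emptyset}$.

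\textbf{Step 2: change of measure for the outer expectation.} Conditioning on $\crpt>i$ reweights prefixes by $r(\cdot)$:
\[
\expectover{\csamples_{<i}}*{|\skewed(\csamples_{<i})| \,\big|\, \crpt>i}=\frac{\mathbb{E}_0\bigl[r(P)\,|\skewed(P)|\bigr]}{\Pr[\crpt>i]}.
\]
By exchangeability each position of the permuted $\samples$ is corrupted with probability at most $k/s$, so $\Pr[\crpt>i]\ge 1-ik/s\ge 3/4$ for $i\le s/(4k)$. It therefore suffices to prove $\mathbb{E}_0[r(P)\,|\skewed(P)|]=O(k/p)$.

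\textbf{Step 3: rare elements force corruption mass.} Given the prefix $P$, the event $\pi_{j+1}=x$ is the same as the event $x\in\samples\setminus P$ up to a uniform relabelling of the remaining sample positions. Using this, I will write
\[
r(P\circ x)=\frac{\prob*{P\cap K=\emptyset,\ x\in\samples\setminus K | P}}{\prob*{x\in\samples | P}},
\]
where conditioning on ``$P$'' means conditioning on $\pi_{<j+1}=P$. The denominator equals $(s-j)/(n-j)\ge p/2$ for every $x$. Set $a_x:=\prob*{P\cap K=\emptyset,\ x\in\samples | P}$ and $b_x:=\prob*{P\cap K=\emptyset,\ x\in K | P}$. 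We have $\sum_x a_x=r(P)(s-j)$ and $\sum_x b_x\le k\,r(P)$. Since $\bar r(P)\ge r(P)\bigl(1-k/(s-j)\bigr)\ge r(P)/2$, rareness of $x$ gives
\[
a_x-b_x<\tfrac{1}{2}\bar r(P)\cdot\frac{s-j}{n-j}.
\]
So every rare $x$ either has $b_x$ a constant fraction of $a_x$, or has $a_x$ below roughly $\tfrac14 r(P)(s-j)/(n-j)$, i.e.\ below a quarter of its average value.

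\textbf{Step 4: the two types of rare elements, and the expected obstacle.} Rare elements of the first type are charged directly to $\sum_x b_x\le k\,r(P)$ after dividing by the per-element scale $a_x\approx r(P)\,p$; this already gives $O(k\,r(P)/p)$ of them.

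The main obstacle is the second type: elements whose probability of lying in $\samples$ is depressed by the conditioning on $P$ being clean. Such depression must be caused by the adversary. Intuitively, $P\cap K=\emptyset$ correlates with $x\notin\samples$ only because the adversary's choice of $K$ depends on whether $x\in\samples$.

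To handle this, I would first try to compare $a_x$ with $\prob*{x\in\samples | P}\cdot r(P)$. The deficit is $r(P)\prob*{x\in\samples | P}-a_x$, which equals
\[
\prob*{x\in\samples | P}\Bigl(r(P)-\prob*{P\cap K=\emptyset | P,\ x\in\samples}\Bigr).
\]
This is the covariance-type quantity I need to control. I would bound the total deficit by a double-counting argument over pairs $(x,z)$ with $z\in P\cap K$. Taking the expectation over $P$ at fixed length $j$, the probability that some $z\in P$ is corrupted is at most $jk/s$. Swapping one sample element in or out changes the indicator that $P$ is clean only on that event. This should yield $\mathbb{E}_0\bigl[\sum_x \text{deficit}_x\bigr]=O(k)\cdot\mathbb{E}_0[r(P)]$, up to constants when $j\le s/(4k)$. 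Dividing by the per-element threshold $\Omega(r(P)\,p)$ bounds the second type by $O(k/p)$ in the $r$-weighted expectation.

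Combining the two types with Step 2 gives the lemma. If the double counting in Step 4 fails, the fallback is to redefine the benchmark using $\bar r(P)$ directly and apply Markov's inequality to $\sum_x |r(P\circ x)-\bar r(P)|$.
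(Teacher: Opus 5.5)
Your Steps 1--3 are sound. Your $r(P)$ coincides with the paper's $h_P$, the characterization $x\in\skewed(P)\iff r(P\circ x)<\tfrac12\bar r(P)$ is right, and the first type of rare element is correctly charged to $\sum_x b_x\le k\,r(P)$, which gives $O(k/p)$ of them. (Conditioning on $\crpt>i$ actually weights $P$ by $\bar r(P)\le r(P)$, so Step 2 holds as an inequality, which is all you need.)

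\textbf{The gap is Step 4.} Since $\sum_{x\notin P}a_x=(s-j)\,r(P)=\sum_{x\notin P}\Pr[x\in\samples\mid P]\,r(P)$, the signed deficits sum to exactly zero. Your claim therefore has content only for positive parts, and for positive parts it is false.

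Take $k=1$, $p=\tfrac12$, a set $A$ with $|A|=n/2$, and an adversary that corrupts a uniformly random element of $\samples$ if and only if $|\samples\cap A|$ exceeds its median. Write $g=\ind[P\cap K=\emptyset]$. For a fixed typical prefix with $j=\Theta(s)$, you get $\mathbb{E}[g\mid\samples]=1-\tfrac{j}{s}\ind[|\samples\cap A|>\text{median}]$. So each $x\in A\setminus P$ has deficit $\Theta(1/\sqrt n)$, by a local CLT for the hypergeometric count. The positive deficits then total $\Theta(\sqrt n)$, whereas $O(k)\,\mathbb{E}_0[r(P)]=O(1)$. No element is rare in this example, so it does not contradict the lemma, only your intermediate inequality.

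What your double counting (``the clean indicator changes only if $P$ meets $K$'') actually yields is
\[
\sum_x(\mathrm{deficit}_x)_+\le\sum_x\Pr[g=0,\,x\in\samples\mid P]=(s-j)(1-r(P)),
\]
whose $\mathbb{E}_0$ is about $jk$. That bounds the second type by $O(jk/p)$, off by a factor $j$ that can be $\Theta(np/k)$. The fallback fails for the same reason: in the example, $\sum_x|r(P\circ x)-\bar r(P)|=\Theta(\sqrt n)$ while $k/p=2$, so Markov gives nothing. An $\ell_1$ accounting of many small correlations is inherently too lossy.

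\textbf{The missing idea is a second-moment ($\ell_2$) argument,} which is the paper's key step. Let $R$ be your set of second-type elements, $q=\tfrac{s-j}{n-j}$ and $\mu=q|R|$. Conditioned on $\pi_{<j+1}=P$, the count $|\samples\cap R|$ is hypergeometric with mean $\mu$ and variance at most $\mu$. By definition of the second type, $\mathbb{E}\bigl[|\samples\cap R|\,g\mid P\bigr]=\sum_{x\in R}a_x<c\,r(P)\mu$ for a constant $c<1$. Hence, on the event that $P$ is clean, the count has conditional mean below $c\mu$, and Jensen gives
\[
\mu\ \ge\ \Var\bigl(|\samples\cap R|\mid P\bigr)\ \ge\ r(P)\,(1-c)^2\mu^2 .
\]
This means $r(P)|R|\le \frac{1}{(1-c)^2q}=O(1/p)$ for \emph{every} $P$. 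Combined with your first-type count, it gives $r(P)|\skewed(P)|=O(k/p)$ pointwise, and the lemma follows from Step 2.

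The paper does not need your two-type split at all. It applies the same variance argument directly to all of $\skewed(P)$, absorbing the corrupted elements through the bound ``conditional mean $\le\mu/2+k\le 3\mu/4$'' whenever $|\skewed(P)|>16k/(3p)$.
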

We defer the technical proof to the \Cref{sec:near-unif}.  The following
is a robust analog of \Cref{lem:backup_costs}.
\begin{lemma}\label{lem:scwc_unc1}
It holds that $\expect*{\sum_{x \in \unc^{\rosc}} \cost(\backup(x))} \leq O\!\left(\nicefrac{k \Delta}{p}\right) \cdot \OPT$.
\end{lemma}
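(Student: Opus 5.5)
We adapt the averaging argument from the proof of \Cref{lem:backup_costs}, now run over the clean prefix of $\csamples$ and using \Cref{lem:uniformitybound} in place of exact uniformity. Set $L = \lfloor \nicefrac{np}{4k} \rfloor$ and truncate the clean prefix at $\min(\crpt-1, L)$.

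\textbf{Monotonicity and averaging.} Let $b_{\cC}(x)$ be as in \Cref{lem:backup_costs}. Since $\cC^\rosc_{j-1} \subseteq \cC^\rosc_{\clean}$ for every $j \le \min(\crpt, L)$, monotonicity gives
\[
\sum_{x \in \unc^\rosc} \cost(\backup(x)) \le \sum_{x\in \univ} b_{\cC^\rosc_{j-1}}(x)
\]
for every such $j$.

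We need the clean prefix to be long, so we first bound the distribution of $\crpt$. Because the corrupted sample is randomly permuted and contains at most $k$ corruptions out of $pn$ positions, $\prob*{\crpt > i} \ge (1 - \nicefrac{i}{pn})^k \ge \nicefrac12$ for $i \le L$. Hence $\min(\crpt-1, L) = \Omega(\nicefrac{np}{k})$ with constant probability.

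The danger is a short prefix on which $\unc^\rosc$ is very expensive. To handle it, average the inequality over $j = 1, \ldots, L$, weighting position $j$ by the indicator $\ind[\crpt > j]$. Each position then contributes $\sum_x b_{\cC^\rosc_{j-1}}(x)$ only when it is clean.

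\textbf{Near-uniform charging.} Condition on the event $\event(P)$ with $|P| = j-1$. The next sample $\csamples_j$ is then clean, and $\cC^\rosc_{j-1}$ is determined by $P$ and the internal randomness of $\rosc$. For $x \notin \skewed(P)$ we have $\prob*{\csamples_j = x | \event(P)} \ge \nicefrac{1}{2(n-j+1)}$. Therefore
\[
\sum_{x \notin \skewed(P)} b_{\cC^\rosc_{j-1}}(x) \le 2n \cdot \expect*{ b_{\cC^\rosc_{j-1}}(\csamples_j) | \event(P)}.
\]
Summing over $j$ and using $\sum_j b_{\cC^\rosc_{j-1}}(\csamples_j) \le \cost(\cC^\rosc_{\clean})$, the non-rare part costs at most $\frac{2n}{L}\cdot \expect*{\cost(\cC^\rosc_\clean)} = O(\nicefrac{k}{p}) \cdot \expect*{\cost(\cC^\rosc_\clean)}$.

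We then need $\expect*{\cost(\cC^\rosc_\clean)} \le O(\Delta)\OPT$. This is the step where the competitiveness of $\rosc$ must survive a prefix that is only near-uniform, and I would try to settle it by coupling. The clean prefix is a prefix of a uniformly random permutation of the clean samples, which are a uniform subset of $\samples$ minus at most $k$ elements. One can try to show that $\rosc$'s expected cost on a prefix of a random order of any set $Y \subseteq \univ$ is at most $\Delta \cdot \OPT(Y) \le \Delta \cdot \OPT$. If that fails, we instead pay $\cost(\backup(\cdot))$ on rare samples directly, as in the next step.

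\textbf{Rare elements (main obstacle).} For $x \in \skewed(P)$ we have no sampling guarantee, so the cost of these elements must be bounded separately. Each such element contributes at most $\cost(\backup(x))$, and by \Cref{lem:uniformitybound} there are only $O(\nicefrac{k}{p})$ of them in expectation. The problem is that elements with large individual backup cost could be concentrated among the rare ones. Charging them needs $\cost(\backup(x)) \le \OPT$, which holds, but that yields $O(\nicefrac{k}{p}) \OPT$ only if we bound the \emph{count}, not the weighted sum. Since $\max_x \cost(\backup(x)) \le \OPT$, the count bound suffices, and gives an $O(\nicefrac{k}{p})\OPT$ contribution in expectation. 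Summing this over the $L$ averaged positions and dividing by $L$ leaves the bound at $O(\nicefrac{k}{p})\OPT$.

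The delicate point in this step is that \Cref{lem:uniformitybound} is conditioned on $\crpt > i$, while our average is weighted by $\ind[\crpt > j]$. I would reconcile the two using $\prob*{\crpt > j} \ge \nicefrac12$ for $j \le L$, which lets us remove the conditioning at the cost of a constant factor. Combining the non-rare and rare contributions then gives $O(\nicefrac{k\Delta}{p})\cdot\OPT$, as claimed.
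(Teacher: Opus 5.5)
There is a genuine gap: your averaging step only controls the event $\{\crpt > L\}$. Your bound $\frac{2n}{L}\,\mathbb{E}[\cost(\cC^{\rosc}_{\clean})]$ for the non-rare part implicitly uses
\[
\sum_{x\in\unc^{\rosc}}\cost(\backup(x)) \;\le\; \frac{1}{L}\sum_{j=1}^{L}\ind[\crpt>j]\,B_{j-1},\qquad B_{j-1}:=\sum_{x\in\univ} b_{\cC^{\rosc}_{j-1}}(x).
\]
This is false whenever $\crpt\le L$, because there are then only $\crpt-1$ clean positions to average over. For $\crpt=1$ the right-hand side is $0$, while the left-hand side can be $\Theta(n)\cdot\OPT$ (e.g., a single set of cost $\OPT$ covering $\univ$). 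Knowing that $\crpt>L$ with probability at least $3/4$ does not help. An expectation bound must also account for the complementary constant-probability event, on which $B_{\crpt-1}$ can be of order $n\cdot\OPT$, far above the target $O(k\Delta/p)\cdot\OPT$. Renormalizing by the random count $\min(\crpt-1,L)$ does not fix this either: it is undefined at $\crpt=1$ and puts the largest weights precisely on the short prefixes.

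The missing idea is the hazard-rate charging behind \eqref{eqn:bkupcost}. Condition on $(\samples,\csamples)$ and on any realization of the first $i-1$ positions being clean. Then position $i$ is corrupted with probability exactly $k/(np-i+1)=O(k/(np))$ for $i\le L$. So the event $\crpt=1$ costs at most $\frac{k}{np}\cdot n\OPT$, since $B_0\le n\OPT$. For $2\le i\le L$, monotonicity $B_{i-1}\le B_{i-2}$ gives $\mathbb{E}[B_{\crpt-1}\ind[\crpt=i]]\le O(k/(np))\cdot\mathbb{E}[\ind[\crpt>i-1]\,B_{i-2}]$. Summing, the short-prefix contribution is $O(k/(np))\sum_{j=1}^{L}\mathbb{E}[\ind[\crpt>j]B_{j-1}]$. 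This is the same sum your $1/L$-averaging produces on $\{\crpt>L\}$, where that averaging is valid. Your near-uniform charging bounds the sum by $2n\,\mathbb{E}[\cost(\cC^{\rosc}_{\clean})]+L\cdot O(k/p)\cdot\OPT=O(n\Delta)\cdot\OPT$, so both pieces are $O(k\Delta/p)\cdot\OPT$. With this case analysis added, your proof matches the paper's.

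Three minor points. First, $\Pr[\crpt>i]=\prod_{j=0}^{k-1}\bigl(1-\frac{i}{np-j}\bigr)\le(1-\frac{i}{np})^k$, so your displayed inequality is reversed; use the union bound $\Pr[\crpt\le i]\le \frac{ki}{np}$ instead. Second, no reconciliation with \Cref{lem:uniformitybound} is needed, since $\mathbb{E}[\ind[\crpt>j]\,|\skewed(\csamples_{<j})|]=\Pr[\crpt>j]\cdot\mathbb{E}[|\skewed(\csamples_{<j})|\mid\crpt>j]$ directly. Third, the $\rosc$ bound cannot ``fail'': given $(\samples,\csamples)$, the relative order of the clean elements is uniform and independent of $\crpt$, so $\mathbb{E}[\cost(\cC^{\rosc}_{\clean})]\le\Delta\cdot\OPT$.
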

\begin{proof}
Let $T = (np)/(4k)$ (assume $T$ is a non-zero integer for simplicity).  As in the proof of
\Cref{lem:backup_costs}, let $b_{\cC}(x) = \cost(\backup(x))$ if $x \notin \cover(\cC)$ and $0$ otherwise.  Let $B_i = \sum_{x \in \univ} b_{\cC^{\rosc}_i}(x)$; $B_i$ is monotone, i.e.,
$B_i \leq B_{i-1}$. Notice that the elements in 
$\unc^{\rosc}$ are not covered by
$\cC^{\rosc}_{\clean} = \cC^{\rosc}_{\crpt-1}$.
Hence, $\sum_{x \in \unc^{\rosc}} \cost(\backup(x)) \leq B_{\crpt-1}$.  
We claim:
\begin{eqnarray}
\label{eqn:bkupcost}
\expect*{B_{\crpt-1}}
= \sum_{i = 1}^{np-k+1} \prob*{\crpt = i} \cdot
\expect*{B_{\crpt-1} | \crpt = i}
\leq O\!\left(\frac{k}{p}\right) \cdot \OPT 
+ \ O\!\left(\frac{k}{np}\right) \sum_{i = 1}^T \prob*{\crpt > i} \cdot \expect*{B_{i-1} | \crpt > i}.
\end{eqnarray}
Indeed, we do a case analysis on where the first corruption happens: 
\begin{itemize}[nosep]
\item For $i = 1$, we have $\prob*{\crpt = 1} \leq k/np$ and $B_0 \leq n \cdot \OPT$, yielding the first term in \eqref{eqn:bkupcost}.  
\item For $2 \leq i \leq T+1$, we have $B_{i-1} \leq B_{i-2}$ by monotonicity and  
$\prob*{\crpt = i | \crpt > i-1} = k/(np-i+1) = O(k/np)$ for $i \leq T$. 
So the contribution from a specific index $i$ is at most $(k/np) \prob*{\crpt > i-1} \expect*{B_{i-2} | \crpt > i-1}$.  Summing over $i = 2, \ldots, T+1$ yields the second term in \eqref{eqn:bkupcost}.  
\item For $i > T+1$, we have $B_{i-1} \leq B_T \leq (1/T) \sum_{i = 1}^T B_{i-1}$ by monotonicity.  So the contribution is at most $    \frac{1}{T} \sum_{i = 1}^T \prob*{\crpt > T+1} \expect*{B_{i-1} | \crpt > T+1}$. Now, for each $i \leq T$, the random variable $B_{i-1}$ depends only on the clean prefix of length $i-1$, which has the same distribution under either condition $\crpt > T+1$ or $\crpt > i$.  Therefore, $  \expect*{B_{i-1} | \crpt > T+1} =     \expect*{B_{i-1} | \crpt > i}$. Since $\prob*{\crpt > T+1} \leq \prob*{\crpt > i}$ for every $i \leq T$, the contribution can be bounded by $  \frac{1}{T} \sum_{i = 1}^T \prob*{\crpt > i} \expect*{B_{i-1} | \crpt > i}$, which also yields the second term in \eqref{eqn:bkupcost}.
\end{itemize}
We now analyze $B_{i-1}$ for $i \leq T$.  Condition on $\event(P)$. If an element is not in the set $\skewed(P)$, by definition, 
the conditional probability of it
appearing at the $i$th position in $\csamples$ is at least $1/2n$.  Furthermore, $b_{\cC^{\rosc}_{i-1}}(x) = 0$ for $x \in P$ since $\rosc$ has already covered the elements of $P$.   Thus,
\[
B_{i-1} = \sum_{x \in \univ} b_{\cC^{\rosc}_{i-1}}(x) 
\leq 2n \expect*{b_{\cC^{\rosc}_{i-1}}(\csamples_i) | \event(P)} + \abs{\skewed(P)} \cdot \OPT. 
\]
Taking expectations, summing over $i = 1, \ldots, T$, and using
\Cref{lem:uniformitybound}, we obtain:
\begin{align*}
& \sum_{i = 1}^T \prob*{\crpt > i} \cdot \expect*{B_{i-1} | \crpt > i} \\
& \leq 2n \expect*{\sum_{i = 1}^T \ind{[\crpt > i]} \cdot b_{\cC^{\rosc}_{i-1}}(\csamples_i)} + \OPT \cdot \sum_{i = 1}^T \prob*{\crpt > i} \expect*{\abs{\skewed(\csamples_{< i})} | \crpt > i} \\
& \leq 2n \expect*{\cost(\cC^{\rosc}_{\clean})} + T \cdot O\!\left(\frac{k}{p}\right) \cdot \OPT \\
& \leq O(n \Delta) \cdot \OPT,
\end{align*}
where in the last inequality the $\rosc$ guarantees apply since
conditioned on the clean prefix, its order is uniform at random. 
Using this bound in \eqref{eqn:bkupcost} completes the proof.
\end{proof}

\begin{restatable}{thm}{setthm}\label{thm:set-corruptions}
If $\rosc$ is a random order set cover algorithm with competitive ratio $\Delta$ and $\fosc$ is a fractional online set cover algorithm with competitive ratio $\alpha$, then the algorithm above has competitive ratio $O(\Delta + \alpha \log \nicefrac{k}{p})$.
\end{restatable}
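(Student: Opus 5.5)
Following the proof of \Cref{thm:oxtsc_poly}, we bound separately the expected costs of $\cC^{\rosc}$, $\cC^{\rfosc}$ and $\cC^\bk$.

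\textbf{The rounded fractional solution.} By \Cref{lem:sc-rounding} with boosting parameter $\log(\widehat{k}/p) \le 2\log(k/p)$, and since the online elements are a subset of $\univ$,
\[
\expect*{\cost(\cC^{\rfosc})} \le 2\alpha \log(k/p)\cdot \OPT .
\]

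\textbf{The sample-phase solution.} This cost is controlled by the budget invariant of phase 2(c). The final prefix index $\ell$ satisfies
\[
\cost(\cC^\rosc_{\ell}) \le \cost(\cC^{\rfosc}) + \cost(\cC^\bk),
\]
so $\cost(\cC^\rosc)$ is at most the other two costs combined. This holds pathwise, regardless of the adversary, and it is exactly why the budget cap was added: corrupted samples could otherwise make $\rosc$ buy arbitrarily expensive sets. It therefore suffices to bound $\expect*{\cost(\cC^\bk)}$ by $O(\Delta + \alpha\log(k/p))\cdot\OPT$.

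\textbf{The backup cost.} Charge each backup purchase to the element $x$ that triggered it; such an $x$ is uncovered by $\cC^{\rfosc}_t$ and by $\cC^\rosc_{\ell_{t-1}}$. Split on whether $\ell_{t-1} \ge \crpt-1$.
\begin{itemize}
\item \emph{Case $\ell_{t-1}\ge \crpt-1$.} Then $\cC^\rosc_{\ell_{t-1}} \supseteq \cC^\rosc_\clean$, because $\rosc$ produces nested solutions. Hence $x \in \unc^{\rosc}$ and $x$ is also uncovered by $\rfosc$. The uncovered-by-$\rfosc$ event has probability at most $\beta = p/\widehat{k} \le p/k$, and it depends only on the randomness of the rounding. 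That randomness is independent of the sample and of $\rosc$'s coins, provided the adversary's corruption does not see the rounding coins, which we take as part of the model. Combining with \Cref{lem:scwc_unc1}, the expected cost of these backups is at most
\[
\frac{p}{k}\cdot O\!\left(\frac{k\Delta}{p}\right)\OPT = O(\Delta)\,\OPT .
\]
\item \emph{Case $\ell_{t-1} < \crpt - 1$.} The budget has not yet reached the clean prefix, so
\[
\cost(\cC^\rosc_{\crpt-1}) > \cost(\cC^{\rfosc}_{t-1}) + \cost(\cC^\bk_{t-1}).
\]
Every backup bought while this strict inequality persists therefore has total cost bounded by $\cost(\cC^\rosc_\clean)$, plus the single last backup that crosses the threshold. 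That last backup costs at most $\cost(\backup(x))\le\OPT$, since every element lies in some set of the optimal cover. After the crossing, $\ell_t \ge \crpt-1$ and we are in the first case. The clean-prefix cost has expectation $O(\Delta)\cdot\OPT$, because conditioned on the clean prefix its order is uniformly random; this is the same fact used at the end of the proof of \Cref{lem:scwc_unc1}.
\end{itemize}
Summing the two cases gives $\expect*{\cost(\cC^\bk)} = O(\Delta + \alpha\log(k/p))\cdot\OPT$ (the crossing backup adds $\OPT$, absorbed since $\Delta\ge1$), and the theorem follows.

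\textbf{Main obstacle.} The hard step is the second case. The threshold crossing is an adaptive stopping condition that couples the $\rfosc$ randomness, the backup purchases and $\crpt$. We must make sure that bounding the pre-crossing backups by $\cost(\cC^\rosc_{\crpt-1})$ plus one extra set is legitimate. We must also make sure that using \Cref{lem:scwc_unc1} in the first case does not require independence between $\unc^{\rosc}$ and the event that the budget has reached $\crpt-1$. We plan to avoid the latter by upper bounding the first case over all of $\unc^{\rosc}$ unconditionally, discarding the case indicator, and handling the second case by the pathwise budget argument alone.
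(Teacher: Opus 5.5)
Your proposal is correct and follows the paper's proof essentially step for step. The budget invariant of phase 2(c) bounds $\cost(\cC^{\rosc})$, and \Cref{lem:sc-rounding} bounds $\cost(\cC^{\rfosc})$. Your per-step split on whether $\ell_{t-1}\ge \crpt-1$ is exactly the paper's split at the critical step $t^*$: the pre-critical backups are bounded by $\cost(\cC^{\rosc}_{\clean})+\OPT$, and the post-critical ones by $\nicefrac{p}{k}$ times the bound of \Cref{lem:scwc_unc1}. You also make explicit that the rounding coins are independent of $\unc^{\rosc}$, so the case indicator can be discarded; the paper leaves this implicit in ``combining these''.
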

\begin{proof}
As in the proof of \Cref{thm:oxtsc_poly}, it is
easy to see that the cost arising from phase 2(a) 
of the algorithm can be bounded as
$\expect*{\cost(\cC^{\rfosc})}
\leq \alpha \log \nicefrac{\widehat{k}}{p} \cdot \OPT$.  Also, by the invariant condition in phase 2(c), we know that $\cost(\cC^{\rosc}) \leq
\cost(\cC^{\rfosc}) + \cost(\cC^{\bk})$.  Hence, it remains to 
bound $\expect*{\cost(\cC^{\bk})}$.

To this end, define the \emph{critical} step
\[
t^*:= \min \{ t \mid \cost(\cC^{\rfosc}_t) + \cost(\cC^\bk_t) \geq \cost(\cC^{\rosc}_{\clean}) \}.
\]
(If the set is vacuous, $t^* = \infty$.)

If $t^* = \infty$, then $\cost(\cC^{\bk}) \leq \cost(\cC^{\rosc}_{\clean})$.  Otherwise, we do a case analysis on $t^*$:
\begin{itemize}[nosep]
\item For $t \leq t^*$, by the definition of $t^*$, we have
$\cost(\cC^{\bk}_{t^*-1}) \leq  \cost(\cC^{\rosc}_{\clean})$.  If a backup set was added to the solution at step $t^*$, it contributes at most
$\OPT$ to the backup cost.  Hence, the backup cost up to
step $t^*$ is at most $\cost(\cC^{\rosc}_{\clean}) + \OPT$.  Now, using the guarantees of $\rosc$ on the clean prefix, it follows that
the expected backup cost for $t \leq t^*$ is 
at most $(\Delta + 1) \cdot \OPT$.
\item For $t > t^*$, by definition, $\cost(\cC^{\rosc}_{\clean})$ is less than $\cost(\cC^{\rfosc}_t) + \cost(\cC^\bk_t)$.  Since
phase 2(c) ensures $\ell_t \geq \crpt - 1$, an element
contributes to the backup cost only if it is in $\unc^{\rosc}$ and is also missed
by the rounding step.  Observe that
by \Cref{lem:sc-rounding}, each element in $\unc^{\rosc}$ is not covered by $\overline{\fosc}$ with probability at most $\nicefrac{p}{\widehat k} \leq \nicefrac{p}{k}$ and
by \Cref{lem:scwc_unc1}, $\expect*{\sum_{x \in \unc^{\rosc}} \cost(\backup(x))} \leq O(\nicefrac{k\Delta}{p}) \cdot \OPT$.  Combining these, the expected backup cost for
$t > t^*$ is also $O(\Delta) \cdot \OPT$.  
\end{itemize}
Thus, $\expect*{\cost(\cC^{\bk})} \leq O(\Delta) \cdot \OPT$.
\end{proof}
As in \Cref{sec:psamplesc}, by choosing $\rosc$ to be LearnOrCover \cite{KL21} and $\fosc$ to be online primal-dual \cite{buchbinder2009online}, we get the $O(\log \nicefrac{k}{p} \cdot \log m + \log n)$-competitive ratio of \Cref{thm:robust-cov} as a corollary.

\subsubsection{Near-Uniformity Lemma}
\label{sec:near-unif}

We end the section by proving~\Cref{lem:uniformitybound}.  Let $(a)_b = (a)(a-1)\cdots(a-b+1)$ denote the falling factorial.

\begin{proof}
Conditioned on $\crpt > i$ (the first corrupted sample arriving after index $i$), the first $i-1$ elements of $\csamples$ are a uniformly random $(i-1)$-subset of the $np-k$ uncorrupted elements $\Clean := \samples \cap \csamples$. Thus, the probability that the prefix equals a specific set $P$ of size $i-1$ is:
\[
    \prob*{\csamples_{<i} = P \mid \crpt > i}
    = \frac{\prob*{P\subseteq \Clean}}{\binom{np-k}{i-1}}
    = \frac{\prob*{P \subseteq \Clean \mid P \subseteq \samples} \cdot \prob{P \subseteq \samples}}{\binom{np-k}{i-1}}.
\]
Since $\samples$ is a uniformly random $np$-subset of $X$, we have $\prob*{P \subseteq \samples} = \binom{n-i+1}{np-i+1} / \binom{n}{np} = (np)_{i-1} / (n)_{i-1}$. Defining $h_P := \prob*{P \subseteq \Clean \mid P \subseteq \samples}$, we can rewrite the expectation we wish to bound as
\begin{align} 
   \expectover{\csamples_{<i}}*{\abs{\skewed(\csamples_{<i})} \mid \crpt > i}
    &= \sum_{P: \abs{P} = i-1} \frac{(np)_{i-1}/(n)_{i-1}}{\binom{np-k}{i-1}} \cdot h_P \cdot \abs{\skewed(P)}  \nonumber \\
    &\le \frac{(np)_{i-1}}{(np-k)_{i-1}}  \cdot \left(\max_{P: \abs{P} =i-1} h_P \cdot \abs{\skewed(P)}\right) \nonumber \\
    &\leq \prod_{j=0}^{i-2} \left(1 + \frac{k}{np-k-j}\right) \left(\max_{P: \abs{P} =i-1} h_P \  \abs{\skewed(P)}\right) \nonumber \\
    &= \exp\left(O\!\left(k(i-1) / (np)\right)\right) \left(\max_{P: \abs{P} =i-1} h_P \cdot \abs{\skewed(P)}\right) \nonumber \\
    & = O(1) \cdot \max_{P: \abs{P} =i-1} h_P \cdot \abs{\skewed(P)}, \nonumber
\end{align}
where we used $np-k-j \geq np/2$ for $j \in [0, i-2]$ and $1+x \leq e^{x}$ for $x \geq 0$. 

If $\max_P \abs{\skewed(P)} \leq \frac{16 k}{3p}$, we are done since $h_P \leq 1$. In the remainder of the proof, we show for any prefix $P$ of length $i-1$ such that $\abs{\skewed(P)} > \frac{16 k}{3p}$ we get an even sharper bound of $h_P \cdot \abs{\skewed(P)} \leq O(1/p)$.

Let $K(P) = \abs{\skewed(P)\cap \samples}$ be the number of skewed elements that are sampled. Since, conditioned on $P \subseteq \samples$, the set $\samples \setminus P$ is a uniformly random $(np-i+1)$-sized subset of $X \setminus P$, the random variable $K(P)$ follows a hypergeometric distribution, which has the properties
\begin{align*}
&\mu(P) := \expect*{K(P) \mid P\subseteq \samples} = \abs{\skewed(P)}\frac{np-i+1}{n-i+1}.\\
&\Var(K(P) \mid P\subseteq \samples) \leq \mu(P).
\end{align*}
Furthermore, since $\abs{\skewed(P)} > \frac{16k}{3p}$ and $i-1 \leq \frac{np}{4}$, we know that $\mu(P) >  4k$.

By definition, $\prob*{y\in \Clean \mid P \subseteq \Clean} < \frac{np-k-i+1}{2(n-i+1)} \le \frac{np-i+1}{2(n-i+1)}$, so by the linearity of expectation, the conditional expectation of the number of skewed and uncorrupted elements that get sampled is
\begin{align}
\expect*{\abs{\skewed(P)\cap \Clean} \mid P\subseteq \Clean} \le \frac{\abs{\skewed(P)}(np-i+1)}{2(n-i+1)} = \frac{\mu(P)}{2}. \label{line:kpclean_vs_mu}
\end{align}
Every element counted by $K(P)=\abs{\skewed(P)\cap \samples}$ either lies in $\skewed(P)\cap \Clean$, or is modified by the adversary. Since the adversary modifies exactly $k$ elements, we must have 
\begin{align*}
    \expect*{K(P) \mid P \subseteq \Clean} \le \expect*{ \abs{\skewed(P)\cap \Clean} \mid P \subseteq \Clean}+k \leq \frac{3}{4} \mu(P).
\end{align*}

With this in hand,
\begin{align*}
\mu(P) &\geq \Var(K(P) \mid P \subseteq \samples)  \\
&= h_P \cdot \expect*{(\mu(P) - K(P))^2 \mid P\subseteq \Clean} + (1-h_P) \cdot \expect*{(K(P)-\mu(P))^2 \mid P\not \subseteq \Clean} \\
&\geq h_P \cdot \expect*{(\mu(P) - K(P))^2 \mid P\subseteq \Clean} \\
&\geq h_P \cdot \expect*{\mu(P) - K(P) \mid P\subseteq \Clean}^2 \tag{Jensen's inequality} \\
&\geq h_P \cdot \left(\frac{\mu(P)}{4}\right)^2 \tag{Using \eqref{line:kpclean_vs_mu}}
\end{align*}
Finally, using $\mu(P) = \abs{\skewed(P)}\frac{np-i+1}{n-i+1}$, we conclude $h_P \abs{\skewed(P)} \leq \frac{16(n-i+1)}{np-i+1} \leq \frac{16n}{np - np/4} = O(1/p)$.
\end{proof}

\section{Online Metric Facility Location} \label{sec:metricpsample} Let $(V,d)$ be a metric space, and let $\mathcal{F} \subseteq V$ be a set of facilities. Each facility $f \in \mathcal{F}$ has a non-negative facility opening cost $c_f \in [1, 2^q]$. Let $X \subseteq V$ be a set of $n$ clients that arrive online. At any time $t$, the online algorithm maintains a set of open facilities $F_t$. When a client $v^t \in X$ arrives at time $t$, the algorithm has to connect it to an open facility, and can open new facilities in the process. The goal is to minimize the total cost given by the sum of opening costs of the open facilities and the distance of the clients to the open facilities they are connected to, i.e., the objective
\[
\obj(F, X) := \sum_{f \in F}c_f+ \sum_{v \in X}d(v,F).
\]
In the \emph{$p$-sample model} for metric facility location, the algorithm samples a set $\samples$ of clients uniformly at random from all subsets of clients size $s = pn$, before the online sequence begins. Then the online input consists of the remaining clients 
$X \setminus \samples$ in an arbitrary (and possibly adversarial) order. 

In the \emph{$(p,k)$-adversarially robust model}, the adversary is allowed to modify any $k$ clients in $\samples$ after they are sampled, but before they are considered by the algorithm; we use $\csamples$ to denote the corrupted samples. 

\subsection{$p$-Sample Model}

Meyerson \cite{meyerson2001online} showed an elegant online algorithm for online metric facility location that we will need to use as a white-box, so we begin the section with a brief description. For convenience, define the type of a facility $f \in \mathcal{F}$ to be $\floor{\log_2(c_f)}$, and let $q$ be the maximum facility type. Let $F^{(\leq {\ell})} \subseteq \mathcal{F}$ denote the set of facilities of less than or equal to $\ell$. 

\paragraph{Meyerson's Algorithm} When $v \in X$ arrives online, for each $0 \leq \ell \leq q$, open the closest facility of type $\ell$ to $v$ with probability 
\[
\min\left(\frac{\delta_{\ell-1}(v) - \delta_{\ell}(v) }{2^{\ell}} , 1 \right),
\]
where $\delta_{\ell}(v) := d(F^{(\leq {\ell})} \cup F_{< v} , v)$ and $F_{< v}  \subseteq \mathcal{F}$ is the set of facilities open just before $v \in X$ arrives.  We then connect $v$ to the closest available facility.
Let $F_{v}$ be the set of facilities opened by the algorithm on the arrival of $v$, and let $F_{\leq v} = F_{<v} \cup F_v$.

Meyerson showed the following guarantee.\footnote{As observed by \cite{fotakis08}, in Meyerson's original paper the guarantee is written as a simple logarithm, but a simple modification of the same argument gives the bound below. For completeness, we give the proof in \Cref{sec:metricfacility}.} 
\begin{restatable}[Meyerson's Guarantee \cite{meyerson2001online}]{lemma}{meyersonguarantee}\label{app-meyerson_mfl}
The expected cost incurred by Meyerson's algorithm on any subset $X' \subseteq X$ of clients is at most $\OPT(X') \cdot O\left(\frac{\log |X'|}{\log \log |X'|}\right).$
\end{restatable}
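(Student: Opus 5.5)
Fix an optimal solution $\OPT(X')$ with facility set $F^*$. Decompose $X'$ into clusters $X'_{f^*}$, one per $f^* \in F^*$, where each client goes to the cluster of its nearest optimal facility. By linearity of expectation it suffices to show that, for every cluster, the expected cost Meyerson's algorithm charges to clients of $X'_{f^*}$ is at most $O\!\left(\frac{\log N}{\log\log N}\right)\cdot\bigl(c_{f^*} + \sum_{v\in X'_{f^*}} d(v,f^*)\bigr)$, where $N=|X'|$. This charges opening costs to the arriving client that triggered them; the opening cost paid on arrival of $v$ is, in expectation, at most $\sum_\ell (\delta_{\ell-1}(v)-\delta_\ell(v))\le \delta_{-1}(v)$. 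Together with the connection cost, the expected total cost on the arrival of $v$ is $O(\delta(v))$ plus terms handled by the type structure, where $\delta(v)=d(F_{<v},v)$ for the current open set. So the task reduces to bounding $\sum_{v\in X'_{f^*}}\delta(v)$ up to the facility costs.

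Fix a cluster with center $f^*$ of type $\ell^*$, so $c_{f^*}\in[2^{\ell^*},2^{\ell^*+1})$. Let $A=\sum_{v\in X'_{f^*}} d(v,f^*)$ and let $v_1,v_2,\dots$ be its clients in arrival order. Since $f^*\in F^{(\le \ell^*)}$, for every $v$ we have $\delta_{\ell^*}(v)\le d(v,f^*)$. Hence the probability of opening some facility of type $\le\ell^*$ within distance $O(d(v,f^*))$ of $f^*$ is governed by the gap $\delta_{\text{cur}}(v)-d(v,f^*)$, divided by $2^{\ell^*}$. I would run a phase argument, as in Meyerson and Fotakis. Let $D_j$ be the distance from $f^*$ to the nearest open facility after the $j$-th client of the cluster. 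A client $v$ with $d(v,f^*)\le D/2$, where $D$ is the current $D_j$, costs at most $\approx D$. It opens a facility closer to $f^*$ (within $2d(v,f^*)$) with probability $\gtrsim \min(1, D/2^{\ell^*+1})$, and once that happens $D$ shrinks geometrically. Clients with $d(v,f^*)> D/2$ pay at most $3d(v,f^*)$, which charges to $A$.

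To get $\log N/\log\log N$ rather than $\log N$, I would group the cluster's clients into geometric distance scales relative to $\bar d = A/|X'_{f^*}|$. The total cost spent at a scale $D$ before a facility opens within that scale is, in expectation, $O(2^{\ell^*})$, since each client there pays about $D$ and succeeds with probability about $D/2^{\ell^*}$. The number of relevant scales between $c_{f^*}$ and $\bar d$ is about $\log N$, because below $A/N$ all remaining costs charge to $A$. The $\log\log$ saving should come from a finer partition of scales with ratio $\log N$ instead of $2$. With that ratio, a single successful opening drops $D$ by a factor of $\log N$, so there are only $\log N/\log\log N$ coarse scales. Within a coarse scale, a client far from the current facility compared with $D/\log N$ pays its own distance times at most $\log N$... but it charges $d(v,f^*)$ only when $d(v,f^*)\gtrsim D/\log N$. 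Balancing this overpayment against the number of scales is exactly Fotakis' argument, and I would follow it.

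The main obstacle is this refined scale accounting. One must show that clients whose distance to $f^*$ falls between $D/\log N$ and $D$ contribute $O(\log N/\log\log N)$ times their share of $A$ in aggregate, rather than $O(\log N)$. Their count must also be controlled via the expected time until a closer facility opens. I would handle it by charging each such client either to the $O(2^{\ell^*})$ budget of its coarse scale or to $\frac{\log N}{\log\log N}\cdot d(v,f^*)$, choosing the threshold $D\log\log N/\log N$. I would then verify that both charges sum to the claimed bound. Summing over clusters gives $O\!\left(\frac{\log N}{\log\log N}\right)\OPT(X')$.
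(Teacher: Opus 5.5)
Your skeleton matches the paper's proof. It has the per-cluster decomposition, expected opening cost on arrival of $v$ at most $2\delta_{-1}(v)$, and geometric classes of client distances around $f^*$ anchored at the average $\bar d$. It also has an $O(c_{f^*})$ waiting-time budget per class, and afterwards charges each client the class ratio times $d(v,f^*)$. The paper uses the static classes $S_j=\{v: a^{j-1}\bar d<d(v,f^*)\le a^j\bar d\}$ with $a=b=\Theta(\log N/\log\log N)$ and $a^b>N$, and gets $O(b)\,c_{f^*}+O(a)\sum_v d(v,f^*)$. However, the step you call the main obstacle is deferred to ``Fotakis' argument,'' so the proposal has a genuine gap. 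Worse, the one concrete claim behind your per-class budget holds only with a single facility type, whereas the lemma allows opening costs in $[1,2^q]$. You claim that a client with $d(v,f^*)\le D/2$ opens a facility within $2d(v,f^*)$ of $f^*$ with probability $\Omega(\min(1,D/2^{\ell^*}))$. With several types, the drop $\delta_{-1}(v)-\delta_{\ell^*}(v)$ can be realized almost entirely at cheaper types $\ell<\ell^*$ whose nearest facility to $v$ is far from $f^*$. Suppose $d(v,f^*)\ll\delta_{\ell^*-1}(v)\ll D$ and every drop $\delta_{\ell-1}(v)-\delta_\ell(v)$ is below $2^\ell$. Then $v$ still pays $\Theta(D)$ in expectation. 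But the probability that it opens anything within $O(d(v,f^*))$ of $f^*$ is at most about $\delta_{\ell^*-1}(v)/2^{\ell^*}$, far below $D/2^{\ell^*}$. So ``pay about $D$, succeed with probability about $D/2^{\ell^*}$'' does not give an $O(c_{f^*})$ budget.

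The missing idea is per-type accounting, as in the paper. Let $\rho_\ell=d(F^{(\le\ell)},f^*)$, and let $B_\ell$ be the event that some open facility lies within $\rho_\ell+2a^j\bar d$ of $f^*$. Any type-$\ell$ opening by a class-$j$ client triggers $B_\ell$, so the waiting-time lemma caps type-$\ell$ spending before $B_\ell$ at $2^{\ell+1}$. Summed over $\ell\le\ell^*$, this is $O(c_{f^*})$. Types above $\ell^*$ cost at most $2d(v,f^*)$ per client, since $\delta_{\ell^*}(v)\le d(v,f^*)$. Now take $\ell$ to be the largest type $\le\ell^*$ for which $B_\ell$ has occurred, while no facility within $8a^j\bar d$ of $f^*$ is open yet. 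Then $\rho_\ell\ge 6a^j\bar d$ forces $\delta_\ell(v)\ge\delta_{-1}(v)/2$. So at least half of $v$'s opening cost falls on types above $\ell$, which are covered by the budgets of types in $(\ell,\ell^*]$ and by $2d(v,f^*)$.

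Your parameters also need fixing. Ratio $\log N$ makes the post-opening overpayment $O(\log N)\,d(v,f^*)$; the ratio must be $\Theta(\log N/\log\log N)$. The number of classes is governed by client distances, which lie in $[0,N\bar d]$, not by the range between $c_{f^*}$ and $\bar d$. Your threshold $D\log\log N/\log N$ effectively makes this choice. In the single-type case your dynamic-phase variant, where a phase ends once $D$ drops by a factor $\Omega(\log N/\log\log N)$, would go through. As written, though, both the counting and the multi-type case are left unproven.
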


\subsubsection{The Algorithm}
The algorithm is the same as that of \cite{ArgueFGS22}, but the latter has an analysis issue that we fix and tighten in what follows. The algorithm requires as input an offline $\alpha$-approximation to Metric Facility Location, and proceeds in two phases.
\begin{enumerate}[nosep]
\item In the offline phase, scale the connection cost of each client in the sample by $\frac{1}{p}$ and run any constant factor approximation to the metric facility location objective
\begin{align}
\objhat(F, S) := \sum_{f \in F}c_f+ \sum_{v \in S}d(v,F) \cdot \frac{1}{p}. \label{line:phase1-obj}
\end{align}
to obtain facility set $\widehat F$.
\item In the online phase, run Meyerson's algorithm \cite{meyerson2001online} (denoted $\omfl$) on the online input $X$ to obtain the facility set $F$ (and connect clients appropriately). 
\end{enumerate}

\subsection{Analysis}
We prove the following theorem.
\begin{restatable}{thm}{mflfixthm}\label{thm:mflfixtotalcost}

The expected cost incurred by the algorithm is $O\left(\frac{\alpha \log \nicefrac{1}{p}}{\log \log \nicefrac{1}{p}}\right) \cdot \OPT(X)$.
\end{restatable}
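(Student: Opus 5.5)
Fix an optimal solution $F^*$ for $X$ with clusters $\{C_f\}_{f \in F^*}$, where $C_f$ is the set of clients assigned to $f$. Write $\OPT(C_f) := c_f + \sum_{v \in C_f} d(v,f)$, so that $\sum_f \OPT(C_f) = \OPT(X)$. I read phase (2) as running Meyerson on the online clients $X \setminus \samples$ with $\widehat F$ already open, i.e.\ $F_{<v} \supseteq \widehat F$ for the first online client. The proof has three steps.

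\textbf{Step 1: offline cost.} Evaluate $\objhat(F^*, \samples)$. In expectation it equals $\sum_f c_f + \sum_v \Pr[v\in \samples]\, d(v,F^*)/p = \OPT(X)$. Hence $\expect*{\objhat(\widehat F,\samples)} \le \alpha\, \OPT(X)$, and in particular $\expect*{\sum_{f\in\widehat F} c_f} \le \alpha \OPT(X)$.

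\textbf{Step 2: per-cluster charging on a good event.} Fix a cluster $C = C_f$ and split it by distance to $f$. The \emph{far} clients have $d(v,f) \ge$ some threshold; they are handled by Meyerson directly (its cost on them is charged to their own connection cost). The interesting regime is the \emph{near} core. Let $G_C$ be the event that at least $p|C'|/2$ clients of a relevant subset $C' \subseteq C$ are sampled. Conditioned on $G_C$, the $\objhat$ cost paid on the sampled part of $C$ is at least $\frac1p\sum_{v\in C\cap\samples} d(v,\widehat F)$. I would carry out this charging as in Meyerson's proof, with the initial state $\widehat F$, and show that the expected online cost of $C\setminus \samples$ is $O(1)$ times $\OPT(C)$ plus the share of $\objhat(\widehat F,\samples)$ attributable to $C$. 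Concretely:
\begin{itemize}
\item If on $G_C$ the facilities of $\widehat F$ near $f$ are at distance comparable to the average $d(\cdot,f)$ in $C$, then Meyerson's costs behave as if a facility near $f$ were already open. Each online client then pays $O(d(v,f)+\bar d)$, where $\bar d$ is an average distance controlled by $\frac{1}{p|C\cap \samples|}\cdot\objhat$ restricted to $C$.
\item Otherwise the sampled clients alone pay at least $\Omega(p|C|) \cdot \frac{1}{p}\cdot \text{distance}$, which again covers the online cost.
\end{itemize}
I would use a Markov/median trick: at least half of the sampled points lie within twice the average. This makes "at least half the expected number sampled" sufficient.

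\textbf{Step 3: bad event, and the main obstacle.} On $\neg G_C$ we bound the online cost on $C$ by the worst case of \Cref{app-meyerson_mfl}, namely $O(\log|C|/\log\log|C|)\cdot\OPT(C)$, using the fact that Meyerson's cost decomposes over clusters of OPT. A Chernoff bound gives $\Pr[\neg G_C] \le e^{-\Omega(p|C|)}$, where sampling without replacement only strengthens the concentration. The key calculation is to show
\[
\max_{N}\ e^{-\Omega(pN)}\frac{\log N}{\log\log N} \;+\; O(1) \;=\; O\!\left(\frac{\log \nicefrac1p}{\log\log\nicefrac1p}\right).
\]
For $N\le 1/p$ the worst case is already within the target bound. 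For $N\gg 1/p$ the exponential factor kills the logarithm.

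The main obstacle is that clusters are unbalanced: the relevant bound must hold for the subset of $C$ whose connection cost dominates, not merely for $|C|$. Handling this needs either a dyadic decomposition of $C$ by distance to $f$, applying the good/bad split per ring, or a direct argument inside Meyerson's analysis. In the latter, the $\log N/\log\log N$ factor arises from the number of "halving" phases before a facility at distance $O(\bar d)$ opens. On $G_C$ that facility is already present in $\widehat F$. Otherwise the number of phases is bounded only by the cluster size, which is effectively $O(1/p)$ with high probability. Summing over clusters and adding Step 1 then gives the theorem.
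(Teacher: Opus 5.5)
Your Steps 1 and 3 match the paper: the offline cost is bounded through $\objhat(F^*,S)$, and the bad event is handled by Meyerson's per-cluster worst case times $e^{-\Omega(p|C|)}$. But Step 2, the heart of the proof, is never actually established. You describe two heuristic cases and invoke a median trick. You then end by declaring an unresolved ``main obstacle'' (unbalanced clusters) and offer two candidate remedies, dyadic rings or re-entering Meyerson's phase analysis. Neither is needed. The ring decomposition, done naively, also does damage. In the bad case, Meyerson's bound for a ring $R$ is in terms of $\OPT(R)$, which contains $c_{f}$. So $c_f$ gets charged once per ring, and rings of size below $1/p$ are bad with at least constant probability.

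The missing idea is a direct averaging argument for the whole cluster, using the single good event $\lnot\cB_C=\{|S\cap C|\ge p|C|/2\}$.
\begin{itemize}
\item Because $\widehat F$ is open before every online client, Meyerson's telescoping bound gives expected cost at most $3\,d(v,\widehat F)$ for each $v\in C$. Facility cost is at most $2\delta_{-1}(v)$ and connection cost is at most $\delta_{-1}(v)\le d(v,\widehat F)$. No case split on where $\widehat F$ lies and no phase analysis are required.
\item Next, $d(v,\widehat F)\le d(v,f^*)+d(f^*,\widehat F)$. Averaging the triangle inequality over the sampled clients gives $d(f^*,\widehat F)\le \frac{1}{|S\cap C|}\sum_{u\in S\cap C}\bigl(d(f^*,u)+d(u,\widehat F)\bigr)$. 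This always holds, so your ``otherwise'' case is just this inequality rearranged.
\item On $\lnot\cB_C$ we have $|C|/|S\cap C|\le 2/p$. So the cluster costs at most $3\sum_{v\in C}d(v,f^*)+\frac{6}{p}\sum_{u\in S\cap C}\bigl(d(f^*,u)+d(u,\widehat F)\bigr)$.
\item Multiply by $\ind\{\lnot\cB_C\}$ and drop the indicator, which is valid since every term is nonnegative. Then $\frac1p\sum_{u\in S\cap C}d(f^*,u)$ has expectation exactly $\sum_{v\in C}d(v,f^*)$. Also, $\frac1p\sum_{u\in S\cap C}d(u,\widehat F)$ summed over clusters is at most $\objhat(\widehat F,S)$, which Step 1 bounds by $\alpha\OPT(X)$ in expectation.
\end{itemize}
This is why unbalanced clusters are harmless. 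You never need the sample to be representative in distance. You only need the $\frac1p$-scaled sum over the sample to be unbiased and the count ratio to be at most $2/p$ on the good event.

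Your first bullet is close to this argument, with two fixes. First, $\bar d$ must be the sample average of $d(f,u)+d(u,\widehat F)$, not of $d(u,\widehat F)$ alone. Second, the correct normalization is $|C|\,\bar d\le \frac{2}{p}\sum_{u\in S\cap C}\bigl(d(f,u)+d(u,\widehat F)\bigr)$, not a bound through $\frac{1}{p|C\cap S|}\objhat$.
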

\begin{proof}
First, we show that the expected cost incurred in the first stage is $O(\OPT(X))$. Let $F^*$ be the optimal solution for the full online input. Then
\begin{align}\expect*{\obj(\widehat F, S)} \leq \expect*{\objhat(\widehat F, S)} \leq \alpha \cdot \expect*{\objhat(F^*, S)} = \alpha \left[\sum_{f^* \in F^*} c_f + \sum_{v \in X} d(v,F^*) \frac{\Pr[v \in S] }{p}  \right] =\alpha \cdot \OPT(X) \label{line:stage1}\end{align}
It remains to bound the cost of the online phase, and for every $f^* \in F^*$ serving client set $C$, we will bound the online algorithm's cost to serve $C$ separately. We write this as $\cost(C) := \sum_{v \in C} F_v + d(F_{\leq v}, v)$.

Define the bad event $\cB_C$ to be $\{|S \cap C| < p|C| / 2\}$, and note that by a Chernoff bound, $\prob{\cB_C} \leq e^{- p|C| /8}$. In the good case that $\lnot \cB_C$ holds,  the expected cost (over the randomness of the online algorithm only) to connect $C$ is
\begin{align}
&\expectover{F}*{\cost(C)} = \sum_{v \in C} \expect*{d(F_{\leq v}, v)} + \expect*{\sum_{f \in F_v} c_f} \nonumber \\& \leq \sum_{v \in C} \expect*{d(F_{\leq v}, v)} + \sum_{\Delta_1, \ldots, \Delta_q} \expect*{\sum_{f \in F_v} c_f | \delta_1(v) = \Delta_1, \ \ldots, \ \delta_q(v) = \Delta_q} \prob{\delta_1(v) = \Delta_1, \ \ldots, \ \delta_q(v) = \Delta_q} \nonumber \\
&\leq \sum_{v \in C} \expect*{d(F_{\leq v}, v)} + \sum_{\Delta_1, \ldots, \Delta_q} \sum_{\ell=0}^q \expect*{2^{\ell+1} \min \left(\frac{\Delta_{{\ell}-1} - \Delta_\ell}{2^{\ell}}, 1 \right) \prob{\delta_1(v) = \Delta_1, \ \ldots, \ \delta_q(v) = \Delta_q}} \nonumber\\
&\leq \sum_{v \in C} \expect*{d(F_{\leq v}, v)} + \sum_{\Delta_1, \ldots, \Delta_q}\sum_{{\ell}=0}^{q} \expect*{2(\Delta_{{\ell}-1} - \Delta_{\ell}) \prob{\delta_1(v) = \Delta_1, \ \ldots, \ \delta_q(v) = \Delta_q}} \nonumber \\
&\leq \sum_{v \in C} \expect*{d(F_{\leq v}, v) + 2d(F_{< v}, v)} \leq 3 \sum_{v\in C} d(\widehat F, v). \label{line:telescope}
\intertext{Above, \eqref{line:telescope} is from evaluating the telescoping series and observing that $\delta_{-1} = d(F_{<v}, v)$ which does not depend on the realizations $\Delta_1, \ldots, \Delta_q$. Using the fact that $\widehat F \subseteq F_{< v} \subseteq F_{\leq v}$ and applying the triangle inequality multiple times, we continue to bound this by}
&\leq \sum_{v \in C}  3 (d(f^*, v) + d(\widehat F, f^*)) = 3\sum_{v \in C} d(f^*, v) + 3  \frac{|C|}{|S \cap C|}  \sum_{v \in C \cap S}d(\widehat F, f^*) \nonumber\\
&\leq  3\sum_{v \in C} d(f^*, v) + 3 \frac{|C|}{|S \cap C|}  \left[\sum_{v \in C \cap S} d(f^*, v) + d(v, \widehat F)\right] \leq 2\sum_{v \in C} d(f^*, v) + \frac{6}{p}  \left[\sum_{v \in C \cap S} d(f^*, v) + d(v, \widehat F)\right] \label{line:using_notb},
\end{align}
where \eqref{line:using_notb} used the condition $\lnot \cB_C$. Taking the expectation this time over the realization of the samples $S$ (conditioned on $\lnot \cB_C$):
\begin{align}
\expectover{F, S}*{\cost(C) | \lnot \cB_C} &\leq 3\sum_{v \in C} d(f^*, v) + 6 \expectover{S}*{\sum_{v \in C\cap S} \frac{1}{p} d(f^*, v)| \lnot \cB_C} + 6\expectover{S}*{\sum_{v \in C \cap S}\frac{1}{p} d(v, \widehat F)| \lnot \cB_C}
\label{line:uncondition}
\end{align}
Moreover, we also have
\begin{equation}\label{totalprob_1}
\prob{\lnot \cB_C}\cdot
\expectover{S}*{\sum_{v \in C\cap S}\frac{1}{p}d(v,\widehat F)\mid \lnot \cB_C}
=
\expectover{S}*{\ind\{\lnot \cB_C\}\sum_{v \in C\cap S}\frac{1}{p}d(v,\widehat F)}
\leq
\expectover{S}*{\sum_{v \in C\cap S}\frac{1}{p}d(v,\widehat F)}
\end{equation}
\begin{equation}\label{totalprob_2}
\prob{\lnot \cB_C}\cdot
\expectover{S}*{\sum_{v \in C\cap S}\frac{1}{p}d(v,F^*)\mid \lnot \cB_C}
=
\expectover{S}*{\ind\{\lnot \cB_C\}\sum_{v \in C\cap S}\frac{1}{p}d(v,F^*)}
\leq
\expectover{S}*{\sum_{v \in C\cap S}\frac{1}{p}d(v, F^*)}
\end{equation}

To finish the proof, we use the law of total probability to write the total expected cost, and substitute in \eqref{line:uncondition}, \eqref{totalprob_1}, \eqref{totalprob_2}, the probability bound for $\cB_C$, and Meyerson's gaurantee from \Cref{app-meyerson_mfl}:
\begin{align}
    \expectover{F, S}*{\cost(F)} &\leq  \left(\sum_{C}\expectover{F, S}*{\cost(C) | \lnot \cB_C} \cdot \prob{\lnot \cB_C}\right) +  \left(\sum_{C}\expectover{F, S}*{\cost(C) | \cB_C} \cdot \prob{\cB_C} \right) \nonumber\\
    &\leq  \left(3\sum_{v \in X} d(f^*, v) + 6 \cdot \expectover{S}*{\sum_{v \in S} \frac{1}{p} d(f^*, v)} + 6 \cdot\expectover{S}*{\sum_{v \in S}\frac{1}{p} d(v, \widehat F)} \right) \nonumber \\
    &\quad \quad \quad  + \left(\sum_C\left(e^{-p|C|/8} \cdot O\left(\frac{\log |C|}{\log \log |C|}\right) \right) \cdot  \OPT(C) \right) \nonumber \\
    & \leq \left(3 \OPT(X) + 6 \expectover{S}*{\objhat(F^*, S)} + 6 \expectover{S}*{\objhat(\widehat F, S)}\right)  +  O\left(\frac{\log |\nicefrac1p|}{\log \log |\nicefrac1p|}\right) \cdot  \sum_C \OPT(C) \label{line:reusing_objhat}\\
    & = O\left(\frac{\alpha \log |\nicefrac1p|}{\log \log |\nicefrac1p|}\right) \cdot  \OPT(X). \label{line:bounding_objhat}
\end{align}
    To see the bound on the first group of terms in step \eqref{line:reusing_objhat}, note that these are the connection costs of $\OPT(X)$, $\objhat(F^*, S)$, and $\objhat(\widehat F, S)$ respectively. The bound on the second group in \eqref{line:reusing_objhat} holds because when $|C| \geq 10/p$, the exponential term dominates, and when $|C| < 10/p$ then $\log |C| / \log \log |C| = O(\log \nicefrac 1p / \log \log \nicefrac1p)$ also. Finally, \eqref{line:bounding_objhat} follows from the bound on the first phase objective's costs \eqref{line:stage1}.
\end{proof}

Due to space considerations, we relegate the generalization to the model with corruptions to \Cref{sec:metricfacility}.

\section{Closing Remarks}\label{sec:closing}In this paper, we gave new sample-augmented online algorithms for fundamental problems such as set cover, facility location, and Steiner tree. Our contributions are two-fold. First, we obtained tight bounds for both set cover and metric facility location, thereby answering open questions in \cite{GuptaKL24} and \cite{ArgueFGS22}, respectively. Second, we identified a key weakness of prior work in this setting, namely that the algorithms are not robust to errors or corruptions in the samples. To remedy this, we gave the first sample-augmented online algorithms that are robust to adversarial corruption, where an adversary may arbitrarily modify a limited number of samples. For set cover, facility location, and Steiner tree, our algorithms achieve tight bounds and strictly generalize the previous guarantees obtained in the absence of sample corruption.

We hope that our work will stimulate further research on robustness in algorithms that leverage samples or learned side information. This includes not only the sample-augmented online model considered in this paper, but also other settings such as stochastic algorithms that exploit samples from the input distribution to circumvent worst-case lower bounds. It would also be interesting to study richer models of imperfect side information, including stochastic noise, distribution shift, and adaptive or strategic corruptions. More broadly, we believe that incorporating robustness into the design of sample-augmented algorithms can help bridge the gap between classical worst-case algorithms and machine-learned heuristics, leading to algorithms that are both theoretically well-founded and practically reliable.

\paragraph{AI Disclosure.} 

We used ChatGPT $5.5$ Pro for minor copy editing. The authors are responsible for all content of the paper.

\bibliographystyle{alpha}
\bibliography{main}

\appendix

\section{Online Metric Facility Location in the $(p, k)$-Adversarially Robust Model}\label{sec:metricfacility}We turn to the model with corruptions. As in the $p$-sample model, the online part of the algorithm is Meyerson's algorithm~\cite{meyerson2001online}, denoted $\omfl$. The new ingredient is an offline randomized algorithm $\rmfl$, which is run on prefixes of the corrupted sample.

For an online time $t$, let $\cost(\omfl_{\leq t})$ denote the cost incurred by $\omfl$ up to and including time $t$, not including facilities opened by $\rmfl$.

\subsection{The Algorithm}

Let $f(x) := \frac{\ln \nicefrac{x}{p}}{\ln \ln \nicefrac{x}{p}}$. Define a sequence of integers $K_1,K_2,\ldots,K_w$ as follows: $K_1=3$, and for $i\geq 2$, $K_i$ is the smallest integer at most $np$ such that $f(K_i)\geq 2f(K_{i-1})$. Let $\sigma(\csamples)$ be a uniformly random permutation of $\csamples$, fixed throughout the algorithm. For any $1\leq r\leq np$, let $\csamples_{\leq r}$ denote the first $r$ clients in this order. The algorithm uses an offline $\alpha$-approximation to metric facility location, where $\alpha=O(1)$, as a subroutine. The subroutine $\rmfl$ applies this approximation to scaled prefixes of the corrupted sample.

\paragraph{Algorithm $\rmfl$:} This algorithm takes as input the corrupted sample $\csamples$ and a budget parameter $B$. For any positive prefix length $r$, define the scaled prefix objective
\[
\widehat{\obj_r}(F,\csamples_{\leq r})
:=
\sum_{f\in F}c_f+\frac{n}{r}\sum_{v\in \csamples_{\leq r}}d(v,F).
\]
For each $i\in[w]$, the algorithm sets $L_i$ to be the greatest positive integer $r\leq \frac{np}{16K_i}$ such that the offline $\alpha$-approximation, run on the scaled prefix instance $\csamples_{\leq r}$, returns a solution of cost at most $\frac{B}{f(K_i)}$.
The algorithm then opens the corresponding facilities $\widehat{F}_i$. If no such $r$ exists, set $L_i=0$ and $\widehat{F}_i  = \emptyset$.
The online algorithm proceeds as follows.
\begin{enumerate}[nosep]
\item On every online arrival $v^t$, run $\omfl$ on $v^t$, with all previously opened $\rmfl$ facilities also available for connection. The facilities opened by $\omfl$ are denoted $F$.
\item At the end of time step $1$, run $\rmfl$ on $\csamples$ with budget parameter $\cost(\omfl_{\leq 1})$.
\item For each later time step $t$, let $t'$ be the last time at which $\rmfl$ was executed. If $\cost(\omfl_{\leq t})>2\cost(\omfl_{\leq t'})$, then run $\rmfl$ on $\csamples$ with budget parameter $\cost(\omfl_{\leq t})$. Let $t_{\mathrm{fin}}$ be the last time at which $\rmfl$ is run.
\end{enumerate}

\subsection{Analysis}\label{sec:mfl-corruptions-analysis}

We first bound the cost incurred by $\rmfl$ against that of $\omfl$. Since $f(K_i)$ increases geometrically, the total cost incurred by $\rmfl$ is at most $\sum_{i=1}^w \frac{B}{f(K_i)} \leq \frac{2B}{f(K_1)} = \frac{2B}{f(3)} = O(B)$. Since $\offst$ is called with parameter $\cost(\onst_{\leq t})$ exactly once every time $t$ the cost of $\onst$ doubles, the offline cost is at most a constant times the online cost. We summarize in the following observation.
\begin{observation}\label{ONLINEMFLcost}
$\cost(\rmfl) = O(\cost(\omfl))$.
\end{observation}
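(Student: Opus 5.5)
We need to prove Observation~\ref{ONLINEMFLcost}: $\cost(\rmfl) = O(\cost(\omfl))$. The plan is a deterministic, pathwise accounting argument. It has two geometric-sum steps: one over the levels $i \in [w]$ inside a single call of $\rmfl$, and one over the successive calls of $\rmfl$.

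\textbf{Step 1: one call with budget $B$.} By construction, each level $i$ opens a facility set $\widehat F_i$ returned by the offline approximation on some scaled prefix. Its scaled objective, and in particular its facility opening cost $\sum_{f \in \widehat F_i} c_f$, is at most $B/f(K_i)$; if $L_i = 0$, then $\widehat F_i = \emptyset$ and the cost is zero. We will need to check that the relevant cost of $\rmfl$ is only the opening cost of the facilities it opens. Connection costs are paid by $\omfl$, since clients connect to the nearest available facility and those costs are already counted in $\cost(\omfl)$.

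Since $f(K_i) \ge 2 f(K_{i-1})$ for $i \ge 2$, we get $f(K_i) \ge 2^{i-1} f(3)$. Therefore
\[
\sum_{i=1}^w \frac{B}{f(K_i)} \le \frac{B}{f(3)} \sum_{i \ge 1} 2^{-(i-1)} = \frac{2B}{f(3)}.
\]
We must confirm that $f(3) = \ln(3/p)/\ln\ln(3/p)$ is bounded below by a positive absolute constant. Since $p \le 1/2$, we have $3/p \ge 6 > e$, so $\ln\ln(3/p) > 0$. Moreover $x/\ln x \ge e$ for $x > 1$, so $f(3) \ge e$. Hence one call with budget $B$ costs at most $(2/e)B = O(B)$.

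\textbf{Step 2: summing over calls.} Let $t_1 = 1 < t_2 < \dots < t_r = t_{\mathrm{fin}}$ be the times at which $\rmfl$ runs, with budgets $B_j = \cost(\omfl_{\le t_j})$. The triggering rule gives $B_{j+1} > 2 B_j$. It follows that $B_j \le 2^{-(r-j)} B_r$, and so
\[
\sum_j B_j \le 2 B_r \le 2\,\cost(\omfl).
\]
Here we use that $\cost(\omfl_{\le t})$ is nondecreasing in $t$ and bounded by the final cost of $\omfl$. Combining with Step 1, the total cost of all $\rmfl$ calls is at most $(4/f(3))\cost(\omfl) = O(\cost(\omfl))$. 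If facilities opened in different calls coincide, this only helps, because we are upper-bounding the opening cost by the sum over calls.

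\textbf{Main obstacle.} The arithmetic is routine; the delicate point is a definitional subtlety. The guarantee we have is that the returned solution has \emph{scaled} cost at most $B/f(K_i)$. We need this to upper-bound the opening cost actually charged to $\rmfl$, which holds because the facility costs appear unscaled in $\widehat{\obj_r}$. We also need to make sure no connection cost is double-attributed to $\rmfl$. A minor edge case is $B = 0$ at the first call: then every level has $L_i = 0$ or a zero-cost solution, and the bound holds trivially.
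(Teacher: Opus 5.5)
Your proposal is correct and is essentially the paper's argument. You bound a single call by a geometric sum over the levels $i\in[w]$, using $f(K_i)\ge 2f(K_{i-1})$ to get at most $2B/f(3)=O(B)$, and then sum over calls geometrically via the doubling trigger. You also fill in details the paper leaves implicit: that $f(3)\ge e$ when $p\le 1/2$, and that the opening cost charged to the offline subroutine is bounded by the scaled objective because facility costs appear unscaled in it.
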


As a consequence of the above, we may assume $k\leq \frac{np}{2}$ henceforth. This is because when $k>\frac{np}{2}$, the desired ratio is $O\left(\frac{\log n}{\log\log n}\right)$, which is achieved by $\omfl$ due to \Cref{app-meyerson_mfl}.

Let $K:=K_g=\min\{K_i\mid K_i\geq k\}$. Since $K_g$ is the first value in the sequence that is at least $k$, and since $f(K_i)$ grows geometrically, we have
$f(K)=O(f(k)).$
Let $\crpt$ be the index of the first corrupted sample in the random order $\sigma(\csamples)$, i.e.\ the first element of $\csamples\setminus\samples$. Define $\Clean := \samples \cap \csamples$. Define
$\tau := \min\left\{\crpt-1,\frac{np}{16K}\right\}.$
Let $\beta_K(\csamples)$ be the cost of the $\alpha$-approximate metric facility location solution for $\csamples_{\leq \tau}$, where each client has demand $\frac{n}{\tau}$; if $\tau=0$, set $\beta_K(\csamples)=0$.

\begin{lemma}\label{mflbeta}
We have $\expect*{\beta_K(\csamples)}\leq 2 \alpha \cdot \OPT(X)$.
\end{lemma}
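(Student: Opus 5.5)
Conditioned on $\tau$ and on the clean prefix, I would compare $\beta_K(\csamples)$ with the cost of the optimal facility set $F^*$ evaluated on the scaled prefix. Since $\beta_K$ is the cost of an $\alpha$-approximate solution to the scaled instance, I have
\[
\beta_K(\csamples) \leq \alpha \cdot \widehat{\obj_\tau}(F^*, \csamples_{\leq \tau}) = \alpha\Bigl(\sum_{f\in F^*} c_f + \frac{n}{\tau}\sum_{v\in \csamples_{\leq\tau}} d(v,F^*)\Bigr).
\]
It therefore suffices to show $\E\bigl[\frac{n}{\tau}\sum_{v\in\csamples_{\leq\tau}} d(v,F^*)\bigr] \leq 2\sum_{v\in X} d(v,F^*)$. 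The facility term is already at most $\OPT(X)$. The case $\tau=0$ contributes $0$.

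The first step is to write the scaled prefix connection cost as $n$ times the average of $d(\csamples_j,F^*)$ over positions $j\le\tau$, namely $\frac{n}{\tau}\sum_{j=1}^{\tau} d(\csamples_j,F^*)$. The key point is that every position $j\le\tau$ lies before the first corruption, so $\csamples_j$ is a clean sample. I would decompose the event $\{\tau = r\}$: either $\crpt = r+1$ with $r< np/(16K)$, or $\crpt > np/(16K)$ with $r = np/(16K)$. In both cases, conditioned on the realized sample $\samples$, the adversary's choice and $\crpt$, the first $r$ elements of the random permutation are a uniformly random ordered $r$-subset of $\Clean$. This is because the permutation $\sigma$ is independent of the corruption, and conditioning on the set of positions of corrupted elements leaves the clean elements uniformly permuted among the clean positions. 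Hence each $\csamples_j$ with $j\le r$ is marginally uniform on $\Clean$, and
\[
\E\Bigl[\tfrac{n}{\tau}\textstyle\sum_{j\le\tau} d(\csamples_j,F^*) \Bigm| \samples,\csamples,\crpt\Bigr] = \frac{n}{|\Clean|}\sum_{v\in\Clean} d(v,F^*) \leq \frac{n}{np-k}\sum_{v\in \samples} d(v,F^*).
\]

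Next I take the expectation over $\samples$, which is a uniform $np$-subset of $X$. This gives $\E\sum_{v\in\samples} d(v,F^*) = p\sum_{v\in X} d(v,F^*)$, so the connection term is at most $\frac{np}{np-k}\sum_{v\in X} d(v,F^*)$. Using the standing assumption $k\le np/2$, this is at most $2\sum_{v\in X} d(v,F^*)$, and combining with the facility term yields $\E[\beta_K]\le 2\alpha\,\OPT(X)$.

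The main obstacle is justifying the uniformity claim in the second step rigorously. The adversary's choice of which elements to corrupt, and what to replace them with, may depend on $\samples$, so one must verify that conditioning on $\crpt$, and hence on $\tau$, does not bias which clean elements occupy the prefix. I would handle this by fixing $\samples$ and the adversary's replacement set, which makes $\Clean$ deterministic. The only remaining randomness is $\sigma$, and the exchangeability of $\sigma$ restricted to the clean elements, given the positions of the corrupted ones, gives the claim. Notably, this avoids the near-uniformity lemma (\Cref{lem:uniformitybound}), because here the bound is averaged over the whole sample rather than conditioned on a particular prefix.
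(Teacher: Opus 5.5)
Your proposal is correct and follows essentially the same argument as the paper. Both condition on $\samples$, $\csamples$ and $\tau$ (via $\crpt$), use that $\csamples_{\leq\tau}$ is a uniformly random subset of $\Clean$ to get the bound $\frac{n}{|\Clean|}\sum_{v\in\Clean}d(v,F^*)\le\frac{2}{p}\sum_{v\in\Clean}d(v,F^*)$ (using $k\le np/2$), average over $\samples$, and compare $\beta_K(\csamples)$ with $\alpha$ times the scaled cost of $F^*$. Your explicit justification of the uniformity step, via independence of $\sigma$ from the adversary and exchangeability of the clean elements given the corrupted positions, fills in a point the paper only asserts.
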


\begin{proof}
Let $F^*$ be an optimal metric facility location solution for $X$. Condition on a fixed realization of $\samples,\csamples$ and on $\tau=r>0$. Then $\csamples_{\leq r}$ is a uniformly random $r$-subset of $\Clean$. Since $\abs{\Clean}=np-k\geq \frac{np}{2}$, we have
\[
\expect*{\frac{n}{\tau}\sum_{v\in \csamples_{\leq \tau}} d(v,F^*) \mid \tau=r} = \frac{n}{\abs{\Clean}}\sum_{v\in \Clean}d(v,F^*) \leq \frac{2}{p}\sum_{v\in \Clean}d(v,F^*).
\]
Since this holds for every $r$ and every realization of $\samples,\csamples$, taking expectations gives
$\expect*{\frac{n}{\tau}\sum_{v\in \csamples_{\leq \tau}} d(v,F^*)} \leq \frac{2}{p}\expect*{\sum_{v\in \Clean}d(v,F^*)}.$
Consequently, the expected cost incurred by using $F^*$ to serve $\csamples_{\leq \tau}$, where each client has demand $\frac{n}{\tau}$, is at most
$\sum_{f\in F^*}c_f+\frac{2}{p}\expect*{\sum_{v\in \Clean}d(v,F^*)} \leq \sum_{f\in F^*}c_f+2\sum_{v\in X}d(v,F^*) \leq 2\OPT(X).$
Since $\beta_K(\csamples)$ is the cost of an $\alpha$-approximate solution for the scaled instance $\csamples_{\leq \tau}$, the conclusion follows.
\end{proof}

We will use the following lemma to compare the clean part of a longer prefix $\csamples_{\leq L}$ to the prefix $\csamples_{\leq \tau}$ before the first corrupted sample. Note that any sum involving $\csamples_{\leq 0}$ is interpreted to equal $0$.

\begin{lemma}\label{cl:mflcharging}
Let $L$ be any random prefix length satisfying $\tau\leq L\leq \frac{np}{16K}$. Let $F^*$ be an optimal metric facility location solution for $X$. Then,
$\expect*{\frac{n}{L}\sum_{v\in \csamples_{\leq L}\cap\samples}d(v,F^*)} \leq 7\expect*{\frac{n}{\tau}\sum_{v\in \csamples_{\leq \tau}}d(v,F^*)} \leq 14\OPT(X).$
\end{lemma}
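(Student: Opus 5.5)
The idea is to replace the random, possibly adversarially chosen $L$ by a pointwise upper bound that no longer depends on $L$. We then compute expectations by conditioning on the realizations of $\samples,\csamples$ and on $\crpt$, using only the uniformity of $\sigma$.

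\textbf{Notation.} Fix $\samples,\csamples$. Let $D := \sum_{v\in\Clean} d(v,F^*)$ and $T:=\frac{np}{16K}$. For a position $q$, let $d_q := d(\sigma_q,F^*)$ if the $q$-th element is clean and $d_q:=0$ otherwise.

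\textbf{Second inequality.} This is exactly the computation in the proof of \Cref{mflbeta}. Conditioned on $\tau=r>0$, we have $\expect*{\frac{n}{\tau}\sum_{v\in\csamples_{\le\tau}}d(v,F^*)\mid \tau=r}=\frac{n}{|\Clean|}D\le \frac{2}{p}D$. Moreover $\expect*{D}\le p\sum_{v\in X}d(v,F^*)\le p\OPT(X)$. When $\tau=0$ the quantity is $0$. This gives the bound $2\OPT(X)$, and hence $14\OPT(X)$ after multiplying by $7$. Write $R:=\frac{n}{|\Clean|}D$ for this conditional value.

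\textbf{Pointwise bound on the left-hand side.} Since $\tau\le L\le T$, we have $1/L\le 1/\max(\tau,q)$ for every $q\le L$. Hence
\[
\frac{n}{L}\sum_{v\in\csamples_{\le L}\cap\samples}d(v,F^*) = \frac{n}{L}\sum_{q\le L}d_q \le \frac{n}{\max(\tau,1)}\sum_{q\le \tau}d_q + n\sum_{q=\tau+1}^{T}\frac{d_q}{\max(\tau,q)},
\]
and this holds for every admissible $L$ simultaneously. The first term equals $\frac{n}{\tau}\sum_{v\in\csamples_{\le\tau}}d(v,F^*)$ when $\tau\ge1$, and is $0$ when $\tau=0$. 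Its expectation is therefore at most the right-hand side of the lemma.

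\textbf{Bounding the second term.} Condition additionally on $\crpt=j$, so that $\tau=\min(j-1,T)$.
\begin{itemize}
\item If $j>T$, the second term is empty.
\item Otherwise, by symmetry of $\sigma$ over the clean elements, each position $q\neq j$ satisfies $\expect*{d_q\mid \crpt=j}\le D/|\Clean|$. The second term then has conditional expectation at most
\[
R\sum_{q=j}^{T}\frac{1}{\max(j-1,q)} \le R\bigl(2+\ln(T/j)\bigr).
\]
For $j=1$ the same estimate reads $R(1+\ln T)$.
\end{itemize}

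\textbf{Summing over $j$.} I use $\prob*{\crpt=j}\le \frac{k}{np-k-j}\le\frac{4k}{np}$ for $j\le T$. Since $\sum_{j\le T}(2+\ln(T/j))=O(T)$, the contribution is
\[
O\Bigl(\tfrac{k}{np}\,T\Bigr)R = O\Bigl(\tfrac{k}{K}\Bigr)R = O(1)\cdot R.
\]
For the $j=1$ term, note $T\le np/(16k)$. Then $\frac{4k}{np}(1+\ln T)\le O(1)$ because $x\ln(1/x)\le 1/e$.

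\textbf{Conclusion.} On the right-hand side, $\expect*{\frac{n}{\tau}\sum_{v\in\csamples_{\le\tau}}d(v,F^*)}=\prob*{\crpt\ge2}\cdot R$, with $R$ averaged over $\samples,\csamples$. We have $\prob*{\crpt\ge 2}\ge 1-\frac{k}{np}\ge \frac12$. Hence the $O(1)\cdot R$ contribution is at most a constant times the right-hand side. Tracking constants should give the factor $7$.

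\textbf{Main obstacle.} The delicate step is avoiding a maximal-inequality loss. A crude bound $\frac{n}{\max(\tau,1)}Y_T$ gives a harmonic sum over $j$ and an extra $\log T$ factor. The position-wise weights $1/\max(\tau,q)$ are what make the sum $\sum_j \ln(T/j)$ telescope to $O(T)$.
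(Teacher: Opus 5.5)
Your proof is correct. It shares the paper's first step: $n/L$ is replaced pointwise by $n/\tau$ on positions up to $\tau$ and by $n/q$ on positions $\tau<q\le L$. This removes the dependence on the arbitrarily correlated $L$, and it is exactly how the paper reduces to the tail term $\sum_{v}\sum_t \mathbb{1}\{\sigma_v=t,\varsigma<t\}\,\frac{n}{t}\,d(v,F^*)$.

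The routes diverge in how that tail is controlled.

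\emph{The paper's route.} It uses a local charging. For each clean client $v$ and position $t$, it shows $\Pr[\varsigma<t\mid\sigma_v=t]\le 3\Pr[t<\varsigma\le 2t\mid\sigma_v=t]$. On the latter event $t\le\tau\le 2t$, so the weight $n/t$ is at most $2n/\tau$, and the tail is charged directly into the right-hand side. This gives the factor $1+3\cdot 2=7$ without ever computing the right-hand side explicitly.

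\emph{Your route.} You condition on $\varsigma=j$ and use exchangeability of the clean elements among the clean positions to get $\E[d_q\mid\varsigma=j]\le D/|S\cap S'|$. You then sum the harmonic series and average over $j$ using $\Pr[\varsigma=j]\le 4k/(np)$ and the Stirling bound $\sum_{j\le T}\ln(T/j)\le T$. This is more computational, but it yields the exact value $(1-\tfrac{k}{np})R$ of the right-hand side at fixed $(S,S')$, showing that both sides are governed by the single quantity $R$. It also gives a better constant than you claimed. The tail is at most $\frac{4k}{np}\cdot 3T\cdot R=\frac{3k}{4K}R\le\frac34 R\le\frac32(1-\tfrac{k}{np})R$, so your overall factor is $5/2$, not merely ``should give $7$''.

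Two minor points. Your separate $j=1$ estimate is redundant, since the $\ln T$ term is already inside $\sum_j(2+\ln(T/j))$. And both arguments rely on the standing assumption $k\le np/2$ together with $kT\le np/16$.
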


\begin{proof}
If $L=0$, then the first inequality is immediate, so assume $L>0$. We prove the first inequality conditioned on any fixed realization of $\samples,\csamples$, and then take expectations over $\samples,\csamples$. Throughout this proof, omit this conditioning. Consider a client $v\in \Clean$ and a position $t\leq \frac{np}{16K}$. Let $\sigma_v$ denote the position of $v$ in $\sigma(\csamples)$, and define
$p_t=\prob*{\crpt<t\mid \sigma_v=t}$ as well as $q_t=\prob*{t<\crpt\leq 2t\mid \sigma_v=t}.$
We will show $p_t \leq 3q_t$. By a union bound over the $k$ corrupted samples, $p_t\leq \frac{kt}{np}\leq \frac{1}{16}$. On the other hand,
\begin{align*}
q_t &= \prob*{t<\crpt\mid \sigma_v=t}\cdot \prob*{\crpt\leq 2t\mid t<\crpt,\sigma_v=t}\\
&\geq \frac{15}{16}\prob*{\crpt\leq 2t\mid t<\crpt}\\
&= \frac{15}{16}\left(1-\frac{\binom{np-2t}{k}}{\binom{np-t}{k}}\right).
\end{align*}
Since $kt\leq Kt\leq \frac{np}{16}$, we have
\begin{align*}
\frac{\binom{np-2t}{k}}{\binom{np-t}{k}} = \prod_{j=0}^{k-1}\left(1-\frac{t}{np-t-j}\right)
\leq \left(1-\frac{t}{np-t}\right)^k
\leq e^{-\frac{kt}{np-t}}
\leq 1-\frac{kt}{2(np-t)}
\leq 1-\frac{kt}{2np}.
\end{align*}
Thus $q_t\geq \frac{15kt}{32np}\geq \frac{15}{32}p_t$, and hence $p_t \leq 3q_t$.

We now bound the scaled connection cost of the clean clients in $\csamples_{\leq L}$. Since $\tau\leq L$, we split
\begin{align}
\expect*{\frac{n}{L}\sum_{v\in \csamples_{\leq L}\cap\samples}d(v,F^*)} = \expect*{\frac{n}{L}\sum_{v\in \csamples_{\leq \tau}}d(v,F^*)}+\expect*{\frac{n}{L}\sum_{v\in (\csamples_{\leq L}\setminus \csamples_{\leq \tau})\cap\samples}d(v,F^*)}. \label{eq:mfl_terms_to_bound}
\end{align}
The first term is at most $\expect*{\frac{n}{\tau}\sum_{v\in \csamples_{\leq \tau}}d(v,F^*)}$. For the second term, if $v\in(\csamples_{\leq L}\setminus \csamples_{\leq \tau})\cap\samples$ is at position $t$, then $\tau<t$, and hence $\crpt<t$. Therefore,
\begin{align}
\expect*{\frac{n}{L}\sum_{v\in(\csamples_{\leq L}\setminus \csamples_{\leq \tau})\cap\samples}d(v,F^*)} &\leq \sum_{v\in\Clean}\sum_{t=1}^{np/(16K)}\prob*{\sigma_v=t,\crpt<t}\cdot \frac{n}{t}d(v,F^*) \nonumber\\
&\leq 3\sum_{v\in\Clean}\sum_{t=1}^{np/(16K)}\prob*{\sigma_v=t}\prob*{t<\crpt\leq 2t\mid \sigma_v=t}\cdot \frac{n}{t}d(v,F^*) \label{eq:mfl-second-charge-start1}\\
&\leq 6\expect*{\sum_{v\in\Clean}\sum_{t=1}^{np/(16K)}\ind\{\sigma_v=t,\ t<\crpt\leq 2t\}\cdot \frac{n}{\tau}d(v,F^*)} \label{eq:mfl-second-charge-start2} \\
&\leq 6\expect*{\frac{n}{\tau}\sum_{v\in\csamples_{\leq \tau}}d(v,F^*)}. \label{eq:mfl-second-charge-start3}
\end{align}
Inequality \eqref{eq:mfl-second-charge-start1} uses $p_t\leq 3q_t$, and \eqref{eq:mfl-second-charge-start2} uses that fact that if $\sigma_v=t$ and $t<\crpt\leq 2t$, then $t\leq \tau$ and $\tau\leq 2t$, so $\frac{n}{t}\leq \frac{2n}{\tau}$. Step \eqref{eq:mfl-second-charge-start1} holds because for each $v\in\Clean$, there is exactly one value of $t$ with $\sigma_v=t$, and Moreover, if $\sigma_v=t$ and $t<\crpt\leq 2t$, then $t\leq\tau$, so $v\in\csamples_{\leq \tau}$. 

Substituting the last bound \eqref{eq:mfl-second-charge-start3} into \eqref{eq:mfl_terms_to_bound} gives the first inequality of the lemma. The second inequality follows from the proof of \Cref{mflbeta}, which showed $\expect*{\frac{n}{\tau}\sum_{v\in \csamples_{\leq \tau}}d(v,F^*)}\leq 2\OPT(X)$.
\end{proof}

We next show that once sufficiently good facilities from the corrupted sample have been opened, the remaining cost of $\omfl$ is small.

\begin{lemma}\label{mflonlinepost}
Let $L$ be any random prefix length satisfying $\tau\leq L\leq \frac{np}{16K}$, and let $\widehat F$ be any random set of facilities. Suppose $\expect*{\frac{n}{L}\sum_{v\in \csamples_{\leq L}}d(v,\widehat F)}\leq O(\alpha \cdot \OPT(X))$, Then the expected cost incurred by $\omfl$ on any suffix of the online sequence $X$, if all facilities in $\widehat F$ are already open, is at most $\OPT(X)\cdot O\left(\frac{\alpha \log \nicefrac{k}{p}}{\log \log \nicefrac{k}{p}}\right)$.
\end{lemma}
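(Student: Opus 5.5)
We follow the proof of \Cref{thm:mflfixtotalcost}, with the clean part of the prefix $\csamples_{\leq L}$ playing the role of the uniform sample $\samples$ and density $p' := pL/(np)$ playing the role of $p$. Fix an optimal solution $F^*$. For each $f^*\in F^*$ with served cluster $C$, we bound the cost $\omfl$ pays on $C$ separately, in expectation over the online randomness. The telescoping computation leading to \eqref{line:telescope} uses only that $\widehat F \subseteq F_{<v}$, so it applies verbatim to any suffix. It gives an online cost for $C$ of at most $3\sum_{v\in C} d(\widehat F, v) \le 3\sum_{v \in C} \bigl(d(f^*,v) + d(f^*,\widehat F)\bigr)$.

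\textbf{Charging $d(f^*,\widehat F)$ on good clusters.} Since $\csamples_{\leq \tau}$ is a uniformly random $\tau$-subset of $\Clean$, and $\Clean$ is close to a uniform sample of $X$, define the bad event
\[
\cB_C = \Bigl\{ \abs{\csamples_{\leq L} \cap C \cap \samples} < \tfrac{L}{2n}\abs{C} \Bigr\}.
\]
On $\lnot\cB_C$, we average the triangle inequality over the clean sampled points of $C$:
\[
\abs{C}\, d(f^*,\widehat F) \le \frac{2n}{L}\sum_{v \in \csamples_{\leq L}\cap C\cap\samples}\bigl(d(v,f^*)+d(v,\widehat F)\bigr).
\]
Summing over clusters, the term with $d(v,f^*)$ is bounded by $O(\OPT(X))$ via \Cref{cl:mflcharging}, and the term with $d(v,\widehat F)$ is bounded by the hypothesis $O(\alpha\,\OPT(X))$. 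As in \eqref{totalprob_1}--\eqref{totalprob_2}, conditioning on $\lnot\cB_C$ only helps after multiplying by $\prob{\lnot \cB_C}$. The good-case contribution is therefore $O(\alpha)\OPT(X)$.

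\textbf{Bad clusters.} On $\cB_C$ we fall back on \Cref{app-meyerson_mfl}: the cost on $C$ is at most $O(\log\abs{C}/\log\log\abs{C})\,\OPT(C)$. Note that $\OPT(C)\ge \OPT$-restricted cost is fine, since $\sum_C \OPT(C) \le \OPT(X)+\sum c_f$ is bounded by $O(\OPT(X))$ after summing. It then remains to show
\[
\prob{\cB_C} \cdot \frac{\log\abs{C}}{\log\log\abs{C}} = O\!\left(\frac{\log(k/p)}{\log\log(k/p)}\right).
\]
This requires $\prob{\cB_C} \le e^{-\Omega(\abs{C}p/K)}$, up to an additive term that is harmless when $\abs{C} \lesssim K/p$. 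If the prefix has length around $np/K$, the expected count is about $\abs{C}p/K$ with $K=\Theta(k)$ up to the $f(K)=O(f(k))$ slack. For $\abs{C}\le O(K/p)$ the logarithmic factor is already $O(f(k))$, and for larger clusters the exponential tail kills it.

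\textbf{Main obstacle.} The difficulty is the concentration step. $L$ is random and may be as small as $\tau$, which itself can be tiny (even $1$) when a corruption appears early. Moreover, $\csamples_{\leq \tau}$ is uniform only conditionally on $\samples,\csamples$, and the adversary makes $\Clean$ correlated with $C$.

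To handle this, I would first condition on the realization of $\samples,\csamples$ and on $\tau=r$. Under this conditioning $\csamples_{\leq r}$ is a uniform $r$-subset of $\Clean$, so a hypergeometric Chernoff bound applies to $\abs{\csamples_{\leq r}\cap C}$ relative to $\abs{\Clean\cap C}$. Separately, $\abs{\Clean \cap C} \ge \abs{\samples\cap C}-k$, and $\abs{\samples\cap C}$ concentrates around $p\abs{C}$; for $\abs{C}\gg k/p$, subtracting $k$ costs only a constant factor. To pass from $\tau$ to $L\ge\tau$, I would use the doubling trick of \Cref{cl:mflcharging}: since $\prob{\crpt \le t}\le kt/(np)$, the prefix $\tau$ exceeds $c\,np/K$ with constant probability, and smaller $\tau$ occur with geometrically controlled probability.

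If this tail argument does not give a clean exponential bound, I would instead define $\cB_C$ relative to $\tau$ rather than $L$, since $\csamples_{\leq\tau}\subseteq\csamples_{\leq L}\cap\samples$, and charge the factor $n/L\le n/\tau$ through \Cref{cl:mflcharging}. I would then sum the Meyerson bound against $\prob{\tau\le t}\le kt/(np)$ over dyadic scales of $t$.
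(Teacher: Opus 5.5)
Your overall architecture matches the paper's. You use Meyerson's telescoping bound with $\widehat F$ already open and the triangle inequality through $f^*$. You average $d(f^*,\widehat F)$ over $\csamples_{\leq L}\cap\samples\cap C$ and charge the two resulting sums to \Cref{cl:mflcharging} and to the hypothesis. You fall back on Meyerson's guarantee for small clusters and for the Chernoff failure of $\abs{\samples\cap C}$.

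The gap is exactly the step you flag as the main obstacle, and neither of your routes closes it. The prefix length $L$ is an arbitrary function of the permutation $\sigma(\csamples)$; in the application it is the longest prefix the offline solver can afford, which depends on the whole order. So conditioning on $\tau=r$ and applying a hypergeometric Chernoff bound controls $\abs{\csamples_{\leq r}\cap C}$ only at the single length $r=\tau$. The doubling argument of \Cref{cl:mflcharging} compares scaled \emph{costs} at $\tau$ and $L$; it does not transfer a lower bound on \emph{counts} from $\tau$ to $L$.

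The missing ingredient is a concentration bound that holds simultaneously for all prefix lengths. Condition on $\samples,\csamples$, so that $U=\Clean\cap C$ is fixed and $\sigma$ is uniform. Then with probability at least $1-1/\abs{C}$, every $a$ with $a\abs{U}/(np)\geq 8\ln\abs{C}$ satisfies $\abs{\csamples_{\leq a}\cap U}\geq a\abs{U}/(8np)$. This follows from Chernoff at the dyadic lengths $2^j a_0$, a union bound over $j$ (the failure probabilities $\abs{C}^{-(9/4)2^j}$ are summable), and monotonicity of the count in $a$; this is \Cref{app-mfluniformity}. Because it is uniform in $a$, it applies to whatever $L$ is chosen, and its failure probability $1/\abs{C}$ absorbs Meyerson's $O(\log\abs{C}/\log\log\abs{C})$ factor.

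Relatedly, your target $\Pr[\cB_C]\leq e^{-\Omega(\abs{C}p/K)}$ is false. With probability about $k/(np)$ the first corruption is at position $2$. Then $L=\tau=1$ is an admissible choice, and $\cB_C$ holds whenever the single prefix element misses $C$. So $\Pr[\cB_C]=\Omega(k/(np))$ even for $\abs{C}=n/2$, which is far above $e^{-\Omega(np/K)}$ for large $n$.

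The fix is not to bound this probability. Instead, declare $C$ bad whenever $L\abs{C}/n<24\ln\abs{C}$ and pay Meyerson's factor $\phi(\abs{C})$, where $\phi(x)=\ln x/\ln\ln x$. Since $L\geq\tau$, this gives $\phi(\abs{C})=O(\phi(n/\max\{1,\tau\}))$, whose expectation is $O(\log(k/p)/\log\log(k/p))$ by a dyadic sum over $\tau$ (\Cref{lem:mfl-tail}). Your last sentence points this way.

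Your fallback of defining $\cB_C$ relative to $\tau$ breaks the good-cluster charging, however. Lower-bounding the count by $\abs{\csamples_{\leq\tau}\cap C}\geq\frac{\tau}{2n}\abs{C}$ leaves you with $\frac{n}{\tau}\sum_{v\in\csamples_{\leq\tau}}d(v,\widehat F)$. This can exceed the hypothesized $\frac{n}{L}\sum_{v\in\csamples_{\leq L}}d(v,\widehat F)$ by the factor $L/\tau$ (up to $np/(16K)$), which may be correlated with it. \Cref{cl:mflcharging} runs the other way: it bounds the $L$-scaled sum by the $\tau$-scaled one, and only for the fixed $F^*$, not for the random $\widehat F$. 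So the hypothesis, which controls only an expectation, gives no bound on this term.
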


\begin{proof}
Let $F^*$ be an optimal facility location solution for $X$. For every $f^*\in F^*$, let $C$ be the cluster of clients served by $f^*$ in the optimal solution, and write $\cost(C)$ for the cost incurred by $\omfl$ on the clients of $C$ in the suffix. In the following steps, $\expectation_{F}$ denotes expectation only over the internal randomness of $\omfl$, after fixing $\Gamma = (\samples,\csamples,\sigma,L$, $\widehat F)$. 

Fix such a cluster $C$. For any client $v\in C$, let $F_{<v}$ be the set of facilities open just before $v$ arrives in the suffix, and let $\delta_\ell(v)=d(F^{(\leq \ell)}\cup F_{<v},v)$. As in the proof of \Cref{thm:mflfixtotalcost}, the expected facility-opening cost incurred due to the arrival of $v$ is at most $2\delta_{-1}(v)$. Since the connection cost of $v$ is at most $\delta_{-1}(v)$, and since the facilities in $\widehat F$ are open throughout the suffix, we have
$\expectover{F}*{\cost(C)}\leq 3\sum_{v\in C}d(v,\widehat F).$
Using the triangle inequality,
\[
\sum_{v\in C}d(v,\widehat F)\leq \sum_{v\in C}(d(v,f^*)+d(f^*,\widehat F))=\sum_{v\in C}d(v,f^*)+\abs{C}\cdot d(f^*,\widehat F).
\]
To bound $d(f^*,\widehat F)$, observe that
\[
\abs{\csamples_{\leq L}\cap\samples\cap C}\cdot d(f^*,\widehat F)\leq \sum_{v\in \csamples_{\leq L}\cap\samples\cap C}(d(f^*,v)+d(v,\widehat F)).
\]
Combining the previous two inequalities, whenever $\csamples_{\leq L}\cap\samples\cap C$ is nonempty,
\begin{align}
\label{line:mfl-cluster-bound}
\expectover{F}*{\cost(C)} \leq O\left(\sum_{v\in C}d(v,f^*)+\frac{\abs{C}}{\abs{\csamples_{\leq L}\cap\samples\cap C}}\left(\sum_{v\in \csamples_{\leq L}\cap\samples\cap C}d(f^*,v)+\sum_{v\in \csamples_{\leq L}\cap\samples\cap C}d(v,\widehat F)\right)\right).
\end{align}

Call a cluster $C$ good if three conditions hold: (a) $\abs{C}\geq \frac{10k}{p}$, (b) $\abs{\samples\cap C}\geq \frac{\abs{C} p}{2}$, and (c) $\frac{L\abs{C}}{n}\geq 24\ln\abs{C}$. Call a cluster bad otherwise. We first bound the cost incurred on good clusters. Since $k\leq \frac{\abs{C} p}{10}$ for a good cluster, we have that $\abs{\Clean\cap C}\geq \abs{\samples\cap C}-k\geq \frac{\abs{C} p}{2}-\frac{\abs{C} p}{10}=\frac{2\abs{C} p}{5}$ and consequently, $\frac{L\abs{\Clean\cap C}}{np}\geq \frac{2L\abs{C}}{5n}\geq 8\ln\abs{C}.$

Applying \Cref{app-mfluniformity} with $U=\Clean\cap C$, we get that, over the randomness of the order $\sigma(\csamples)$,
\begin{align*}
\Pr\left[
    \abs{\csamples_{\leq L}\cap\samples\cap C}
    \geq \frac{1}{8}\cdot \frac{L\abs{C}}{np}
\right]&\geq 1-\frac{1}{\abs{C}}.
\intertext{Since $C$ is good, this implies}
\Pr\left[
    \abs{\csamples_{\leq L}\cap\samples\cap C}
    \geq \frac{L\abs{C}}{20n}
\right]&\geq 1-\frac{1}{\abs{C}}.    
\end{align*}
Let $\mathcal{E}_C$ denote the event
$\mathcal{E}_C :=
\left\{
\abs{\csamples_{\leq L}\cap\samples\cap C}\geq \frac{L\abs{C}}{20n}
\right\}.$

We now split according to whether $\mathcal{E}_C$ holds. Note that the event $\mathcal{E}_C$ is fully determined after fixing $\Gamma$. If $\mathcal{E}_C$ fails, then by \Cref{app-meyerson_mfl}, the expected cost incurred by $\omfl$ on $C$ is at most $\left(\sum_{v\in C}d(v,f^*)+c_{f^*}\right)\cdot O\left(\frac{\log\abs{C}}{\log\log\abs{C}}\right)$.
Since $\prob*{\neg \mathcal{E}_C}\leq \frac{1}{\abs{C}}$, the contribution of this case to the overall expected cluster cost (over the randomness of $F, \Gamma$) is at most $O\left(\sum_{v\in C}d(v,f^*)+c_{f^*}\right)$.
On the other hand, if $\mathcal{E}_C$ holds, then
$\frac{\abs{C}}{\abs{\csamples_{\leq L}\cap\samples\cap C}}\leq \frac{20n}{L}.$
Therefore, the contribution to \eqref{line:mfl-cluster-bound} in this case is at most
\[
 O\left(\sum_{v\in C}d(v,f^*)+\frac{n}{L}\left(\sum_{v\in \csamples_{\leq L}\cap\samples\cap C}d(f^*,v)+\sum_{v\in \csamples_{\leq L}\cap\samples\cap C}d(v,\widehat F)\right)\right).
\]
Summing over all good clusters and now taking expectation over $\Gamma$, we obtain that the cost incurred by $\omfl$ on good clusters is at most
\[
O\left(\sum_C\sum_{v\in C}d(v,f^*)+\expect*{\frac{n}{L}\sum_{v\in \csamples_{\leq L}\cap\samples}d(v,F^*)}+\expect*{\frac{n}{L}\sum_{v\in \csamples_{\leq L}\cap\samples}d(v,\widehat F)}+\OPT(X)\right).
\]

The first term is at most $\OPT(X)$, the second term is $O(\OPT(X))$ by \Cref{cl:mflcharging}, and the third term is $O(\alpha \cdot \OPT(X))$ by assumption. Thus, the cost incurred by $\omfl$ on good clusters is $O(\alpha \cdot OPT(X))$ in expectation.

It remains to bound the cost incurred on bad clusters which violate at least one of conditions (a), (b), or (c). If $\abs{C}<\frac{10k}{p}$, then by \Cref{app-meyerson_mfl},
$\expect*{\cost(C)}\leq \left(\sum_{v\in C}d(v,f^*)+c_{f^*}\right)\cdot O\left(\frac{\log \nicefrac{k}{p}}{\log\log \nicefrac{k}{p}}\right).$

If instead $\abs{C}\geq \frac{10k}{p}$, then since $\samples$ is a uniformly random $pn$-subset of $X$, a Chernoff bound for sampling without replacement gives
$\prob*{\abs{\samples\cap C}<\frac{\abs{C} p}{2}}\leq e^{-\abs{C} p/8}.$
By the law of total probability and \Cref{app-meyerson_mfl}, the contribution of this bad sampling event to the expected cluster cost is at most
\[e^{-\abs{C} p/8}\cdot \left(\sum_{v\in C}d(v,f^*)+c_{f^*}\right)\cdot O\left(\frac{\log\abs{C}}{\log\log\abs{C}}\right) = \left(\sum_{v\in C}d(v,f^*)+c_{f^*}\right)\cdot O\left(\frac{\log \nicefrac{k}{p}}{\log\log \nicefrac{k}{p}}\right).\] (As in the analogous calculation of \Cref{sec:metricpsample}, when $\abs{C}\geq \frac{10k}{p}$, the exponential term dominates.)

Finally, consider the case $\frac{L\abs{C}}{n}<24\ln\abs{C}$, which implies that $\abs{C} \leq O(n/L) \ln \abs{C} $. Let $\phi(x) = \nicefrac{\ln x}{\ln \ln x}$. Since $L\geq \tau$, we have
$\phi(\abs{C}) = O\left(\phi\left(\frac{n}{\max\{1,\tau\}}\right)\right).$
Therefore, by \Cref{app-meyerson_mfl} and \Cref{lem:mfl-tail}, the expected cost incurred on $C$ in this case is at most
\[
\left(\sum_{v\in C}d(v,f^*)+c_{f^*}\right)\cdot O\left(\expect*{\phi\left(\frac{n}{\max\{1,\tau\}}\right)}\right)\leq \left(\sum_{v\in C}d(v,f^*)+c_{f^*}\right)\cdot O\left(\frac{\log \nicefrac{k}{p}}{\log\log \nicefrac{k}{p}}\right).
\]

Combining the bounds for good and bad clusters, and summing over all optimal clusters, gives
\[
\sum_C\expect*{\cost(C)} = \OPT(X)\cdot O\left(\frac{\alpha \log \nicefrac{k}{p}}{\log\log \nicefrac{k}{p}}\right). \qedhere
\]
\end{proof}
We are now ready to prove our final theorem.

\begin{restatable}{thm}{mflthm}\label{thm:mfltotalcost}
The expected cost incurred by our algorithm for online metric facility location in the $(p,k)$-adversarially robust model is
$\OPT(X)\cdot O\left(\frac{\alpha \log \nicefrac{k}{p}}{\log \log \nicefrac{k}{p}}\right).$
\end{restatable}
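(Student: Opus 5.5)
By \Cref{ONLINEMFLcost}, the total cost of the facilities opened by $\rmfl$ is $O(\cost(\omfl))$. So it suffices to bound $\expect*{\cost(\omfl)}$ by $\OPT(X)\cdot O(\alpha f(k))$, and recall $f(K)=O(f(k))$. We may also assume $k\le np/2$, as noted after \Cref{ONLINEMFLcost}.

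The plan is a critical-time argument, analogous to the proof of \Cref{thm:set-corruptions}. Consider the prefix $\csamples_{\leq \tau}$ with $\tau=\min\{\crpt-1,np/(16K)\}$. Its scaled $\alpha$-approximate cost is $\beta_K(\csamples)$, and $\expect*{\beta_K}\le 2\alpha\OPT(X)$ by \Cref{mflbeta}. Define the critical time $t^*$ as the first time at which $\rmfl$ is run with a budget $B$ satisfying $B/f(K)\ge \beta_K(\csamples)$. The run at $t^*$ uses a budget at most twice the previous one.

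\textbf{Before $t^*$.} The online cost up to the run preceding $t^*$ is below $f(K)\beta_K$. Between consecutive runs the cost at most doubles, and the step at which the doubling is detected contributes at most one arrival's cost. So $\cost(\omfl_{\le t^*})\le O(f(K))\beta_K + (\text{cost of one arrival})$. For the single arrival I would use the crude bound that one client costs $O(\OPT(X))$ in expectation under Meyerson's rule; this step needs care. Taking expectations gives $O(\alpha f(k))\OPT(X)$.

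\textbf{After $t^*$.} Since the $K$-th prefix $\csamples_{\le\tau}$ (with $\tau\le np/(16K)$) fits the budget $B/f(K)$, maximality of $L_g$ gives $L:=L_g\ge\tau$ and $L\le np/(16K)$. The facilities $\widehat F_g$ have scaled cost at most $B/f(K)$. Hence $\frac{n}{L}\sum_{v\in\csamples_{\le L}}d(v,\widehat F_g)\le B/f(K)$.

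The main obstacle is that $B$ is the online cost itself, so this is not obviously $O(\alpha\OPT)$ in expectation. I would handle this by using the budget at $t^*$, i.e.\ the first budget exceeding $f(K)\beta_K$. Because budgets at most double between runs (up to the single-arrival slack), this budget is at most $2f(K)\beta_K+O(\OPT)$. Therefore $B/f(K)=O(\beta_K)+O(\OPT)$, whose expectation is $O(\alpha\OPT(X))$ by \Cref{mflbeta}. Later runs may replace $L_g$ by other prefixes; to avoid that issue I would keep the facilities $\widehat F_g$ from the run at $t^*$, since facilities are never closed.

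Then \Cref{mflonlinepost}, applied to the suffix after $t^*$ with $\widehat F=\widehat F_g$ and $L=L_g$, bounds the remaining $\omfl$ cost by $\OPT(X)\cdot O(\alpha f(k))$. The randomness of $t^*$ needs attention here. \Cref{mflonlinepost} is stated for any suffix, but the suffix is chosen adaptively. I would argue that the per-cluster bound $3\sum_{v\in C} d(v,\widehat F)$ holds pointwise for whatever suffix, and that the Meyerson fallback bounds are monotone in the client set, so adaptivity is harmless. If instead $t^*$ never occurs, the whole online cost is at most $O(f(K)\beta_K)$, which is already covered by the before-$t^*$ bound.

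Summing the online cost before and after $t^*$, and adding the $\rmfl$ cost via \Cref{ONLINEMFLcost}, gives the theorem.
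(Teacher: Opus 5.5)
Your proposal is correct and takes essentially the same route as the paper. The critical time is the first $\rmfl$ run $t_h$ with $\cost(\omfl_{\leq t_h})\geq f(K)\cdot\beta_K(\csamples)$; the cost up to it is bounded by doubling plus a single-step bound; afterwards, maximality gives $L_g\geq\tau$, so \Cref{mflbeta} feeds \Cref{mflonlinepost}, and \Cref{ONLINEMFLcost} absorbs the offline cost. The single-arrival step you flag as needing care is exactly what the paper supplies via \Cref{mflsingleround}.
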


\begin{proof}
Recall that $f(x) := \frac{\ln \nicefrac{x}{p}}{\ln \ln \nicefrac{x}{p}}$. 
Define the \emph{critical} time step
\[
t^*:=\min\{t_h\mid \cost(\omfl_{\leq t_h})\geq f(K)\cdot \beta_K(\csamples)\}.
\]
If the set is vacuous, set $t^*=\infty$. If $t^*=\infty$, then $\cost(\omfl_{\leq t_{\mathrm{fin}}})<f(K)\cdot \beta_K(\csamples)$. Since $t_{\mathrm{fin}}$ is the final doubling time, $\cost(\omfl_{\leq n})\leq 2\cost(\omfl_{\leq t_{\mathrm{fin}}})$. Therefore,
\[
\expect*{\cost(\omfl_{\leq n})}\leq 2f(K)\cdot \expect*{\beta_K(\csamples)}\leq \alpha \cdot \OPT(X)\cdot O(f(k)),
\]
where the last inequality follows from \Cref{mflbeta} and $f(K)=O(f(k))$.

Otherwise, $t^*=t_h$ for some $h$. By definition of $t^*$, $\cost(\omfl_{\leq t_{h-1}})<f(K)\cdot \beta_K(\csamples)$. Hence,
$\cost(\omfl_{\leq t^*-1})\leq 2\cost(\omfl_{\leq t_{h-1}})<2f(K)\cdot \beta_K(\csamples).$
By \Cref{mflsingleround}, the expected cost incurred by $\omfl$ at time step $t^*$ is at most $O(\OPT(X))$. By \Cref{mflbeta}, we conclude $\E[\cost(\omfl_{\leq t^*})] \leq \alpha \cdot \OPT(X)\cdot O(f(k))$.

Let $L:=L_g$ be the prefix length chosen by $\rmfl$ for $K=K_g$ at time $t^*$, and let $\widehat F$ denote the corresponding facilities opened by $\rmfl$. Since the scaled clean prefix $\csamples_{\leq \tau}$ can be served at cost $\beta_K(\csamples)\leq \frac{\cost(\omfl_{\leq t^*})}{f(K)}$, the maximality of $L$ implies $L\geq \tau$. Also, the definition of $L$ gives
$\frac{n}{L}\sum_{v\in \csamples_{\leq L}}d(v,\widehat F)\leq \frac{\cost(\omfl_{\leq t^*})}{f(K)}$ from which it follows that
$\expect*{\frac{n}{L}\sum_{v\in \csamples_{\leq L}}d(v,\widehat F)}\leq O(\alpha \cdot \OPT(X)).$
Therefore, by \Cref{mflonlinepost}, the expected cost incurred by $\omfl$ after time $t^*$ is at most $\OPT(X)\cdot O(\alpha \cdot f(k))$. Combining the cost bounds for $\omfl$ before and after $t^*$,
\[
\expect*{\cost(\omfl_{\leq n})}\leq \alpha \cdot \OPT(X)\cdot O(f(k))=\OPT(X)\cdot O\left(\frac{\alpha \log \nicefrac{k}{p}}{\log\log \nicefrac{k}{p}}\right).
\]
Finally, by \Cref{ONLINEMFLcost}, the expected cost incurred by $\rmfl$ is within a constant factor of the expected cost incurred by $\omfl$, proving the desired theorem.
\end{proof}

We end with proofs of the remaining lemmas.

\subsection{Omitted Proofs from \Cref{sec:metricpsample} and \Cref{sec:mfl-corruptions-analysis}}

\begin{lemma}\label{lem:mfl-tail}
Let $\phi(x):=\frac{\ln x}{\ln\ln x}$. Then
$\expect*{\phi\left(\frac{n}{\max\{1,\tau\}}\right)}=O(f(k)).$
\end{lemma}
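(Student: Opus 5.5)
We bound the tail of $\tau=\min\{\crpt-1,\tfrac{np}{16K}\}$ and then integrate $\phi$ against it. The argument has three steps.

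\textbf{Step 1: reduce to a tail bound on $\crpt$.} Condition on any realization of $\samples,\csamples$. The position $\crpt$ of the first of the $k$ corrupted clients in the uniformly random order $\sigma(\csamples)$ of the $np$ clients satisfies, for every $t\ge 1$,
\[
\prob*{\crpt\le t}\le \frac{kt}{np}
\]
by a union bound. For $T:=\frac{np}{16K}$ we have $\tau\ge \min\{\crpt-1,T\}$, so the event $\{\tau<t\}$ for $t\le T$ implies $\crpt\le t$. Hence
\[
\prob*{\tau< t}\le \frac{kt}{np} \qquad \text{for } t\le T.
\]

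\textbf{Step 2: dyadic decomposition of $\phi(n/\max\{1,\tau\})$.} First, $\phi$ is nondecreasing for arguments large enough (above $e^e$); on smaller arguments it is $O(1)$, so we may assume the argument is large. Since $\tau\le T$ always, the typical value is $\phi(n/T)=\phi(16K/p)=O(f(K))=O(f(k))$, using $\phi(cx)=O(\phi(x))$ for constant $c$. For the lower tail, partition the range of $\tau$ into dyadic intervals $\tau\in[T/2^{j+1},T/2^{j})$ for $j\ge 0$, plus the event $\tau=0$. On the $j$-th interval, $n/\tau\le 2^{j+1}n/T=2^{j+1}\cdot 16K/p$. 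By Step 1 this interval has probability at most
\[
\frac{kT}{2^j np}\le 2^{-j}.
\]
The event $\tau=0$ has probability at most $k/np$, and on it $\phi(n)\le \phi(n/T)+O(\ldots)$ is controlled the same way, by treating it as index $j\approx\log_2 T$. Therefore
\[
\expect*{\phi\!\left(\frac{n}{\max\{1,\tau\}}\right)}\le \phi(n/T)+\sum_{j\ge0}2^{-j}\,\phi\!\left(2^{j+1}\cdot\frac{16K}{p}\right).
\]

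\textbf{Step 3: sum the series.} We use the subadditivity-type bound
\[
\phi(2^{j}x)\le \frac{\ln x+j\ln 2}{\ln\ln x}=\phi(x)+O\!\left(\frac{j}{\ln\ln x}\right).
\]
With this, the series is at most $O(\phi(16K/p))+O(\sum_j j2^{-j})=O(f(K))$, which is $O(f(k))$ by $f(K)=O(f(k))$. The constant term is absorbed because $f(k)\ge f(1)$ is bounded below, given $p\le 1/2$ and small $p$. The $\tau=0$ term contributes at most
\[
\frac{k}{np}\,\phi(n)\le \frac{k}{np}\ln n.
\]
To bound this, write $n=(np/k)\cdot(k/p)$. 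Then
\[
\frac{k}{np}\ln n = \frac{\ln(np/k)}{np/k}+\frac{k}{np}\ln(k/p) = O(1)+O(\ln(k/p))\cdot\frac{k}{np}.
\]
This is still not tight enough in general. So instead we bound it via the dyadic sum directly: $\tau=0$ means $n/\max\{1,\tau\}=n=T\cdot 16K/p$, which is the $j=\log_2 T$ term, and $\prob*{\crpt=1}\le k/np\le 1/(16T)$ matches the $2^{-j}$ weight.

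\textbf{Main obstacle.} The main obstacle is this bookkeeping: using monotonicity of $\phi$ only where the argument is large, and checking that $\phi(2^j x)/\phi(x)$ grows at most linearly in $j$, so the geometric weights $2^{-j}$ dominate.
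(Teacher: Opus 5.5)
Your argument is correct and is essentially the paper's proof: a union bound on the position $\varsigma$ of the first corrupted sample, then a dyadic decomposition of the range of $\tau$. In that decomposition the $j$-th block has weight $O(2^{-j})$ and value at most $\phi(2^{j+1}\cdot 16K/p)$, the growth of $\phi(2^j x)$ is at most linear in $j$, and the $\tau=0$ event is folded into the last dyadic slot. The differences are cosmetic: the paper uses the pointwise bound $\Pr[\varsigma=r+1]\le K/(np)$ and absorbs $\phi(n)$ into the $r=1$ term, whereas you use the tail bound $\Pr[\tau<t]\le kt/(np)$ and the slightly sharper additive estimate $\phi(2^jx)\le\phi(x)+O(j)$.

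Two loose ends are harmless and can be dropped. The abandoned attempt to bound $\frac{k}{np}\ln n$ directly is not needed. Your remark that $\phi$ is $O(1)$ on small arguments is false near $x=e$, but it never comes into play, since every argument satisfies $n/\max\{1,\tau\}\ge n/T=16K/p$, where $\phi$ is increasing.
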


\begin{proof}
Let $\rho=\frac{16K}{p}$. Recall that $\tau=\min\{\crpt-1,\frac{np}{16K}\}$. If $\crpt>\frac{np}{16K}$, then $\tau=\frac{np}{16K}$, and the contribution is at most $\phi(\rho)$. If $\crpt=1$, then $\tau=0$, and the contribution is $\phi(n)$. If $\crpt=r+1$ for some $1\leq r<\frac{np}{16K}$, then $\tau=r$, and the contribution is $\phi(n/r)$. Since the probability that any fixed position of $\sigma(\csamples)$ is corrupted is at most $\frac{K}{np}$, we have
\[
\expect*{\phi\left(\frac{n}{\max\{1,\tau\}}\right)}
\leq \phi(\rho)+\frac{K}{np}\phi(n)+\frac{K}{np}\sum_{r=1}^{\frac{np}{16K}-1}\phi\left(\frac{n}{r}\right) \leq \phi(\rho)+\frac{2K}{np}\sum_{r=1}^{\frac{np}{16K}}\phi\left(\frac{n}{r}\right).
\]

Partition the indices $r=1,\ldots,\frac{np}{16K}$ into intervals
$I_j=\left(\frac{np}{16K\cdot 2^{j+1}},\frac{np}{16K\cdot 2^j}\right].$
For every $r\in I_j$, we have $\frac{n}{r}\leq 2^{j+1}\rho$. Therefore, the contribution of $I_j$ to the summation above is at most
$\frac{2K}{np}\cdot \abs{I_j}\cdot \phi(2^{j+1}\rho)\leq 2^{1-j}\phi(2^{j+1}\rho).$
Also, $\phi(2^{j+1}\rho)\leq (j+2)\phi(\rho)$. Summing over $j\geq 0$, we obtain
\[
\sum_{j\geq 0}2^{1-j}\phi(2^{j+1}\rho)\leq \phi(\rho)\sum_{j\geq 0}2^{1-j}(j+2)=O(\phi(\rho)).
\]
Thus,
\[
\expect*{\phi\left(\frac{n}{\max\{1,\tau\}}\right)}\leq O(\phi(\rho))=O(f(K))=O(f(k)). \qedhere
\]
\end{proof}

We have the following one-step bound for Meyerson's algorithm.
\begin{claim}\label{mflsingleround}
The expected cost incurred by $\omfl$ in any single time step is at most $O(\OPT(X))$.
\end{claim}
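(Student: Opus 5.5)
The plan is to bound the two parts of the cost incurred when a single client $v$ arrives: its connection cost and the expected cost of the facilities Meyerson's algorithm opens on its arrival. Both will be controlled by one quantity,
\[
\delta_q(v) = d(\mathcal{F}\cup F_{<v}, v) = d(\mathcal{F}, v),
\]
and by the cheapest facility available to $v$.

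Fix the state just before $v$ arrives, and write $\delta_{-1}(v) = d(F_{<v}, v)$ for the distance to the currently open facilities. For each type $\ell$, the algorithm opens a facility of cost less than $2^{\ell+1}$ with probability at most
\[
\min\!\left(\frac{\delta_{\ell-1}(v)-\delta_\ell(v)}{2^\ell},\,1\right).
\]
So the expected opening cost at this step is at most $\sum_\ell 2(\delta_{\ell-1}(v)-\delta_\ell(v))$, which telescopes to at most $2\,\delta_{-1}(v)$. This is the same telescoping used in \eqref{line:telescope}. The connection cost is at most $\delta_{-1}(v)$ as well. Hence the per-step expected cost is $O(\delta_{-1}(v))$ for every fixed history.

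That alone is not enough, because $\delta_{-1}(v)$ could be huge early on, before anything near $v$ is open. The fix is to use the $\min(\cdot,1)$ cap more carefully. Let $f^*$ be the facility serving $v$ in $\OPT$, and let $\ell^*$ be its type, so $c_{f^*}\geq 2^{\ell^*}$. Split the telescoping sum at $\ell^*$.

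For levels $\ell\le \ell^*$: bound each term by the uncapped expression $2(\delta_{\ell-1}-\delta_\ell)$. These sum to $2(\delta_{-1}-\delta_{\ell^*})$, which is still possibly large. So instead, for these levels use the cap, which gives opening cost at most $\sum_{\ell\le \ell^*}2^{\ell+1}\le 2^{\ell^*+2}\le 4c_{f^*}$.

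For levels $\ell>\ell^*$: since $f^*\in F^{(\leq \ell)}$, we have $\delta_{\ell-1}(v)\le d(f^*,v)$ for every such level. The terms therefore telescope to at most $2\,d(f^*,v)$.

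For the connection cost, note that with the cap the facility of type $\ell^*$ closest to $v$ is opened with probability $\min((\delta_{\ell^*-1}-\delta_{\ell^*})/2^{\ell^*},1)$. If this probability is $1$, then $v$ connects within distance $\delta_{\ell^*}(v)\le d(f^*,v)$. Otherwise $\delta_{\ell^*-1}(v)< 2^{\ell^*}+d(f^*,v)$. In both cases the connection distance is at most $\delta_{\ell^*-1}(v)\le c_{f^*}+d(f^*,v)$, up to the closest opened facility.

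Putting the pieces together, the expected cost of one step is
\[
O\bigl(c_{f^*}+d(f^*,v)\bigr)\le O(\OPT(X)).
\]
Since this holds for every history, it holds in expectation as well.

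The main obstacle is the connection-cost case analysis. In the uncapped case, $v$ connects to the nearest open facility, which may be far away with some probability. One has to argue that in that event $\delta_{\ell^*-1}(v)-\delta_{\ell^*}(v)<2^{\ell^*}$, which gives the bound deterministically. This is the step where the cap, and not just the telescoping, is essential.
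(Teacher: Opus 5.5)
Your opening-cost bound is correct. Splitting at the type $\ell^*$ of $f^*$, you use the cap on levels $\ell\le\ell^*$ and telescope on levels $\ell>\ell^*$, where $\delta_{\ell-1}(v)\le d(f^*,v)$. This gives at most $2^{\ell^*+2}+2d(f^*,v)$, which is the paper's second case with $i=\ell^*$. Comparing directly to $f^*$ also lets you skip the paper's case split on whether the cheapest option $f'$ is already open.

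\textbf{The gap is in the connection cost.} When the type-$\ell^*$ probability is below $1$, you assert that $v$ connects within $\delta_{\ell^*-1}(v)$. But $\delta_{\ell^*-1}(v)$ is the distance to $F^{(\leq \ell^*-1)}\cup F_{<v}$. The facility attaining it may be an unopened facility of lower type, which opens only with probability less than one. You are guaranteed only $\delta_{-1}(v)$, or $\delta_j(v)$ for a level $j$ whose facility actually opens.

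Concretely, let $f^*$ have type $5$ at distance $D$ from $v$, let there be a type-$4$ facility at distance $D+30$, and let the only open facility be at distance $D+40$, with nothing else nearby. Then $p_4=10/16$ and $p_5=30/32$. With positive probability neither opens, and $v$ pays $D+40>\delta_4(v)=D+30$. Controlling the single gap $\delta_{\ell^*-1}(v)-\delta_{\ell^*}(v)<2^{\ell^*}$, which you identify as the crux, is therefore not enough: it says nothing about $\delta_{-1}(v)-\delta_{\ell^*-1}(v)$.

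The missing step is to use the cap at every level up to $\ell^*$ simultaneously. Let $r$ be the largest $j\le\ell^*$ with $\delta_{j-1}(v)-\delta_j(v)\ge 2^j$, or $r=-1$ if there is none. For $r\ge 0$, the type-$r$ facility at distance $\delta_r(v)$ is opened with probability $1$; for $r=-1$, an open facility lies within $\delta_{-1}(v)$. Either way $v$ connects within $\delta_r(v)$ deterministically. Every gap for $r<j\le\ell^*$ is below $2^j$, so
\[
\delta_r(v)=\delta_{\ell^*}(v)+\sum_{j=r+1}^{\ell^*}\bigl(\delta_{j-1}(v)-\delta_j(v)\bigr)< d(f^*,v)+2^{\ell^*+1}\le d(f^*,v)+2c_{f^*}.
\]
This is the paper's argument; with it in place of your single-level step, your proof goes through.
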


\begin{proof}
Consider the arrival of a client $v$, and let $F_{<v}$ be the set of facilities open just before $v$ arrives. Define
\[
\kappa_v(F_{<v}) := \min_{f\in\mathcal{F}}\left(\ind\{f\notin F_{<v}\}\cdot c_f+d(f,v)\right).
\]
We show that the expected cost incurred by $\omfl$ to serve $v$, conditioned on $F_{<v}$, is at most $O(\kappa_v(F_{<v}))$. Let
\[
f'=\argmin_{f\in\mathcal{F}}\left(\ind\{f\notin F_{<v}\}\cdot c_f+d(f,v)\right).
\]
For $-1\leq j\leq q$, let $\delta_j=d(F^{(\leq j)}\cup F_{<v},v)$. For $0\leq j\leq q$, let $p_j=\min\left(\frac{\delta_{j-1}-\delta_j}{2^j},1\right)$ be the probability that $\omfl$ opens a facility of type $j$ when $v$ arrives.

First suppose $f'\in F_{<v}$. Then $\kappa_v(F_{<v})=d(f',v)$ and $\delta_{-1}=d(F_{<v},v)\leq d(f',v)=\kappa_v(F_{<v})$. The connection cost is at most $\delta_{-1}$. The expected facility-opening cost is at most
\[
\sum_{j=0}^q 2^{j+1}p_j \leq \sum_{j=0}^q 2^{j+1}\cdot \frac{\delta_{j-1}-\delta_j}{2^j}=2\sum_{j=0}^q(\delta_{j-1}-\delta_j)\leq 2\delta_{-1}\leq 2\kappa_v(F_{<v}).
\]
Thus the total expected cost is at most $3\kappa_v(F_{<v})$ in this case.

Now suppose $f'\notin F_{<v}$. Let $i$ be the type of $f'$, so $2^i\leq c_{f'}<2^{i+1}$. Then $\kappa_v(F_{<v})=c_{f'}+d(f',v)\geq 2^i+d(f',v)$, and since $f'\in F^{(\leq i)}$, we have $\delta_i\leq d(f',v)$. The expected facility-opening cost is at most
\[
\sum_{j=0}^i 2^{j+1}p_j+\sum_{j=i+1}^q2^{j+1}p_j\leq \sum_{j=0}^i2^{j+1}+\sum_{j=i+1}^q2(\delta_{j-1}-\delta_j)\leq 2^{i+2}+2\delta_i=O(\kappa_v(F_{<v})).
\]
For the connection cost, let $r=\max(\{-1\}\cup\{j\leq i\mid p_j=1\})$. If $r\geq 0$, then after processing $v$ there is an open facility at distance at most $\delta_r$ from $v$; if $r=-1$, the connection cost is at most $\delta_{-1}$. In either case, because $p_j<1$ for every $r<j\leq i$, we have $\delta_{j-1}-\delta_j<2^j$ for every $r<j\leq i$, and therefore
\[
\delta_r\leq \delta_i+\sum_{j=r+1}^i(\delta_{j-1}-\delta_j)\leq \delta_i+\sum_{j=r+1}^i2^j\leq \delta_i+2^{i+1}=O(\kappa_v(F_{<v})).
\]
Thus the total expected cost is $O(\kappa_v(F_{<v}))$ in the second case as well.

Finally, let $f^*$ be the facility serving $v$ in an optimal solution for $X$. The algorithm can always open $f^*$, if it is not already open, and connect $v$ to it. Hence $\kappa_v(F_{<v})\leq c_{f^*}+d(f^*,v)\leq \OPT(X)$, giving the desired result.
\end{proof}

\begin{lemma}\label{app-mfluniformity}
Let $Z$ be a set of size $np$, and order $Z$ in uniform random order. Fix any subset $U \subseteq Z$. For each $0 \leq a \leq np$, let
$W(a) := \abs{Z_{\leq a}\cap U}$
and
$\mu(a) :=\E[W(a)] = \frac{a\abs{U}}{np}.$
Then for any  $\abs{C} \geq 2$, with probability at least $1-1/\abs{C}$, the following statement holds simultaneously for every $a \leq np$:
\[
\mu(a)\geq 8\ln \abs{C} \implies
W(a)\geq \frac{1}{8} \mu(a).
\]
\end{lemma}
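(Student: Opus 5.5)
The idea is to avoid a union bound over all $np$ prefixes by checking only a geometric grid of prefix lengths and interpolating between grid points using monotonicity of $W(a)$. $W(a)$ is nondecreasing in $a$, and $\mu(a)$ is linear in $a$. Let $a_0$ be the smallest $a$ with $\mu(a)\geq 8\ln\abs{C}$. If $U=\emptyset$ or no such $a$ exists, the statement is vacuous. Define checkpoints $a_j := \min\{np,\ \lceil 2^j a_0\rceil\}$ for $j=0,1,\dots$, so that $\mu(a_j)\approx 2^j\mu(a_0)\geq 2^j\cdot 8\ln\abs{C}$.

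For any $a$ with $\mu(a)\geq 8\ln\abs{C}$, choose $j$ with $a_j\leq a<a_{j+1}$ (or $a=np$, where $W(np)=\abs{U}=\mu(np)$ trivially). Then $W(a)\geq W(a_j)$ and $\mu(a)\leq \mu(a_{j+1})\leq 2\mu(a_j)+O(\abs{U}/np)$. It therefore suffices to show that with good probability $W(a_j)\geq \frac14\mu(a_j)$ at every checkpoint. Combined with $\mu(a)\leq 2\mu(a_j)$ (up to rounding, which is absorbed by the slack between $\frac14$ and $\frac18$ and by $\mu(a_0)\geq 8\ln\abs{C}\geq 8\ln 2$), this gives $W(a)\geq \frac18\mu(a)$.

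At a single checkpoint, $W(a_j)$ is hypergeometric, since the first $a_j$ elements of a random order of $Z$ form a uniform $a_j$-subset. The Chernoff lower-tail bound also holds for sampling without replacement (Hoeffding's comparison), so
\[
\Pr\left[W(a_j)<\tfrac14\mu(a_j)\right]\leq \exp\left(-\tfrac{(3/4)^2}{2}\mu(a_j)\right)\leq \exp\left(-\tfrac{9}{32}\cdot 2^j\cdot 8\ln\abs{C}\right)=\abs{C}^{-\frac94 2^j}.
\]
A union bound over $j\geq 0$ gives total failure probability at most $\sum_{j\geq0}\abs{C}^{-\frac94 2^j}$. Since $\abs{C}\geq2$, this is dominated by its first term and is at most $\abs{C}^{-9/4}\cdot O(1)\leq 1/\abs{C}$. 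The last inequality needs checking for $\abs{C}=2$: the sum is about $2^{-2.25}+2^{-4.5}+\dots\approx 0.26<0.5$, so it holds.

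The main obstacle is the bookkeeping at the interpolation step, namely the ceiling rounding of $a_j$ and making sure the factor lost between consecutive checkpoints is at most $2$, so that the constants $\frac14\to\frac18$ close. If rounding is troublesome, I would instead define the checkpoints directly as the smallest $a$ with $\mu(a)\geq 2^j\mu(a_0)$. Then $\mu(a_{j+1})<2^{j+1}\mu(a_0)+\abs{U}/np\leq 2\mu(a_j)+\mu(a_0)/8\cdot O(1)$, which still stays within a factor below $4$ after adjusting the checkpoint threshold to $\frac12\mu(a_j)$. The tail bound at threshold $\frac12$ gives exponent $\frac18\mu\geq 2^j\ln\abs{C}$, and the resulting sum $\sum_j\abs{C}^{-2^j}$ is at most $1/\abs{C}$ only after slightly shifting the base level. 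So I would choose the checkpoint constant to balance these two constraints.
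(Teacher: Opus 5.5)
Your proposal is correct and follows essentially the same route as the paper: dyadic checkpoints $a_j=2^ja_0$, the hypergeometric Chernoff bound $\Pr[W(a_j)<\frac14\mu(a_j)]\le \abs{C}^{-\frac94 2^j}$, a union bound summing to at most $1/\abs{C}$, and interpolation between checkpoints via monotonicity of $W$. The rounding worry is moot. Since $a_0$ is an integer, $2^ja_0$ is exact and $\mu(a)<2\mu(a_j)$ holds with no additive error. This matters, because the step from $\frac14$ to $\frac18$ uses up exactly the factor $2$ and could not absorb an extra term, so the fallback in your last paragraph is unnecessary.
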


\begin{proof}
If there is no $a\leq np $ such that $\mu(a)\geq 8\ln \abs{C}$, then the statement trivially holds. Otherwise, let $a_0$ be the smallest prefix length satisfying
$\mu(a_0)\geq 8\ln \abs{C}.$
For integers $0 \leq j  \leq \floor{\log_2(np/a_0)}$, define  $a_j := 2^j a_0$,  Note that $Z_{\leq a_j}$ is a uniformly random subset of $Z$ of size $a_j$. Using Chernoff bounds for sampling without replacement,
$\Pr\left[W(a_j)<\frac{1}{4}\mu(a_j)\right]
\leq
\exp\left(-\frac{9}{32}\mu(a_j)\right).$
Since $\mu(a_j)=2^j\mu(a_0)\geq 2^j\cdot 8\ln \abs{C}$, we get
\[
\Pr\left[W(a_j)<\frac14\mu(a_j)\right]
\leq
\exp\left(-\frac{9}{32}\cdot 2^j\cdot 8\ln \abs{C}\right)
=
\abs{C}^{-(9/4)2^j}.
\]
Therefore, taking a union bound over all $0 \leq j$,
\[
\Pr\left[\exists j:\ W(a_j)<\frac{1}{4}\mu(a_j)\right]
\leq
\sum_{j\geq 0}\abs{C}^{-(9/4)2^j}
\leq
\sum_{j\geq 0}\abs{C}^{-(9/4)(j+1)}
=
\frac{\abs{C}^{-9/4}}{1-\abs{C}^{-9/4}}
\leq
\frac1{\abs{C}}.
\]

Thus, with probability at least $1-1/\abs{C}$, every $0 \leq j$ satisfies $W(a_j)\geq \frac{1}{4}\mu(a_j)$. Now fix any $a\leq np$ such that $\mu(a) \geq 8 \ln \abs{C}$, and suppose the former event occurs. By definition of $a_0$, we have $a_0 \leq a$.   Choose $j\geq 0$ such that  $a_j \leq a < 2a_j$. Note that $\mu(a) < 2\mu(a_j)$. Since $W(a)$ is monotonically increasing in $a$, we have
$W(a) \geq W(a_j) \geq \frac{1}{4} \mu(a_j) \geq \frac{1}{8} \mu(a) .$
Thus, we have that with probability at least $1-1/\abs{C}$, every $a$ such that $a\leq np$ and $\mu(a) \geq 8 \ln \abs{C}$ satisfies $W(a) \geq \frac{1}{8} \mu(a)$.
\end{proof}

\begin{lemma} \label{waiting_time_lemma}
Let $E_1, E_2, \ldots$ be a sequence of independent events, where $p_i =   \pr(E_i)$.
Let $\tau = \inf \{ i \geq 1 \mid E_i \text{ occurs} \}$ be the index of the first successful event \textup{(}setting $\tau = \infty$ if no event succeeds\textup{)}. Then, if $\alpha > 0$ is a constant, the expected sum of $\alpha  \cdot p_i$ prior to success is 
$\E\left[\sum_{i=1}^{\tau} \alpha  \cdot p_i \right] \leq \alpha.$
\end{lemma}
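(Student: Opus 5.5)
We will write the sum as a sum of indicators and compute its expectation term by term. By the definition of $\tau$, the sum $\sum_{i=1}^{\tau} \alpha p_i$ equals
\[
\sum_{i \geq 1} \alpha p_i \cdot \ind[\tau \geq i].
\]
This identity also holds when $\tau = \infty$, in which case every indicator is $1$. All terms are nonnegative, so monotone convergence (Tonelli) lets us interchange expectation and summation:
\[
\E\left[\sum_{i=1}^{\tau}\alpha p_i\right] = \alpha \sum_{i\geq 1} p_i \Pr[\tau \geq i].
\]

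Next we compute $\Pr[\tau \geq i]$. The event $\{\tau \geq i\}$ is exactly the event that none of $E_1,\dots,E_{i-1}$ occurs. By independence,
\[
\Pr[\tau\ge i]=\prod_{j<i}(1-p_j).
\]
Hence $p_i \Pr[\tau \geq i] = p_i\prod_{j<i}(1-p_j)$, which by independence again is exactly $\Pr[\tau = i]$: the events $E_1,\dots,E_{i-1}$ fail and $E_i$ occurs. Summing over $i$ gives
\[
\sum_{i\geq 1} p_i \Pr[\tau \geq i] = \sum_{i \geq 1}\Pr[\tau = i] = \Pr[\tau < \infty] \leq 1.
\]
Multiplying by $\alpha$ yields the claimed bound $\alpha$.

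There is no real obstacle here. The only points needing care are the $\tau=\infty$ case and the exchange of sum and expectation for infinite sequences, and both are handled by nonnegativity. An alternative route, if one wants to avoid computing the product explicitly, is a martingale argument: $M_t = \sum_{i\le \min(t,\tau)} (p_i - \ind[E_i])$ is a martingale, so optional stopping at the bounded time $\min(t,\tau)$ gives $\E[\sum_{i\le \min(t,\tau)} p_i] = \Pr[\tau\le t]\le 1$, and letting $t\to\infty$ finishes by monotone convergence. The direct telescoping computation above is simpler, and it is the one we would use.
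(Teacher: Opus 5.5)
Your proof is correct and follows essentially the same route as the paper. Both rewrite the sum as $\sum_{i} \alpha p_i \cdot \mathbf{1}[\tau\ge i]$, swap sum and expectation, and identify $p_i \Pr[\tau \ge i]$ with $\Pr[\tau = i]$. The paper gets that identity by noting that $\{\tau\ge i\}$ is determined by $E_1,\dots,E_{i-1}$ and is hence independent of $E_i$, instead of writing out the product. Your Tonelli remark supplies a justification for the interchange that the paper leaves implicit.

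Your martingale alternative is worth keeping. It also covers the case where each $p_i$ is chosen adaptively from the history, which is the situation in which the lemma is actually used in the analysis of Meyerson's algorithm.
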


\begin{proof}
We rewrite the expected sum as shown below. Note that if $\tau \geq i$ and $E_i$ occurs, then $\tau = i$.
\[
\E\left[ \sum_{i=1}^{\tau} \alpha \cdot  p_i \right] = \E\left[ \sum_{i=1}^{\infty} \alpha \cdot  p_i \ind\{\tau \geq i \}  \right]  = \sum_{i=1}^{\infty} \alpha \cdot p_i \cdot  \pr[\tau \geq i ]  =  \sum_{i=1}^\infty \alpha \cdot \pr[\tau = i].
\]

Since the events $\{ \tau = i\}$ are mutually exclusive, we obtain the desired result.
\end{proof}

\meyersonguarantee*

\begin{proof}[Proof of \Cref{app-meyerson_mfl}]
Consider an optimal integral facility location solution for $X'$. Fix a facility $f^*$ of type $\ell^*$ that is opened in the optimal solution, and let $C^*$ be the set of clients in $X'$ assigned to $f^*$ in the optimal solution. Let $d^*_v = d(v,f^*)$ denote the distance from $v \in C^*$ to the center $f^*$, and let $\delta^*$ be the average optimal assignment cost of these clients, i.e., $\delta^* = \frac{\sum_{v \in C^*} d^*_v}{\abs{C^*}}$. 
We also pick naturals $a,b = \Theta\left(\frac{\log n'}{\log \log n'} \right)$  such that $a^b > n'.$

Recall that $F^{(\leq \ell)}$ is the set of facilities of type at most $\ell$, and let $F_v$ be the set of facilities opened just before $v \in C^*$ arrives. For every $\ell \in \{-1, 0, \ldots, q \}$, let $\delta_\ell(v)= d(F^{(\leq \ell)} \cup F_{v} , v)$ denote the distance from $v$ to the closest facility that is of type at most $\ell$ or is already open. When $v \in C^*$ arrives, $\ONLINEMFL$ independently opens a facility at the location of type $\ell$ (for all types $ 0 \leq \ell \leq q$, where $q$ is the maximum type) closest to $v$ with probability $\min\left(\frac{\delta_{\ell-1}(v) - \delta_\ell(v)}{2^\ell}, 1\right)$. Consequently, the expected facility cost incurred due to the arrival of $v$ is at most
\begin{equation}\label{faccosts}
F(v) = \sum_{\ell=0}^q 2^{\ell+1} \cdot \min\left(\frac{\delta_{\ell-1}(v) - \delta_\ell(v)}{2^\ell}, 1\right) \leq \sum_{\ell=0}^q 2(\delta_{\ell-1}(v) - \delta_\ell(v)) \leq 2 \delta_{-1}(v).
\end{equation}
Moreover, the expected assignment cost $A(v)$ due to $v$ is at most the sum of the expected facility cost due to $v$ and $d^*_{v}$, i.e.,
$A(v) \leq F(v) + d^*_{v}$. Indeed, let $r$ be the highest index such that $\delta_{r-1}(v) - \delta_r(v) > 2^r$, setting $r = -1$ if no such index exists. Then, we can always assign $v$ to a facility at distance at most $\delta_r(v)$. The expected facility cost incurred by $v$ due to types $\geq r +1$ is at least
\[
\sum_{\ell = r+1}^{q} 2^\ell \cdot \frac{\delta_{\ell-1}(v) - \delta_{\ell}(v)}{2^\ell} = \delta_{r}(v) - \delta_q(v) \geq \delta_r(v) - d^*_{v}.
\]
Consequently, we have that the expected cost incurred by $\ONLINEMFL$ due to $C^*$ is at most
\begin{equation} \label{asg_ag_fac}
\sum_{v\in C^*} (F(v) + A(v)) \leq \sum_{v \in C^*} \left(F(v) + d^*_v + F(v)\right) = 2 \sum_{v \in C^*}F(v) + \sum_{v \in C^*} d^*_v.
\end{equation}
We will now bound the expected facility cost incurred due to clients in $C^*$. For every $j \in [b]$, define $S_{j} = \{v \in C^*: a^{j-1} \delta^* <d^*_v \leq a^j \delta^* \}$ to be the set of clients in $C^*$ which are between $a^{j-1} \delta^*$ and $a^j \delta^*$ away from $f^*$. Also define $q_0 = \{v \in C^*: d^*_v \leq \delta^* \}$ to be the set of clients in $C^*$ that are at most $\delta^*$ away from $f^*$. Note that $\cup_{j=0}^{b} S_j = C^*$ as $    a^b \delta^* > n' \delta^* \geq d^*_v$ for any $v \in C^*$. 

Now, consider any $S_j$. If we label the clients in $S_j$ sequentially according to the order of arrival, say $v_1, v_2, \ldots , v_{\alpha}$, then the expected facility cost incurred by clients in $S_j$ due to types $\geq \ell^* -1$ is

. The expected facility cost incurred by $v_i$ due to types $\geq \ell^*+1$ is at most
\begin{equation} \label{greaterthank*}
\sum_{v_i \in S_j} \sum_{\ell = \ell^*+ 1}^q 2^{\ell+1} \cdot \frac{\delta_{\ell-1}(v_i) - \delta_\ell(v_i) }{2^\ell} \leq \sum_{v_i \in S_j} 2 \delta_{\ell^*}(v_i) \leq \sum_{v_i \in S_j} 2d^*_{v_i} \leq \sum_{v_i \in C} 2d^*_{v_i}.
\end{equation}

Let $\rho_\ell = d(F^{(\leq \ell)}, f^*)$ denote the distance from $f^*$ to the closest facility $\tilde{f}_\ell$ of type at most $\ell$. Define $B_\ell$ ($0 \leq \ell \leq \ell^*$) to be the event that $\ONLINEMFL$ opens a facility $f$ such that $d(f,f^*) \leq \rho_\ell + 2a^j \delta^*$. Also define $E$ to be the event that $\ONLINEMFL$ opens a facility $f$ such that $d(f,f^*) \leq 8 a^j \delta^*$.

Note that if any $v_i \in S_j$ opens a facility $f$ of type $\geq \ell$, then $B_\ell$ occurs. Indeed, using the triangle inequality
\[
\delta_\ell(v_i) \leq d(v_i,\tilde{f}_\ell) \leq d(v_i,f^*) + d(f^*, \tilde{f}_\ell) \leq a^j \delta^* + \rho_\ell.
\]
\[
d(f,f^*) \leq d(f,v_i) + d(v_i,f^*) \leq \delta_\ell(v_i) + a^j \delta^* \leq \rho_\ell + 2 a^j \delta^*.
\]

Note that the expected facility cost $\tilde{F}(\ell)$ incurred by clients in $S_j$ due to facilities of type $\ell$ until $B_\ell$ occurs is at most $2^{\ell+1}$.   This is because of standard waiting time arguments (see \Cref{waiting_time_lemma}).
Consequently, 
\begin{equation} \label{beforeB}
\sum_{\ell=0}^{\ell^*} \tilde{F}(\ell) \leq \sum_{\ell=0}^{\ell^*} 2^{\ell+1} \leq 2^{\ell^* +2 }.
\end{equation}

Note that once $B_{\ell^*}$ occurs, the event $E$ must also have occurred, since we opened a facility $f$ such that
\[
d(f,f^*) \leq \rho_{\ell^*} + 2a^j \delta^* = d(F^{(\ell^*)}, f^*) + 2a^j \delta^* \leq  0 + 2a^j \delta^* \leq 8a^j \delta^*.
\]
For any client $v_i \in S_j$ arriving after $B_\ell$ has occurred but before $E$ has occurred, we have
\[
\delta_{-1}(v_i) = d(F_{v_i}, v_i) \leq d(F_{v_i}, f^*) + d(f^*,v_i) \leq (\rho_\ell + 2 a^j \delta^* )+ a^j \delta^* = \rho_\ell + 3 a^j \delta^*.
\]
Moreover, there must also be an open facility of type at most $\ell$ that is at most $\delta_{\ell}(v_i) + a^j \delta^*$ away from $f^*$. Consequently, $\delta_\ell(v_i) + a^j \delta^* \geq \rho_\ell$. Since $B_\ell$ has occurred but $E$ has not, we have that $\rho_\ell + 2a^j \delta^* \geq 8 a^j \delta^*$ which implies that $\rho_\ell \geq 6a^j \delta^*$. Since the function $f(x) = \frac{x- C}{x+ 3C}$ is monotonic in $[0,\infty)$ for any $C >0$, we obtain 
\[
\frac{\delta_\ell(v_i)}{\delta_{-1} (v_i)} \geq \frac{\rho_\ell - a^j \delta^*}{\rho_\ell + 3a^j \delta^*} \geq \frac{5 a^j \delta^*}{9 a^j \delta^*} \geq \frac{1}{2}.
\] 
Thus, at least half of the expected facility cost incurred by $v_i \in S_j$ goes to types $\geq \ell+1$ after $B_\ell$ occurs. We combine this observation with \eqref{greaterthank*} and \eqref{beforeB} to conclude that the expected facility cost incurred by clients in $S_j$ before $E$ occurs is at most
\begin{equation} \label{beforeE}
\frac{1}{\frac{1}{2}}\cdot \left(2^{\ell^* + 2}+ \sum_{v_i \in S_j} 2d^*_{v_i} \right)\leq  2^{\ell^* + 3} + \sum_{v_i \in S_j} 4d^*_{v_i} \leq 8 c_{f^*} + \sum_{v_i \in S_j} 4d^*_{v_i} .
\end{equation}

Using \eqref{faccosts} and the definition of the event $E$, the expected facility cost due to $v_i \in S_j$ after $E$ has occurred is at most
\[
2\delta_{-1}(v_i) = 2d(F_{v_i} , v_i) \leq 2d(F_{v_i}, f^*) + 2d(v_i,f^*) \leq 16 a^j \delta^* + 2d^*_{v_i} \leq (16a + 2)d^*_{v_i} + 16 \delta^*.
\]
The last inequality is because $d^*_{v_i} > a^{j-1} \delta^*$ for $j \geq 1$. Consequently, the expected facility cost due to $S_j$ after $E$ has occurred is at most $\sum_{v_i \in S_j} \left((16a +  2)d^*_{v_i} + 16 \delta^*\right)$. Combining with \eqref{beforeE}, the total expected facility cost due to clients in $S_j$ is at most
\[
8c_{f^*}   + \sum_{v_i \in S_j}4 d^*_{v_i}  + \sum_{v_i \in S_j} (16 \delta^* +(16a + 2)d^*_{v_i})  = 8 c_{f^*} +  \sum_{v_i \in S_j}( 16 \delta^* +(16a + 6)d^*_{v_i})  .
\]

Adding this up for each $0 \leq j \leq b$, we conclude that the expected facility cost due to clients in $C^*$ is at most
\[
8 (b+1) c_{f^*} +  \sum_{v \in C^*} \left(16 \delta^* + (16 a + 6)d^*_v \right) \leq O(b) c_{f^*} + O(a) \sum_{v \in C^*} d^*_v.
\]
Combining this with \eqref{asg_ag_fac}, the expected cost incurred by $\ONLINEMFL$ due to clients in $X'$ assigned to $f^*$ is at most
\begin{equation} \label{clustercost}
2 \cdot \left(O(b) c_{f^*} + O(a) \sum_{v \in C^*} d^*_v\right)  + \sum_{v \in C^*} d^*_v = O(b) c_{f^*} + O(a) \sum_{v \in C^*} d^*_v.
\end{equation}

Since $a = b = \Theta\left(\frac{\log n'}{\log \log n'}\right)$, we sum \eqref{clustercost} over every facility $f^*$ that is open in the optimal solution for $X'$ to conclude that the expected cost incurred by $\ONLINEMFL$ (for a fixed input $X$) for any subset of clients $X' \subseteq X$ such that $\abs{X'} \leq n'$ is at most $O\left(\frac{\log n'}{\log \log n'}\right) \cdot \OPT(X') $, as desired.
\end{proof}

\section{Online Steiner Tree in the $(p,k)$-Adversarially Robust Model}\label{sec:steinertree}Let $(V,d)$ be a metric space with a fixed root node $v^0 := r$. Let $X = \{v^1, v^2, \ldots, v^n\} \subseteq V~$ be a set of $n$ vertices that arrive online. In the $(p,k)$-adversarially robust model, let $\samples \subseteq X$ be a uniformly random subset of size $pn$, and let $\csamples$ be the corrupted sample obtained after the adversary modifies $k$ vertices of $\samples$. The root node $r$, the metric space $(V,d)$, and $\csamples$ are given as input to the algorithm.

When a vertex $v^t \in X$ arrives at time $1 \leq t \leq n$, we must add edges so that $v^t$ is connected to the root $r$. The goal is to maintain connectivity of $X \cup \{r\}$ at minimum cost.

\subsection{The Algorithm}

The algorithm has two components, an offline randomized algorithm $\offst$ and an online algorithm $\onst$. Denote by $\SOLST$ the output of the algorithm. For an online time $t$, let $\cost(\onst_{\leq t})$ denote the cost incurred by $\onst$ up to and including time $t$, not including edges added by $\offst$.

Let $\sigma(\csamples)$ be a uniformly random permutation of $\csamples$, fixed throughout the algorithm. For any $1\leq q\leq pn$, let $\csamples_{\leq q}$ denote the first $q$ vertices in this order. The algorithm uses an offline $\alpha$-approximation to Steiner tree, where $\alpha=O(1)$, as a subroutine. The subroutine $\offst$ applies this approximation to prefixes of the corrupted sample.

\paragraph{Algorithm $\offst$:}
This algorithm takes as input the corrupted sample $\csamples$ and a budget parameter $B$. It uses the offline $\alpha$-approximation for Steiner tree to find the greatest positive integer $q$ such that the prefix $\csamples_{\leq q}$ can be connected to the root $r$ by the $\alpha$-approximation with cost at most $B$. It then adds the corresponding edges to $\SOLST$ and marks the vertices in the prefix. If no such positive integer $q$ exists, it marks no vertices and adds no edges.

\paragraph{Algorithm $\onst$:}
When $v^t \in X$ arrives, $\onst$ connects $v^t$ to the closest point in $\widehat Z \cup \{v^0, v^1, \ldots, v^{t-1}\}$, where $\widehat Z \subseteq \csamples$ is the set of vertices marked by $\offst$ so far. It then adds the corresponding edge to $\SOLST$.

The online algorithm proceeds as follows.
\begin{enumerate}[nosep]
\item On every online arrival $v^t$, run $\onst$ on $v^t$.
\item At the end of time step $1$, run $\offst$ on $\csamples$ with budget parameter $\cost(\onst_{\leq 1})$.
\item For each later time step $t$, let $t'$ be the last time at which $\offst$ was executed. If $\cost(\onst_{\leq t})>2\cost(\onst_{\leq t'})$, then run $\offst$ on $\csamples$ with budget parameter $\cost(\onst_{\leq t})$.
\end{enumerate}

\subsection{Analysis}

Algorithm $\offst$ pays at most the budget parameter $B$ it is given as input, and $\offst$ is called with parameter $\cost(\onst_{\leq t})$ exactly once every time $t$ the cost of $\onst$ doubles. Hence the offline cost is at most a constant times the online cost, which we summarize in the following observation.

\begin{observation}\label{offstcost}
$\cost(\offst) = O(\cost(\onst))$.
\end{observation}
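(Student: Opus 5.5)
The plan is a geometric-series charging argument over the successive calls to $\offst$. Let $t_1 < t_2 < \cdots < t_h$ be the times at which $\offst$ is executed: $t_1 = 1$, and each later $t_j$ is a time at which $\cost(\onst_{\leq t_j}) > 2\cost(\onst_{\leq t_{j-1}})$. The call at $t_j$ receives budget $B_j = \cost(\onst_{\leq t_j})$.

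First I will bound the cost of a single call. By construction, $\offst$ chooses the largest prefix $q$ whose $\alpha$-approximate Steiner tree to the root costs at most $B_j$, and adds exactly those edges. If no such $q$ exists, it adds nothing. Either way, the call at $t_j$ adds at most $B_j$ to $\SOLST$. This holds deterministically, so there is no dependence on the randomness of $\sigma$ or on the corruptions.

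Second, the budgets grow geometrically. The triggering condition gives $B_j > 2B_{j-1}$ for all $j \geq 2$, so $B_{j} \le 2^{-(h-j)} B_h$. Summing, the total offline cost is
\[
\sum_{j=1}^h B_j \le B_h \sum_{i\ge 0} 2^{-i} = 2B_h \le 2\cost(\onst_{\leq n}).
\]
The last inequality uses that $\cost(\onst_{\le t})$ is nondecreasing in $t$, which holds because $\onst$ only ever adds edges. Hence $\cost(\offst) \le 2\cost(\onst)$.

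The only point needing care is the case of a zero budget. If $\cost(\onst_{\leq 1}) = 0$, the doubling condition is satisfied only once the online cost becomes positive, and any call made with budget $0$ adds edges of total cost $0$. The geometric sum therefore remains valid if we start it from the first nonzero budget. Note also that edges added by $\offst$ are excluded from $\cost(\onst)$ by definition, so the argument is not circular. Beyond this, I expect no real obstacle.
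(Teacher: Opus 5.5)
Your proposal is correct and is essentially the paper's argument: each call to $\offst$ spends at most its budget, and the budgets more than double between consecutive calls, so the total offline cost is a geometric sum bounded by $2\cost(\onst_{\leq n})$. Your write-up just makes the constant ($2$) explicit and checks the zero-budget edge case, which the paper leaves implicit.
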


Let $\crpt$ be the index of the first corrupted sample in the random order $\sigma(\csamples)$; equivalently, $\csamples_{\crpt}$ is the first element of $\csamples\setminus\samples$. Define $\theta:=\crpt-1$, and let $\widetilde Z:=\csamples_{\leq \theta}$. Thus, $\widetilde Z$ is the prefix of vertices in $\csamples$ appearing before the first corrupted vertex in the order $\sigma(\csamples)$.

We next show that once the clean prefix before the first corrupted sample has been marked, the remaining cost of $\onst$ is small.

\begin{lemma}\label{stonlinepostcritical}
Suppose all vertices in $\widetilde Z$ have been marked. Then the expected cost incurred by $\onst$ on any suffix of the online sequence $X$ is $\OPT(X)\cdot O\left(\log \frac{k}{p}\right)$.
\end{lemma}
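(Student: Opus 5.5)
We adapt the analysis of \cite{ArgueFGS22} for the greedy algorithm with a uniform sample. The clean prefix $\widetilde Z$ plays the role of the sample. Conditioned on $\theta$ and on $\Clean$, $\widetilde Z$ is a uniformly random $\theta$-subset of $\Clean$. Moreover, the set $\samples$ is a uniform $pn$-subset of $X$.

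\textbf{Step 1: reduce to arcs of an Euler tour.} Let $T^*$ be an optimal Steiner tree on $X\cup\{r\}$. Doubling $T^*$ and shortcutting gives a cyclic tour $\Pi$ through $X\cup\{r\}$ of length at most $2\OPT(X)$. The marked vertices $\widetilde Z\cup\{r\}$ cut $\Pi$ into arcs $A_1,A_2,\ldots$. Each arc starts at a marked vertex and then visits unmarked terminals. Since $\onst$ connects each arriving $v^t$ to its nearest point in $\widehat Z\cup\{v^0,\ldots,v^{t-1}\}\supseteq \widetilde Z\cup\{r\}$, the cost it pays on the clients of an arc $A$ is at most the cost of greedy online Steiner tree on $A$ with $A$'s marked endpoint as root. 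Here we use that being allowed to connect to more points only lowers each connection cost. By the standard $O(\log |A|)$ guarantee for greedy (Imase--Waxman), applied to the instance whose optimum is at most the length $\ell(A)$ of the arc path, the cost charged to $A$ is $O(\log(|A|+1))\cdot \ell(A)$. Hence the total online cost is at most $\sum_A O(\log(|A|+1))\,\ell(A)$, and it suffices to bound the expectation of this sum by $O(\log\nicefrac kp)\cdot \OPT(X)$.

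\textbf{Step 2: arc lengths in terms of edge positions.} Write $\sum_A \log(|A|+1)\ell(A)=\sum_{e\in\Pi}\ell(e)\log(|A(e)|+1)$, where $A(e)$ is the arc containing edge $e$. By Jensen it suffices to show $\E[|A(e)|]=O(k/p)$ for every fixed edge $e$; using $\E\log\le\log\E$ then gives $O(\log\nicefrac kp)\cdot 2\OPT(X)$. Each arc's size is at most the number of tour positions scanned forward from $e$ until a marked vertex appears, plus the same quantity scanned backward. So we need the expected gap to the nearest vertex of $\widetilde Z$ in each direction to be $O(k/p)$.

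\textbf{Step 3 (main obstacle): uniformity of $\widetilde Z$.} This step is harder than in the uncorrupted case because the adversary picks which sampled vertices to corrupt after seeing $\samples$. It can, for example, delete every sampled vertex in a chosen window. However, only $k$ elements of $\samples$ can be removed, so a window of $m$ consecutive tour vertices containing $j>k$ sampled vertices still has $j-k$ clean ones.

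Fix $e$ and consider the window $W_m$ of the next $m$ tour positions. Its number of sampled vertices is hypergeometric with mean $pm$. For $m\ge 4k/p$, Chernoff gives $|W_m\cap\Clean|\ge pm/4$ except with probability $e^{-\Omega(pm)}$. Conditioned on $\Clean$ and $\theta$, each clean vertex lies in $\widetilde Z$ with probability $\theta/|\Clean|$. Then $\Pr[W_m\cap\widetilde Z=\emptyset]\le e^{-\Omega(pm)}+\E[(1-\theta/(np))^{pm/4}]$.

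Summing over dyadic $m$ therefore bounds the expected gap by $O(k/p)+\E[O(n/\theta)]$. The right quantity to control is $\E[\min(n,\,n/\theta)]$. As in \Cref{lem:mfl-tail}, we have $\Pr[\theta<r]\le rk/(np)$. This suggests the gap bound $n/\theta$ is only controlled in a logarithmic sense. So rather than bounding $\E|A(e)|$ directly, we plan to bound $\E\log(|A(e)|+1)\le O(\log\nicefrac kp)+\E[\log(n/\max\{1,\theta\})]$. The last term is handled exactly as in the proof of \Cref{lem:mfl-tail}, with $\log$ in place of $\phi$ and the dyadic decomposition over $\theta$, giving $O(\log\nicefrac kp)$.

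Combining Steps 1--3 yields the claimed $O(\log\nicefrac kp)\cdot\OPT(X)$ bound on any online suffix.
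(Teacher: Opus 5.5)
Your proposal is correct. Its skeleton is the paper's: shortcut the doubled optimal tree into a cycle, cut it at $\widetilde Z\cup\{r\}$, charge greedy $O(\log)$ times the length of each piece, and bound $\E[\log L_e]$ edge by edge via $L_e\le D_e^++D_e^-$. Your appeal to Imase--Waxman, plus the observation that greedy's cost only drops when more points are available, stands in for the paper's matching argument.

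The genuine difference is how you control the gap to the nearest vertex of $\widetilde Z$.
\begin{itemize}
\item The paper assigns i.i.d.\ uniform priorities and conditions on $\theta$, the smallest priority of a corrupted vertex. Then each clean sampled vertex joins $\widetilde Z$ independently with probability $\theta$. The gap, counted in clean sampled vertices, is dominated by a geometric variable $H$ with $\E[H\mid\theta]\le 1/\theta$. The adversary is absorbed by the deterministic domination $D_e^+\le A^+_{H+k}(e)$ together with $\E[A^+_q(e)]=O(q/p)$. After Jensen, the leftover term is the closed form $\E[\ln(1/\theta)]=H_k$.
\item You keep $\theta=\crpt-1$ as a count. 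You absorb the adversary by a hypergeometric Chernoff bound on a fixed window $W_m$ minus $k$, and use that $\widetilde Z$ is a uniform $\theta$-subset of $\Clean$. You apply Jensen conditionally on $\theta$ and finish with a tail sum in the style of \Cref{lem:mfl-tail}. That sum does go through for $\log$, because $\Pr[\crpt=r+1]\le k/(np)$ and $\sum_{r\le R}\log(n/r)\le R\log(en/R)$.
\end{itemize}
The priority trick buys exact independence and a one-line final integral. Your route stays with the discrete permutation and reuses the facility-location machinery, at the price of a few more estimates.

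When writing it up, keep the following in mind.
\begin{itemize}
\item Drop the Step~2 target entirely: it is false in general, not merely unproven. For $r\le np/(4k)$ one has $\Pr[\theta=r]=\Theta(k/(np))$ with gaps of order $n/r$, so $\E|A(e)|$ can be $\Theta((k/p)\log(np/k))$. Go directly to $\E[\log(D_e^++1)\mid\theta]\le\log(\E[D_e^+\mid\theta]+1)$.
\item That step needs the window Chernoff bound to survive conditioning on $\theta$. It does, because the law of $\crpt$ given $(\samples,\csamples)$ depends only on $np$ and $k$, so $\theta$ is independent of $(\samples,\csamples)$. Say this explicitly.
\item The bound $\Pr[U\cap\widetilde Z=\emptyset\mid\theta,\samples,\csamples]\le(1-\theta/(np))^{|U|}$ comes from sampling without replacement, not from the marginal probability $\theta/|\Clean|$ alone.
\end{itemize}
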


\begin{proof}
Let $T^*$ be a minimum-cost tree connecting $\{r\} \cup X$, with cost $\OPT(X)$. By taking an Eulerian tour of $T^*$ and shortcutting repeated vertices, we obtain a cycle $C^*$ that visits all vertices in $\{r\} \cup X$ and has total length at most $2\OPT(X)$.

The vertices of $\widetilde Z \cup \{r\}$ partition $C^*$ into maximal paths with endpoints in $\widetilde Z \cup \{r\}$ and internal vertices in $X \setminus (\widetilde Z \cup \{r\})$. For an edge $e\in C^*$, let $P_Z(e)$ be the path in this partition containing $e$, let $L_e$ be the number of vertices on this path, and let $d_e$ be the length of $e$.

Consider one such path $P=\langle s=u_0,u_1,\ldots,u_r=s'\rangle$,
where $s,s'\in \widetilde Z\cup\{r\}$. By taking alternating edges along $P$, we partition it into two matchings, each of total length at most $\sum_{e\in P} d_e$. Between the two matchings, choose the one that covers at least half the remaining vertices. For every matched edge $(u_i,u_j)$, the later-arriving endpoint can connect greedily to the earlier-arriving endpoint, or to an even closer available point, at cost at most $d(u_i,u_j)$. This connects at least half the remaining vertices of $P$ at cost $\sum_{e\in P} d_e$. Repeating $O(\log |P|)$ times connects all vertices on the path $P$ at cost $O(\log |P|) \cdot \sum_{e\in P} d_e$.
Summing over all paths in the partition of $C^*$ shows that the expected cost incurred by $\onst$ in the suffix is at most
\begin{equation}\label{eq:interstbound}
O\left(\expect*{\sum_{e\in C^*} d_e \log L_e}\right)
=
O\left(\sum_{e\in C^*} d_e \expect*{\log L_e}\right).
\end{equation}

We now show that for every fixed edge $e\in C^*$, we can bound $\expect*{\log L_e}=O\left(\log \frac{k}{p}\right)$. Fix an edge $e=(x,y)\in C^*$ and orient the cycle. Let $D_e^+$ and $D_e^-$ be the number of vertices encountered when walking clockwise and counterclockwise, respectively, from $e$ until hitting the first vertex of $\widetilde Z\cup\{r\}$. Note that if $x,y \in \widetilde Z \cup \{r\}$, then $D_e^+=D_e^-=1$. We have $L_e\leq D_e^+ + D_e^-$. We will bound $D_e^+$; the argument for $D_e^-$ is identical.

Let $F:=\csamples\setminus\samples$ be the set of corrupted vertices, so $|F|=k$, and let $G:=\samples\cap\csamples$ denote the uncorrupted vertices in $\csamples$. Randomly permuting the vertices in $\csamples$ is equivalent to the following random process. Assign each vertex $z\in \csamples$ an independent random priority $\rho_z\sim \mathrm{Unif}[0,1]$ and order the vertices by increasing priority. Define
$\theta:=\min_{v\in F}\rho_v$ to be the priority of the first corrupted vertex, such that $\widetilde Z=\{v\in G\mid \rho_v<\theta\}$.

Now, condition on $\samples$, the adversary's choice of $F$, and the value of $\theta$. Suppose the uncorrupted sampled vertices $G$ in clockwise order from $e$ are $g_1,g_2,\ldots,g_{pn-k}$. Each $g_j$ belongs to $\widetilde Z$ independently with probability $\theta$. Let $H$ be a random variable denoting the index of the first vertex of $\widetilde Z$ in this list, setting $H=pn-k+1$ if $\widetilde Z$ is empty. Conditioned on $\theta$, $H$ is stochastically dominated by a geometric random variable with parameter $\theta$. Consequently, $\expect*{H\mid \theta}\leq \frac{1}{\theta}$.

Now, for each element of the corrupted sample $\csamples$, consider its pre-corruption counterpart in $\samples$, and color this original sampled vertex red on the cycle $C^*$. Since the adversary corrupts only $k$ vertices, the $H$-th uncorrupted sampled vertex appears no later than the $(H+k)$-th red vertex of $\samples$ in the same cyclic direction. For $q\geq 1$, let $A_q^+(e)$ denote the number of vertices encountered when walking clockwise from $e$ until seeing the $q$-th red vertex of $\samples$, setting $A_q^+(e)=n+1$ if $q>|\samples|$. Then, $D_e^+\leq A_{H+k}^+(e)$.

Since $\samples$ is a uniformly random subset of $X$ of size $pn$, the expected number of vertices encountered before the $q$-th red vertex of $\samples$ in any fixed cyclic direction is $\expectover{\samples}*{A_q^+(e)}=O\left(\nicefrac{q}{p}\right)$. Indeed, consider the cyclic order of the $n$ non-root vertices induced by $C^*$. Since $\samples$ is a uniformly random subset of these $n$ vertices of size $pn$, the expected gap between consecutive red vertices is $O\left(\frac{n}{pn}\right)=O\left(\frac{1}{p}\right)$ by symmetry. The number of vertices encountered before the $q$-th red vertex is bounded by the sum of $q$ consecutive gaps, and therefore has expectation at most $O(q/p)$. Returning to the full cycle $C^*$, the root contributes at most one additional vertex to the bound.

We now average over the randomness of $\samples$, and condition only on $\theta$ and $H$. Since $\ln x$ is concave, $\expect*{\ln W}\leq \ln \expect*{W}$ for any random variable $W$ taking values in $[1,\infty)$. Using Jensen's inequality,
\[
\begin{aligned}
\expectover{\samples}*{\ln D_e^+\mid \theta,H}
\leq
\expectover{\samples}*{\ln A_{H+k}^+(e)\mid \theta,H} 
\leq
\ln \expectover{\samples}*{A_{H+k}^+(e)\mid \theta,H}
\leq
O\left(\log \frac{k+H}{p}\right).
\end{aligned}
\]
Averaging over $H$ and applying Jensen's inequality again gives
\[
\begin{aligned}
\expectover{H}*{\ln(k+H)\mid \theta}
\leq
\ln(k+\expect*{H\mid \theta})
\leq
\ln\left(k+\frac{1}{\theta}\right)
\leq
\ln(k+1)+\ln(1/\theta).
\end{aligned}
\]
Thus, $\expect*{\ln D_e^+\mid \theta} \leq O\left(\log\nicefrac{k+1}{p}+\log\nicefrac{1}{\theta}\right)$.

Since $\E[\ln \frac{1}{\theta} ] = O(\log k)$ (it is known that the minimum of $k$ i.i.d.\ $U[0,1]$ variables follows a $\beta(1,k)$ distribution. If $\theta \sim \beta(1,k)$, then $\expect{\log \nicefrac{1}{\theta}} = \psi(k+1) - \psi(1) = H_k$, where $\psi$ is the digamma function, and $H_k$ is the $k$th harmonic number), we conclude $\expect*{\log D_e^+} = O\left(\log \nicefrac{k+1}{p}\right)$. The same argument applies for $D_e^-$, and hence $\expect*{\log L_e} = O\left(\log \nicefrac{k+1}{p}\right)$.

Substituting this bound into \eqref{eq:interstbound}, we obtain that the expected cost incurred by $\onst$ in the suffix is at most
\[
O\left(\sum_{e\in C^*} d_e \expect*{\log L_e}\right)
\leq
O\left(\sum_{e\in C^*} d_e\right)\cdot O\left(\log \frac{k+1}{p}\right)
\leq
\OPT(X)\cdot O\left(\log \frac{k}{p}\right). \qedhere
\]
\end{proof}

We finish with the proof of the main theorem for this section.

\begin{restatable}{thm}{steinerthm}\label{thm:steinercorruptiontotalcost}
The expected cost incurred by the algorithm for online Steiner tree in the $(p,k)$-adversarially robust model is $\OPT(X)\cdot O\left(\alpha\log \frac{k}{p}\right)$.
\end{restatable}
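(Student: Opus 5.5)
The proof mirrors the proof of \Cref{thm:mfltotalcost}, with the clean prefix $\widetilde Z=\csamples_{\leq\theta}$ taking the place of $\csamples_{\leq\tau}$. By \Cref{offstcost} it suffices to bound $\expect*{\cost(\onst_{\leq n})}$.

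\textbf{Step 1: the clean prefix is cheap.} Let $\beta(\csamples)$ be the cost of the $\alpha$-approximate Steiner tree connecting $\widetilde Z\cup\{r\}$. Since $\widetilde Z\subseteq\samples\subseteq X$, we get $\beta(\csamples)\leq\alpha\cdot\OPT(X)$ deterministically. No scaling is needed here, unlike \Cref{mflbeta}, because Steiner tree cost is monotone in the terminal set.

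\textbf{Step 2: the critical time.} Let $t_1<t_2<\cdots$ be the times at which $\offst$ is run, and define
\[
t^*:=\min\{t_h\mid \cost(\onst_{\leq t_h})\geq \beta(\csamples)\}.
\]
We split into two cases.
\begin{itemize}[nosep]
\item If $t^*=\infty$, then by the doubling rule $\cost(\onst_{\leq n})\leq 2\cost(\onst_{\leq t_{\mathrm{fin}}})<2\beta(\csamples)\leq 2\alpha\OPT(X)$.
\item Otherwise $t^*=t_h$. The cost before time $t^*$ is at most $2\cost(\onst_{\leq t_{h-1}})<2\beta(\csamples)$. The single arrival at $t^*$ costs at most $\OPT(X)$, since the greedy edge to the root or to any earlier terminal has length at most $d(v^t,r)\leq\OPT(X)$. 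Hence $\cost(\onst_{\leq t^*})=O(\alpha\,\OPT(X))$.
\end{itemize}

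\textbf{Step 3: after $t^*$, everything in $\widetilde Z$ is marked.} At time $t^*$, $\offst$ is called with budget $B=\cost(\onst_{\leq t^*})\geq\beta(\csamples)$. The prefix of length $\theta$ is therefore feasible, and by maximality the chosen $q$ satisfies $q\geq\theta$. Hence $\widetilde Z\subseteq\widehat Z$ from then on; marked sets only grow across later calls, or we can argue that the $t^*$ call alone suffices.

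We then apply \Cref{stonlinepostcritical} to the suffix after $t^*$. That lemma is stated for $\onst$ connecting to $\widehat Z\supseteq\widetilde Z$. Having more marked points only lowers each greedy connection cost, and the matching argument in its proof only needs the endpoints of each path to be available. So the suffix costs $O(\log\frac{k}{p})\cdot\OPT(X)$ in expectation.

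\textbf{Main obstacle: the randomness of $t^*$.} The time $t^*$ is random and depends on $\sigma$, while \Cref{stonlinepostcritical} is a statement about a fixed suffix. I would handle this by noting that its proof bounds, pointwise in the realization of $\widetilde Z$, the cost of every greedy online arrival after $\widetilde Z$ is marked by $O(\sum_{e\in C^*}d_e\log L_e)$, for any suffix. That bound does not depend on where the suffix starts, and the expectation over $\samples,\sigma$ is then taken exactly as in the lemma.

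\textbf{Conclusion.} Summing the pre-$t^*$ cost $O(\alpha\,\OPT(X))$ and the post-$t^*$ cost $O(\log\frac kp)\OPT(X)$, and then adding $\cost(\offst)=O(\cost(\onst))$ from \Cref{offstcost}, gives $\OPT(X)\cdot O(\alpha\log\frac kp)$.
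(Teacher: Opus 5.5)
Your proposal is correct and follows essentially the same route as the paper: take a critical doubling time $t^*$, use maximality in $\offst$ to conclude that the clean prefix $\widetilde Z$ is marked from $t^*$ on, then apply \Cref{stonlinepostcritical} to the suffix and \Cref{offstcost} to the offline cost. The differences are cosmetic. The paper thresholds $t^*$ at the deterministic value $\alpha\OPT(X)$ rather than at $\beta(\csamples)$, and either works because $\beta(\csamples)\le\alpha\OPT(X)$. Your pointwise justification for applying the lemma to the random suffix after $t^*$ spells out a step the paper leaves implicit.
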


\begin{proof}
Define the \emph{critical} time step
\[
t^*:=\min\{t_h\mid \cost(\onst_{\leq t_h})\geq \alpha\OPT(X)\}.
\]
If the set is vacuous, set $t^*=\infty$. If $t^*=\infty$, then $\cost(\onst_{\leq t_{\mathrm{fin}}})<\alpha\OPT(X)$. Since $t_{\mathrm{fin}}$ is the final doubling time, $\cost(\onst_{\leq n})\leq 2\cost(\onst_{\leq t_{\mathrm{fin}}})$. Therefore, $\expect*{\cost(\onst_{\leq n})}\leq 2\alpha\OPT(X)$.
Otherwise, $t^*=t_h$ for some $h$. By definition of $t^*$, $\cost(\onst_{\leq t_{h-1}})<\alpha\OPT(X)$.
Hence, $\cost(\onst_{\leq t^*-1})\leq 2\cost(\onst_{\leq t_{h-1}})<2\alpha\OPT(X)$.

Since the cost incurred by $\onst$ in a single round is at most $\OPT(X)$, we have $\expect*{\cost(\onst_{\leq t^*})} = \expect*{\cost(\onst_{< t^*})} + O(\OPT(X)) = O(\alpha\OPT(X))$.

At time $t^*$, the algorithm runs $\offst$ with budget parameter $\cost(\onst_{\leq t^*})\geq \alpha\OPT(X)$. Since $\widetilde Z\subseteq \samples$, we have $\OPT(\widetilde Z)\leq \OPT(X)$. As $\offst$ uses an offline $\alpha$-approximation for Steiner tree, the prefix $\widetilde Z$ is feasible for budget parameter $\cost(\onst_{\leq t^*})$. Therefore, by maximality of the prefix marked by $\offst$, all vertices in $\widetilde Z$ are marked at time $t^*$.

Thus, by \Cref{stonlinepostcritical}, the expected cost incurred by $\onst$ after time $t^*$ is at most $\OPT(X)\cdot O\left(\log \nicefrac{k}{p}\right)$.
Combining the cost bounds for $\onst$ before and after $t^*$ gives $\expect*{\cost(\onst_{\leq n})}
\leq \OPT(X)\cdot O\left(\alpha\log \nicefrac{k}{p}\right)$.
Finally, by \Cref{offstcost}, the expected cost incurred by $\offst$ is within a constant factor of the expected cost incurred by $\onst$, proving the desired theorem.
\end{proof}
\section{Removing the Assumption of Known Sample Noise for Set Cover}\label{sec:app-guess-double} 
In the main body, we assumed we have an estimate $\widehat{k}$ such that $\log(k/p)\leq \log(\widehat{k}/p)\leq 2\log(k/p)$. We now explain how to remove this assumption using a guess-and-double framework. The algorithm is the same as before, except that the rounded online solution $\cC^{\rfosc}$ is now generated in phases indexed by guesses for $\OPT_{\text{LP}}(X)$ (this is the fractional set cover optimum) and $k$.

As in the main body, our algorithm uses the random-order set cover algorithm $\rosc$, the fractional online set cover algorithm $\fosc$, and the backup algorithm. The algorithm $\rosc$ has competitive ratio $\Delta$, the fractional algorithm $\fosc$ has competitive ratio $\alpha$, and the backup algorithm covers an incoming element $x$ by buying $\backup(x)$.

The online phase proceeds in sub-phases indexed by pairs $(g,\widehat{k})$, where $g$ estimates $\OPTLP(X)$ and $\widehat{k}$ estimates $k$. The sub-phases form two nested loops: $g$ remains fixed in the outer loop, and $\widehat{k}$ increases monotonically within the inner loop. We define an \emph{epoch} to be the union of all sub-phases associated with a fixed value of $g$.

We initialize $\widehat{k}=3$, and set $g$ to be the fractional optimum for the elements in $X$ seen so far. The guess $g$ is recomputed every time the fractional optimum at least doubles. When this happens, we start a new sub-phase $(g,3)$ with the new value of $g$. Within a sub-phase, we run $\fosc$ and round its fractional solution online using rounding parameter $\ln(\widehat{k}/p)$, as in \Cref{lem:sc-rounding}. Let $\cC^{\rfosc}_t$ denote the rounded online solution maintained up to time $t$, including all sets bought by the rounded online algorithm in all sub-phases so far.

We end the current sub-phase $(g,\widehat{k})$ and replace $\widehat{k}$ by $\widehat{k}^2/p$ if the cost incurred by the backup algorithm during this sub-phase exceeds $g\cdot \alpha\ln(\widehat{k}/p)$. Note that this doubles the rounding parameter, since $\ln((\widehat{k}^2/p)/p)=2\ln(\widehat{k}/p)$.

The overall algorithm proceeds as follows:
\begin{enumerate}[nosep]
    \item In the offline phase, run the $\rosc$ algorithm on the randomly permuted corrupted sample $\csamples$, without adding any sets to the solution. Let $\cC^\rosc_i$ be the solution obtained after processing the first $i$ elements of $\csamples$. Let $\ell_0=0$ and $\cC^\bk_0=\emptyset$.

    \item In the online phase, let $x$ be the element arriving at step $t$. Suppose we are in sub-phase $(g,\widehat{k})$.
    \begin{enumerate}[nosep]
        \item Apply the $\fosc$ algorithm and round the fractional solution online with the boosting parameter $\ln(\widehat{k}/p)$ to update $\cC^{\rfosc}_t$.

        \item If $x\notin \cover(\cC^{\rfosc}_t\cup \cC^\bk_{t-1}\cup \cC^\rosc_{\ell_{t-1}})$, then $\cC^\bk_t=\cC^\bk_{t-1}\cup\{\backup(x)\}$, else $\cC^\bk_t=\cC^\bk_{t-1}$.

        \item Let
        \begin{equation}\label{eq:ell-guess-double}
        \ell_t
        =
        \max\left\{
        i \mid i\geq \ell_{t-1}
        \text{ and }
        \cost(\cC^\rosc_i)\leq
        \cost(\cC^{\rfosc}_t)+\cost(\cC^\bk_t)
        \right\}.
        \end{equation}

        \item If the cost incurred by $\BACKUP$ in the current sub-phase $(g,\widehat{k})$ exceeds $g\cdot \alpha\ln(\widehat{k}/p)$, then end the current sub-phase. Replace $\widehat{k}$ by $\widehat{k}^2/p$ and start the next sub-phase with the same value of $g$. If the fractional optimum has doubled, start a new epoch with the new value of $g$ and with $\widehat{k}=3$.
    \end{enumerate}
\end{enumerate}
The solution at time $t$ is $\cC^{\rfosc}_t\cup \cC^\bk_t\cup \cC^\rosc_{\ell_t}$.

Let $\cC^{\rfosc}$, $\cC^\bk$, and $\cC^\rosc$ be the final values of $\cC^{\rfosc}_t$, $\cC^\bk_t$, and $\cC^\rosc_{\ell_t}$, respectively. The final solution is $\cC^{\rfosc}\cup \cC^\bk\cup \cC^\rosc$. By the invariant in \eqref{eq:ell-guess-double}, we always have $\cost(\cC^\rosc)\leq \cost(\cC^{\rfosc})+\cost(\cC^\bk)$.

\subsection{Analysis}

Let $\crpt$ be the index of the first corrupted sample in $\csamples$, ordered by the random permutation used by $\rosc$. The algorithm does not know $\crpt$. Let $\cC^\rosc_{\clean}:=\cC^\rosc_{\crpt-1}$ be the solution of $\rosc$ on the clean prefix, and let $\unc^\rosc=\{x\in \univ\setminus\samples\mid x\notin \cover(\cC^\rosc_{\clean})\}$ be the elements not covered by $\rosc$ on the clean prefix.

We first bound the cost incurred by the rounded online solution in terms of the cost incurred by the backup algorithm.

\begin{lemma}\label{onlinecost-guess-double}
Suppose $D$ is the expected cost incurred by $\BACKUP$ over the whole algorithm. Then, the expected cost of $\cC^{\rfosc}$ is at most $\OPT(X)\cdot O(\alpha\log(k/p))+O(D)$.
\end{lemma}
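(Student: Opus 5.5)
We decompose the cost of $\cC^{\rfosc}$ by sub-phase and charge each sub-phase either to the fractional optimum or to the backup cost that ended the previous sub-phase. Fix an epoch with guess $g$. Since $g$ is recomputed whenever the fractional optimum doubles, the guesses across epochs form a geometric sequence bounded by $2\OPTLP(X) \le 2\OPT(X)$. Within the epoch the inner guesses are $\widehat{k}_0 = 3, \widehat{k}_1, \widehat{k}_2, \ldots$ with $\widehat k_{j+1} = \widehat k_j^2/p$. Hence the boosting parameters satisfy $\ln(\widehat{k}_j/p) = 2^j \ln(3/p)$. By \Cref{lem:sc-rounding}, the expected cost of $\cC^{\rfosc}$ incurred during sub-phase $(g,\widehat{k}_j)$ is at most $\alpha \ln(\widehat{k}_j/p)$ times the fractional optimum of the elements handled in that sub-phase. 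That fractional optimum is at most $2g$ by the epoch's definition. (We rerun the rounding with the new parameter from the current fractional solution; since the fractional solution is monotone, the cost can be charged to $\OPTLP$ of the prefix.) So sub-phase $j$ costs $O(g\,\alpha\, 2^j \ln(3/p))$ in expectation.

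We then split sub-phases into two types. \emph{Small-guess} sub-phases are those with $\widehat{k}_j/p \le (k/p)^2$, i.e.\ $2^j\ln(3/p) \le 2\ln(k/p)$. \emph{Large-guess} sub-phases are the rest. The small-guess sub-phases have geometrically increasing costs, so their total is dominated by the last one and is $O(g\,\alpha \ln(k/p))$. Summing over epochs, whose $g$ values are geometric and bounded by $2\OPT(X)$, gives $\OPT(X)\cdot O(\alpha\log(k/p))$.

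Every large-guess sub-phase $j \ge 1$ was entered because sub-phase $j-1$ ended, which happens only when the backup cost in sub-phase $j-1$ exceeded $g\,\alpha\ln(\widehat{k}_{j-1}/p) = \tfrac12 g\,\alpha \ln(\widehat{k}_j/p)$. The online cost of sub-phase $j$ is $O(g\alpha\ln(\widehat k_j/p))$. It is therefore at most a constant times the backup cost actually paid in sub-phase $j-1$. Sub-phases are disjoint in time, so summing over all large-guess sub-phases and epochs charges at most $O(1)$ times the total backup cost. Taking expectations gives $O(D)$.

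The main obstacle is making this charging rigorous in expectation. The termination of sub-phases is random and depends on the rounding randomness, so we cannot multiply an expected cost bound by a deterministic count. We handle this by applying \Cref{lem:sc-rounding} conditionally: the rounding in a sub-phase uses fresh thresholds, independent of the history that determined entry into the sub-phase. So, conditioned on entering sub-phase $j$, its expected cost is $O(g\alpha\ln(\widehat k_j/p))$. This conditional cost is bounded by a constant times the (already realized) backup cost of sub-phase $j-1$. Summing these bounds over sub-phases, via linearity and the tower rule, yields the claim. One further subtlety is the first sub-phase of each epoch with $\widehat k = 3$: it costs $O(g\alpha\log(1/p))$, which falls under the small-guess case and is handled above.
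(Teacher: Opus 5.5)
Your proof is correct and follows essentially the paper's argument. You bound each sub-phase's rounding cost by $g\cdot O(\alpha\log(\widehat k/p))$, absorb the $\widehat k=3$ sub-phases into $\OPT(X)\cdot O(\alpha\log(k/p))$ using the geometric growth of $g$, and pay for every later sub-phase with the backup cost that forced $\ln(\widehat k/p)$ to double.

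The differences are cosmetic. You charge sub-phase $j$ to the backup cost of sub-phase $j-1$, whereas the paper charges non-final sub-phases to their own backup cost and only the final one to its predecessor; your version makes the conditioning-on-entry expectation argument cleaner. Your small/large split at $(k/p)^2$ is unnecessary. The $\widehat k=3$ sub-phase also need not be ``small'' (e.g.\ $k=1$, $p>1/3$), but this is harmless since $\log(3/p)=O(\log(k/p))$ when $p\le 1/2$.
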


\begin{proof}
Fix a guess $g$ for $\OPTLP(X)$, and consider the epoch corresponding to $g$. The cost incurred by the fractional solution produced by $\fosc$ in any sub-phase $(g,\widehat{k})$ within this epoch is at most $g\cdot O(\alpha)$. Indeed, the fractional optimum is at most $2g$ throughout this sub-phase, and the fractional algorithm is $\alpha$-competitive. Thus, by \Cref{lem:sc-rounding}, the expected cost incurred by the rounded online solution in sub-phase $(g,\widehat{k})$ is $g\cdot O(\alpha\log(\widehat{k}/p))$. We consider two cases, based on the number of sub-phases in the epoch.

\begin{itemize}[nosep]
    \item Suppose the epoch corresponding to the guess $g$ has a single sub-phase, with $\widehat{k}=3$. The expected cost incurred by the rounded online solution within this sub-phase is $g\cdot O(\alpha\log(3/p))=g\cdot O(\alpha\log(k/p))$. Since $g$ increases geometrically across epochs, the expected cost incurred by the rounded online solution across all single-sub-phase epochs is at most $\OPT(X)\cdot O(\alpha\log(k/p))$.

    \item Suppose the epoch corresponding to the guess $g$ has at least two sub-phases. Let $\widehat{k}_{\max}$ be the maximum value of $\widehat{k}$ in this epoch. If $\widehat{k}\neq \widehat{k}_{\max}$, then the cost incurred by $\BACKUP$ in sub-phase $(g,\widehat{k})$ must exceed $g\cdot \alpha\ln(\widehat{k}/p)$, as this sub-phase is not the final sub-phase in the epoch. If $\widehat{k}=\widehat{k}_{\max}$, then the penultimate sub-phase has parameter $\widehat{k}'$ satisfying $\widehat{k}_{\max}=(\widehat{k}')^2/p$. Therefore, $\ln(\widehat{k}'/p)=\frac{1}{2}\ln(\widehat{k}_{\max}/p)$. Since the penultimate sub-phase ended, the cost incurred by $\BACKUP$ in that sub-phase must exceed $g\cdot \alpha\ln(\widehat{k}_{\max}/p)/2$. Thus, the expected cost incurred by the rounded online solution across all epochs with at least two sub-phases is at most a constant times the expected cost incurred by $\BACKUP$ over the whole algorithm.
\end{itemize}
Combining the two cases proves the lemma.
\end{proof}

We classify a sub-phase $(g,\widehat{k})$ as \emph{early} if $\widehat{k}<k$, and \emph{late} otherwise. Note that the value $\ln(\widehat{k}/p)$ doubles when we start a new sub-phase inside an epoch, since we replace $\widehat{k}$ by $\widehat{k}^2/p$.

\begin{lemma}\label{earlybackup-guess-double}
The expected cost incurred by $\BACKUP$ during early sub-phases $(\widehat{k}<k)$ is at most $\OPT(X)\cdot O(\alpha\log(k/p))$.
\end{lemma}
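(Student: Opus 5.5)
The bound will come from a purely deterministic charging argument over the sub-phases of each epoch. The only randomness enters through the fact that $g$ is computed from the fractional optimum of elements seen so far, which is at most $\OPTLP(X)\le \OPT(X)$.

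Fix an epoch with guess $g$ and list its sub-phases in order, $(g,\widehat{k}_1),(g,\widehat{k}_2),\ldots$, with $\widehat{k}_1=3$ and $\widehat{k}_{j+1}=\widehat{k}_j^2/p$. Then $\ln(\widehat{k}_{j+1}/p)=2\ln(\widehat{k}_j/p)$. By the rule in step 2(d), $\BACKUP$'s spending in sub-phase $(g,\widehat{k}_j)$ reaches at most $g\cdot\alpha\ln(\widehat{k}_j/p)$ before the sub-phase ends. The step that crosses the threshold adds at most one further backup set, which I would bound by $\OPT(X)$ via $\cost(\backup(x))\le \OPT(X)$.

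\textbf{Cost of one epoch.} The early sub-phases ($\widehat{k}_j<k$) of the epoch have geometrically increasing thresholds, so their total is dominated by the last early one. That last threshold is $g\cdot\alpha\ln(\widehat{k}_j/p)<g\cdot\alpha\ln(k/p)$. The early backup cost in the epoch is therefore at most $2g\alpha\ln(k/p)$ plus one overshoot set per early sub-phase.

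\textbf{Summing over epochs.} The guesses $g$ at least double from one epoch to the next and never exceed $\OPTLP(X)\le \OPT(X)$. Summing gives $O(\alpha\log(k/p))\cdot\OPT(X)$ for the threshold terms.

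\textbf{Main obstacle: the overshoot sets.} The number of early sub-phases per epoch is $O(\log\log(k/p))$. Naively charging $\OPT(X)$ per overshoot set, summed over epochs whose count could be as large as $\log(\OPT/g_{\min})$, is too crude. My first attempt would be to bound the overshoot set by $g$ rather than by $\OPT$. The element $x$ causing it has been seen, so the fractional optimum on the elements seen so far is at least $y$-weighted cover of $x$. Hence $\cost(\backup(x))\le$ the current fractional optimum $\le 2g$ within the epoch, since $g$ is recomputed only when this optimum doubles. Each overshoot is then $O(g)$. Over $O(\log\log(k/p))\le O(\log(k/p))$ early sub-phases per epoch and geometric $g$, this again sums to $O(\alpha\log(k/p))\cdot\OPT(X)$, using $\alpha\ge1$.

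Finally, take expectations; all the bounds above hold pointwise, so expectation changes nothing.
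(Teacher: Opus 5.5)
Your proof is correct and follows the paper's argument. In each early sub-phase, the backup cost is the threshold $g\cdot\alpha\ln(\widehat{k}/p)$ plus one overshoot step, and that step is bounded by the current fractional optimum, which is at most $2g$; the bound then follows from geometric sums over $\ln(\widehat{k}/p)$ within an epoch and over $g$ across epochs. The only cosmetic difference is that the paper absorbs the $2g$ overshoot directly into each sub-phase's bound $g\cdot O(\alpha\log(\widehat{k}/p))$, using $\alpha\ln(\widehat{k}/p)\ge \ln 6$, so it never needs to count the $O(\log\log(k/p))$ early sub-phases per epoch.
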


\begin{proof}
Consider any early sub-phase $(g,\widehat{k})$, where $\widehat{k}<k$. The cost incurred by $\BACKUP$ in this sub-phase, excluding the last time step in the sub-phase, is at most $g\cdot \alpha\log(\widehat{k}/p)$ by the sub-phase-ending rule. The cost incurred by $\BACKUP$ in a single time step is at most $2g$ within sub-phase $(g,\widehat{k})$. This is because the fractional optimum within the sub-phase does not exceed $2g$, and $\BACKUP$ buys the cheapest set to cover any element passed to it. Thus, the cost incurred by $\BACKUP$ in an early sub-phase $(g,\widehat{k})$ is at most $g\cdot O(\alpha\log(\widehat{k}/p))$. Since $g$ grows geometrically across epochs and $\ln(\widehat{k}/p)$ grows geometrically across early sub-phases within an epoch, the expected cost incurred by $\BACKUP$ across early sub-phases is at most $\OPT(X)\cdot O(\alpha\log(k/p))$.
\end{proof}

We have the following bound on the expected cost of $\BACKUP$ in late sub-phases, if the sets bought by $\rosc$ on the clean prefix of $\csamples$ have been added to the maintained solution.

\begin{lemma}\label{latebackup-guess-double}
Suppose all sets in $\cC^\rosc_{\clean}$ are already contained in the maintained solution. Then, the expected cost incurred by $\BACKUP$ during late sub-phases $(\widehat{k}\geq k)$ of any suffix of the online sequence $X$ is at most $\OPT(X) \cdot O(\Delta)$.
\end{lemma}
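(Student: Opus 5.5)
Under the hypothesis, every element $x$ handed to $\BACKUP$ during the suffix must lie outside $\cover(\cC^\rosc_{\clean})$. So $x\in \unc^\rosc$, or $x$ is one of the at most $k$ corrupted/sampled positions; in fact every online element lies in $\univ\setminus\samples$, so $x\in\unc^\rosc$. The plan is to mirror the second bullet in the proof of \Cref{thm:set-corruptions}. An element $x\in\unc^\rosc$ triggers a backup purchase only if, at its arrival time, it is not covered by $\cC^{\rfosc}_t$. During a late sub-phase the rounding of \Cref{lem:sc-rounding} is run with boosting parameter $\ln(\widehat{k}/p)$ where $\widehat{k}\ge k$. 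By property (ii), $x$ is then left uncovered with probability at most $p/\widehat{k}\le p/k$.

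The key steps, in order, are as follows.
\begin{enumerate}
\item Write the late backup cost as $\sum_{x\in\unc^\rosc}\cost(\backup(x))\cdot \ind[x \text{ arrives in a late sub-phase and is missed by } \rfosc]$.
\item Condition on the sample, the permutation, and the randomness of $\rosc$. This fixes $\unc^\rosc$ and the costs $\cost(\backup(x))$, which depend only on the set system and not on the algorithm's later choices.
\item Bound the conditional miss probability of each such $x$ by $p/k$.
\item Take expectations and apply \Cref{lem:scwc_unc1}, which gives $\expect*{\sum_{x\in\unc^\rosc}\cost(\backup(x))}\le O(k\Delta/p)\cdot\OPT$.
\end{enumerate}
The product is $O(\Delta)\cdot\OPT(X)$, as claimed.

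The main obstacle is step 3. Whether $x$ arrives in a late sub-phase, and which $\widehat{k}$ is in force at that moment, depends on earlier backup costs. Those in turn depend on the randomness of the rounding, so the event "late" is adaptive. Moreover, the rounding restarts across sub-phases, and \Cref{lem:sc-rounding} is stated for a single run with a fixed $\beta$. I would handle this as follows. Within a sub-phase, the threshold-based rounding of \cite{buchbinder2009design} uses fresh independent thresholds (one per set, $\ln(\widehat{k}/p)$ copies). The event that $x$ is missed at its arrival depends only on these thresholds and on the fractional values $y$, and $y$ is a deterministic function of the arrival sequence. Conditioned on the entire history before $x$ arrives, including the current $\widehat{k}$, the probability of missing $x$ is therefore at most $\exp(-\ln(\widehat k/p))=p/\widehat{k}$. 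Conditioning on the past does not increase it: the thresholds are fresh, and even when they are shared within a sub-phase, the earlier history only reveals which sets were bought, which can only help cover $x$.

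Summing these conditional bounds over $x$, with the weight $\ind[\text{late}]\le 1$, gives the claimed inequality by the tower rule. If per-sub-phase thresholds are shared, I would instead invoke (ii) of \Cref{lem:sc-rounding} at the time $x$ arrives, applied to the run of the current sub-phase. I would then argue that the miss event for $x$ is measurable with respect to that run's thresholds. Those thresholds are independent of the decision of when the sub-phase began, since that decision depends only on earlier sub-phases.
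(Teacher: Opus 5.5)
Your main line matches the paper's proof. Once the hypothesis holds, an element that reaches $\BACKUP$ in a late sub-phase lies in $\unc^\rosc$ and is missed by $\rfosc$. That happens with probability at most $p/\widehat{k}\le p/k$ by property (ii) of \Cref{lem:sc-rounding}, and multiplying by the $O(k\Delta/p)\cdot\OPT(X)$ bound of \Cref{lem:scwc_unc1} gives $O(\Delta)\cdot\OPT(X)$. The paper's proof consists of exactly these steps and cites (ii) directly for the per-element bound. Your step 1 should read $\le$ rather than $=$, since $x$ may also be covered by earlier backup sets or by the $\rosc$ prefix.

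What needs fixing is the extra justification you give for step 3, which is false as stated. Property (ii) is a marginal statement, and it does not survive conditioning on the whole history before $x$ arrives, because that history also reveals which sets were \emph{not} bought. Take shared thresholds $\theta_F$ (each the minimum of $\lambda$ uniforms) and let $y_F(t)$ be the fractional value of $F$ at time $t$. Then $\Pr[\theta_F>y_F(t)\mid \theta_F>y_F(t-1)]=\bigl((1-y_F(t))/(1-y_F(t-1))\bigr)^{\lambda}\ge (1-y_F(t))^{\lambda}$. In the extreme case, $x$ is already fractionally covered when it arrives (so $\fosc$ raises nothing) and none of its sets has been bought; then $x$ is missed with conditional probability $1$. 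So the claim that conditioning on the past does not increase the miss probability fails whenever thresholds are shared within a sub-phase, as in the rounding you describe.

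Your fallback, conditioning only on the state at the start of the current sub-phase where that sub-phase's thresholds are fresh, is the right kind of statement. It still does not license the multiplication, though. Whether $x$ falls in that sub-phase at all depends on that sub-phase's own thresholds, since the sub-phase ends as soon as its backup cost exceeds $g\cdot\alpha\ln(\widehat{k}/p)$. The paper does not address this adaptivity either; it simply asserts the $p/k$ bound per element. So either drop the conditioning argument and cite (ii) as the paper does, or supply an argument that controls the correlation between the sub-phase boundary and the miss event. As written, your step 3 rests on an incorrect claim.
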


\begin{proof}
By assumption, all sets in $\cC^\rosc_{\clean}$ are already contained in the maintained solution throughout the suffix. Therefore, in this suffix, an element is passed to $\BACKUP$ in a late sub-phase only if it is in $\unc^\rosc$ and is also missed by the rounded online solution.

In any late sub-phase, the rounding parameter is $\ln(\widehat{k}/p)\geq \ln(k/p)$. Hence, by \Cref{lem:sc-rounding}, the probability that an incoming element is uncovered by the rounded online solution is at most $p/\widehat{k}\leq p/k$. By \Cref{lem:scwc_unc1}, if every incoming element left uncovered by the clean prefix of $\rosc$ were passed to $\BACKUP$, then the expected cost incurred by $\BACKUP$ would be $\OPT(X)\cdot O((k/p)\Delta)$. Multiplying this bound by the additional probability $p/k$ that the rounded online solution also misses the element, the expected cost incurred by $\BACKUP$ in late sub-phases during the suffix is at most $\frac{p}{k}\cdot \OPT(X)\cdot O((k/p)\Delta)=\OPT(X)\cdot O(\Delta)$.
\end{proof}

We are now ready to prove the theorem.

\begin{theorem}\label{thm:totalcost-guess-double}
The expected cost incurred by the algorithm is $\OPT(X)\cdot O(\Delta+\alpha\log(k/p))$.
\end{theorem}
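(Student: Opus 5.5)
The total cost splits into three parts: $\cost(\cC^{\rfosc})$, $\cost(\cC^\bk)$, and $\cost(\cC^\rosc)$. By the invariant \eqref{eq:ell-guess-double}, $\cost(\cC^\rosc)\le \cost(\cC^{\rfosc})+\cost(\cC^\bk)$, and by \Cref{onlinecost-guess-double} the expected cost of $\cC^{\rfosc}$ is $\OPT(X)\cdot O(\alpha\log(k/p))+O(D)$, where $D$ is the expected total backup cost. So it suffices to show $D=\OPT(X)\cdot O(\Delta+\alpha\log(k/p))$. We split $D$ into backup cost during early sub-phases and during late sub-phases. The early part is bounded directly by \Cref{earlybackup-guess-double} as $\OPT(X)\cdot O(\alpha\log(k/p))$. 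It remains to bound the backup cost during late sub-phases.

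For the late part we reuse the critical-step argument from \Cref{thm:set-corruptions}. Define
\[
t^*:=\min\{t\mid \cost(\cC^{\rfosc}_t)+\cost(\cC^\bk_t)\ge \cost(\cC^\rosc_{\clean})\},
\]
with $t^*=\infty$ if the set is empty. The analysis then splits at $t^*$.

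\textbf{Before $t^*$.} For all $t<t^*$ we have $\cost(\cC^\bk_{t})\le \cost(\cC^\rosc_{\clean})$. The step $t^*$ itself adds at most one backup set, of cost at most $\OPT(X)$, since the backup set of an element is the cheapest set containing it. So the backup cost up to and including $t^*$ is at most $\cost(\cC^\rosc_{\clean})+\OPT(X)$. Conditioned on the clean prefix, its order is uniformly random, so the $\rosc$ guarantee gives an expectation of $(\Delta+1)\OPT(X)$.

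\textbf{After $t^*$.} By the update rule in \eqref{eq:ell-guess-double}, $\ell_t\ge \crpt-1$ for all $t\ge t^*$, so every set of $\cC^\rosc_{\clean}$ is in the maintained solution throughout the suffix. Applying \Cref{latebackup-guess-double} to that suffix bounds the late-sub-phase backup cost there by $\OPT(X)\cdot O(\Delta)$. Backup cost after $t^*$ in early sub-phases is already covered by \Cref{earlybackup-guess-double}.

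Summing the pieces gives $D=\OPT(X)\cdot O(\Delta+\alpha\log(k/p))$. Substituting into \Cref{onlinecost-guess-double} and the $\rosc$ invariant yields the theorem.

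The main obstacle is making the suffix application of \Cref{latebackup-guess-double} legitimate, because $t^*$ is a random stopping time that depends on the rounding randomness and the online history. I would handle this as in the main body. The bound of \Cref{lem:scwc_unc1} depends only on the sample and the permutation. The per-element miss probability $p/\widehat k\le p/k$ from \Cref{lem:sc-rounding} holds for every element regardless of when it arrives. Charging each element of $\unc^\rosc$ at most once, with probability $p/k$, therefore gives the bound without any conditioning on $t^*$.

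A second technical point is that \Cref{lem:sc-rounding} is applied across sub-phases with changing boosting parameters. Since $\widehat k$ only grows within an epoch, the miss probability in a late sub-phase is at most $p/k$ at every step, which is all the argument requires.
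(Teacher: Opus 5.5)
Your proposal is correct and follows the paper's proof essentially step for step. Both use the invariant on $\ell_t$ to control $\cC^\rosc$, \Cref{onlinecost-guess-double} to reduce everything to the backup cost, and the critical step $t^*$ together with \Cref{earlybackup-guess-double} and \Cref{latebackup-guess-double} to bound that backup cost. The differences are cosmetic: you split into early and late sub-phases before splitting at $t^*$, whereas the paper splits at $t^*$ first, and your discussion of the random stopping time spells out a point the paper leaves implicit.
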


\begin{proof}
We first bound the expected cost incurred by $\BACKUP$. Define the \emph{critical} time step
\begin{equation}\label{eq:setcovert*-guess-double}
t^*
:=
\min\left\{
t
\mid
\cost(\cC^{\rfosc}_t)+\cost(\cC^\bk_t)
\geq
\cost(\cC^\rosc_{\clean})
\right\}.
\end{equation}
If the set is vacuous, set $t^*=\infty$.

If $t^*=\infty$, then $\cost(\cC^\bk)\leq \cost(\cC^{\rfosc})+\cost(\cC^\bk)<\cost(\cC^\rosc_{\clean})$. Using the guarantee of $\rosc$ on the clean prefix, we get $\expect*{\cost(\cC^\bk)}\leq O(\Delta)\cdot \OPT(X)$.

Otherwise, $t^*$ is finite. By definition of $t^*$, $\cost(\cC^{\rfosc}_{t^*-1})+\cost(\cC^\bk_{t^*-1})<\cost(\cC^\rosc_{\clean})$. In particular, $\cost(\cC^\bk_{t^*-1})<\cost(\cC^\rosc_{\clean})$. If a backup set is added at time $t^*$, its cost is at most $\OPT(X)$. Therefore, the expected backup cost up to and including time $t^*$ is at most $O(\Delta)\cdot \OPT(X)$.

After time $t^*$, the invariant defining $\ell_t$ ensures that $\ell_t\geq \crpt-1$, and hence all sets in $\cC^\rosc_{\clean}$ are contained in the maintained solution. The expected backup cost during early sub-phases is at most $\OPT(X)\cdot O(\alpha\log(k/p))$ by \Cref{earlybackup-guess-double}. The expected backup cost during late sub-phases after time $t^*$ is at most $\OPT(X)\cdot O(\Delta)$ by \Cref{latebackup-guess-double}. Thus, $\expect*{\cost(\cC^\bk)}\leq \OPT(X)\cdot O(\Delta+\alpha\log(k/p))$.

Next, by \Cref{onlinecost-guess-double}, the expected cost incurred by the rounded online solution is at most $\OPT(X)\cdot O(\alpha\log(k/p))+O(\expect*{\cost(\cC^\bk)})\leq \OPT(X)\cdot O(\Delta+\alpha\log(k/p))$.

Finally, by the invariant in the definition of $\ell_t$, the cost of the sets added from the simulated $\rosc$ run is at most $\cost(\cC^{\rfosc})+\cost(\cC^\bk)$. Therefore, the total expected cost incurred by the algorithm is $\OPT(X)\cdot O(\Delta+\alpha\log(k/p))$.
\end{proof}

\section{Online Covering Integer Programs}\label{sec:cip}Let $X=\{a^1,a^2,\ldots,a^n\}$ be a collection of $n$ covering constraints over $m$ columns. Each column $j\in[m]$ is equipped with a nonnegative cost $c_j$. Each constraint $a\in X$ is a vector in $[0,1]^m$. For an integral solution $x\in\mathbb{Z}_{\geq 0}^m$, we say that $x$ satisfies $a$ if $\inp{a,x}\geq 1$, and define $\cost(x)=\sum_{j=1}^m c_jx_j$. The goal in covering integer programming is to find an integral solution $x\in\mathbb{Z}_{\geq 0}^m$ satisfying every constraint in $X$ while minimizing $\cost(x)$.

For a set of constraints $Y\subseteq X$, let $\OPT(Y)$ denote the minimum cost integral solution satisfying all constraints in $Y$, and let $\OPTLP(Y)$ denote the corresponding fractional optimum.

In the \emph{online} version, $X$ is initially unknown and its constraints are revealed one at a time. When a constraint $a$ arrives, the algorithm must monotonically increase its current integral solution so that $a$ is satisfied. The goal is to minimize the total cost of the final integral solution.

In the \emph{$p$-sample model}, the algorithm samples a set $\samples$ of constraints uniformly at random from all subsets of size $s=pn$, before the online sequence begins. Then the online input $X$ arrives in arbitrary order.

In the \emph{$(p,k)$-adversarially robust model}, the adversary is allowed to modify any $k$ constraints in $\samples$ after they are sampled, but before they are considered by the algorithm; we use $\csamples$ to denote the corrupted sample.

For the algorithms below, we use the following backup notation. For a current integral solution $x$ and a constraint $a$ not satisfied by $x$, let $\backargs{a}{x}$ denote a minimum-cost nonnegative integral vector $z$ such that $\inp{a,x+z}\geq 1$. If $a$ is already satisfied by $x$, we define $\backargs{a}{x}=0$. We use the monotonicity property that if $x'\geq x$ coordinate-wise, then $\cost(\backargs{a}{x'})\leq \cost(\backargs{a}{x})$.

We will also use that for every constraint $a$ and every fractional or integral feasible solution $x^*$ satisfying $a$, we have $\cost(\backargs{a}{0})\leq O(\cost(x^*))$. Indeed, since $\inp{a,x^*}\geq 1$, some coordinate $j$ with $a_j>0$ has $c_j/a_j\leq \cost(x^*)$. Buying $\ceil{1/a_j}$ copies of column $j$ gives cost at most $O(\cost(x^*))$.

We use the following rounding theorem for covering integer programs, which appears in \cite{gupta14}.

\begin{lemma}[CIP rounding {\cite[Section 4.2]{gupta14}}]
\label{lem:cip-rounding}
Let $\cA$ be a fractional online covering integer programming algorithm that takes a constraint sequence $a^1,a^2,\ldots$ as input and produces coordinate-wise nondecreasing fractional solutions $x^1\leq x^2\leq\ldots$, with competitive ratio $\alpha$. There is an online rounding algorithm $\mathsf{Round}$ that takes $x^1\leq x^2\leq\ldots$ as input and a parameter $\beta\in(0,1]$, and produces coordinate-wise nondecreasing integral solutions $z^1\leq z^2\leq\ldots$ with the following properties: for every step $t$, $\expect*{\cost(z^t)}\leq O(\alpha\log\nicefrac1\beta)\cdot \OPTLP(\{a^1,\ldots,a^t\})$, and for every $t'\leq t$, $\Pr[\inp{a^{t'},z^t}<1]\leq \beta$.
\end{lemma}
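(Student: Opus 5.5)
The statement is the CIP analogue of \Cref{lem:sc-rounding}, cited from \cite[Section 4.2]{gupta14}. I plan to prove it by reduction to the standard online randomized rounding with independent thresholds, after first putting each constraint in a form where a single rounded coordinate certifies feasibility.

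\textbf{Normalization and rounding rule.} For each coordinate $j$ I write the fractional value as $x_j^t = \lfloor x_j^t\rfloor + \{x_j^t\}$. I cap the coefficients at $\min(a_j, 1)$; since $a\in[0,1]^m$ they are already capped. The online algorithm does two things for every coordinate $j$:
\begin{itemize}
\item it always buys $\lfloor x^t_j\rfloor$ copies deterministically, which costs at most $\cost(x^t)$;
\item it draws $\lambda = \Theta(\log \nicefrac1\beta)$ independent uniform thresholds $\theta_{j,1},\ldots,\theta_{j,\lambda}\in[0,1]$ once, at the start, and buys one extra copy of $j$ as soon as $\min_r \theta_{j,r} \le \{x_j^t\}$, or more simply once $\{x_j^t\}$ crosses any threshold.
\end{itemize}
Monotonicity of $z^t$ follows because $x^t$ is nondecreasing. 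If the fractional part wraps around when the integer part increases, I handle it by charging that step to the deterministic copy.

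\textbf{Cost bound.} By a union bound over the $\lambda$ thresholds, each extra copy is bought with probability at most $\lambda \{x_j\}$. The expected cost is therefore at most $(1+\lambda)\cost(x^t) \le O(\alpha\log\nicefrac1\beta)\cdot\OPTLP(\{a^1,\ldots,a^t\})$.

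\textbf{Feasibility (main obstacle).} Fix $t'\le t$ and suppose the deterministic part does not already satisfy $a^{t'}$. The residual demand is $1-\inp{a^{t'},\lfloor x^t\rfloor}$, and the fractional parts cover it: $\sum_j a_j\{x_j\}\ge 1-\inp{a,\lfloor x\rfloor}=:r$. The difficulty is that one extra copy of $j$ contributes only $a_j$, which may be much smaller than $r$. Independent rounding therefore gives only a sum of independent bounded variables, and its expectation may be just $r$ rather than $\lambda r$. I would fix this with the standard knapsack-cover preprocessing of \cite{gupta14}: modify the fractional algorithm so that it satisfies the knapsack-cover inequalities $\sum_{j\notin A}\min(a_j, r_A)\,x_j \ge r_A$ for the current set $A$ of saturated coordinates. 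This changes $\alpha$ only by a constant in their framework.

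With coefficients truncated to $r$, each term $\min(a_j,r)/r\in[0,1]$. The event that coordinate $j$ gets an extra copy has probability at least $\min(1, \lambda\{x_j\})/2$, and these events are independent across $j$. The expected truncated coverage is then at least $\Omega(\lambda)$ times the demand, and a Chernoff bound for sums of independent $[0,1]$ variables gives failure probability $e^{-\Omega(\lambda)}\le\beta$ for a suitable constant in $\lambda$.

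The delicate point is that $A$, and hence the truncation, depends on the current integral state. I would therefore apply the knapsack-cover inequality with $A$ equal to the coordinates whose deterministic floors are positive. Since the floors are deterministic given $x^t$, the thresholds on the remaining coordinates are still independent, and the Chernoff step is valid.
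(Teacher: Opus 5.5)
The paper does not prove this lemma; it imports it from \cite[Section 4.2]{gupta14}. Your proposal has a genuine gap in the feasibility step. Your rule never buys more than $\lfloor x_j^t\rfloor+1$ copies of column $j$, so all of the $\Theta(\log\nicefrac1\beta)$ boosting acts only on fractional parts. The residual demand therefore becomes a $0/1$ knapsack-cover problem, and the single knapsack-cover inequality you impose does not make independent rounding of it succeed.

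Concretely, take $a=(1,\,1-\epsilon/2)$, $c=(1,1)$ and $x=\bigl(\epsilon,\ \tfrac{1-\epsilon}{1-\epsilon/2}\bigr)$. This $x$ is feasible ($\inp{a,x}=1$), costs at most $(1+\epsilon)$ times $\OPTLP=1$, and has all floors equal to $0$. It satisfies your inequality with $A=\emptyset$ and $r_A=1$. Yet under your rule $z\in\{0,1\}^2$, so $\inp{a,z}\ge1$ forces $z_1=1$, which happens with probability $1-(1-\epsilon)^\lambda\le\lambda\epsilon$. The failure probability therefore tends to $1$ as $\epsilon\to0$.

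The faulty step is ``the expected truncated coverage is then at least $\Omega(\lambda)$ times the demand.'' Once $\lambda\{x_j\}\ge1$, the purchase probability saturates at $1$; in this example the expected truncated coverage is only about $1+\lambda\epsilon$. Separately, ``modify the fractional algorithm'' is not available to you. The lemma must round the output of an arbitrary $\alpha$-competitive $\cA$, and the constant-factor claim you cite refers to the specific knapsack-cover algorithm of \cite{gupta14}, not to a black box.

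The missing observation is that the CIPs in this paper have no upper bounds on $x$, and $a\in[0,1]^m$. Knapsack-cover inequalities are needed in \cite{gupta14} only because of box constraints $x\le u$. Without such constraints you can scale the whole solution. Draw one $U_j\sim\mathrm{Unif}[0,1)$ per column at the start and set $z_j^t=\lfloor\lambda x_j^t+U_j\rfloor$ with $\lambda=\Theta(\log\nicefrac1\beta)$. This rounding has the required properties:
\begin{itemize}
\item Monotonicity: $z^t$ is nondecreasing in $t$ because $x^t$ is.
\item Cost: $\E[z_j^t]=\lambda x_j^t$, so $\E[\cost(z^t)]=\lambda\cost(x^t)\le\lambda\alpha\,\OPTLP(\{a^1,\ldots,a^t\})$.
\item Feasibility: for each $t'\le t$, write $\inp{a^{t'},z^t}=D+\sum_j a^{t'}_jB_j$, where $D=\inp{a^{t'},\lfloor\lambda x^t\rfloor}$ is deterministic and the $B_j\sim\mathrm{Bern}(\{\lambda x_j^t\})$ are independent across $j$. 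If $D<1$, the random part is a sum of independent $[0,1]$-valued terms with mean $\inp{a^{t'},\lambda x^t}-D\ge\lambda-1$. A multiplicative Chernoff bound then gives $\Pr[\inp{a^{t'},z^t}<1]\le e^{-\Omega(\lambda)}\le\beta$.
\end{itemize}
Allowing many extra copies of a column is exactly what removes the knapsack-cover obstruction. In the example above, $\lfloor\lambda x_2\rfloor\ge2$ already for $\lambda=3$ and $\epsilon\le\nicefrac13$, so the constraint is covered deterministically.
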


We refer to the not necessarily feasible integral online covering integer program algorithm $\overline{\cA}=\mathsf{Round}(\cA,\beta)$ as \emph{rounding} $\cA$ with \emph{boosting parameter} $\log\nicefrac1\beta$.

\subsection{$p$-Sample Model}
\label{sec:cipsample}

As a warm up, we begin with the no-corruptions case. This algorithm assumes access to a random-order online covering integer programming algorithm $\rocip$, with competitive ratio $\Delta$, and a fractional online covering integer programming algorithm $\fcip$, with competitive ratio $\alpha$.

The algorithm has the following phases.
\begin{enumerate}[nosep]
    \item In the offline phase, run $\rocip$ on the randomly permuted sample $\samples$. Let $x^\rocip_\samples$ be the integral solution obtained. Let $x^{\rfcip}_0=x^\bk_0=0$.

    \item In the online phase, let $a$ be the constraint arriving at step $t$.
    \begin{enumerate}[nosep]
        \item If $\inp{a,x^\rocip_\samples}\geq 1$, then set $x^{\rfcip}_t=x^{\rfcip}_{t-1}$ and $x^\bk_t=x^\bk_{t-1}$, and skip.

        \item Apply $\fcip$ and round the fractional solution online with boosting parameter $\log\nicefrac1p$, using \Cref{lem:cip-rounding}, to obtain an integral solution $x^{\rfcip}_t$.

        \item If $\inp{a,x^\rocip_\samples+x^{\rfcip}_t+x^\bk_{t-1}}<1$, then set $x^\bk_t=x^\bk_{t-1}+\backargs{a}{x^\rocip_\samples+x^{\rfcip}_t+x^\bk_{t-1}}$. Otherwise, set $x^\bk_t=x^\bk_{t-1}$.
    \end{enumerate}
\end{enumerate}
The solution at time $t$ is $x^\rocip_\samples+x^{\rfcip}_t+x^\bk_t$. Let $x^{\rfcip}$ and $x^\bk$ denote the final values of $x^{\rfcip}_t$ and $x^\bk_t$, respectively. The final solution is $x^\rocip_\samples+x^{\rfcip}+x^\bk$.

\subsubsection{Analysis}

To analyze the algorithm, we first prove the analog of \Cref{lem:backup_costs}. Let $\unc^\rocip=\{a\in X\setminus\samples\mid \inp{a,x^\rocip_\samples}<1\}$ be the constraints not satisfied by the random-order solution on the sample.

\begin{lemma}
\label{lem:cip-backup-costs}
It holds that
$\expect*{\sum_{a\in \unc^\rocip}\cost(\backargs{a}{x^\rocip_\samples})}
\leq
O\left(\frac{\Delta}{p}\right)\cdot \OPT(X)$.
\end{lemma}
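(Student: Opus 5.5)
We mirror the proof of \Cref{lem:backup_costs}, replacing the set-cover backup cost by the CIP backup cost. Let $\pi=(\pi_1,\ldots,\pi_n)$ be a uniformly random permutation of $X$, let $\samples$ be its first $s=pn$ constraints, and let $x^\rocip_i$ denote the (coordinate-wise nondecreasing) integral solution of $\rocip$ after processing $\pi_1,\ldots,\pi_i$, so that $x^\rocip_\samples=x^\rocip_s$. Define $b_x(a):=\cost(\backargs{a}{x})$. This is zero whenever $a$ is satisfied by $x$, so the quantity to bound equals $\expect*{\sum_{i=s+1}^n b_{x^\rocip_s}(\pi_i)}$.

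\textbf{Step 1 (averaging over the prefix).} By the stated monotonicity of backup costs, $b_{x^\rocip_s}(\pi_i)\le b_{x^\rocip_{j-1}}(\pi_i)$ for every $j\le s$. Averaging over $j=1,\ldots,s$ gives
\[
\expect*{\sum_{i>s} b_{x^\rocip_s}(\pi_i)}\le\sum_{i>s}\frac1s\sum_{j=1}^s\expect*{b_{x^\rocip_{j-1}}(\pi_i)}.
\]

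\textbf{Step 2 (exchangeability).} Condition on $\pi_{<j}$ and on the internal randomness of $\rocip$. Then $x^\rocip_{j-1}$ is fixed, while $\pi_i$ and $\pi_j$ are both uniform over $X\setminus\pi_{<j}$. Hence $\expect*{b_{x^\rocip_{j-1}}(\pi_i)}=\expect*{b_{x^\rocip_{j-1}}(\pi_j)}$, and the right-hand side above is at most $\frac{n-s}{s}\sum_{j=1}^s\expect*{b_{x^\rocip_{j-1}}(\pi_j)}\le \frac1p\sum_{j}\expect*{b_{x^\rocip_{j-1}}(\pi_j)}$.

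\textbf{Step 3 (charging to $\rocip$).} This step needs care, because CIP differs from set cover here. In set cover, $\cost(\cC^\rosc_\samples)\ge\sum_j b(\pi_j)$ holds trivially, since each uncovered element forces at least one new set. For CIP, the increment $x^\rocip_j-x^\rocip_{j-1}$ does make $\pi_j$ satisfied, but only jointly with $x^\rocip_{j-1}$. The increment $z=x^\rocip_j-x^\rocip_{j-1}$ is a nonnegative integral vector with $\inp{\pi_j,x^\rocip_{j-1}+z}\ge1$. By the minimality in the definition of $\backargs{\pi_j}{x^\rocip_{j-1}}$, we get $b_{x^\rocip_{j-1}}(\pi_j)\le\cost(z)$. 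Telescoping then yields $\sum_j b_{x^\rocip_{j-1}}(\pi_j)\le\cost(x^\rocip_s)$, so the stronger relative definition of backup makes this step go through exactly as before.

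If instead the $\rocip$ algorithm is only assumed to output some solution and not to be monotone, we would fall back on the fact stated earlier: $b_{x}(a)\le\cost(\backargs{a}{0})\le O(\cost(x^*))$ for any feasible $x^*$ satisfying $a$. This loses only a constant factor, which is consistent with the $O(\cdot)$ in the statement.

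\textbf{Conclusion.} Finally, $\expect*{\cost(x^\rocip_s)}\le\Delta\,\expect*{\OPT(\samples)}\le\Delta\,\OPT(X)$, since $\rocip$ sees $\samples$ in uniformly random order and $\OPT(\samples)\le\OPT(X)$. Combining the three steps gives the claimed $O(\Delta/p)\cdot\OPT(X)$ bound. The only delicate point is the charging inequality in Step 3; everything else transfers verbatim from \Cref{lem:backup_costs}.
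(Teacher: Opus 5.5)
Your proof is correct and is essentially the paper's argument: monotonicity of the backup cost lets you average over the prefix, $\pi_i$ and $\pi_j$ are exchangeable given $\pi_{<j}$, and each $b_{x^\rocip_{j-1}}(\pi_j)$ is charged to the cost of the round-$j$ increment of $\rocip$. Your minimality argument is a more explicit version of the paper's ``$\rocip$ must increase its solution enough to satisfy $\pi_j$.''

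You should drop the aside about a non-monotone $\rocip$. Online CIP algorithms are monotone by definition here, and Step 1 already needs monotonicity. Moreover, the proposed fallback would not lose only a constant factor: bounding each $b_{x^\rocip_{j-1}}(\pi_j)$ by $O(\cost(x^\rocip_j))$ and summing over $j$ can cost a factor of $s$.
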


\begin{proof}
Let $\pi=(\pi_1,\ldots,\pi_n)$ be a uniformly chosen random permutation of $X$, and let $\samples$ denote the first $s=pn$ constraints according to $\pi$. Since the constraints in $\samples$ are the input to $\rocip$, let $x^\rocip_i$ be its output on the prefix $(\pi_1,\ldots,\pi_i)$. Thus, $x^\rocip_\samples=x^\rocip_s$.

For simplicity, let $b_x(a)=\cost(\backargs{a}{x})$. We need to upper bound $\expect*{\sum_{a\in\unc^\rocip}\cost(\backargs{a}{x^\rocip_\samples})}=\expect*{\sum_{i=s+1}^n b_{x^\rocip_s}(\pi_i)}$. By monotonicity, if $j\leq s$, then $b_{x^\rocip_s}(\pi_i)\leq b_{x^\rocip_{j-1}}(\pi_i)$. Averaging this inequality over $j=1,\ldots,s$, taking expectations, and summing over $i=s+1,\ldots,n$, we obtain
\begin{equation}
\label{eq:cip-psample-average}
\expect*{\sum_{i=s+1}^n b_{x^\rocip_s}(\pi_i)}
\leq
\sum_{i=s+1}^n \frac{1}{s}\sum_{j=1}^s \expect*{b_{x^\rocip_{j-1}}(\pi_i)}.
\end{equation}
We claim that $\expect*{b_{x^\rocip_{j-1}}(\pi_i)}=\expect*{b_{x^\rocip_{j-1}}(\pi_j)}$. Indeed, conditioned on $(\pi_1,\ldots,\pi_{j-1})$ and the randomness of $\rocip$, the solution $x^\rocip_{j-1}$ is fixed, and $\pi_i$ and $\pi_j$ are both uniformly distributed over $X\setminus\{\pi_1,\ldots,\pi_{j-1}\}$. Also, the cost incurred by $\rocip$ in round $j$ is at least $b_{x^\rocip_{j-1}}(\pi_j)$, since $\rocip$ must increase its solution enough to satisfy $\pi_j$. Applying these observations to \eqref{eq:cip-psample-average}, we obtain
\begin{align*}
\expect*{\sum_{i=s+1}^n b_{x^\rocip_s}(\pi_i)}
&\leq
\frac{n-s}{s}\cdot \sum_{j=1}^s \expect*{b_{x^\rocip_{j-1}}(\pi_j)} \\
&\leq
\frac{1}{p}\cdot \expect*{\cost(x^\rocip_\samples)}
\leq
\frac{\Delta}{p}\cdot \expect*{\OPT(\samples)}
\leq
O\left(\frac{\Delta}{p}\right)\cdot \OPT(X).
\qedhere
\end{align*}
\end{proof}

\begin{theorem}
\label{thm:cip-psample}
If $\rocip$ is a random-order online covering integer programming algorithm with competitive ratio $\Delta$ and $\fcip$ is a fractional online covering integer programming algorithm with competitive ratio $\alpha$, then the algorithm above has expected competitive ratio $O(\Delta+\alpha\log\nicefrac1p)$.
\end{theorem}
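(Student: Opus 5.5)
The proof will mirror that of \Cref{thm:oxtsc_poly}, bounding separately the expected costs of $x^\rocip_\samples$, $x^{\rfcip}$ and $x^\bk$.

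\emph{Offline phase.} Since $\samples$ is presented in uniformly random order, the random-order guarantee gives $\expect*{\cost(x^\rocip_\samples)} \leq \Delta\cdot\expect*{\OPT(\samples)} \leq \Delta\cdot\OPT(X)$, using monotonicity of $\OPT$ under taking subsets of constraints.

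\emph{Rounded online phase.} By \Cref{lem:cip-rounding} with $\beta=p$, we get $\expect*{\cost(x^{\rfcip})} \leq O(\alpha\log\nicefrac1p)\cdot\OPTLP(X) \leq O(\alpha\log\nicefrac1p)\cdot\OPT(X)$. Feeding $\fcip$ only the constraints that reach phase 2(b) is harmless, since $\OPTLP$ of a subset is at most $\OPTLP(X)$.

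\emph{Backup phase.} This is the only step needing care. A constraint $a$ triggers a backup purchase only if $a\in\unc^\rocip$ and $\inp{a,x^{\rfcip}_t}<1$. In that event the backup increment is
\[
\backargs{a}{x^\rocip_\samples+x^{\rfcip}_t+x^\bk_{t-1}},
\]
and by the stated monotonicity of $\backup$ its cost is at most $\cost(\backargs{a}{x^\rocip_\samples})$. Hence
\[
\cost(x^\bk)\leq \sum_{a\in\unc^\rocip}\ind\bigl[\inp{a,x^{\rfcip}_t}<1\bigr]\cdot\cost\bigl(\backargs{a}{x^\rocip_\samples}\bigr).
\]
The remaining task is to factor the expectation of each summand into a probability times an expected cost. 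I would condition on $\samples$ and on the randomness of $\rocip$. This fixes $\unc^\rocip$ and all the backup costs $\cost(\backargs{a}{x^\rocip_\samples})$, as well as the subsequence of constraints passed to $\fcip$ (the adversarial order is fixed). The rounding randomness is independent of this conditioning, so \Cref{lem:cip-rounding} still bounds the miss probability of each such $a$ by $p$. The fractional solutions themselves can be treated as deterministic given the fixed input sequence.

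Taking expectations and applying \Cref{lem:cip-backup-costs} then gives
\[
\expect*{\cost(x^\bk)} \leq p\cdot O(\nicefrac{\Delta}{p})\cdot\OPT(X)=O(\Delta)\cdot\OPT(X).
\]
Summing the three bounds yields the competitive ratio $O(\Delta+\alpha\log\nicefrac1p)$.
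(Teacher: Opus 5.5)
Your proposal is correct and follows essentially the same route as the paper. Like the paper, you bound the three cost components separately, and for the backup term you combine the rounding lemma's miss probability $p$, monotonicity of $\backup$, and \Cref{lem:cip-backup-costs}; your explicit conditioning on $\samples$ and the randomness of $\rocip$, used to factor the expectation, spells out an independence step the paper leaves implicit.
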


\begin{proof}
We bound the expected cost of $x^\rocip_\samples+x^{\rfcip}+x^\bk$. The expected cost of $x^\rocip_\samples$ is at most $\Delta\cdot \expect*{\OPT(\samples)}\leq \Delta\cdot \OPT(X)$. By the guarantee of $\fcip$ and \Cref{lem:cip-rounding}, the expected cost of $x^{\rfcip}$ is at most $O(\alpha\log\nicefrac1p)\cdot \OPT(X)$.

It remains to bound the backup cost. A constraint can contribute to $x^\bk$ only if it lies in $\unc^\rocip$ and is missed by the rounded online solution. By \Cref{lem:cip-rounding}, each such constraint is missed by the rounded online solution with probability at most $p$. By monotonicity, the backup cost paid on such a constraint is at most $\cost(\backargs{a}{x^\rocip_\samples})$. Therefore, by \Cref{lem:cip-backup-costs}, $\expect*{\cost(x^\bk)}\leq p\cdot \expect*{\sum_{a\in\unc^\rocip}\cost(\backargs{a}{x^\rocip_\samples})}\leq O(\Delta)\cdot \OPT(X)$. Summing the three expected costs completes the proof.
\end{proof}
We obtain an expected $O(\log \nicefrac{1}{p} \cdot \log m  + \log n)$-competitive ratio by instantiating $\rocip$ to be the $O(\log (mn))$-competitive LearnOrCoverCIP algorithm~\cite{KL21} and $\fcip$ to be the $O(\log m)$-competitive online fractional primal-dual algorithm~\cite{buchbinder2009online}. 

\subsection{$(p,k)$-Adversarially Robust Model}
\label{sec:cip-corruptions}

We now turn to the model in which the adversary is allowed to corrupt $k$ of the $s=pn$ samples; let $\csamples$ be the resulting set. As in the set cover case, the main idea is to use only a carefully chosen prefix of the corrupted sample. 

The algorithm uses $\rocip$, $\fcip$, and the backup algorithm defined previously. The rounded online part proceeds in sub-phases indexed by pairs $(g,\widehat{k})$, where $g$ estimates $\OPTLP(X)$ and $\widehat{k}$ estimates $k$. The sub-phases form two nested loops: $g$ remains fixed in the outer loop, and $\widehat{k}$ increases monotonically within the inner loop. We define an \emph{epoch} to be the union of all sub-phases associated with a fixed value of $g$.

We initialize $\widehat{k}=3$, and set $g$ to be the fractional optimum for the constraints in $X$ seen so far. The guess $g$ is recomputed every time the fractional optimum at least doubles. When this happens, we start a new sub-phase $(g,3)$ with the new value of $g$. Within a sub-phase, we run $\fcip$ and round its fractional solution online using rounding parameter $\log(\widehat{k}/p)$, as in \Cref{lem:cip-rounding}. Let $x^{\rfcip}_t$ denote the rounded online solution maintained up to time $t$, including all variables bought by the rounded online algorithm in all sub-phases so far.

We end the current sub-phase $(g,\widehat{k})$ if $\widehat{k}<s$ and the cost incurred by the backup algorithm during this sub-phase exceeds $g\cdot \alpha\ln(\widehat{k}/p)$. When this happens, we replace $\widehat{k}$ by $\min\{s,\widehat{k}^2/p\}$ and start the next sub-phase with the same value of $g$. If the fractional optimum has doubled, we start a new epoch with the new value of $g$ and with $\widehat{k}=3$.

The overall algorithm proceeds as follows.
\begin{enumerate}[nosep]
    \item In the offline phase, run $\rocip$ on the randomly permuted corrupted sample $\csamples$, without initially adding its solution to the maintained solution. Let $x^\rocip_i$ be the solution obtained after processing the first $i$ constraints of $\csamples$. Let $\ell_0=0$ and $x^\bk_0=0$.

    \item In the online phase, let $a$ be the constraint arriving at step $t$. Suppose we are in sub-phase $(g,\widehat{k})$.
    \begin{enumerate}[nosep]
        \item Apply $\fcip$ and round the fractional solution online with boosting parameter $\ln(\widehat{k}/p)$ to update $x^{\rfcip}_t$.

        \item If $\inp{a,x^{\rfcip}_t+x^\bk_{t-1}+x^\rocip_{\ell_{t-1}}}<1$, then set $x^\bk_t=x^\bk_{t-1}+\backargs{a}{x^{\rfcip}_t+x^\bk_{t-1}+x^\rocip_{\ell_{t-1}}}$; otherwise, set $x^\bk_t=x^\bk_{t-1}$.

        \item Let $\ell_t=\max\{i\mid i\geq \ell_{t-1}\text{ and }\cost(x^\rocip_i)\leq \cost(x^{\rfcip}_t)+\cost(x^\bk_t)\}$.
    \end{enumerate}
\end{enumerate}
The solution at time $t$ is $x^{\rfcip}_t+x^\bk_t+x^\rocip_{\ell_t}$. Let $x^{\rfcip}$, $x^\bk$, and $x^\rocip$ be the final values of $x^{\rfcip}_t$, $x^\bk_t$, and $x^\rocip_{\ell_t}$, respectively. The final solution is $x^{\rfcip}+x^\bk+x^\rocip$. By the invariant in the definition of $\ell_t$, we always have $\cost(x^\rocip)\leq \cost(x^{\rfcip})+\cost(x^\bk)$.

\subsubsection{Analysis}

Let $\crpt$ be the index of the first corrupted sample in $\csamples$, ordered by the random permutation used by $\rocip$. The algorithm does not know $\crpt$. Let $x^\rocip_{\clean}:=x^\rocip_{\crpt-1}$ be the solution of $\rocip$ on the clean prefix, and let $\unc^\rocip=\{a\in X\setminus\samples\mid \inp{a,x^\rocip_{\clean}}<1\}$ be the constraints not satisfied by $\rocip$ on the clean prefix.

For a prefix $P$ of $\csamples$, let $\event(P)$ denote the event ``$\csamples_{<|P|+1}=P$ and $\crpt>|P|+1$,'' i.e., the first $|P|$ positions are $P$ and the next position is uncorrupted. For a subset $P\subseteq\csamples$, define
\[
\skewed(P) := \left\{a \in X \setminus P 
\ \ \middle| \ \ \prob*{\csamples_{\abs{P}+1} = a  | \event(P)} 
< \frac{1}{2(n-\abs{P})} \right\},
\]
to be the set of constraints whose chance of appearing next is too small compared with the uniform benchmark, given that the prefix $P$ seen so far is uncorrupted.  Without corruptions, each of the remaining constraints in $\univ \setminus P$\, would be equally likely to appear next in random order. 

Recall the following near-uniformity lemma, proven in \Cref{sec:setcover}.
\uniformity*
We first prove the robust analog of \Cref{lem:cip-backup-costs}.

\begin{lemma}\label{lem:cipwc_unc}
It holds that
$\expect*{\sum_{a\in \unc^\rocip}\cost(\backargs{a}{x^\rocip_{\clean}})}
\leq
O\left(\frac{k\Delta}{p}\right)\cdot \OPT(X)$.
\end{lemma}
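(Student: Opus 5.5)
The plan is to mirror the proof of \Cref{lem:scwc_unc1} almost verbatim, replacing set-cover backup costs by the CIP backup quantity $b_x(a) := \cost(\backargs{a}{x})$. The two properties that proof used transfer to CIPs as follows.

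First, monotonicity: $b_{x'}(a) \leq b_x(a)$ whenever $x' \geq x$. This is stated just before \Cref{lem:cip-rounding}.

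Second, the per-element cap: $b_0(a) \leq O(\OPT(X))$ for every $a \in X$. This follows from the remark that $\cost(\backargs{a}{0}) \leq O(\cost(x^*))$ for any feasible $x^*$ satisfying $a$, taking $x^*$ to be an optimal solution for $X$. This cap replaces the set-cover bound $\cost(\backup(x)) \leq \OPT$.

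With these in hand, set $T = np/(4k)$ and $B_i = \sum_{a\in X} b_{x^\rocip_i}(a)$. The $B_i$ are nonincreasing in $i$, $B_0 \leq O(n)\cdot\OPT(X)$, and the target sum is at most $B_{\crpt-1}$ by monotonicity, since $\unc^\rocip$ consists of constraints unsatisfied by $x^\rocip_{\crpt-1}$.

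I then reproduce inequality \eqref{eqn:bkupcost} by the same three-case split on $\crpt$:
\begin{itemize}[nosep]
\item $\crpt = 1$: this has probability at most $k/np$, and $B_0 \leq O(n)\OPT(X)$.
\item $2 \leq \crpt \leq T+1$: use the hazard rate $\prob*{\crpt=i \mid \crpt>i-1} = O(k/np)$ together with monotonicity.
\item $\crpt > T+1$: use $B_T \leq \frac1T\sum_{i\le T} B_{i-1}$, and note that the law of the clean prefix of length $i-1$ is the same under $\crpt>T+1$ and $\crpt>i$.
\end{itemize}
None of these steps used anything specific to set cover, so they carry over unchanged.

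Next, I bound $B_{i-1}$ conditioned on $\event(P)$ with $|P| = i-1 \leq T$. A constraint not in $\skewed(P)$ appears at position $i$ with conditional probability at least $1/(2n)$. Constraints in $P$ contribute $0$, because $\rocip$ has already satisfied them. Constraints in $\skewed(P)$ each contribute at most $b_0(a) = O(\OPT(X))$. Hence
\[
B_{i-1} \leq 2n\,\expect*{b_{x^\rocip_{i-1}}(\csamples_i) \mid \event(P)} + O(\abs{\skewed(P)})\cdot\OPT(X).
\]
Summing over $i\le T$ weighted by $\prob*{\crpt>i}$, the first term telescopes. The key fact is that the cost $\rocip$ pays in round $i$ is at least $b_{x^\rocip_{i-1}}(\csamples_i)$, since it must raise its solution to satisfy that constraint. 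So the first term is at most $2n\,\expect*{\cost(x^\rocip_\clean)}$. Because the clean prefix arrives in uniformly random order, the $\rocip$ guarantee gives $2n\,\expect*{\cost(x^\rocip_\clean)} \leq 2n\Delta\,\OPT(X)$. The second term is $T\cdot O(k/p)\cdot\OPT(X) = O(n)\OPT(X)$ by \Cref{lem:uniformitybound}, which the paper has just restated for constraints.

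Plugging $O(n\Delta)\OPT(X)$ into the CIP analog of \eqref{eqn:bkupcost} gives the claimed $O(k\Delta/p)\cdot\OPT(X)$.

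The main obstacle is the step that applies the $\rocip$ guarantee to the clean prefix. The prefix length $\crpt-1$ is random, and the sample is corrupted, so I must argue, as in the set cover case, that conditioned on the clean set and on $\crpt$ the prefix is a uniformly random ordering of a subset of $X$. Then $\expect*{\cost(x^\rocip_\clean)} \leq \Delta\,\expect*{\OPT(\text{clean prefix})} \leq \Delta\,\OPT(X)$. A second point to check is that $\rocip$'s per-round payment dominating $b_{x_{i-1}}(\csamples_i)$ holds for an integral CIP algorithm, which is exactly the observation used in \Cref{lem:cip-backup-costs}.
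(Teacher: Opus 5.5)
Your proposal is correct and follows the paper's proof essentially step for step. It uses the same monotone potential $B_i$, the same three-case split on $\crpt$ giving the CIP analog of \eqref{eqn:bkupcost}, and the same decomposition of $B_{i-1}$ under $\event(P)$: non-skewed constraints are charged to $\rocip$'s round-$i$ payment, skewed ones are capped at $O(\OPT(X))$ each and controlled by \Cref{lem:uniformitybound}, and you finish with the random-order guarantee on the clean prefix. Your explicit checks of the two CIP-specific ingredients, namely $\cost(\backargs{a}{0}) = O(\OPT(X))$ and that $\rocip$'s increment in round $i$ dominates $b_{x^\rocip_{i-1}}(\csamples_i)$, are exactly what the paper uses implicitly.
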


\begin{proof}
Let $T=s/(4k)$, and assume $T$ is a nonzero integer for simplicity. For a solution $x$, let $b_x(a)=\cost(\backargs{a}{x})$. Let $B_i=\sum_{a\in X} b_{x^\rocip_i}(a)$. By monotonicity, $B_i\leq B_{i-1}$. Since every constraint in $\unc^\rocip$ is unsatisfied by $x^\rocip_{\clean}=x^\rocip_{\crpt-1}$, we have $\sum_{a\in\unc^\rocip}\cost(\backargs{a}{x^\rocip_{\clean}})\leq B_{\crpt-1}$.

We claim that
\begin{eqnarray}
\label{eqn:cip-bkupcost}
\expect*{B_{\crpt-1}}
&\leq&
O\left(\frac{k}{p}\right)\cdot \OPT(X)
+
O\left(\frac{k}{s}\right)
\sum_{i=1}^T
\prob*{\crpt>i}\cdot
\expect*{B_{i-1}\mid \crpt>i}.
\end{eqnarray}
We omit this proof as it is identical to the proof of \eqref{eqn:bkupcost}, and follows by a case-analysis on where the first corruption appears.

We now analyze $B_{i-1}$ for $i\leq T$. Condition on $\event(P)$, where $|P|=i-1$. If a constraint is not in $\skewed(P)$, then by definition its conditional probability of appearing at position $i$ in $\csamples$ is at least $1/(2(n-|P|))\geq 1/(2n)$. Also, $b_{x^\rocip_{i-1}}(a)=0$ for every $a\in P$, since $\rocip$ has already satisfied every constraint in $P$. Therefore,
\begin{equation}
\label{eq:cip-B-prefix}
B_{i-1} = \sum_{a\in X} b_{x^\rocip_{i-1}}(a)
\leq
2n\cdot \expect*{b_{x^\rocip_{i-1}}(\csamples_i)\mid \event(P)}
+
\abs{\skewed(P)}\cdot O(\OPT(X)).
\end{equation}
Taking expectations, summing over $i=1,\ldots,T$, and using \Cref{lem:uniformitybound}, we obtain
\begin{align*}
&\sum_{i=1}^T \prob*{\crpt>i}\cdot \expect*{B_{i-1}\mid \crpt>i} \\
&\leq
2n\expect*{\sum_{i=1}^T \ind\{\crpt>i\}\cdot b_{x^\rocip_{i-1}}(\csamples_i)}
+
O(\OPT(X))\sum_{i=1}^T \prob*{\crpt>i}\cdot
\expect*{\abs{\skewed(\csamples_{<i})}\mid \crpt>i} \\
&\leq
2n\expect*{\cost(x^\rocip_{\clean})}
+
T\cdot O\left(\frac{k}{p}\right)\cdot \OPT(X) \\
&\leq
O(n\Delta)\cdot \OPT(X).
\end{align*}
In the last inequality, we used the random-order guarantee of $\rocip$ on the clean prefix. Substituting this bound into \eqref{eqn:cip-bkupcost} completes the proof.
\end{proof}

We now bound the cost incurred by the rounded online solution in terms of the cost incurred by the backup algorithm.

\begin{lemma}\label{cip-onlinecost-guess-double}
Suppose $D$ is the expected cost incurred by the backup algorithm over the whole algorithm. Then, the expected cost of $x^{\rfcip}$ is at most $\OPT(X)\cdot O(\alpha\log(k/p))+O(D)$.
\end{lemma}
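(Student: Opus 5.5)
The plan mirrors the proof of \Cref{onlinecost-guess-double} for set cover, with \Cref{lem:cip-rounding} in place of \Cref{lem:sc-rounding}. We fix an epoch with guess $g$ for $\OPTLP(X)$. Throughout that epoch the fractional optimum stays below $2g$. Since $\fcip$ is $\alpha$-competitive, the fractional cost accrued during any sub-phase $(g,\widehat{k})$ is $O(\alpha g)$. Applying \Cref{lem:cip-rounding} to the fractional increments of that sub-phase with $\beta = p/\widehat{k}$ then bounds the expected rounded cost in the sub-phase by $g\cdot O(\alpha\log(\widehat{k}/p))$. Here we view the rounding as restarted per sub-phase, or use that the rounding cost is linear in the fractional increase with multiplier $O(\log\nicefrac1\beta)$.

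We then split epochs into two cases.

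\textbf{Epochs with a single sub-phase.} Here $\widehat{k}=3$, so the rounded cost is $g\cdot O(\alpha\log(3/p))$. Since $k\ge 1$ and $p\le 1/2$, this is $g\cdot O(\alpha\log(k/p))$. The guesses $g$ grow geometrically across epochs and the final one is at most $2\OPTLP(X)\le 2\OPT(X)$. Summing therefore gives $\OPT(X)\cdot O(\alpha\log(k/p))$.

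\textbf{Epochs with at least two sub-phases.} Every non-final sub-phase ended because the backup cost in it exceeded $g\alpha\ln(\widehat{k}/p)$, since the cap $\widehat{k}<s$ is in force before termination. This pays for that sub-phase's rounded cost up to a constant. For the final sub-phase with $\widehat{k}_{\max}$ there are two subcases.
\begin{itemize}
\item If $\widehat{k}_{\max}=\widehat{k}'^2/p$ for the penultimate $\widehat{k}'$, then $\ln(\widehat{k}'/p)=\tfrac12\ln(\widehat{k}_{\max}/p)$, so the penultimate backup cost pays for the final sub-phase as well.
\item The new wrinkle relative to set cover is the cap $\widehat{k}_{\max}=s$. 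In that case $\ln(s/p)\le 2\ln(\widehat{k}'^2/p^2)\cdot O(1)$, since $s\le\widehat{k}'^2/p$ and hence $\ln(s/p)\le \ln(\widehat{k}'^2/p^2)=2\ln(\widehat{k}'/p)$. The same charging therefore still works.
\end{itemize}
Each sub-phase's backup cost is charged at most twice, so taking expectations bounds the total over these epochs by $O(D)$.

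The main care point is linearity of expectation in the charging. The number of sub-phases and their boundaries are random and depend on the rounding randomness. We handle this by writing the rounded cost per sub-phase as a sum over time steps of (fractional increment) $\times\, O(\log(\widehat{k}/p))$. This holds in expectation conditioned on the past, since \Cref{lem:cip-rounding}'s rounding is online and per-step. The charging inequality then holds pathwise between conditional-expectation quantities and backup costs, and summing the two cases gives the stated bound $\OPT(X)\cdot O(\alpha\log(k/p))+O(D)$.
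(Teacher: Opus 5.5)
Your proposal is correct and follows the paper's proof essentially step for step. You bound the per-sub-phase rounded cost by $g\cdot O(\alpha\log(\widehat{k}/p))$, sum the single-sub-phase epochs geometrically, and charge the multi-sub-phase epochs to backup cost, with the penultimate sub-phase paying for the final one via $\widehat{k}_{\max}\le(\widehat{k}')^2/p$; the paper uses that same inequality to absorb the cap $\widehat{k}_{\max}=s$ in one stroke, as your second bullet does. Your conditional-expectation treatment of the random sub-phase boundaries is a useful piece of rigor that the paper leaves implicit. One small fix: the stray expression ``$\ln(s/p)\le 2\ln(\widehat{k}'^2/p^2)\cdot O(1)$'' should read $\ln(s/p)\le 2\ln(\widehat{k}'/p)$, as your own follow-up clause already states.
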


\begin{proof}
Fix a guess $g$ for $\OPTLP(X)$, and consider the epoch corresponding to $g$. The cost incurred by the fractional solution produced by $\fcip$ in any sub-phase $(g,\widehat{k})$ within this epoch is at most $g\cdot O(\alpha)$. Indeed, the fractional optimum is at most $2g$ throughout this sub-phase, and $\fcip$ is $\alpha$-competitive. Thus, by \Cref{lem:cip-rounding}, the expected cost incurred by the rounded online solution in sub-phase $(g,\widehat{k})$ is $g\cdot O(\alpha\log(\widehat{k}/p))$.

If the epoch corresponding to $g$ has a single sub-phase, then $\widehat{k}=3$, and the expected cost incurred by the rounded online solution in this sub-phase is $g\cdot O(\alpha\log(k/p))$. Since $g$ grows geometrically across epochs and $\OPTLP(X)\leq \OPT(X)$, the expected cost incurred by the rounded online solution across all such epochs is at most $\OPT(X)\cdot O(\alpha\log(k/p))$.

Now suppose the epoch corresponding to $g$ has at least two sub-phases. Let $\widehat{k}_{\max}$ be the maximum value of $\widehat{k}$ in this epoch. For every non-final sub-phase $(g,\widehat{k})$, the cost incurred by the backup algorithm in that sub-phase exceeds $g\cdot \alpha\ln(\widehat{k}/p)$. For the final sub-phase, let $\widehat{k}'$ be the value of $\widehat{k}$ in the penultimate sub-phase. Since the penultimate sub-phase ended and $\widehat{k}_{\max}\leq (\widehat{k}')^2/p$, we have $\ln(\widehat{k}_{\max}/p)\leq 2\ln(\widehat{k}'/p)$. Hence, the cost of the final sub-phase is also bounded by a constant times the backup cost incurred in the penultimate sub-phase. Therefore, the expected cost incurred by the rounded online solution across all epochs with at least two sub-phases is $O(D)$.
\end{proof}

We classify a sub-phase $(g,\widehat{k})$ as early if $\widehat{k}<k$, and late otherwise.

\begin{lemma}\label{cip-earlybackup-guess-double}
The expected cost incurred by the backup algorithm during early sub-phases $(\widehat{k}<k)$ is at most $\OPT(X)\cdot O(\alpha\log(k/p))$.
\end{lemma}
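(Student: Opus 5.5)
The plan is to mirror the proof of \Cref{earlybackup-guess-double} from the set cover guess-and-double analysis, charging the backup cost of each early sub-phase to its threshold.

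\textbf{Step 1: a single early sub-phase.} Fix an epoch with guess $g$ and an early sub-phase $(g,\widehat{k})$ with $\widehat{k}<k\leq s$. Since $\widehat{k}<s$, the ending rule applies to this sub-phase. Hence the backup cost accumulated before its last step is at most $g\cdot\alpha\ln(\widehat{k}/p)$.

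\textbf{Step 2: the last step.} The backup cost paid in a single step is at most $\cost(\backargs{a}{0})$, by monotonicity of backup cost in the current solution. By the observation preceding \Cref{lem:cip-rounding}, applied to an optimal fractional solution for the constraints seen so far, this is $O(\OPTLP)$. Throughout the epoch the fractional optimum is less than $2g$, so the single-step cost is $O(g)$. The total backup cost of the sub-phase is therefore $g\cdot O(\alpha\log(\widehat{k}/p))$, where we use $\alpha\geq 1$ and $\ln(\widehat{k}/p)\geq \ln 3$.

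\textbf{Step 3: sum within an epoch.} Inside an epoch, each update $\widehat{k}\mapsto\widehat{k}^2/p$ (the cap at $s$ is irrelevant while $\widehat{k}<k$) doubles $\ln(\widehat{k}/p)$. So the values $\ln(\widehat{k}/p)$ over early sub-phases form a geometric sequence whose largest term is less than $\ln(k/p)$. The sum over early sub-phases of one epoch is thus $g\cdot O(\alpha\log(k/p))$.

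\textbf{Step 4: sum over epochs.} The guesses $g$ at least double between epochs and never exceed $\OPTLP(X)\leq\OPT(X)$. Summing the geometric series over epochs gives $\OPT(X)\cdot O(\alpha\log(k/p))$ deterministically, hence also in expectation.

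The only step needing care is Step 2. In CIPs the backup vector may buy several copies of a column, so I would verify that the one-step bound $\cost(\backargs{a}{0})=O(\cost(x^*))$ holds for fractional $x^*$. The $\ceil{1/a_j}\leq 2/a_j$ argument gives this when $a_j\leq 1$. The remaining steps are routine geometric summation.
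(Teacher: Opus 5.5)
Your proposal is correct and follows the paper's proof essentially step for step. Both bound each early sub-phase by its threshold $g\cdot\alpha\ln(\widehat{k}/p)$ plus an $O(g)$ final step (using $\cost(\backargs{a}{0})=O(\OPTLP)$), then sum the doubling values of $\ln(\widehat{k}/p)$ within an epoch and the geometrically growing guesses $g$ across epochs. Your additional remarks are correct refinements that the paper leaves implicit: the cap at $s$ never affects early sub-phases because $k\le s$, and the resulting bound holds deterministically.
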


\begin{proof}
Consider any early sub-phase $(g,\widehat{k})$. The backup cost in this sub-phase, excluding the last time step, is at most $g\cdot \alpha\ln(\widehat{k}/p)$ by the sub-phase-ending rule. The backup cost in a single time step is at most $O(g)$ within this sub-phase. Indeed, the fractional optimum on the constraints seen so far within the sub-phase is at most $2g$, and the cost of satisfying a single constraint by the backup algorithm is at most a constant times this fractional optimum. Therefore, the backup cost in an early sub-phase $(g,\widehat{k})$ is at most $g\cdot O(\alpha\log(\widehat{k}/p))$. Since $g$ grows geometrically across epochs and $\ln(\widehat{k}/p)$ grows geometrically across early sub-phases within an epoch, the total expected cost incurred by the backup algorithm during early sub-phases is at most $\OPT(X)\cdot O(\alpha\log(k/p))$.
\end{proof}

\begin{lemma}\label{cip-latebackup-guess-double}
Suppose the initial solution is at least $x^\rocip_{\clean}$ coordinate-wise. Then, the expected cost incurred by the backup algorithm during late sub-phases $(\widehat{k}\geq k)$ of any suffix of the online sequence $X$ is at most $\OPT(X)\cdot O(\Delta)$.
\end{lemma}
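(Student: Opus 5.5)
The proof follows \Cref{latebackup-guess-double}, with \Cref{lem:cip-rounding} and \Cref{lem:cipwc_unc} used in place of their set cover counterparts. Fix the suffix and assume the maintained solution dominates $x^\rocip_{\clean}$ coordinate-wise throughout. Consider a constraint $a$ arriving at time $t$ in a late sub-phase, and let $y_t = x^{\rfcip}_t + x^\bk_{t-1} + x^\rocip_{\ell_{t-1}}$ be the solution it is checked against. Since $y_t \geq x^\rocip_{\clean}$, the monotonicity property of backup costs gives
\[
\cost\bigl(\backargs{a}{y_t}\bigr) \leq \cost\bigl(\backargs{a}{x^\rocip_{\clean}}\bigr).
\]
Moreover, if $a \notin \unc^\rocip$ (that is, $a$ is satisfied by $x^\rocip_{\clean}$), or if $a$ is satisfied by $x^{\rfcip}_t$, then no backup is bought. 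Sampled constraints in $\samples$ need a little care. Those in the clean prefix are satisfied by $x^\rocip_{\clean}$. For the remaining ones, I would follow the paper's convention that $\unc^\rocip$ captures the online constraints charged to the backup, exactly as in the set cover proof.

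The expected backup cost is then at most
\[
\sum_{a \in \unc^\rocip} \Pr\bigl[\inp{a, x^{\rfcip}_t} < 1\bigr] \cdot \cost\bigl(\backargs{a}{x^\rocip_{\clean}}\bigr).
\]
In a late sub-phase the boosting parameter is $\ln(\widehat{k}/p) \geq \ln(k/p)$, so \Cref{lem:cip-rounding} bounds the miss probability by $\beta = p/\widehat{k} \leq p/k$. There is one subtlety: the rounding is restarted across sub-phases. The guarantee I need is that the constraint arriving at time $t$ is missed by the rounding run of the current sub-phase with probability at most $\beta$. This holds because the rounded solution contains the current sub-phase's run, and \Cref{lem:cip-rounding} applies to that run with its own $\beta$.

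The remaining step, which I expect to be the main obstacle, is to decouple this probability from the random weight $\cost(\backargs{a}{x^\rocip_{\clean}})$. The weight depends on $\samples$, $\csamples$, the permutation, and the internal randomness of $\rocip$. The rounding's miss event depends only on the rounding's internal randomness and on the online sequence, and the latter is fixed by the adversary in advance. My plan is to condition on everything except the rounding's internal coins, which fixes $\unc^\rocip$ and all the weights, and then apply the per-constraint bound $p/k$. One might worry that the sub-phase boundaries depend on backup costs, which in turn depend on $x^\rocip_{\clean}$. However, the bound $p/\widehat{k} \leq p/k$ holds in every late sub-phase, and within a sub-phase the rounding's future coins are independent of the past. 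So I would apply the bound sequentially, one time step at a time, by linearity of expectation.

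Taking expectations and applying \Cref{lem:cipwc_unc} then gives
\[
\frac{p}{k} \cdot \expect*{\sum_{a\in \unc^\rocip}\cost\bigl(\backargs{a}{x^\rocip_{\clean}}\bigr)} \leq \frac{p}{k} \cdot O\!\left(\frac{k\Delta}{p}\right)\OPT(X) = O(\Delta)\cdot\OPT(X),
\]
which is the claimed bound.
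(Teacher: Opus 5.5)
Your proposal is correct and follows the paper's own proof. In the suffix, only constraints left unsatisfied by $x^\rocip_{\clean}$ and missed by the late-sub-phase rounding (probability at most $p/\widehat{k}\leq p/k$) reach the backup, and \Cref{lem:cipwc_unc} bounds their total backup weight by $O(k\Delta/p)\cdot\OPT(X)$.

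You are more careful than the paper, but two points need correcting. First, sampled constraints outside the clean prefix are already covered, because the proof of \Cref{lem:cipwc_unc} bounds $\expect*{B_{\crpt-1}}$ with $B_i$ summing over all of $X$. Second, your step-by-step conditional bound is not available, since \Cref{lem:cip-rounding} gives only marginal per-constraint miss bounds. The decoupling should instead apply that marginal bound to each late sub-phase's run, imagined continued past its end, and sum over the values of $p/\widehat{k}$, which square from one sub-phase to the next, giving $O(p/k)$.
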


\begin{proof}
By assumption, the initial solution is at least $x^\rocip_{\clean}$ coordinate-wise. Therefore, in this suffix, a constraint is passed to the backup algorithm in a late sub-phase only if it is in $\unc^\rocip$ and is also missed by the rounded online solution.

In any late sub-phase, the rounding parameter is at least $\ln(k/p)$. Hence, by \Cref{lem:cip-rounding}, the probability that an incoming constraint is not satisfied by the rounded online solution is at most $p/k$. By \Cref{lem:cipwc_unc}, if every incoming constraint left unsatisfied by $x^\rocip_{\clean}$ were passed to the backup algorithm, then the expected cost incurred would be $\OPT(X)\cdot O((k/p)\Delta)$. Multiplying by the additional probability $p/k$ that the rounded online solution also misses the constraint, the expected cost incurred by the backup algorithm in late sub-phases during the suffix is at most $\OPT(X)\cdot O(\Delta)$.
\end{proof}

\begin{theorem}\label{thm:cip-corruptions}
If $\rocip$ is a random-order online covering integer programming algorithm with competitive ratio $\Delta$ and $\fcip$ is a fractional online covering integer programming algorithm with competitive ratio $\alpha$, then the algorithm above has expected competitive ratio $O(\Delta+\alpha\log(k/p))$.
\end{theorem}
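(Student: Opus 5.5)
The proof will follow the template of \Cref{thm:totalcost-guess-double} for set cover, with the CIP lemmas \Cref{lem:cipwc_unc}, \Cref{cip-onlinecost-guess-double}, \Cref{cip-earlybackup-guess-double} and \Cref{cip-latebackup-guess-double} used in place of their set cover counterparts. The final solution is $x^{\rfcip}+x^\bk+x^\rocip$, and the invariant defining $\ell_t$ gives $\cost(x^\rocip)\le \cost(x^{\rfcip})+\cost(x^\bk)$. By \Cref{cip-onlinecost-guess-double}, $\expect*{\cost(x^{\rfcip})}\le \OPT(X)\cdot O(\alpha\log(k/p)) + O(\expect*{\cost(x^\bk)})$. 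The whole argument therefore reduces to showing $\expect*{\cost(x^\bk)} \le \OPT(X)\cdot O(\Delta+\alpha\log(k/p))$.

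To bound the backup cost, I will define the critical step
\[
t^* := \min\{t \mid \cost(x^{\rfcip}_t)+\cost(x^\bk_t)\ge \cost(x^\rocip_{\clean})\}.
\]
If $t^*=\infty$, then $\cost(x^\bk)<\cost(x^\rocip_{\clean})$. Conditioned on the clean prefix, its order is uniformly random, so the random-order guarantee of $\rocip$ gives an expected bound of $O(\Delta)\cdot\OPT(X)$, exactly as in the set cover case.

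If $t^*<\infty$, I split the backup cost into two parts.
\begin{itemize}
\item \emph{Up to $t^*-1$.} The backup cost here is below $\cost(x^\rocip_{\clean})$.
\item \emph{At step $t^*$.} The backup cost is at most $\cost(\backargs{a}{0})\le O(\OPT(X))$, by the single-constraint bound stated at the start of the CIP section.
\item \emph{After $t^*$.} Here $\ell_t\ge \crpt-1$, so the maintained solution dominates $x^\rocip_{\clean}$ coordinatewise. By monotonicity of backup costs, backup is then only paid on constraints in $\unc^\rocip$ that the rounded solution misses, and the backup increment is at most $\cost(\backargs{a}{x^\rocip_{\clean}})$. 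In early sub-phases, \Cref{cip-earlybackup-guess-double} bounds the backup cost by $\OPT(X)\cdot O(\alpha\log(k/p))$. In late sub-phases, \Cref{cip-latebackup-guess-double} bounds it by $O(\Delta)\cdot\OPT(X)$.
\end{itemize}
Summing these pieces gives the claimed bound on $\expect*{\cost(x^\bk)}$.

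I expect the main obstacle to be the cap $\widehat{k}\le s$, which makes this algorithm differ from the set cover version. If $k$ is close to $s=pn$, the late regime might never be reached. In that case, however, $\log(k/p)=\Theta(\log(s/p))$, and once $\widehat{k}=s$ sub-phases no longer end. So every $\widehat{k}$ actually attained is at most $O(\log(k/p))$ in logarithmic scale, and I would check that the early-phase accounting still sums geometrically to $O(\alpha\log(k/p))\cdot g$ per epoch. A second point to verify is the step invoking \Cref{cip-latebackup-guess-double}: it must apply to the suffix after $t^*$. That suffix starts from a solution dominating $x^\rocip_{\clean}$, and the sub-phase state is only a deterministic function of the past, so applying \Cref{lem:cip-rounding} per constraint should suffice. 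Finally, adding $\cost(x^{\rfcip})$ and $\cost(x^\rocip)$ to the backup bound yields the competitive ratio $O(\Delta+\alpha\log(k/p))$.
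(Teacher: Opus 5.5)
Your proposal is correct and follows the paper's proof essentially step for step. It uses the same critical step $t^*$, the same split of the backup cost (below $\cost(x^\rocip_{\clean})$ before $t^*$, $O(\OPT(X))$ at $t^*$, and the early/late lemmas after $t^*$), and then \Cref{cip-onlinecost-guess-double} together with the $\ell_t$ invariant for the other two components.

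Your concern about the cap $\widehat{k}\le s$ does not need extra work. Since $k\le s$, any sub-phase with $\widehat{k}=s$ is late. Also, the rounded cost of a final sub-phase is charged to the backup cost of the penultimate sub-phase, however large $\widehat{k}_{\max}$ gets.
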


\begin{proof}
We first bound the expected cost incurred by the backup algorithm. Define the critical time step $t^*:=\min\{t\mid \cost(x^{\rfcip}_t)+\cost(x^\bk_t)\geq \cost(x^\rocip_{\clean})\}$. If the set is vacuous, set $t^*=\infty$.

If $t^*=\infty$, then $\cost(x^\bk)\leq \cost(x^{\rfcip})+\cost(x^\bk)<\cost(x^\rocip_{\clean})$. Using the guarantee of $\rocip$ on the clean prefix, we get $\expect*{\cost(x^\bk)}\leq O(\Delta)\cdot \OPT(X)$.

Otherwise, $t^*$ is finite. By definition of $t^*$, $\cost(x^{\rfcip}_{t^*-1})+\cost(x^\bk_{t^*-1})<\cost(x^\rocip_{\clean})$. In particular, $\cost(x^\bk_{t^*-1})<\cost(x^\rocip_{\clean})$. The expected cost incurred by $\BACKUP$ in time step $t^*$ is at most $O(\OPT(X))$. Therefore, the expected backup cost up to and including time $t^*$ is at most $O(\Delta)\cdot \OPT(X)$.

After time $t^*$, the invariant defining $\ell_t$ ensures that $\ell_t\geq \crpt-1$, and hence the current solution is at least $x^\rocip_{\clean}$ coordinate-wise. The expected backup cost during early sub-phases is at most $\OPT(X)\cdot O(\alpha\log(k/p))$ by \Cref{cip-earlybackup-guess-double}. The expected backup cost during late sub-phases after time $t^*$ is at most $\OPT(X)\cdot O(\Delta)$ by \Cref{cip-latebackup-guess-double}. Thus, $\expect*{\cost(x^\bk)}\leq \OPT(X)\cdot O(\Delta+\alpha\log(k/p))$.

Next, by \Cref{cip-onlinecost-guess-double}, the expected cost incurred by the rounded online solution is at most $\OPT(X)\cdot O(\alpha\log(k/p))+O(\expect*{\cost(x^\bk)})\leq \OPT(X)\cdot O(\Delta+\alpha\log(k/p))$.

Finally, by the invariant in the definition of $\ell_t$, the cost of the columns added from the simulated $\rocip$ run is at most $\cost(x^{\rfcip})+\cost(x^\bk)$. Therefore, the total expected cost incurred by the algorithm is $\OPT(X)\cdot O(\Delta+\alpha\log(k/p))$.
\end{proof}
We obtain an expected $O(\log \nicefrac{k}{p} \cdot \log m  + \log n)$-competitive ratio by instantiating $\rocip$ to be the $O(\log (mn))$-competitive LearnOrCoverCIP algorithm~\cite{KL21} and $\fcip$ to be the $O(\log m)$-competitive online fractional primal-dual algorithm~\cite{buchbinder2009online}. 
\section{Online Non-Metric Facility Location}\label{sec:nmfl}Let $\mathcal{F}$ be a collection of $m$ facilities and let $X=\{v^1,v^2,\ldots,v^n\}$ be a collection of $n$ clients. Each facility $f\in\mathcal{F}$ is equipped with a nonnegative opening cost $c_f$, and each facility-client pair $(f,v)$ has a nonnegative connection cost $d(f,v)$. A solution consists of a set of open facilities and an assignment of each  client to an open facility. Its cost is the sum of the opening costs of the open facilities and the connection costs of the assigned clients. For a set of clients $Y\subseteq X$, let $\OPT(Y)$ denote the minimum cost integral solution serving all clients in $Y$, and let $\OPTLP(Y)$ denote the corresponding fractional optimum.

In the \emph{online} version, the set $X$ is initially unknown and its clients are revealed one at a time. When a client $v$ arrives, the algorithm must connect $v$ to an open facility, possibly after opening additional facilities. The goal is to minimize the total cost of the final solution.

In the \emph{$p$-sample model}, the algorithm samples a set $\samples$ of clients uniformly at random from all subsets of size $s=pn$, before the online sequence begins. Then the online input consists of $X$ in arbitrary order.

In the \emph{$(p,k)$-adversarially robust model}, the adversary is allowed to modify any $k$ clients in $\samples$ after they are sampled, but before they are considered by the algorithm; we use $\csamples$ to denote the corrupted sample.

For the algorithms below, we use the following notation for solution states. A solution state $\SOL$ consists of a set $F(\SOL)\subseteq \mathcal{F}$ of open facilities and a set $C(\SOL)$ of clients already connected by $\SOL$. We write $\cost(\SOL)$ for the opening cost of $F(\SOL)$ plus the connection cost of the clients in $C(\SOL)$. For two solution states $\SOL$ and $\SOL'$, we write $\SOL\cup \SOL'$ for the state obtained by taking the union of the opened facilities and the union of the client connections, picking an arbitrary connection for a client if it appears in both states. In particular, $\cost(\SOL\cup\SOL')\leq \cost(\SOL)+\cost(\SOL')$.

For a current solution state $\SOL$ and a client $v$, let $\backargs{v}{\SOL}$ denote the cheapest way to serve $v$ starting from $\SOL$. If $v\in C(\SOL)$, then $\backargs{v}{\SOL}=0$. Otherwise, $\backargs{v}{\SOL}$ either connects $v$ to an already open facility in $F(\SOL)$, or opens a new facility and connects $v$ to it, whichever is cheaper. Thus,
\[
\cost(\backargs{v}{\SOL})
=
\begin{cases}
0, & v\in C(\SOL),\\
\min\left\{
\min_{f\in F(\SOL)} d(f,v),
\min_{f\in\mathcal{F}}(c_f+d(f,v))
\right\}, & v\notin C(\SOL).
\end{cases}
\]
We use the monotonicity property that if $\SOL'\supseteq \SOL$, then $\cost(\backargs{v}{\SOL'})\leq \cost(\backargs{v}{\SOL})$.

We will also use that for every client $v$ and every fractional or integral feasible solution $\SOL^*$ serving $v$, we have $\cost(\backargs{v}{\emptyset})\leq O(\cost(\SOL^*))$.

We use the following rounding theorem for online non-metric facility location.

\begin{lemma}[NMFL rounding~{\cite[Section 4.1]{nonmetricfacility}}]
\label{lem:nmfl-rounding}
Let $\cA$ be a fractional online non-metric facility location algorithm that takes a client sequence $v^1,v^2,\ldots$ as input and produces a coordinate-wise nondecreasing fractional solution, with competitive ratio $\alpha$. There is an online rounding algorithm $\mathsf{Round}$ that takes this fractional solution as input and a parameter $\beta\in(0,1]$, and produces monotone integral solution states $\SOL_1\subseteq \SOL_2\subseteq\cdots$ with the following properties: for every step $t$, $\expect*{\cost(\SOL_t)}\leq O(\alpha\log\nicefrac1\beta)\cdot \OPTLP(\{v^1,\ldots,v^t\})$, and for every $t'\leq t$, $\Pr[v^{t'}\notin C(\SOL_t)]\leq \beta$.
\end{lemma}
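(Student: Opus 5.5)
The plan is to reduce \Cref{lem:nmfl-rounding} to the threshold-based rounding behind \Cref{lem:sc-rounding}, treating each facility together with its connection edges as a ``set.''

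\textbf{Setup.} Write the monotone fractional solution maintained by $\cA$ in the standard LP form: opening variables $y_f$, connection variables $x_{fv}$, constraints $x_{fv}\le y_f$ and $\sum_f x_{fv}\ge 1$ for every client $v$ that has arrived. By monotonicity, all of these variables are nondecreasing over time. Set $L=\lceil \ln \nicefrac1\beta\rceil$. Before the input begins, draw independent uniform thresholds $\theta_f^1,\dots,\theta_f^L\in[0,1]$ for each facility $f$, and let $\tau_f=\min_j \theta_f^j$.

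\textbf{Rounding rule.} At every step, first open each facility $f$ with $y_f\ge\tau_f$. Then, for every client $v$ that is not yet in $C(\SOL_t)$ and for which some $f$ has $x_{fv}\ge\tau_f$, connect $v$ to such an $f$; this $f$ is open because $y_f\ge x_{fv}$. The threshold $\tau_f$ is fixed and the variables only grow, so an opened facility stays open and a made connection stays made. Hence $\SOL_1\subseteq\SOL_2\subseteq\cdots$, as required.

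\textbf{Cost bound.} Facility $f$ is open at time $t$ exactly when $\tau_f\le y_f$, which has probability $1-(1-y_f)^L\le L y_f$. So the expected opening cost is at most $L\sum_f c_f y_f$. Client $v$ is connected to $f$ only if $\tau_f\le x_{fv}$ at the connection time, and this has probability at most $L x_{fv}$, where $x_{fv}$ is its final value at time $t$. A union bound over $f$ then bounds the expected connection cost of $v$ by $L\sum_f d(f,v)x_{fv}$. Summing these two bounds, the expected cost is at most $L$ times the fractional cost at time $t$. By $\alpha$-competitiveness of $\cA$, this is $O(\alpha\log\nicefrac1\beta)\cdot\OPTLP(\{v^1,\ldots,v^t\})$.

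\textbf{Coverage bound.} Client $v^{t'}$ remains unconnected at time $t\ge t'$ only if $\tau_f> x_{fv^{t'}}$ for every $f$. These events are independent across facilities, so the probability is
\[
\prod_f (1-x_{fv^{t'}})^L\le \exp\Bigl(-L\sum_f x_{fv^{t'}}\Bigr)\le e^{-L}\le\beta .
\]

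\textbf{Main obstacle.} The delicate step is charging the connection cost correctly in the online setting. A client may be connected after its arrival step, when some later $x_{fv}$ first crosses $\tau_f$, and the fractional algorithm can keep raising $x_{fv}$ afterward. To keep the charge at one connection per client, I would fix the connection at the first time any qualifying facility appears and never change it. Because that event implies $\tau_f\le x_{fv}$ for the final value of $x_{fv}$ at time $t$, the union bound above still applies.

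One more point needs checking. If the model instead requires each client to be served at the moment it arrives, the leftover clients must be handled by a separate step. In the intended use this is exactly what phase (c) of the algorithm does: it invokes $\backup(\cdot)$ on the at most $\beta$-probability residue. So the lemma only needs the weaker guarantee stated above.
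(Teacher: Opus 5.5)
Your proof is correct. The paper gives no proof of its own (it imports the lemma from the cited work), and what you wrote is the standard online threshold rounding: a single threshold $\tau_f=\min_j\theta_f^j$ per facility, reused for both $y_f$ and $x_{fv}$ so that connections only go to open facilities, with $\Pr[\tau_f\le z]\le Lz$ giving the cost bound and independence across facilities giving the $e^{-L}$ miss probability.

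The only nit is that $L=\lceil\ln\nicefrac1\beta\rceil$ gives a factor $O(\log\nicefrac1\beta)$ only for $\beta$ bounded away from $1$. That is all the paper needs, since it always uses $\beta\le p\le\nicefrac12$. If you want the statement literally for all $\beta\in(0,1]$, take $\tau_f$ exponential with rate $\ln\nicefrac1\beta$; then $\Pr[\tau_f\le z]\le z\ln\nicefrac1\beta$, and the same calculation bounds the miss probability by $\exp(-\ln\nicefrac1\beta)=\beta$.
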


We refer to the not necessarily feasible integral online non-metric facility location algorithm $\overline{\cA}=\mathsf{Round}(\cA,\beta)$ as \emph{rounding} $\cA$ with \emph{boosting parameter} $\ln\nicefrac1\beta$.

\subsection{$p$-Sample Model}
\label{sec:nmflsample}

As a warm up, we begin with the no-corruptions case. This algorithm assumes access to a random-order online non-metric facility location algorithm $\ronmfl$, with competitive ratio $\Delta$, and a fractional online non-metric facility location algorithm $\fnmfl$, with competitive ratio $\alpha$.

The algorithm has the following phases.
\begin{enumerate}[nosep]
    \item In the offline phase, run $\ronmfl$ on the randomly permuted sample $\samples$. Let $\SOL^\ronmfl_\samples$ be the solution state obtained. Let $\SOL^{\rfnmfl}_0=\SOL^\bk_0=\emptyset$.

    \item In the online phase, let $v$ be the client arriving at step $t$.
    \begin{enumerate}[nosep]
        \item If $v\in C(\SOL^\ronmfl_\samples)$, then set $\SOL^{\rfnmfl}_t=\SOL^{\rfnmfl}_{t-1}$ and $\SOL^\bk_t=\SOL^\bk_{t-1}$, and skip.

        \item Apply $\fnmfl$ and round the fractional solution online with boosting parameter $\ln\nicefrac1p$, using \Cref{lem:nmfl-rounding}, to obtain a solution state $\SOL^{\rfnmfl}_t$.

        \item If $v\notin C(\SOL^\ronmfl_\samples\cup \SOL^{\rfnmfl}_t\cup \SOL^\bk_{t-1})$, then set $\SOL^\bk_t=\SOL^\bk_{t-1}\cup \backargs{v}{\SOL^\ronmfl_\samples\cup \SOL^{\rfnmfl}_t\cup \SOL^\bk_{t-1}}$. Otherwise, set $\SOL^\bk_t=\SOL^\bk_{t-1}$.
    \end{enumerate}
\end{enumerate}
The solution at time $t$ is $\SOL^\ronmfl_\samples\cup \SOL^{\rfnmfl}_t\cup \SOL^\bk_t$. Let $\SOL^{\rfnmfl}$ and $\SOL^\bk$ denote the final values of $\SOL^{\rfnmfl}_t$ and $\SOL^\bk_t$, respectively. The final solution is $\SOL^\ronmfl_\samples\cup \SOL^{\rfnmfl}\cup \SOL^\bk$.

\subsubsection{Analysis}

To analyze the algorithm, we first prove the analog of \Cref{lem:backup_costs}. Let $\unc^\ronmfl=\{v\in X\setminus\samples\mid v\notin C(\SOL^\ronmfl_\samples)\}$ be the clients not connected by the random-order solution on the sample.

\begin{lemma}
\label{lem:nmfl-backup-costs}
It holds that
$\expect*{\sum_{v\in \unc^\ronmfl}\cost(\backargs{v}{\SOL^\ronmfl_\samples})} \leq O\left(\frac{\Delta}{p}\right)\cdot \OPT(X)$.
\end{lemma}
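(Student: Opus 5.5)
We follow the proof of \Cref{lem:backup_costs} (and its CIP analogue \Cref{lem:cip-backup-costs}) essentially verbatim, replacing sets with solution states. Let $\pi=(\pi_1,\ldots,\pi_n)$ be a uniformly random permutation of $X$, let $\samples$ be its first $s=pn$ clients, and let $\SOL^\ronmfl_i$ be the state of $\ronmfl$ after processing $(\pi_1,\ldots,\pi_i)$, so that $\SOL^\ronmfl_\samples=\SOL^\ronmfl_s$. Write $b_{\SOL}(v)=\cost(\backargs{v}{\SOL})$. Then the quantity to bound is $\expect*{\sum_{i=s+1}^n b_{\SOL^\ronmfl_s}(\pi_i)}$: every client of $\unc^\ronmfl$ is some $\pi_i$ with $i>s$, and clients already in $C(\SOL^\ronmfl_s)$ contribute $0$.

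The first step uses monotonicity. The states grow, $\SOL^\ronmfl_{j-1}\subseteq \SOL^\ronmfl_s$ for $j\le s$, so $b_{\SOL^\ronmfl_s}(\pi_i)\le b_{\SOL^\ronmfl_{j-1}}(\pi_i)$. We average this over $j=1,\ldots,s$ and sum over $i>s$, obtaining the analogue of \eqref{eq:31}.

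The second step is the exchangeability claim $\expect*{b_{\SOL^\ronmfl_{j-1}}(\pi_i)}=\expect*{b_{\SOL^\ronmfl_{j-1}}(\pi_j)}$ for $i>s\ge j$. To see it, condition on $\pi_{<j}$ and the internal randomness of $\ronmfl$. This fixes the state, while $\pi_i$ and $\pi_j$ are both uniform on the remaining clients. Combining the two steps gives the bound $\frac{n-s}{s}\sum_{j\le s}\expect*{b_{\SOL^\ronmfl_{j-1}}(\pi_j)}$.

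The one genuinely new point, and the step needing care, is the per-step charging inequality: the cost $\ronmfl$ incurs when $\pi_j$ arrives is at least $b_{\SOL^\ronmfl_{j-1}}(\pi_j)$. We will argue it directly. On arrival of $\pi_j$, $\ronmfl$ must connect it, either to an already open facility $f\in F(\SOL^\ronmfl_{j-1})$, paying $d(f,\pi_j)$, or to a newly opened facility $f$, paying at least $c_f+d(f,\pi_j)$. Either way the increase in $\cost$ is at least the minimum defining $\backargs{\pi_j}{\SOL^\ronmfl_{j-1}}$. There is one subtlety: $\ronmfl$ might have pre-connected $\pi_j$ at an earlier step, so that $\pi_j\in C(\SOL^\ronmfl_{j-1})$. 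But then $b_{\SOL^\ronmfl_{j-1}}(\pi_j)=0$, so the inequality holds trivially.

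Summing over $j$ gives $\sum_{j\le s} b_{\SOL^\ronmfl_{j-1}}(\pi_j)\le \cost(\SOL^\ronmfl_\samples)$. Hence the target is at most $\frac1p\expect*{\cost(\SOL^\ronmfl_\samples)}\le \frac{\Delta}{p}\expect*{\OPT(\samples)}\le \frac{\Delta}{p}\OPT(X)$, using the $\Delta$-competitiveness of $\ronmfl$ on the uniformly ordered sample and the fact that $\OPT$ is monotone under taking subsets of clients. This gives the claimed $O(\Delta/p)\cdot\OPT(X)$.
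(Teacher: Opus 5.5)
Your proof is correct and follows the paper's argument step for step: the same random-permutation coupling, the same monotonicity-plus-averaging over $j\le s$, the same exchangeability claim, and the same per-round charging $b_{\SOL^\ronmfl_{j-1}}(\pi_j)\le \cost(\SOL^\ronmfl_j)-\cost(\SOL^\ronmfl_{j-1})$. The only difference is that you spell out the charging step (connecting to an already open facility versus opening a new one, plus the trivial already-connected case), which the paper asserts in one line.
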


\begin{proof}
Let $\pi=(\pi_1,\ldots,\pi_n)$ be a uniformly chosen random permutation of $X$, and let $\samples$ denote the first $s=pn$ clients according to $\pi$. Since the clients in $\samples$ are the input to $\ronmfl$, let $\SOL^\ronmfl_i$ be its output on the prefix $(\pi_1,\ldots,\pi_i)$. Thus, $\SOL^\ronmfl_\samples=\SOL^\ronmfl_s$.

For simplicity, let $b_{\SOL}(v)=\cost(\backargs{v}{\SOL})$. We need to upper bound 
\[\expect*{\sum_{v\in\unc^\ronmfl}\cost(\backargs{v}{\SOL^\ronmfl_\samples})}=\expect*{\sum_{i=s+1}^n b_{\SOL^\ronmfl_s}(\pi_i)}.
\]By monotonicity, if $j\leq s$, then $b_{\SOL^\ronmfl_s}(\pi_i)\leq b_{\SOL^\ronmfl_{j-1}}(\pi_i)$. Averaging this inequality over $j=1,\ldots,s$, taking expectations, and summing over $i=s+1,\ldots,n$, we obtain
\begin{equation}
\label{eq:nmfl-psample-average}
\expect*{\sum_{i=s+1}^n b_{\SOL^\ronmfl_s}(\pi_i)}
\leq
\sum_{i=s+1}^n \frac{1}{s}\sum_{j=1}^s \expect*{b_{\SOL^\ronmfl_{j-1}}(\pi_i)}.
\end{equation}
We claim that $\expect*{b_{\SOL^\ronmfl_{j-1}}(\pi_i)}=\expect*{b_{\SOL^\ronmfl_{j-1}}(\pi_j)}$. Indeed, conditioned on $(\pi_1,\ldots,\pi_{j-1})$ and the randomness of $\ronmfl$, the solution state $\SOL^\ronmfl_{j-1}$ is fixed, and $\pi_i$ and $\pi_j$ are both uniformly distributed over $X\setminus\{\pi_1,\ldots,\pi_{j-1}\}$. Also, the cost incurred by $\ronmfl$ in round $j$ is at least $b_{\SOL^\ronmfl_{j-1}}(\pi_j)$, since $\ronmfl$ must serve $\pi_j$. Applying these observations to \eqref{eq:nmfl-psample-average}, we obtain
\begin{align*}
\expect*{\sum_{i=s+1}^n b_{\SOL^\ronmfl_s}(\pi_i)}
&\leq
\frac{n-s}{s} \sum_{j=1}^s \expect*{b_{\SOL^\ronmfl_{j-1}}(\pi_j)} \leq
\frac{1}{p} \expect*{\cost(\SOL^\ronmfl_\samples)}
\leq
O\left(\frac{\Delta}{p}\right) \OPT(X).
\qedhere
\end{align*}
\end{proof}

\begin{theorem}
\label{thm:nmfl-psample}
If $\ronmfl$ is a random-order online non-metric facility location algorithm with competitive ratio $\Delta$ and $\fnmfl$ is a fractional online non-metric facility location algorithm with competitive ratio $\alpha$, then the algorithm above has expected competitive ratio $O(\Delta+\alpha\log\nicefrac1p)$.
\end{theorem}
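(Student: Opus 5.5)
The proof follows the template of \Cref{thm:oxtsc_poly} and \Cref{thm:cip-psample}. We bound the expected cost of the final solution $\SOL^\ronmfl_\samples\cup \SOL^{\rfnmfl}\cup \SOL^\bk$ by the sum of the expected costs of its three parts, using $\cost(\SOL\cup\SOL')\leq\cost(\SOL)+\cost(\SOL')$.

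\emph{Offline phase.} Since $\samples$ is randomly permuted, $\ronmfl$ runs in random order. Its guarantee gives $\expect*{\cost(\SOL^\ronmfl_\samples)}\leq \Delta\cdot\expect*{\OPT(\samples)}$. Because restricting an optimal solution for $X$ to the clients of $\samples$ is feasible, $\OPT(\samples)\leq\OPT(X)$, so this part costs at most $\Delta\cdot\OPT(X)$.

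\emph{Rounded online phase.} At each step, $\fnmfl$ is $\alpha$-competitive against $\OPTLP$ of the clients fed to it so far. Those clients form a subset of $X$, so their fractional optimum is at most $\OPTLP(X)\leq\OPT(X)$. Applying \Cref{lem:nmfl-rounding} with $\beta=p$ gives $\expect*{\cost(\SOL^{\rfnmfl})}\leq O(\alpha\log\nicefrac1p)\cdot\OPT(X)$.

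\emph{Backup phase.} A client $v$ reaches step 2(c) only if two conditions hold. First, $v\notin C(\SOL^\ronmfl_\samples)$, so $v\in\unc^\ronmfl$; we may also charge clients of $\samples$ that arrive online, using \Cref{lem:nmfl-backup-costs} as stated for $X\setminus\samples$. Second, the rounded solution fails to connect $v$. By \Cref{lem:nmfl-rounding} with $\beta=p$, this failure has probability at most $p$, over the rounding randomness only, for any fixed realization of $\samples$ and of $\ronmfl$'s randomness. When a backup is bought, monotonicity bounds its cost by $\cost(\backargs{v}{\SOL^\ronmfl_\samples})$. The sampled clients in the online stream are already in $C(\SOL^\ronmfl_\samples)$ and are skipped, so only clients of $\unc^\ronmfl$ contribute. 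Conditioning on $\samples$ and $\ronmfl$, the expected backup cost is therefore at most $p\sum_{v\in\unc^\ronmfl}\cost(\backargs{v}{\SOL^\ronmfl_\samples})$. Taking expectations and applying \Cref{lem:nmfl-backup-costs} gives $\expect*{\cost(\SOL^\bk)}\leq p\cdot O(\nicefrac{\Delta}{p})\,\OPT(X)=O(\Delta)\,\OPT(X)$.

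Summing the three bounds gives $O(\Delta+\alpha\log\nicefrac1p)\cdot\OPT(X)$.

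The only delicate point is the independence used in the backup phase. The rounding randomness must be independent of $\samples$ and $\ronmfl$, so that the failure probability $p$ multiplies the backup cost. The per-client guarantee of \Cref{lem:nmfl-rounding} must also hold regardless of which clients were skipped in step 2(a), which we justify by viewing the skipped prefix as a fixed subsequence once we condition on the sample.
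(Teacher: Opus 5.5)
Your proposal is correct and follows the same three-part decomposition as the paper's proof. The offline cost is at most $\Delta\cdot\OPT(X)$, the rounded cost comes from \Cref{lem:nmfl-rounding} with $\beta=p$, and the backup cost is the per-client miss probability $p$ times the bound of \Cref{lem:nmfl-backup-costs}; your conditioning remarks just make explicit the independence the paper uses implicitly. The only blemish is the stray clause about ``charging clients of $\samples$ that arrive online,'' which is unnecessary: as you note a sentence later, those clients are skipped in step 2(a) and contribute nothing.
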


\begin{proof}
We bound the expected cost of $\SOL^\ronmfl_\samples\cup \SOL^{\rfnmfl}\cup \SOL^\bk$. The expected cost of $\SOL^\ronmfl_\samples$ is at most $\Delta\cdot \expect*{\OPT(\samples)}\leq \Delta\cdot \OPT(X)$. By the guarantee of $\fnmfl$ and \Cref{lem:nmfl-rounding}, the expected cost of $\SOL^{\rfnmfl}$ is at most $O(\alpha\log\nicefrac1p)\cdot \OPT(X)$.

It remains to bound the backup cost. A client can contribute to $\SOL^\bk$ only if it lies in $\unc^\ronmfl$ and is missed by the rounded online solution. By \Cref{lem:nmfl-rounding}, each such client is missed by the rounded online solution with probability at most $p$. By monotonicity, the backup cost paid on such a client is at most $\cost(\backargs{v}{\SOL^\ronmfl_\samples})$. Therefore, by \Cref{lem:nmfl-backup-costs}, $\expect*{\cost(\SOL^\bk)}\leq p\cdot \expect*{\sum_{v\in\unc^\ronmfl}\cost(\backargs{v}{\SOL^\ronmfl_\samples})}\leq O(\Delta)\cdot \OPT(X)$. Summing the three expected costs completes the proof.
\end{proof}

We obtain an expected $O(\log\nicefrac1p\cdot \log m + \log n)$-competitive ratio by instantiating $\ronmfl$ to be the $O(\log mn)$-competitive LearnOrCoverNMFL algorithm \cite{GuptaKL24} and $\fnmfl$ to be the $O(\log m)$-competitive fractional online algorithm \cite{nonmetricfacility}.

\subsection{$(p,k)$-Adversarially Robust Model}
\label{sec:nmfl-corruptions}

We now turn to the model in which the adversary is allowed to corrupt $k$ of the $s=pn$ samples; let $\csamples$ be the resulting set. As in the set cover and CIP cases, the main idea is to use only a carefully chosen prefix of the corrupted sample.

The algorithm uses $\ronmfl$, $\fnmfl$, and the backup algorithm defined previously. The online phase proceeds in sub-phases indexed by pairs $(g,\widehat{k})$, where $g$ estimates $\OPTLP(X)$ and $\widehat{k}$ estimates $k$. The sub-phases form two nested loops: $g$ remains fixed in the outer loop, and $\widehat{k}$ increases monotonically within the inner loop. We define an \emph{epoch} to be the union of all sub-phases associated with a fixed value of $g$.

We initialize $\widehat{k}=3$, and set $g$ to be the fractional optimum for the clients in $X$ seen so far. The guess $g$ is recomputed every time the fractional optimum at least doubles. When this happens, we start a new sub-phase $(g,3)$ with the new value of $g$. Within a sub-phase, we run $\fnmfl$ and round its fractional solution online using rounding parameter $\ln(\widehat{k}/p)$, as in \Cref{lem:nmfl-rounding}. Let $\SOL^{\rfnmfl}_t$ denote the rounded online solution maintained up to time $t$, including all facilities opened and client connections made by the rounded online algorithm in all sub-phases so far.

We end the current sub-phase $(g,\widehat{k})$ if the cost incurred by the backup algorithm during this sub-phase exceeds $g\cdot \alpha\ln(\widehat{k}/p)$. When this happens, we replace $\widehat{k}$ by $\widehat{k}^2/p$ and start the next sub-phase with the same value of $g$. If the fractional optimum has doubled, we start a new epoch with the new value of $g$ and $\widehat{k}=3$.

The overall algorithm proceeds as follows.
\begin{enumerate}[nosep]
    \item In the offline phase, run $\ronmfl$ on the randomly permuted corrupted sample $\csamples$, without initially adding its solution to the maintained solution. Let $\SOL^\ronmfl_i$ be the solution state obtained after processing the first $i$ clients of $\csamples$. Let $\ell_0=0$ and $\SOL^\bk_0=\emptyset$.

    \item In the online phase, let $v$ be the client arriving at step $t$. Suppose we are in sub-phase $(g,\widehat{k})$.
    \begin{enumerate}[nosep]
        \item Apply $\fnmfl$ and round the fractional solution online with boosting parameter $\ln(\widehat{k}/p)$ to update $\SOL^{\rfnmfl}_t$.

        \item If $v\notin C(\SOL^{\rfnmfl}_t\cup \SOL^\bk_{t-1}\cup \SOL^\ronmfl_{\ell_{t-1}})$, then set $\SOL^\bk_t=\SOL^\bk_{t-1}\cup \backargs{v}{\SOL^{\rfnmfl}_t\cup \SOL^\bk_{t-1}\cup \SOL^\ronmfl_{\ell_{t-1}}}$. Otherwise, set $\SOL^\bk_t=\SOL^\bk_{t-1}$.

        \item Let $\ell_t=\max\{i\mid i\geq \ell_{t-1}\text{ and }\cost(\SOL^\ronmfl_i)\leq \cost(\SOL^{\rfnmfl}_t)+\cost(\SOL^\bk_t)\}$.
    \end{enumerate}
\end{enumerate}
The solution at time $t$ is $\SOL^{\rfnmfl}_t\cup \SOL^\bk_t\cup \SOL^\ronmfl_{\ell_t}$. Let $\SOL^{\rfnmfl}$, $\SOL^\bk$, and $\SOL^\ronmfl$ be the final values of $\SOL^{\rfnmfl}_t$, $\SOL^\bk_t$, and $\SOL^\ronmfl_{\ell_t}$, respectively. The final solution is $\SOL^{\rfnmfl}\cup \SOL^\bk\cup \SOL^\ronmfl$. By the invariant in the definition of $\ell_t$, we always have $\cost(\SOL^\ronmfl)\leq \cost(\SOL^{\rfnmfl})+\cost(\SOL^\bk)$.

\subsubsection{Analysis}

Let $\crpt$ be the index of the first corrupted sample in $\csamples$, ordered by the random permutation used by $\ronmfl$. The algorithm does not know $\crpt$. Let $\SOL^\ronmfl_{\clean}:=\SOL^\ronmfl_{\crpt-1}$ be the solution of $\ronmfl$ on the clean prefix, and let $\unc^\ronmfl=\{v\in X\setminus\samples\mid v\notin C(\SOL^\ronmfl_{\clean})\}$ be the clients not connected by $\ronmfl$ on the clean prefix.

For a prefix $P$ of $\csamples$, let $\event(P)$ denote the event ``$\csamples_{<|P|+1}=P$ and $\crpt>|P|+1$,'' i.e., the first $|P|$ positions are $P$ and the next position is uncorrupted. For a subset $P\subseteq\csamples$, define
\[
\skewed(P) := \left\{v \in X \setminus P 
\ \ \middle| \ \ \prob*{\csamples_{\abs{P}+1} = v  | \event(P)} 
< \frac{1}{2(n-\abs{P})} \right\},
\]
to be the set of clients whose chance of appearing next is too small compared with the uniform benchmark, given that the prefix $P$ seen so far is uncorrupted. Without corruptions, each of the remaining clients in $X\setminus P$ would be equally likely to appear next in random order.

Recall the following near-uniformity lemma, proven in \Cref{sec:setcover}.
\uniformity*

We first prove the robust analog of \Cref{lem:nmfl-backup-costs}.

\begin{lemma}\label{lem:nmflwc_unc}
It holds that
$\expect*{\sum_{v\in \unc^\ronmfl}\cost(\backargs{v}{\SOL^\ronmfl_{\clean}})}
\leq
O\left(\frac{k\Delta}{p}\right)\cdot \OPT(X)$.
\end{lemma}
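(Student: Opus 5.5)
The argument will follow the proofs of \Cref{lem:scwc_unc1} and \Cref{lem:cipwc_unc} essentially line by line. The only problem-specific ingredients are three properties of the backup quantity $b_{\SOL}(v):=\cost(\backargs{v}{\SOL})$.

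\begin{enumerate}[nosep]
\item \emph{Monotonicity:} $\SOL'\supseteq\SOL$ implies $b_{\SOL'}(v)\le b_{\SOL}(v)$.
\item \emph{Per-client bound:} $b_{\SOL}(v)\le \cost(\backargs{v}{\emptyset})\le O(\OPT(X))$ for every $v$.
\item \emph{Charging to $\ronmfl$:} the cost $\ronmfl$ incurs when serving $\csamples_i$ from state $\SOL^\ronmfl_{i-1}$ is at least $b_{\SOL^\ronmfl_{i-1}}(\csamples_i)$. Summed over the clean prefix, this gives $\sum_{i<\crpt} b_{\SOL^\ronmfl_{i-1}}(\csamples_i)\le \cost(\SOL^\ronmfl_\clean)$.
\end{enumerate}

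Set $T=s/(4k)$ and $B_i=\sum_{v\in X} b_{\SOL^\ronmfl_i}(v)$. By monotonicity $B_i$ is nonincreasing in $i$, and the target sum is at most $B_{\crpt-1}$.

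The first step is to reproduce the decomposition \eqref{eqn:bkupcost} by a case split on the position $\crpt=i$ of the first corruption.
\begin{itemize}[nosep]
\item \emph{Case $i=1$.} This has probability at most $k/s$. Since $B_0\le n\cdot O(\OPT(X))$ by the per-client bound, this case contributes $O(k/p)\OPT(X)$.
\item \emph{Case $2\le i\le T+1$.} The hazard $\prob*{\crpt=i\mid\crpt>i-1}$ is $O(k/s)$. Together with $B_{i-1}\le B_{i-2}$, this case contributes $O(k/s)\sum_{i\le T}\prob*{\crpt>i}\expect*{B_{i-1}\mid\crpt>i}$.
\item \emph{Case $i>T+1$.} By monotonicity, $B_{i-1}\le \frac1T\sum_{j\le T}B_{j-1}$. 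The clean prefix of length $j-1$ has the same law under $\crpt>T+1$ as under $\crpt>j$, so this case is absorbed into the same sum.
\end{itemize}

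The second step bounds $B_{i-1}$ for $i\le T$, conditioned on $\event(P)$ with $|P|=i-1$. Clients in $P$ have $b=0$, because $\ronmfl$ has already connected them, so they lie in $C(\SOL^\ronmfl_{i-1})$. A client outside $\skewed(P)$ appears at position $i$ with probability at least $1/(2n)$. Every client in $\skewed(P)$ is bounded by $O(\OPT(X))$. This gives
\[
B_{i-1}\le 2n\,\expect*{b_{\SOL^\ronmfl_{i-1}}(\csamples_i)\mid\event(P)}+\abs{\skewed(P)}\cdot O(\OPT(X)).
\]
Take expectations against $\ind\{\crpt>i\}$ and sum over $i\le T$. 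The first term telescopes via the charging property to $2n\,\expect*{\cost(\SOL^\ronmfl_\clean)}\le O(n\Delta)\OPT(X)$. This uses the $\Delta$-competitiveness of $\ronmfl$ on the clean prefix, which is in uniformly random order conditioned on its contents. For the second term, \Cref{lem:uniformitybound} gives $\expect*{\abs{\skewed(\csamples_{<i})}\mid\crpt>i}=O(k/p)$ for each $i\le T$, so summing over $i\le T$ yields $T\cdot O(k/p)\OPT(X)=O(n)\OPT(X)$. Multiplying the total by $O(k/s)=O(k/(np))$ gives the claimed $O(k\Delta/p)\OPT(X)$.

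I expect the main care to lie in the charging property rather than in the probabilistic steps, which transfer verbatim from the set cover proof. The cost $\ronmfl$ pays in round $i$ must dominate the cheapest way to serve $\csamples_i$ from the previous state, and this holds because $\backargs{\cdot}{\cdot}$ is defined as the minimum over connecting to an open facility or opening one. It also relies on $\SOL^\ronmfl_{i-1}$ being a valid state whose subsequent increments are disjoint additions, so the per-round costs sum to at most $\cost(\SOL^\ronmfl_\clean)$. A secondary point is the $\OPT(\samples)\le\OPT(X)$ comparison on the clean prefix. This holds because the clean prefix is a subset of $\samples\subseteq X$, and non-metric facility location cost is monotone under removing clients.
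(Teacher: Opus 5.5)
Your proposal is correct and follows the paper's proof essentially step for step. It uses the same decomposition \eqref{eqn:nmfl-bkupcost} by case analysis on $\crpt$ (which the paper imports verbatim from the set cover proof), the same per-prefix bound on $B_{i-1}$ via $\skewed(P)$, and the same combination of \Cref{lem:uniformitybound} with the random-order guarantee of $\ronmfl$ on the clean prefix; your explicit listing of monotonicity, the per-client bound, and the charging property just spells out what the paper uses implicitly to transfer the set cover argument.
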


\begin{proof}
Let $T=s/(4k)$, and assume $T$ is a nonzero integer for simplicity. For a solution state $\SOL$, let $b_{\SOL}(v)=\cost(\backargs{v}{\SOL})$. Let $B_i=\sum_{v\in X} b_{\SOL^\ronmfl_i}(v)$. By monotonicity, $B_i\leq B_{i-1}$. Since every client in $\unc^\ronmfl$ is not connected by $\SOL^\ronmfl_{\clean}=\SOL^\ronmfl_{\crpt-1}$, we have $\sum_{v\in\unc^\ronmfl}\cost(\backargs{v}{\SOL^\ronmfl_{\clean}})\leq B_{\crpt-1}$.

We claim that
\begin{eqnarray}
\label{eqn:nmfl-bkupcost}
\expect*{B_{\crpt-1}}
&\leq&
O\left(\frac{k}{p}\right)\cdot \OPT(X)
+
O\left(\frac{k}{s}\right)
\sum_{i=1}^T
\prob*{\crpt>i}\cdot
\expect*{B_{i-1}\mid \crpt>i}.
\end{eqnarray}
We omit this proof as it is identical to the proof of \eqref{eqn:bkupcost}, and follows by a case-analysis on where the first corruption appears.

We now analyze $B_{i-1}$ for $i\leq T$. Condition on $\event(P)$, where $|P|=i-1$. If a client is not in $\skewed(P)$, then by definition its conditional probability of appearing at position $i$ in $\csamples$ is at least $1/(2(n-|P|))\geq 1/(2n)$. Also, $b_{\SOL^\ronmfl_{i-1}}(v)=0$ for every $v\in P$, since $\ronmfl$ has already connected every client in $P$. Therefore,
\begin{equation}
\label{eq:nmfl-B-prefix}
B_{i-1} = \sum_{v\in X} b_{\SOL^\ronmfl_{i-1}}(v)
\leq
2n\cdot \expect*{b_{\SOL^\ronmfl_{i-1}}(\csamples_i)\mid \event(P)}
+
\abs{\skewed(P)}\cdot O(\OPT(X)).
\end{equation}
Taking expectations, summing over $i=1,\ldots,T$, and using \Cref{lem:uniformitybound}, we obtain
\begin{align*}
&\sum_{i=1}^T \prob*{\crpt>i}\cdot \expect*{B_{i-1}\mid \crpt>i} \\
&\leq
2n\expect*{\sum_{i=1}^T \ind\{\crpt>i\}\cdot b_{\SOL^\ronmfl_{i-1}}(\csamples_i)}
+
O(\OPT(X))\sum_{i=1}^T \prob*{\crpt>i}\cdot
\expect*{\abs{\skewed(\csamples_{<i})}\mid \crpt>i} \\
&\leq
2n\expect*{\cost(\SOL^\ronmfl_{\clean})}
+
T\cdot O\left(\frac{k}{p}\right)\cdot \OPT(X) \\
&\leq
O(n\Delta)\cdot \OPT(X).
\end{align*}
In the last inequality, we used the random-order guarantee of $\ronmfl$ on the clean prefix. Substituting this bound into \eqref{eqn:nmfl-bkupcost} completes the proof.
\end{proof}

We now bound the cost incurred by the rounded online solution in terms of the cost incurred by the backup algorithm.

\begin{lemma}\label{nmfl-onlinecost-guess-double}
Suppose $D$ is the expected cost incurred by the backup algorithm over the whole algorithm. Then, the expected cost of $\SOL^{\rfnmfl}$ is at most $\OPT(X)\cdot O(\alpha\log(k/p))+O(D)$.
\end{lemma}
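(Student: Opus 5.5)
The proof will follow \Cref{onlinecost-guess-double} and \Cref{cip-onlinecost-guess-double} essentially verbatim, replacing the rounding guarantee by \Cref{lem:nmfl-rounding}. Fix a guess $g$ for $\OPTLP(X)$ and the corresponding epoch. Throughout this epoch the fractional optimum on the clients seen so far is at most $2g$. Since $\fnmfl$ is $\alpha$-competitive, the fractional cost it incurs during the epoch, and hence in any sub-phase $(g,\widehat{k})$, is $O(\alpha g)$. Applying \Cref{lem:nmfl-rounding} with $\beta=p/\widehat{k}$ to the fractional solution restricted to this sub-phase then bounds the expected cost of the rounded solution in sub-phase $(g,\widehat{k})$ by $g\cdot O(\alpha\log(\widehat{k}/p))$. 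We then split epochs into two cases.

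\emph{Single sub-phase epochs.} Here $\widehat{k}=3$, so the rounded cost is $g\cdot O(\alpha\log(3/p))=g\cdot O(\alpha\log(k/p))$, using $k\geq 1$ and $p\leq 1/2$. The values of $g$ at least double from epoch to epoch and are all at most $\OPTLP(X)\leq\OPT(X)$. Summing the geometric series therefore gives $\OPT(X)\cdot O(\alpha\log(k/p))$ in total.

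\emph{Epochs with at least two sub-phases.} Every non-final sub-phase $(g,\widehat{k})$ ended because the backup cost inside it exceeded $g\cdot\alpha\ln(\widehat{k}/p)$. This dominates, up to a constant, the rounded cost $g\cdot O(\alpha\log(\widehat{k}/p))$ of that sub-phase. For the final sub-phase with parameter $\widehat{k}_{\max}=(\widehat{k}')^2/p$, we have $\ln(\widehat{k}_{\max}/p)=2\ln(\widehat{k}'/p)$. Its rounded cost is thus at most a constant times the backup cost of the penultimate sub-phase. Each backup payment is charged at most twice, once by its own sub-phase and once by the following final one. So the rounded cost over all such epochs is $O$ of the total backup cost, and taking expectations gives $O(D)$.

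The main subtlety is the expectation bookkeeping. Sub-phase boundaries are random and depend on the rounding's coins, so the charge "rounded cost $\le$ constant $\times$ backup cost" should be made pathwise on the fractional cost, which is deterministic given the input. The expected rounded cost would then be related to it through the per-step guarantee of \Cref{lem:nmfl-rounding}. I would handle this by noting that the rounding is applied to each sub-phase's fractional increment with a fixed parameter known at the start of the sub-phase. The expected rounded cost, conditioned on the history at the start of the sub-phase, is then $O(\log(\widehat{k}/p))$ times the fractional cost of the sub-phase, which is $O(\alpha g)$. Summing with the tower rule yields the claimed bound $\OPT(X)\cdot O(\alpha\log(k/p))+O(D)$.
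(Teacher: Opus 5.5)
Your proposal is correct and follows the paper's proof essentially step for step: the per-sub-phase bound $g\cdot O(\alpha\log(\widehat{k}/p))$ on the rounded cost, the geometric sum over single-sub-phase epochs, and charging each non-final sub-phase to its own backup cost and the final one to the penultimate sub-phase via $\ln(\widehat{k}_{\max}/p)\le 2\ln(\widehat{k}'/p)$. Your extra bookkeeping tightens a step the paper states only informally. The paper compares an expected rounded cost against a pathwise backup threshold; you instead bound the rounded cost conditionally on the history at the start of each sub-phase by $O(g\alpha\ln(\widehat{k}/p))$, a quantity fixed at that time that can then be charged pathwise to backup.
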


\begin{proof}
Fix a guess $g$ for $\OPTLP(X)$, and consider the epoch corresponding to $g$. The cost incurred by the fractional solution produced by $\fnmfl$ in any sub-phase $(g,\widehat{k})$ within this epoch is at most $g\cdot O(\alpha)$. Indeed, the fractional optimum is at most $2g$ throughout this sub-phase, and $\fnmfl$ is $\alpha$-competitive. Thus, by \Cref{lem:nmfl-rounding}, the expected cost incurred by the rounded online solution in sub-phase $(g,\widehat{k})$ is $g\cdot O(\alpha\log(\widehat{k}/p))$.

If the epoch corresponding to $g$ has a single sub-phase, then $\widehat{k}=3$, and the expected cost incurred by the rounded online solution in this sub-phase is $g\cdot O(\alpha\log(k/p))$. Since $g$ grows geometrically across epochs and $\OPTLP(X)\leq \OPT(X)$, the expected cost incurred by the rounded online solution across all such epochs is at most $\OPT(X)\cdot O(\alpha\log(k/p))$.

Now suppose the epoch corresponding to $g$ has at least two sub-phases. Let $\widehat{k}_{\max}$ be the maximum value of $\widehat{k}$ in this epoch. For every non-final sub-phase $(g,\widehat{k})$, the cost incurred by the backup algorithm in that sub-phase exceeds $g\cdot \alpha\ln(\widehat{k}/p)$. For the final sub-phase, let $\widehat{k}'$ be the value of $\widehat{k}$ in the penultimate sub-phase. Since the penultimate sub-phase ended and $\widehat{k}_{\max}\leq (\widehat{k}')^2/p$, we have $\ln(\widehat{k}_{\max}/p)\leq 2\ln(\widehat{k}'/p)$. Hence, the cost of the final sub-phase is also bounded by a constant times the backup cost incurred in the penultimate sub-phase. Therefore, the expected cost incurred by the rounded online solution across all epochs with at least two sub-phases is $O(D)$.
\end{proof}

We classify a sub-phase $(g,\widehat{k})$ as early if $\widehat{k}<k$, and late otherwise.

\begin{lemma}\label{nmfl-earlybackup-guess-double}
The expected cost incurred by the backup algorithm during early sub-phases $(\widehat{k}<k)$ is at most $\OPT(X)\cdot O(\alpha\log(k/p))$.
\end{lemma}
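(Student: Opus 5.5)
The plan mirrors the proofs of \Cref{earlybackup-guess-double} and \Cref{cip-earlybackup-guess-double}. I fix an epoch with guess $g$ and an early sub-phase $(g,\widehat{k})$ with $\widehat{k}<k$, and bound the backup cost inside it deterministically. I then sum over sub-phases and epochs using geometric growth.

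\textbf{Step 1: one sub-phase.} Every time step of the sub-phase except the last one ends without triggering the ending rule. So the backup cost accumulated before the last step is at most $g\cdot\alpha\ln(\widehat{k}/p)$.

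\textbf{Step 2: the last step.} The backup cost at this single step is $\cost(\backargs{v}{\SOL})\leq \cost(\backargs{v}{\emptyset})$, by monotonicity. By the stated property of $\BACKUP$, this is at most $O(\cost(\SOL^*))$ for any fractional solution $\SOL^*$ serving $v$. I take $\SOL^*$ to be the fractional optimum on the clients seen so far. Since the epoch has not yet ended, its value is at most $2g$. Hence the last step costs $O(g)$. Because $\widehat{k}\geq 3$ and $p\leq 1/2$, we have $\ln(\widehat{k}/p)\geq 1$, so the sub-phase costs $g\cdot O(\alpha\log(\widehat{k}/p))$ in total.

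\textbf{Step 3: summing.} Within one epoch, each new sub-phase replaces $\widehat{k}$ by $\widehat{k}^2/p$, which doubles $\ln(\widehat{k}/p)$. The early sub-phases all have $\widehat{k}<k$, so the largest early value of $\ln(\widehat{k}/p)$ is below $\ln(k/p)$. The doubling makes the sum over early sub-phases of one epoch geometric, giving $g\cdot O(\alpha\log(k/p))$ per epoch.

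Across epochs, $g$ at least doubles each time, and the final $g$ is at most $\OPTLP(X)\leq\OPT(X)$. Summing the per-epoch bounds therefore gives $\OPT(X)\cdot O(\alpha\log(k/p))$.

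All the bounds above hold pointwise, so taking expectations costs nothing extra.

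\textbf{Main obstacle.} The only nontrivial point is Step 2: the single-step backup cost must be $O(g)$ even though the ending rule only fires after the threshold is crossed. For non-metric facility location this is justified as follows. The cheapest option $\min_f(c_f+d(f,v))$ is a lower bound for any fractional assignment of $v$, and hence for the fractional optimum, which is at most $2g$. The remaining steps are routine geometric-series bookkeeping.
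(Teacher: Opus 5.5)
Your proposal is correct and follows essentially the same argument as the paper. Both bound each early sub-phase by the threshold $g\cdot\alpha\ln(\widehat{k}/p)$ plus an $O(g)$ final step, then sum geometrically over early sub-phases within an epoch and over epochs; your explicit check that $\min_f(c_f+d(f,v))$ lower-bounds any fractional solution serving $v$, and your remark that $\ln(\widehat{k}/p)\geq 1$ absorbs the last-step term, fill in details the paper leaves implicit.
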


\begin{proof}
Consider any early sub-phase $(g,\widehat{k})$. The backup cost in this sub-phase, excluding the last time step, is at most $g\cdot \alpha\ln(\widehat{k}/p)$ by the sub-phase-ending rule. The backup cost in a single time step is at most $O(g)$ within this sub-phase. Indeed, the fractional optimum on the clients seen so far within the sub-phase is at most $2g$, and the cost of serving a single client by the backup algorithm is at most a constant times this fractional optimum. Therefore, the backup cost in an early sub-phase $(g,\widehat{k})$ is at most $g\cdot O(\alpha\log(\widehat{k}/p))$. Since $g$ grows geometrically across epochs and $\ln(\widehat{k}/p)$ grows geometrically across early sub-phases within an epoch, the total expected cost incurred by the backup algorithm during early sub-phases is at most $\OPT(X)\cdot O(\alpha\log(k/p))$.
\end{proof}

\begin{lemma}\label{nmfl-latebackup-guess-double}
Suppose the initial solution state contains $\SOL^\ronmfl_{\clean}$. Then, the expected cost incurred by the backup algorithm during late sub-phases $(\widehat{k}\geq k)$ of any suffix of the online sequence $X$ is at most $\OPT(X)\cdot O(\Delta)$.
\end{lemma}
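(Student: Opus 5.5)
I will mirror the proofs of \Cref{latebackup-guess-double} and \Cref{cip-latebackup-guess-double}. The approach is to reduce every backup purchase in a late sub-phase to an event of the form ``client in $\unc^\ronmfl$ and missed by rounding.'' I then multiply the two resulting bounds.

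First, I fix the suffix and use the assumption that the maintained state contains $\SOL^\ronmfl_{\clean}$ throughout. The backup rule in phase 2(b) only fires for a client $v$ not connected by $\SOL^{\rfnmfl}_t\cup\SOL^\bk_{t-1}\cup\SOL^\ronmfl_{\ell_{t-1}}$. Such a $v$ is not in $C(\SOL^\ronmfl_{\clean})$. Sampled clean clients in the prefix are connected, so $v\in\unc^\ronmfl$, up to clients of $\samples$ outside the clean prefix, which I absorb exactly as in the set cover argument by the definition of $\unc^\ronmfl$. Such a $v$ is also not in $C(\SOL^{\rfnmfl}_t)$. By monotonicity of backup costs, the amount paid is at most $\cost(\backargs{v}{\SOL^\ronmfl_{\clean}})$, since the current state is a superset of $\SOL^\ronmfl_{\clean}$.

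Second, in a late sub-phase the boosting parameter is $\ln(\widehat k/p)\ge\ln(k/p)$. By \Cref{lem:nmfl-rounding}, each arriving client is then unconnected by the rounded solution with probability at most $p/\widehat k\le p/k$. This randomness is that of $\mathsf{Round}$, which is independent of the sample, the adversary, and the permutation fed to $\ronmfl$. I therefore condition on the latter and on the (adversarial, but oblivious to rounding coins) arrival order. The expected backup cost in late sub-phases is then at most
\[
\frac{p}{k}\cdot\expect*{\sum_{v\in\unc^\ronmfl}\cost(\backargs{v}{\SOL^\ronmfl_{\clean}})}\le \frac{p}{k}\cdot O\left(\frac{k\Delta}{p}\right)\OPT(X)=O(\Delta)\cdot\OPT(X),
\]
using \Cref{lem:nmflwc_unc}.

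The main obstacle is justifying the independence in this product. Two issues arise. The sub-phase boundaries, and hence which clients are ``late,'' depend on past backup costs and thus on past rounding coins. Also, the guarantee of \Cref{lem:nmfl-rounding} restarts across sub-phases. I would handle both by a per-client bound. For each $v$, the event that $v$ is missed by the rounding at its arrival time in a late sub-phase has conditional probability at most $p/k$ given the history, because the rounding guarantee holds for every step. Summing per-client contributions then gives the bound by linearity, with no product independence needed.
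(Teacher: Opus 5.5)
Your outline is the paper's proof. In a late sub-phase of the suffix, backup fires only on a client left unconnected by $\SOL^\ronmfl_{\clean}$ and missed by the rounding, and by monotonicity it pays at most $\cost(\backargs{v}{\SOL^\ronmfl_{\clean}})$. The miss probability $p/\widehat{k}\le p/k$ from \Cref{lem:nmfl-rounding} is then multiplied against \Cref{lem:nmflwc_unc}; the paper performs this multiplication without further comment.

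The step you add to justify the multiplication is not valid as stated. \Cref{lem:nmfl-rounding} bounds the probability that a given client is missed for a fixed input sequence, averaged over all of the rounding's coins. It gives no bound conditional on the rounding's own past. For threshold-type rounding the conditional version is in fact false. Suppose the facilities that fractionally serve $v$ were already raised by earlier clients and none of them has been opened, which the history reveals. Then, conditional on the history, $v$ is missed with probability essentially one. So \emph{conditional probability at most $p/k$ given the history} cannot be what decouples the late status of $v$ from the miss event.

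What works is to condition only on what is fixed when a sub-phase begins. Let each sub-phase round with fresh coins, write $c_v:=\cost(\backargs{v}{\SOL^\ronmfl_{\clean}})$, and let $t_v$ be the arrival time of $v$. Let $M_j(v)$ be the event that the rounding of the $j$-th sub-phase of $v$'s epoch, run hypothetically on the whole remainder of the sequence, misses $v$. Since $\mathsf{Round}$ is online, the event that $v$ arrives in sub-phase $j$ and is missed is contained in $M_j(v)$ intersected with the event that sub-phase $j$ has started by time $t_v$. That latter event and $c_v$ are determined by the input, the sample-side randomness, and the coins of earlier sub-phases. Conditioning on them, \Cref{lem:nmfl-rounding} still gives $\Pr[M_j(v)]\le p/\widehat{k}_j$.

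Now sum over the late sub-phases of $v$'s epoch; epoch boundaries depend only on the input. This costs a factor $\sum_{j:\widehat{k}_j\ge k}p/\widehat{k}_j\le \frac{6}{5}\cdot\frac{p}{k}$, because $\widehat{k}$ is multiplied by $\widehat{k}/p\ge 6$ from one sub-phase to the next. Summing $c_v$ over clients then finishes the argument.

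Separately, in this model all of $X$ arrives online, so sampled clients outside the clean prefix can also reach backup even though they are not in $\unc^\ronmfl$. The set cover case is no precedent here, since there only $X\setminus\samples$ arrives. These clients are still covered, because the proof of \Cref{lem:nmflwc_unc} in fact bounds $\expect*{B_{\crpt-1}}$, which sums over all of $X$.
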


\begin{proof}
By assumption, the initial solution state contains $\SOL^\ronmfl_{\clean}$. Therefore, in this suffix, a client is passed to the backup algorithm in a late sub-phase only if it is in $\unc^\ronmfl$ and is also missed by the rounded online solution.

In any late sub-phase, the rounding parameter is at least $\ln(k/p)$. Hence, by \Cref{lem:nmfl-rounding}, the probability that an incoming client is not connected by the rounded online solution is at most $p/k$. By \Cref{lem:nmflwc_unc}, if every incoming client left unconnected by $\SOL^\ronmfl_{\clean}$ were passed to the backup algorithm, then the expected cost incurred would be $\OPT(X)\cdot O((k/p)\Delta)$. Multiplying by the additional probability $p/k$ that the rounded online solution also misses the client, the expected cost incurred by the backup algorithm in late sub-phases during the suffix is at most $\OPT(X)\cdot O(\Delta)$.
\end{proof}

\begin{theorem}\label{thm:nmfl-corruptions}
If $\ronmfl$ is a random-order online non-metric facility location algorithm with competitive ratio $\Delta$ and $\fnmfl$ is a fractional online non-metric facility location algorithm with competitive ratio $\alpha$, then the algorithm above has expected competitive ratio $O(\Delta+\alpha\log(k/p))$.
\end{theorem}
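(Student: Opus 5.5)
The plan is to mirror the proofs of \Cref{thm:totalcost-guess-double} and \Cref{thm:cip-corruptions} verbatim, with solution states in place of set collections or integer vectors. The total cost splits into three parts, since $\cost(\SOL\cup\SOL')\leq\cost(\SOL)+\cost(\SOL')$:
\begin{itemize}
\item $\cost(\SOL^\ronmfl)$, which by the invariant defining $\ell_t$ is at most $\cost(\SOL^{\rfnmfl})+\cost(\SOL^\bk)$;
\item $\cost(\SOL^{\rfnmfl})$, which by \Cref{nmfl-onlinecost-guess-double} is at most $\OPT(X)\cdot O(\alpha\log(k/p))+O(\expect*{\cost(\SOL^\bk)})$;
\item $\cost(\SOL^\bk)$.
\end{itemize}
So everything reduces to showing $\expect*{\cost(\SOL^\bk)}\leq \OPT(X)\cdot O(\Delta+\alpha\log(k/p))$.

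To bound the backup, define the critical step
\[
t^*:=\min\{t\mid \cost(\SOL^{\rfnmfl}_t)+\cost(\SOL^\bk_t)\geq \cost(\SOL^\ronmfl_{\clean})\},
\]
with $t^*=\infty$ if no such $t$ exists.

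If $t^*=\infty$, then $\cost(\SOL^\bk)<\cost(\SOL^\ronmfl_{\clean})$. In expectation this is $O(\Delta)\cdot\OPT(X)$: conditioned on the clean prefix, its order is uniformly random, so the random-order guarantee of $\ronmfl$ applies and gives cost at most $\Delta\cdot\OPT$ of the clean prefix, which is at most $\Delta\cdot\OPT(X)$.

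If $t^*$ is finite, the backup cost before $t^*$ is below $\cost(\SOL^\ronmfl_{\clean})$. The single step at $t^*$ costs at most $\cost(\backargs{v}{\emptyset})=O(\OPT(X))$, using the stated fact that backing up one client costs $O$ of any feasible solution serving it. After $t^*$, the invariant forces $\ell_t\geq\crpt-1$, so the maintained state contains $\SOL^\ronmfl_{\clean}$. The remaining backup cost then splits into two parts:
\begin{itemize}
\item backup during early sub-phases, at most $\OPT(X)\cdot O(\alpha\log(k/p))$ by \Cref{nmfl-earlybackup-guess-double};
\item backup during late sub-phases in the suffix after $t^*$, at most $O(\Delta)\cdot\OPT(X)$ by \Cref{nmfl-latebackup-guess-double}.
\end{itemize}
Adding these cases bounds the backup cost, and combining with the first paragraph gives the $O(\Delta+\alpha\log(k/p))$ ratio.

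The main subtle point is applying \Cref{nmfl-latebackup-guess-double} to a suffix starting at the random time $t^*$, since $t^*$ depends on $\SOL^\ronmfl_{\clean}$. The remedy is the same as in the set cover case. After $t^*$, any client charged in a late sub-phase must lie in $\unc^\ronmfl$ and be missed by the rounding. By monotonicity of $\backup$, its charge is at most $\cost(\backargs{v}{\SOL^\ronmfl_{\clean}})$. The miss event has probability at most $p/k$, and this bound comes from \Cref{lem:nmfl-rounding} over the rounding's own randomness, independent of the sample randomness. So the expected late backup cost is at most $(p/k)\cdot\expect*{\sum_{v\in\unc^\ronmfl}\cost(\backargs{v}{\SOL^\ronmfl_{\clean}})}$, and \Cref{lem:nmflwc_unc} bounds this by $O(\Delta)\cdot\OPT(X)$, regardless of when $t^*$ occurs.
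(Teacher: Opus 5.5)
Your proposal is correct and follows the paper's proof essentially step for step. It uses the same critical step $t^*$ and the same case split, invokes \Cref{nmfl-earlybackup-guess-double} and \Cref{nmfl-latebackup-guess-double} for the backup cost after $t^*$, and handles the other two components with \Cref{nmfl-onlinecost-guess-double} plus the $\ell_t$ invariant. Your last paragraph (monotonicity down to $\SOL^\ronmfl_{\clean}$, miss probability $p/k$, then \Cref{lem:nmflwc_unc}) just unrolls the proof of \Cref{nmfl-latebackup-guess-double}, which is what the paper implicitly relies on when it applies that lemma to the suffix after the random time $t^*$.
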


\begin{proof}
We first bound the expected cost incurred by the backup algorithm. Define the critical time step $t^*:=\min\{t\mid \cost(\SOL^{\rfnmfl}_t)+\cost(\SOL^\bk_t)\geq \cost(\SOL^\ronmfl_{\clean})\}$. If the set is vacuous, set $t^*=\infty$.

If $t^*=\infty$, then $\cost(\SOL^\bk)\leq \cost(\SOL^{\rfnmfl})+\cost(\SOL^\bk)<\cost(\SOL^\ronmfl_{\clean})$. Using the guarantee of $\ronmfl$ on the clean prefix, we get $\expect*{\cost(\SOL^\bk)}\leq O(\Delta)\cdot \OPT(X)$.

Otherwise, $t^*$ is finite. By definition of $t^*$, $\cost(\SOL^{\rfnmfl}_{t^*-1})+\cost(\SOL^\bk_{t^*-1})<\cost(\SOL^\ronmfl_{\clean})$. In particular, $\cost(\SOL^\bk_{t^*-1})<\cost(\SOL^\ronmfl_{\clean})$. The expected cost incurred by the backup algorithm in time step $t^*$ is at most $O(\OPT(X))$. Therefore, the expected backup cost up to and including time $t^*$ is at most $O(\Delta)\cdot \OPT(X)$.

After time $t^*$, the invariant defining $\ell_t$ ensures that $\ell_t\geq \crpt-1$, and hence the current solution state contains $\SOL^\ronmfl_{\clean}$. The expected backup cost during early sub-phases is at most $\OPT(X)\cdot O(\alpha\log(k/p))$ by \Cref{nmfl-earlybackup-guess-double}. The expected backup cost during late sub-phases after time $t^*$ is at most $\OPT(X)\cdot O(\Delta)$ by \Cref{nmfl-latebackup-guess-double}. Thus, $\expect*{\cost(\SOL^\bk)}\leq \OPT(X)\cdot O(\Delta+\alpha\log(k/p))$.

Next, by \Cref{nmfl-onlinecost-guess-double}, the expected cost incurred by the rounded online solution is at most $\OPT(X)\cdot O(\alpha\log(k/p))+O(\expect*{\cost(\SOL^\bk)})\leq \OPT(X)\cdot O(\Delta+\alpha\log(k/p))$.

Finally, by the invariant in the definition of $\ell_t$, the cost of the solution state added from the simulated $\ronmfl$ run is at most $\cost(\SOL^{\rfnmfl})+\cost(\SOL^\bk)$. Therefore, the total expected cost incurred by the algorithm is $\OPT(X)\cdot O(\Delta+\alpha\log(k/p))$.
\end{proof}

We obtain an expected $O(\log\nicefrac{k}{p}\cdot \log m + \log n)$-competitive ratio by instantiating $\ronmfl$ to be the $O(\log mn)$-competitive LearnOrCoverNMFL algorithm \cite{GuptaKL24} and $\fnmfl$ to be the $O(\log m)$-competitive fractional online algorithm \cite{nonmetricfacility}.

\section{Issue in analysis of \cite{ArgueFGS22} for metric facility location} \label{sec:mfl-error}In this subsection, we explain the issue in the analysis of \cite{ArgueFGS22}. Their algorithm is the same as ours (see \Cref{sec:metricpsample} for a description), but restricted to the unit cost case. The relevant portion of their analysis is as follows.
Let $F^*$ be an optimum solution, and fix an optimum facility $f^* \in F^*$. Let $C$ be the set of clients assigned to $f^*$ in the optimum solution. Let $j_1 < \ldots < j_r$ be the clients of $C$ ordered by increasing distance from $f^*$. 

For each sampled client $j_a$, let $i_a \in \widehat F$ \ be the facility serving $j_a$ in the solution for the offline phase. For an online client $j \in C$ that lies between two consecutive sampled clients $j_a$ and $j_{a+1}$, its connection in the online phase is at most $d(j,\widehat{F})$, because $\widehat{F}$ is already open. It then bounds this further as 
\begin{equation}\label{mfl-charging-scheme}
d(j,\widehat F)
\leq d(j,i_a)
\leq d(j,f^*) + d(f^*,j_a) + d(j_a,i_a)
\leq 2d(j,f^*) + d(j_a,i_a),
\end{equation}
where the final inequality uses that $j_a$ precedes $j$ in the ordering by distance from $f^*$. 

The issue is bounding the expected contribution from the second term in \eqref{mfl-charging-scheme} across all clients. The analysis in \cite{ArgueFGS22} argues that each sampled client $j_a$ is charged at most $O(1/p)$ times in expectation, as $N_a$, the number of clients in the interval $(j_a, j_{a+1})$ is $O(1/p)$ in expectation.  It then concludes that the expected contribution from the second term in \eqref{mfl-charging-scheme} across all clients in $X$ is at most the connection cost in the offline phase, because the offline phase objective scaled up client demands by $\nicefrac1p$ (recall \eqref{line:phase1-obj}).

However, the actual expected cost charged to $j_a$ is $\expect*{\sum_{a=1}^{r} N_a \cdot d(j_a,i_a)}$, and the quantities within the expectation are not independent, so this is not a priori bounded by $\sum_{a=1}^{r} \expect*{N_a} \expect*{d(j_a,i_a)}$.

\end{document}